\newcommand{\be}{\begin{equation}}
\newcommand{\ee}{\end{equation}}
\newcommand{\ba}{\begin{align}}
\newcommand{\ea}{\end{align}}
\newcommand{\ben}{\begin{equation*}}
\newcommand{\een}{\end{equation*}}
\title{\bfseries Ground state of one-dimensional fermion-phonon systems }
\date{\empty}
\author[1]{
Tadahiro Miyao 
}
\author[1]{
Seigo Okida
}
\author[1]{
Hayato Tominaga}
\affil[1]{Department of Mathematics, Hokkaido University

Sapporo 060-0810, Japan}
\newcommand{\one}{1}
\newcommand{\h}{\mathfrak{H}}
\newcommand{\D}{\mathrm{dom}}
\newcommand{\im}{\mathrm{i}}
\newcommand{\Fock}{\mathfrak{F}}
\newcommand{\AFock}{\mathfrak{F}^{\mathrm{F}}}
\newcommand{\BFock}{\mathfrak{F}^{\mathrm{B}}}
\newcommand{\la}{\langle}
\newcommand{\ra}{\rangle}
\newcommand{\Tr}{\mathrm{Tr}}
\newcommand{\BbbR}{\mathbb{R}}
\newcommand{\BbbN}{\mathbb{N}}
\newcommand{\BbbZ}{\mathbb{Z}}
\newcommand{\BbbC}{\mathbb{C}}
\newcommand{\vepsilon}{\varepsilon}
\newcommand{\vphi}{\varphi}
\newcommand{\THL}{\tilde{\h}_{\Lambda}}
\newcommand{\THLd}{\tilde{\h}_{\Lambda\rq{}}}
\newcommand{\Cone}{\mathscr{P}}
\newcommand{\bCone}{\mathscr{P}_{\mathrm{b}, \Lambda}}
\newcommand{\rCone}{\mathscr{P}_{\mathrm{r}, \Lambda}}
\newcommand{\rConed}{\mathscr{P}_{\mathrm{r}, \Lambda\rq{}}}
\newcommand{\tCone}{\tilde{\mathscr{P}}_{\mathrm{r}, \Lambda}}
\newcommand{\tConed}{\tilde{\mathscr{P}}_{\mathrm{r}, \Lambda'}}
\newcommand{\no}{\nonumber \\}
\newcommand{\Nel}{\hat{N}^{(L)}_{ e}}
\newcommand{\Nol}{\hat{N}^{(L)}_{ o}}
\newcommand{\Ner}{\hat{N}^{(R)}_{e}}
\newcommand{\Nor}{\hat{N}^{(R)}_{o}}
\newcommand{\hn}{\hat{n}}
\newcommand{\dhn}{\delta \hat{n}}
\newcommand{\B}{{\boldsymbol b}}
\newcommand{\Ue}{U}
\newcommand{\ilim}[1][]{\mathop{\varinjlim}\limits_{#1}}
\newcommand{\bs}{\boldsymbol}
\newcommand{\bol}{\boldsymbol}
\newcommand{\IL}{\mathfrak{F}_{\LL}}
\newcommand{\ILd}{\mathfrak{F}_{\LL\rq{}}}
\newcommand{\F}{\mathrm{F}}
\newcommand{\IR}{\mathfrak{F}_{\LR}}
\newcommand{\LR}{\Lambda_R}
\newcommand{\LL}{\Lambda_L}
\newcommand{\AFL}{\Fock^{\rm F}_{\LL}}
\def\wrt{\ \text{w.r.t.}\ }
\def\tr{\mathrm{Tr}}
\begin{document}

\newtheorem{define}{Definition}[section]
\newtheorem{Thm}[define]{Theorem}
\newtheorem{Prop}[define]{Proposition}
\newtheorem{lemm}[define]{Lemma}
\newtheorem{Subl}[define]{Sublemma}
\newtheorem{rem}[define]{Remark}
\newtheorem{assum}{Condition}
\newtheorem{Ex}{Example}
\newtheorem{coro}[define]{Corollary}
\newtheorem*{Proof}{Proof}

\maketitle
\begin{abstract}

A new framework for the reflection positivity, based on order-preserving operator inequalities, is proposed. This framework is utilized to investigate one-dimensional fermion-phonon systems, with a particular focus on the detailed examination of ground state properties. Our analysis reveals that the reflection positivity provides a consistent description of the charge-density-wave (CDW) order in such systems. Additionally, we establish the existence of CDW long-range order in the ground state when the Coulomb interaction is  long range.

\end{abstract}

\setcounter{tocdepth}{2}
\tableofcontents
\section{Introduction and summary of the results derived from the theory constructed in this paper}

\subsection{Overview}\label{DerPhSys}
Interacting spinless fermions have been the subject of extensive study for several decades, serving as a fundamental model for strongly correlated systems. In fact, the interacting spinless fermion model can be considered as an effective description of the extended Hubbard model in the regime of strong coupling \cite{Dias2000, Kivelson2004}. As a result, this model is frequently employed to investigate the charge dynamics of many-electron systems. Moreover, it has recently garnered significant attention due to its relevance in the exploration of Majorana fermions in topological superconductors \cite{NadjPerge2014, Sato_2017}.
\smallskip

Of particular interest is the one-dimensional case, as it captures
 phenomena such as charge-density-wave (CDW) order \cite{Henley2001} and the nematic phase \cite{Kivelson2004}. Moreover, the one-dimensional model with nearest-neighbor interactions is mathematically equivalent to the anisotropic Heisenberg chain and possesses exact solvability.
\smallskip

In this manuscript, we investigate the interaction between fermions and dispersionless phonons in a one-dimensional setting. 
The system is described by the Hamiltonian:
\begin{align}
H_{\Lambda}=&\sum_{j\in \Lambda}(-t)
(c_{j  }^* c_{j+1}+c_{j+1}^*c_{j})
+
\sum_{i, j\in
 \Lambda}U(i-j)\delta \hn_i \delta \hn_j+\no
 &+
g\sum_{j\in \Lambda}\delta \hn_j(a_j+a_j^*)+\omega\sum_{j\in \Lambda}a_j^*a_j.
\end{align} 
For precise definitions of the symbols used, please refer to Section \ref{SetUp}.
This model is of practical importance since a wide range of quasi-one-dimensional materials undergo the Peierls instability as a result of electron-phonon interaction.
A significant body of literature exists on related models, with some notable rigorous results. For instance, in the work of \cite{Lieb1995}, the Peierls instability was analyzed using the Born--Oppenheimer approximation. In \cite{Miyao2012}, the uniqueness of the ground state was proven, and in \cite{Langmann2015}, the bosonization of a fermion-phonon model was investigated, among other contributions. However, it is worth noting that rigorous studies of this model are relatively scarce in comparison to the broader range of research on the topic.
\smallskip

The primary objective of this paper is to investigate the characteristics of the ground state of the Hamiltonian $H_{\Lambda}$ at half-filling. Specifically, we establish a proof demonstrating that the unique ground state of $H_{\Lambda}$ displays the CDW order. Furthermore, we demonstrate that this CDW order exhibits long-range behavior when the interaction $U(i-j)$ is  long-range.
Our approach is distinguished by the incorporation of the method of reflection positivity, along with the utilization of order-preserving operator inequalities. This combination of techniques serves as a distinctive feature of our study, as elucidated in the following explanation.
\smallskip

The concept of reflection positivity originated in the field of quantum field theory \cite{Osterwalder1973, Osterwalder1975} and has since found numerous applications in both classical and quantum statistical physics \cite{Biskup2006, Bjrnberg2013, DLS, Frhlich1978, Frhlich1980, Frhlich1976, Kennedy1988}. For a comprehensive understanding of the mathematical aspects of reflection positivity, we recommend referring to \cite{Neeb2018}.
In 1989, Lieb utilized the notion of reflection positivity in the spin space, known as spin reflection positivity, to establish the presence of ferrimagnetism in the ground state of the Hubbard model \cite{Lieb1989}. This approach, developed by Lieb, has proven to be valuable in the study of strongly correlated electron systems \cite{Miyao2019, Shen1998, Tian2004, Ueda1992, Yanagisawa1995}. A review of its applications can be found in \cite{Tasaki2020}.
More recently, Jaffe and Pedrocchi discovered a hidden form of reflection positivity in the Majorana fermion representation of the spinless fermion model. They applied this discovery to a Majorana fermion model with topological order \cite{Jaffe2015}. Further applications of this methodology can be found in \cite{Jaffe2016, Jaffe2020, Wei2015, yoshida2020rigorous}.
 \smallskip

In general, extending the concept of spin reflection positivity to interacting electron-phonon systems presents certain challenges. Firstly, the unbounded nature of bosonic operators requires careful mathematical treatment. Secondly, the electron-phonon interaction introduces technical difficulties when applying the hole-particle transformation.
Freericks and Lieb were the first to apply spin reflection positivity to interacting electron-phonon systems \cite{Freericks1995}, although the hole-particle transformation was not utilized in their method. Subsequently, one of the authors of this paper devised an operator-theoretic description of spin reflection positivity, enabling a detailed examination of interacting electron-phonon systems \cite{Miyao2012, Miyao2016, Miyao2019, miyao2020electronphonon}. This approach harmonizes with the hole-particle transformation and can be applied to various interacting electron-phonon systems, including the Holstein--Hubbard model, the SSH model, the Kondo lattice model, and others.
\smallskip

Since we are examining {\it spinless} fermions in this paper, we are unable to employ the technique of {\t spin} reflection positivity directly to explore the ground state properties of $H_{\Lambda}$. By further extending the operator theoretic approach of spin reflection positivity, we can effectively investigate the Hamiltonian $H_{\Lambda}$, even when $U(i-j)$  is  long range.
Mathematically, the framework of this extended approach can be described by the standard forms of von Neumann algebras. The theory of the standard forms was established by Haagerup \cite{Haagerup1975} and is currently recognized as a fundamental concept in the field of operator algebras.
This novel perspective of reflection positivity will be discussed in detail in the following sections.
In conjunction with the standard forms, we can naturally define order-preserving operator inequalities, which were initially explored in \cite{Miura2003}.
One significant advantage of introducing these inequalities is the ability to employ valuable techniques established in \cite{Miyao2012,Miyao2016,Miyao2019,miyao2020electronphonon}.
This facilitates the analysis of ground state properties of $H_{\Lambda}$, specifically demonstrating the presence of the CDW. It should be noted that, at this stage, it remains unknown whether the CDW  is a long range order.
 \smallskip

The application of reflection positivity to quantum systems was initially discovered by Dyson, Lieb, and Simon \cite{DLS}. Specifically, they established the existence of a phase transition in various quantum spin systems. Furthermore, their method was extended to demonstrate the presence of Neel order in the ground state of the three-dimensional spin-$1/2$ Heisenberg antiferromagnet on the cubic lattice \cite{Kennedy1988}. For a comprehensive overview, refer to the review by Tasaki \cite{Tasaki2020}. By combining the approach of \cite{Kennedy1988} with our methodology based on order-preserving operator inequalities, we substantiate the existence of long-range  CDW order in the ground state of $H_{\Lambda}$,  given the condition of long-range Coulomb interaction. To the best of our knowledge, this work represents the first mathematical proof of the existence of  long-range CDW order in a fermion-phonon system, provided that the Coulomb interaction
 is  long range.

\subsection{Interacting fermion-phonon  system}\label{SetUp}
Let $\mathbb{F}=\{\Lambda_{\ell}:  \ell \in \BbbN\}$, where $\Lambda_{\ell}=
[-\ell, \ell-1]\cap \BbbZ$.
Given a  $\Lambda\in \mathbb{F}$,  we examine  a one-dimensional  fermion-phonon interacting  system on $\Lambda$ at  half-filling. 
The Hamiltonian of this system is defined as follows:
\begin{align}
H_{\Lambda}=H_{\Lambda}^{\rm F}+
g\sum_{j\in \Lambda}\delta \hn_j(a_j+a_j^*)+\omega\sum_{j\in \Lambda}a_j^*a_j. \label{DefHamiFP}
\end{align} 
In this context, $H_{\Lambda}^{\text{F}}$ represents the Hamiltonian that describes the fermions, and it is defined as:
\begin{align}
H^{\rm F}_{\Lambda}=&\sum_{j\in \Lambda}(-t)
(c_{j  }^* c_{j+1}+c_{j+1}^*c_{j})
+
\sum_{i, j\in
 \Lambda}U(i-j)\delta \hn_i \delta \hn_j. \label{DefHF}
\end{align} 
The second term in equation \eqref{DefHamiFP} represents the interaction between fermions and phonons, while the last term represents the energy associated with the phonons.
The  operator $H_{\Lambda}$ acts in the 
 Hilbert space:
  \be
\Fock_{\Lambda}=\AFock_{\Lambda}\otimes \BFock_{\Lambda}.
\ee
Here,   $\AFock_{\Lambda}$ is the fermionic  Fock space over 
$\ell^2(\Lambda)$:
 $\AFock_{\Lambda}=\bigoplus_{n=0}^{|\Lambda|} \bigwedge^n \ell^2(\Lambda)$, where
 $\bigwedge^n$ stands for the $n$-fold antisymmetric tensor product, and $\BFock_{\Lambda}$ is the bosonic Fock space over $\ell^2(\Lambda)$:
  $\BFock_{\Lambda}=\bigoplus_{n=0}^{\infty} \otimes_{\rm s}^n \ell^2(\Lambda)$, where $\otimes^n_{\rm s}$ denotes the $n$-fold  symmetric tensor product.

The fermionic annihilation-  and creation operators at site $j$ are denoted by $c_{j  }$ and $c_{j  }^*$, respectively.
They satisfy the standard anti-commutation relations:
\begin{align}
\{c_{i  }, c_{j}^*\}=\delta_{i, j},\ \  \ \{c_{i  }, c_{j  }\}=0\ \ (i, j\in \Lambda), 
\end{align}
where $\delta_{i, j}$ is the Kronecker delta.
 The operator $\delta \hn_j$ is defined by $\delta \hn_j=\hn_j-1/2$, 
where $\hn_j$ is the number operator at site $j$: $\hn_{j  }=c_{j  }^* c_{j  }$. 

The bosonic annihilation and creation operators at site $j$ are denoted as $a_j$ and $a_j^*$, respectively. These operators satisfy the canonical commutation relations:
\be
[a_i, a_j]=0,\ \ [a_i, a_j^*]=\delta_{i, j}\ \ (i, j\in \Lambda).
\ee

$U(j)$ represents the energy associated with the Coulomb interaction. Mathematically, $U(j)$ is a bounded real-valued function defined on $\mathbb{Z}$. The hopping amplitude between neighboring sites is denoted as $-t$, where $t$ is a positive value. The phonons are assumed to be dispersionless, with an energy of $\omega > 0$.
In equation \eqref{DefHF}, which defines the Hamiltonian $H_{\rm F}$, periodic boundary conditions are imposed on the fermionic creation and annihilation operators for the sake of simplification: $c_{\ell}=c_{-\ell},\ c^*_{\ell}=c^*_{-\ell}$.

By the Kato--Rellich theorem \cite[Theorem X.13]{Reed1975}, $H_{\Lambda}$ is self-adjoint on $\D(N_{\rm p})$ and bounded from below,
where $N_{\rm p}$ is the phonon number operator: $N_{\rm p}=\sum_{j\in \Lambda} a_j^*a_j$.
The self-adjoint operator $H_{\Lambda}$ has purely discrete spectrum for all $\Lambda\in \mathbb{F}$.

Let $\hat{N}_{\Lambda}$ be the number operator for the fermions: $\hat{N}_{\Lambda}=\sum_{j\in
\Lambda} \hn_{j}$.
The closed subspace $\mathfrak{H}_{\Lambda}=\ker(\hat{N}_{\Lambda}-|\Lambda|/2)$ is called the {\it half-filled subspace} and will play
an important role. The half-filled  subspace can be expressed as 
\be
\h_{\Lambda}=\h^{\rm F}_{\Lambda} \otimes \BFock_{\Lambda},\ \ \ \h^{\rm F}_{\Lambda}=\bigwedge^{|\Lambda|/2} \ell^2(\Lambda).
\ee
In the subsequent analysis, the restriction of $H_{\Lambda}$ to $\mathfrak{H}_{\Lambda}$ is also referred to as $H_{\Lambda}$, unless there is potential ambiguity.
In the present paper, we will study a net of Hamiltonians:  $\{H_{\Lambda} : \Lambda\in \mathbb{F}\}$, where $\mathbb{F}$ is considered as a directed set based on the inclusion relation.
Particular attention will be given to the examination of ground state properties.

\subsection{Principal questions}

 Let $\mathfrak{X}$ be a complex Hilbert space and  let $\Cone$ be a convex cone in
$\mathfrak{X}$. The dual cone, $\Cone^{\dagger}$, of $\Cone$ is defined by 
$\Cone^{\dagger}=
\{x\in \mathfrak{X}\, :\, \la x| y\ra \ge 0\  
\forall y\in \Cone
\}.
$
We say that  $\Cone $ is  {\it self-dual} if 
$
\Cone=\Cone^{\dagger}.
$
Henceforth,  we  always  assume that $\Cone$ is self-dual. It is well-known that there is a unique involution $J$ in $\mathfrak{X}$ satisfying $Jx=x$ for all $x\in \Cone$.  We introduce a real subspace, $\mathfrak{X}_{J}$, of $ \mathfrak{X}$ by 
$\mathfrak{X}_{J}=\{x\in \mathfrak{X} : Jx=x\}$. It should be noted that each element $x\in \mathfrak{X}_J$ can be uniquely decomposed as $x=x_+-x_-$, where $x_-, x_+\in \Cone$ and $\la x_-|x_+\ra=0$. Furthermore, it holds true that $\mathfrak{X}=\mathfrak{X}_J+\im \mathfrak{X}_J$, where $\im =\sqrt{-1}$.  For further details,  refer to \cite{Bratteli1987,Takesaki2003}.

We  introduce the following order structures in $\mathfrak{X}$. 

\begin{define}
{\rm 
\begin{itemize}
\item A vector $x$ is said to be  {\it positive w.r.t. $\Cone$} if $x\in
 \Cone$.  This is denoted as  $x \ge 0$  w.r.t. $\Cone$.

\item
 Let $x, y\in \mathfrak{X}_{J}$. If $x-y\in \Cone$, then we write this as $x\ge y$ w.r.t. $\Cone$.
 \item A vector $x \in\mathfrak{X}$ is called {\it strictly positive w.r.t.} $\Cone$,  whenever $\la x| y\ra >0$ for all $ y \in \Cone\setminus\{0\}$.
 This is denoted as $x>0$ w.r.t.  $\Cone$.
\end{itemize} 
}
\end{define}

Let $\mathscr{L}(\mathfrak{X})$ denote the set of all bounded operators on the Hilbert space $\mathfrak{X}$.
Upon introducing positivity into  $\mathfrak{X}$  by fixing a particular self-dual cone, it becomes possible to define advantageous order structures within $\mathscr{L}(\mathfrak{X})$ in the following manner.

\begin{define}{\rm 

Let $A, B\in \mathscr{L}(\mathfrak{X})$. 

\begin{itemize}
 \item
If $A \Cone \subseteq \Cone$, we denote this as $A \unrhd 0$ with respect to $\Cone$. Note that this notation was introduced by Miura \cite{Miura2003}. In this context, we say that {\it $A$ preserves positivity with respect to $\Cone$.}
\item Suppose that $A\h_{J}\subseteq
 \h_{J}$ and $B\h_{J} \subseteq
	     \h_{J}$. If $(A-B) \Cone\subseteq
	     \Cone$, then we write this as $A \unrhd B$ w.r.t. $\Cone$. 
\item 
We write  $A\rhd 0$ w.r.t. $\Cone$, if  $Ax >0$ w.r.t. $\Cone$ for all $x\in
\Cone \backslash \{0\}$. 
 In this case, we say that {\it $A$ improves the
positivity w.r.t. $\Cone$.}
\end{itemize} 
} 
\end{define}

Assuming that $A$ preserves positivity w.r.t.  $\Cone$, if $x\geq y$ w.r.t.  $\Cone$, then $Ax\geq Ay$ w.r.t. $\Cone$ holds, indicating that $A$ preserves  the order. Consequently,  we commonly refer to the inequalities related to the symbol $\unrhd$ as the {\it order-preserving operator inequalities}.

Let $x$ and $y$ be elements of $\Cone$. Assuming that $A\in \mathscr{L}(\mathfrak{X})$ satisfies $A\unrhd 0$ w.r.t.  $\Cone$, it follows from the definition that
\be
\la x|A y\ra \ge 0. \label{BaseEx}
\ee
In particular, if $x$ and $y$ are strictly positive w.r.t.  $\Cone$, and $A$ is non-zero, then the strict inequality in \eqref{BaseEx} holds true. These fundamental properties will be frequently utilized in this paper.
\begin{define}
{\rm
Consider a positive self-adjoint operator $A$ such that $e^{-\beta A} \unrhd 0$ w.r.t.  $\Cone$ for all $\beta \ge 0$. We define the semigroup $e^{-\beta A}$ to be {\it ergodic w.r.t. } $\Cone$ if the following condition holds: For every $x, y\in \Cone\setminus \{0\}$, there exists a $\beta \ge 0$ such that $\la x|e^{-\beta A} y\ra >0$. It is worth noting that the value of $\beta$ may depend on the choice of $x$ and $y$.
}
\end{define}
We readily confirm that if  $e^{-\beta A}\rhd 0$ w.r.t. $\Cone$ for all $\beta>0$, then 
$e^{-\beta A}$ is ergodic w.r.t. $\Cone$.

The subsequent proposition highlights the significance of the aforementioned operator inequalities.
\begin{Prop}\label{PFF}{\rm (Perron--Frobenius--Faris)}
Let $A$ be a  self-adjoint positive operator on $\mathfrak{X}$. Suppose that 
 $0\unlhd e^{-\beta A}$ w.r.t. $\Cone$ for all $\beta \ge 0$,  and that  $E=\inf
 \mathrm{spec}(A)$ is an eigenvalue.
 Then,  the following
 are equivalent:
\begin{itemize}
\item[\rm  (i)] E is a simple eigenvalue. The corresponding eigenvector can be chosen
to be strictly positive with respect to  $\Cone$.
\item[\rm  (ii)] $e^{-\beta  A}$ is ergodic w.r.t. $\Cone$. 

\end{itemize}
\end{Prop} 
\begin{proof} See, e.g.,    \cite{Faris1972, Reed1978}.  
\end{proof}

In general, there exists an infinite number of self-dual cones in $\mathfrak{X}$. Consequently, we can define an infinite variety of positivities within the single Hilbert space $\mathfrak{X}$. This leads us to the natural questions:

\begin{description}
\item[{\bf Q. 1.}] What types of positivities are exhibited by the ground states of $H_{\Lambda}$?
\item[{\bf Q. 2.}] What is the physical significance associated with each positivity of the ground state?
\end{description}

In the remainder of this section, we will outline the results obtained from the theory based on order-preserving operator inequalities, which addresses the two questions. Further details about our theory and the proofs of these theorems can be found in the subsequent sections.

\subsection{Background positivity: $\bCone$ }

In this study, we occasionally utilize Dirac notation to facilitate the clarity of our presentation.
For any given subset $X\subseteq \Lambda$, we define a vector in $\AFock_{\Lambda}$ as follows:
\begin{align}
|X\ra=\Bigg[\prod_{j\in X} c_{j  }^*\Bigg]\Omega^{\rm F}_{\Lambda}.
\end{align}
Here,  $\Omega^{\rm F}_{\Lambda}$ represents the Fock vacuum in $\AFock_{\Lambda}$. The product $\prod_{j\in X} A_j$
is determined based on the natural ordering:
 $
 \prod_{j\in X} A_j=A_{x_1} A_{x_2}\cdots A_{x_m}
 $,  where $x_1<x_2<\cdots<x_m$.
It is evident that the set $\{|X\ra : X\subseteq \Lambda\}$ forms a complete orthonormal system (CONS) in $\AFock_{\Lambda}$.

Consider the set of particle configurations at half-filling given by:
\begin{align}
\mathcal{E}_{\Lambda}=\{X\subset \Lambda\, :\, |X|=|\Lambda|/2\}.
\end{align}
It can be readily observed that the set of vectors $
\{|X\ra\, :\, 
X\in \mathcal{E}_{\Lambda}\}$ forms a CONS of $\mathfrak{H}^{\rm F}_{\Lambda}$. Hence, each $\psi\in \mathfrak{H}_{\Lambda}$ can be  expressed as 
\be
\psi=\sum_{X\in \mathcal{E}_{\Lambda}} |X\ra \otimes |\psi_X\ra,\ \ \ \psi_X\in \BFock_{\Lambda}.
\ee

Recall that the bosonic Fock space $\BFock_{\Lambda}$ can be identified with $L^2(\BbbR^{\Lambda})$:
\be
\BFock_{\Lambda}=L^2(\BbbR^{\Lambda}). \label{BoseIdn1}
\ee
Using this identification, we can introduce a  self-dual cone in $\BFock_{\Lambda}$ by 
\be
\BFock_{\Lambda, +}=L^2_+(\BbbR^{\Lambda})=\{F \in L^2(\BbbR^{\Lambda}) : F({\bs \phi}) \ge 0 \ \ \mbox{a.e.}\}. \label{BoseIdn2}
\ee
Now we are ready to introduce the first self-dual cone.
\begin{define}
{\rm 
We define a self-dual cone $\bCone$ by 
\begin{align}
\bCone=\overline{\mathrm{coni}}(
\{|X\ra\otimes|\psi\ra\, :\, 
X\in \mathcal{E}_{\Lambda},\ \psi\in \BFock_{\Lambda, +}\}
). 
\end{align}
Here, $\overline{\mathrm{coni}}(S)$ represents the closure of the conical hull of a given subset $S$ of $\Fock_{\Lambda}$.
The positivity determined by $\bCone$ is referred to as the {\it background positivity}.

}
\end{define}

 It is worth noting that if we define a self-dual cone in  $\h_{\Lambda}^{\rm F}$ as 
 \be
 \bCone^{\rm F} =\mathrm{coni}(
\{|X\ra\, :\, 
X\in \mathcal{E}_{\Lambda}\}
),
\ee
 then we have $\bCone=\bCone^{\rm F} \otimes \BFock_{\Lambda, +}$.
Regarding the tensor product of self-dual cones, refer to  \cite{Miyao2021} for detailed information. Additionally, we can represent $\bCone$ in the form of a direct integral of self-dual cones as $\bCone = \int^{\oplus}_{\mathbb{R}^{|\Lambda|}} \bCone^{\mathrm{F}} d\bs{q}$.

Remark that $\psi\ge 0$ w.r.t. $\bCone$ if and only if $\psi_X\ge 0$  w.r.t. $\BFock_{\Lambda, +}$ for all $X\in \mathcal{E}_{\Lambda}$, 
and $\psi>0$ w.r.t. $\bCone$  if and only if $\psi_{X}>0$ w.r.t. $\BFock_{\Lambda, +}$ for all $X\in \mathcal{E}_{\Lambda}$. Furthermore, the involution $J$ associated with $\Cone_{\rm b, \Lambda}$ is given by $J\psi=\sum_{X\in \mathcal{E}_{\Lambda}}  \ket{X}\otimes \ket{\psi_X^*}$, where $\psi^*_X$ stands for the complex conjugate of  the function $\psi_X$.

\begin{Prop}\label{BGP}
For each $\Lambda\in \mathbb{F}$, 
the semigroup $e^{-\beta H_{\Lambda}}$ improves the positivity w.r.t. $\bCone$ for all $\beta>0$.
Consequently, according to Proposition \ref{PFF}, the ground state of $H_{\Lambda}$ is unique in $\mathfrak{H}_{\Lambda}$ and can be chosen to be   strictly positive w.r.t. $\bCone$. 

For the fermionic Hamiltonian $H_{\Lambda}^{\mathrm{F}}$, the same statement holds true if we replace $\mathfrak{H}_{\Lambda}$ with $\mathfrak{H}_{\Lambda}^{\mathrm{F}}$ and $\bCone$ with $\bCone^{\mathrm{F}}$.
\end{Prop}
\begin{proof}
By employing similar arguments to those presented in \cite{Miyao2012-2}, we can establish the proof of Proposition \ref{BGP}. It is noteworthy that even though \cite{Miyao2012-2} examines a spin-$1/2$ model, the absence of spin degrees of freedom does not affect the modification of the proof outlined in \cite{Miyao2012-2}.
\end{proof}

\begin{rem}\rm 
We highlight that there are no limitations imposed on the Coulomb interaction $U(i)$ in Proposition \ref{BGP}.
In contrast, the restriction that fermions can only hop to neighboring sites is essential in this theorem.
\end{rem}

Let $\psi_{\Lambda}$  be the unique ground state of $H_{\Lambda}$. The ground state expectation 
with respect to $\psi_{\Lambda}$   is then defined by 
\begin{align}
\la A\ra_{\Lambda} = \la \psi_{\Lambda} | A\psi_{\Lambda}\ra, \ \ A\in \mathscr{L}(\mathfrak{H}_{\Lambda}).
\end{align}
The expectation value with respect to the ground state of $H_{\Lambda}^{\mathrm{F}}$ can also be defined in a similar manner and denoted as $\langle A\rangle_{\Lambda}^{\mathrm{F}}$ for $A\in \mathscr{L}(\mathfrak{H}^{\mathrm{F}}_{\Lambda})$.

For each $X \subseteq \Lambda$, we set $
\hn_{X  }=\prod_{j\in X} \hn_{j  }
$ with $\hn_{\varnothing  }=\one$.
We then define a set of operators  by $\mathfrak{C}_{\Lambda}=\mathrm{coni}\{
\hn_{X}\, :\, X\subseteq \Lambda
\}$, where $\mathrm{coni}(S)$ represents the conical hull of a given set $S$.
\begin{coro}
For each $A\subseteq \mathfrak{C}_{\Lambda}$, we have $\la A\ra_{\Lambda}>0$ and $\la A\ra_{\Lambda}^{\rm F}>0$.
\end{coro}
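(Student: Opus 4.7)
The plan is to combine Theorem \ref{BGP}, which gives that the unique ground state $\psi_{\Lambda}$ is \emph{strictly} positive w.r.t.\ $\bCone$, with the elementary observation that each number-product operator $\hat{n}_{X}$ preserves $\bCone$. Once both facts are in hand, the strict positivity of $\langle A\rangle_{\Lambda}$ for nonzero $A\in \mathfrak{C}_{\Lambda}$ will follow from the pairing principle recorded in \eqref{BaseEx}.

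First I would check that $\hat{n}_{X}\unrhd 0$ w.r.t.\ $\bCone$ for every $X\subseteq \Lambda$. Since on the CONS of $\h^{\mathrm{F}}_{\Lambda}$ one has $\hat{n}_{j}|Y\rangle=\mathbbm{1}_{j\in Y}|Y\rangle$, it follows that for every generator of $\bCone$,
\ben
\hat{n}_{X}\bigl(|Y\rangle\otimes |\psi\rangle\bigr)=\mathbbm{1}_{X\subseteq Y}\,|Y\rangle\otimes |\psi\rangle,\qquad Y\in \mathcal{E}_{\Lambda},\ \psi\in \BFock_{\Lambda, +}.
\een
Hence $\hat{n}_{X}$ maps the generating set of $\bCone$ into $\bCone\cup \{0\}$, and by taking the closed conical hull we get $\hat{n}_{X}\bCone\subseteq \bCone$. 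Taking non-negative linear combinations, every $A\in \mathfrak{C}_{\Lambda}$ preserves $\bCone$.

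Next I would expand the ground state as $\psi_{\Lambda}=\sum_{Y\in \mathcal{E}_{\Lambda}}|Y\rangle\otimes |\psi_{\Lambda,Y}\rangle$. Strict positivity of $\psi_{\Lambda}$ w.r.t.\ $\bCone$, combined with the characterization stated in the remark following the definition of $\bCone$, tells us that every component $\psi_{\Lambda,Y}$ is strictly positive in $\BFock_{\Lambda, +}$. Consequently, for any $X\subseteq \Lambda$ with $|X|\le |\Lambda|/2$,
\ben
\hat{n}_{X}\psi_{\Lambda}=\sum_{\substack{Y\in \mathcal{E}_{\Lambda}\\ Y\supseteq X}} |Y\rangle\otimes |\psi_{\Lambda,Y}\rangle
\een
is a \emph{nonzero} element of $\bCone$, since the collection $\{Y\in \mathcal{E}_{\Lambda}:Y\supseteq X\}$ is non-empty. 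Applying the strict positivity of $\psi_{\Lambda}$ then yields $\langle \hat{n}_{X}\rangle_{\Lambda}>0$ by \eqref{BaseEx}. For a general nonzero $A=\sum_{j}c_{j}\hat{n}_{X_{j}}\in \mathfrak{C}_{\Lambda}$ with $c_{j}\ge 0$, note that on $\mathfrak{H}_{\Lambda}$ every $\hat{n}_{X_{j}}$ with $|X_{j}|>|\Lambda|/2$ vanishes identically; hence nontriviality of $A$ forces at least one index $j$ with $c_{j}>0$ and $|X_{j}|\le |\Lambda|/2$. Summing the nonnegative contributions $c_{j}\langle \hat{n}_{X_{j}}\rangle_{\Lambda}$ then gives $\langle A\rangle_{\Lambda}>0$.

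There is no real obstacle in this argument; the whole statement is an immediate consequence of Theorem \ref{BGP} once one observes the diagonal action of the $\hat{n}_{X}$ on the basis $\{|Y\rangle\}_{Y\in \mathcal{E}_{\Lambda}}$. The only point requiring minimal care is the bookkeeping distinction between $\hat{n}_{X}\ne 0$ and $\hat{n}_{X}\ne 0$ \emph{on $\mathfrak{H}_{\Lambda}$}, which is handled by discarding the configurations $X$ with $|X|>|\Lambda|/2$.
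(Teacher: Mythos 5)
Your argument is correct and is precisely the intended one (the paper states the corollary without proof as an immediate consequence of Theorem \ref{BGP}): each $\hat{n}_{X}$ acts diagonally on the basis $\{|Y\ra\}$, hence preserves $\bCone$ and sends the strictly positive ground state to a nonzero element of $\bCone$ whenever $|X|\le |\Lambda|/2$, so \eqref{BaseEx} with strictness gives the claim. Your bookkeeping remark about configurations $X$ with $|X|>|\Lambda|/2$ vanishing on the half-filled subspace is a worthwhile clarification that the paper glosses over.
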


It is worth noting that the aforementioned assertion provides only limited insight into the structure of the ground state $\psi_{\Lambda}$.
The primary role of background positivity is to ensure the uniqueness of the ground state.

\subsection{Reflection  positivity: $\rCone$}\label{RefPosiBasic}

In this paper, we will investigate an additional self-dual cone referred to as $\rCone$.
The associated positivity stemming from $\rCone$ is referred to as reflection positivity, which is particularly well-suited for describing the CDW order.
In order to define $\rCone$, we partition $\Lambda$ into two subsets, namely $\Lambda=\Lambda_L\cup \Lambda_R$, where
\be \Lambda_L=\{-\ell, -\ell+1, \dots, -1\},\ \ \Lambda_R=\{0, 1, \dots, \ell-1\}.
\ee
Using the identification
$\AFock_{\Lambda}=\AFock_{\LL} \otimes \AFock_{\LR}$ and $\BFock_{\Lambda}=\BFock_{\LL} \otimes \BFock_{\LR}$, we obtain the useful identification:
\be
\Fock_{\Lambda}=\Fock_{\LL} \otimes \Fock_{\LR}.
\ee
The von Neumann algebras, denoted as $\mathfrak{M}^{\text{F}}_{\Lambda_L}$ and $\mathfrak{M}_{\Lambda_L}$, are of considerable importance in our analysis. They are defined as follows:
\be
\mathfrak{M}^{\rm F}_{\LL}=\{A\in \mathscr{L}(\Fock^{\rm F}_{\LL}) : [A, \hat{N}_{\LL}]=0\}, \quad
\mathfrak{M}_{\LL}=\{A\in \mathscr{L}(\Fock_{\LL}) : [A, \hat{N}_{\LL}]=0\}.
\ee
We introduce the mapping $r: \Lambda_R \to \Lambda_L$ defined as follows:
\begin{align}
r(j) = -1 - j,\ \ \text{for } j \in \Lambda_R. \label{DefRefMap}
\end{align}
The mapping $r$ corresponds to a reflection about $j = -1/2$, as illustrated in Figure \ref{RPmapFig}.

\begin{figure}
\begin{center}
\includegraphics[scale=0.56]{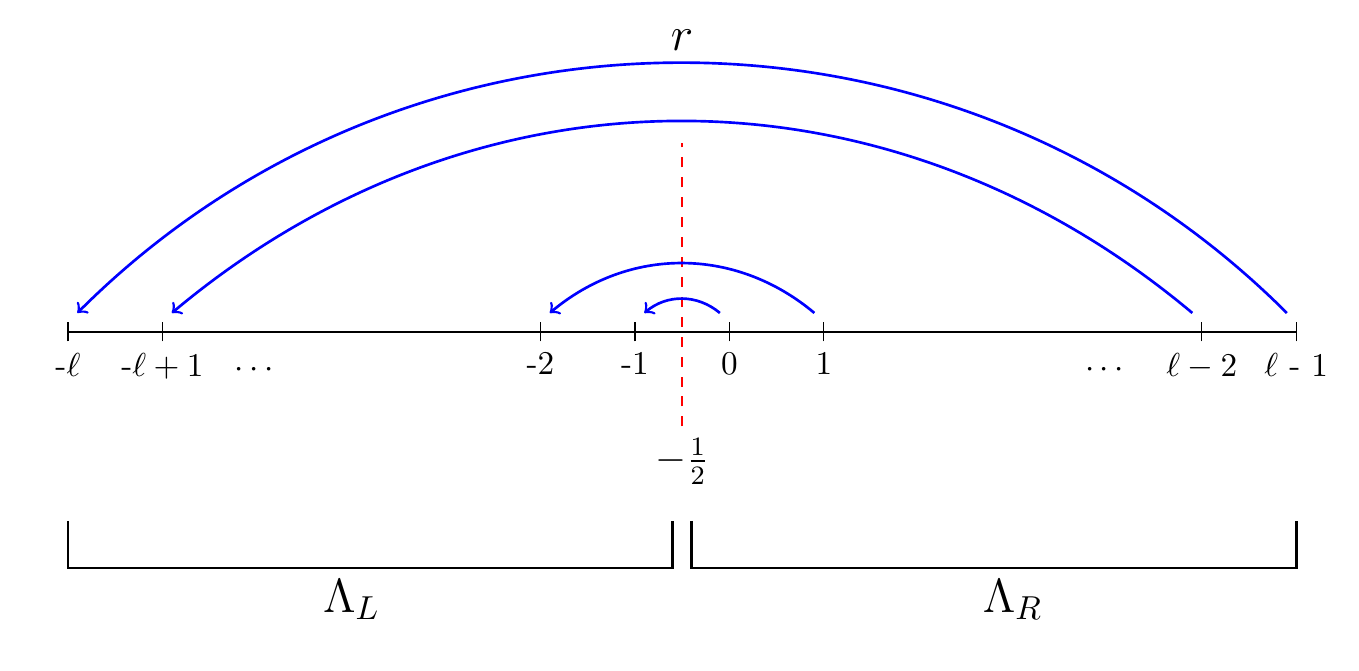}
\caption{The mapping  $r$ represents a reflection about $j=-1/2$.} 
    \label{RPmapFig}
\end{center}
\end{figure}

\begin{Thm}\label{RPConeBasic}
We have the following:
\begin{itemize}
\item[\rm 1.] There is an antiunitary operator $\tau $ from $\Fock_{\LL}$ onto $\Fock_{\LR}$ such that 
\be
\tau  \delta \hn_j  \tau^{-1}=-\delta \hn_{r(j)}
, \ \ \tau a_j\tau^{-1}=a_{r(j)},\ \  j\in \Lambda_R. \label{AbstPPI}
\ee
\item[\rm 2.] There is a self-dual cone $\rCone^{\rm F}$ in $\h^{\rm F}_{\Lambda}$  such that, for all $\vphi\in \rCone^{\rm F}$ and 
$A\in \mathfrak{M}_{\LL}^{\rm F}$, 
\be
\la \vphi|A\otimes \tau A \tau^{-1} \vphi\ra \ge 0 \label{VecPPF1}
\ee
holds. 

\item[\rm 3.] There is a self-dual cone $\rCone$ in $\h_{\Lambda}$  such that, for all $\vphi\in \rCone$ and 
$A\in \mathfrak{M}_{\LL}$, 
\be
\la \vphi|A\otimes \tau A \tau^{-1} \vphi\ra \ge 0 \label{VecPP1}
\ee
holds. 
\item[\rm 4.] Let $\rCone^{\sharp}$ represent either $\rCone^{\rm F}$ or $\rCone$. In this case, for any $\varphi\in \rCone^{\sharp}$, we have the following inequality:
\begin{align}
(-1)^m \big\la \vphi|\delta \hn_{i_m} \delta \hn_{i_{m-1}} \cdots  \delta \hn_{i_1} \delta \hn_{-i_1-1}
 \delta \hn_{-i_2-1} \cdots \delta \hn_{-i_m-1} \vphi\big\ra \ge 0 \label{VecPP2}
\end{align}
for $i_1, i_2, \dots, i_m\in \LL$. If $\vphi$ is strictly positive w.r.t. $\rCone^{\sharp}$, then  the   strict inequalities hold true in \eqref{VecPPF1},  \eqref{VecPP1} and \eqref{VecPP2}.
\end{itemize}
\end{Thm}

\begin{rem}\upshape 
The inequality \eqref{VecPP2} implies that vectors belonging to $\rCone^{\sharp}$ are well-suited for mathematically describing the CDW order.
\end{rem}

To establish a precise definition of  $\rCone^{\sharp}$, it is necessary to make some preliminary arrangements. Therefore, we will provide a comprehensive definition of  $\rCone^{\sharp}$ and present the proof of Theorem \ref{RPConeBasic} in Section \ref{Sub3.5}.

\begin{define}
\upshape 
A vector $\vphi$ belonging to $\h_{\Lambda}$ or  $\h_{\Lambda}^{\rm F}$ is referred to as {\it reflection positive} when it is positive with respect to $\rCone$ or $\rCone^{\rm F}$. In the event that $\vphi$ is strictly positive with respect to $\rCone$ or $\rCone^{\rm F}$, it is said to be {\it strictly reflection positive}.
\end{define}

\begin{Ex}\label{CDWVex} \upshape
Let 
\be
\Omega^{\rm CDW}_{\Lambda}=(-1)^{(|\Lambda|+2)/4} \Bigg[
\prod_{j\in \Lambda_o} c_j^*
\Bigg] \Omega^{\rm F}_{\Lambda}, \label{VacCDW}
\ee
where $\Lambda_o=\{j\in \Lambda : \mbox{$j$ is odd}\}$.
Example \ref{PfCDWV} in Section \ref{DefRefPosi} demonstrates that $\Omega^{\rm CDW}_{\Lambda}$ and $\Omega^{\rm CDW}_{\Lambda}\otimes \Omega_{\Lambda}^{\rm B}$ are reflection positive for all $\Lambda\in \mathbb{F}$, where $\Omega_{\Lambda}^{\rm B}$ denotes the bosonic Fock vacuum in $\BFock_{\Lambda}$.
As depicted in Figure \ref{CDWPic}, $\Omega_{\Lambda}^{\rm CDW}$ is a representative vector that characterizes the CDW.
\begin{figure}
\begin{center}
\includegraphics[scale=0.52]{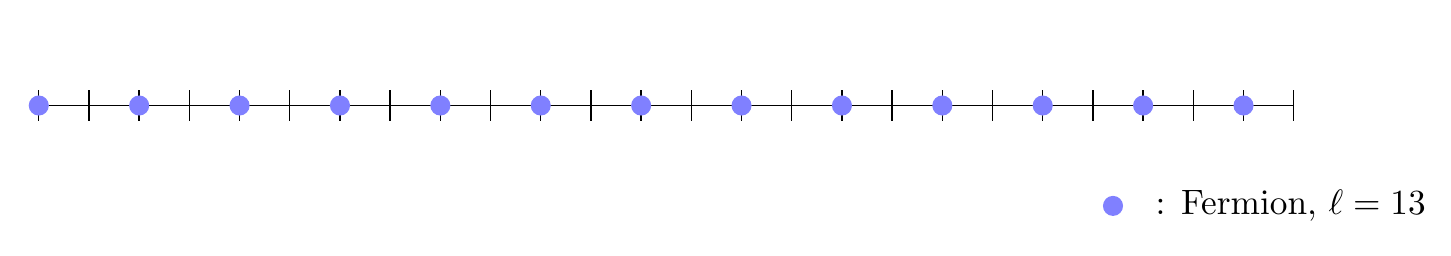}
\caption{The fermions are distributed across all sites with odd indices.} 
    \label{CDWPic}
\end{center}
\end{figure}
\end{Ex}

In order to state properties of $\rCone$, we need the following proposition.
\begin{Prop}\label{HilInd2}
\begin{itemize}
\item[\rm 1.]
There is a family of isometric linear mappings $\{\iota_{\Lambda\rq{}\Lambda} : \Lambda, \Lambda'\in \mathbb{F} \ \mbox{with $\Lambda\subseteq \Lambda'$}\}$ satisfying the following:
\begin{itemize}
\item[\rm (i)] $\iota_{\Lambda\rq{} \Lambda}: \h_{\Lambda} \to \h_{\Lambda\rq{}}$.
\item[\rm (ii)] If $\Lambda\subseteq \Lambda'\subseteq \Lambda''$, then $\iota_{\Lambda\rq{}\rq{}\Lambda'}\iota_{\Lambda'\Lambda}=\iota_{\Lambda\rq{}\rq{}\Lambda}.$
\end{itemize}
Hence, we can define the following inductive limit:
$
\h_{\BbbZ}=\ilim \h_{\Lambda}.
$
\item[\rm 2.] For each $\Lambda\in \mathbb{F}$, there exists an isometric linear mapping $\iota_{\Lambda}$ from $\h_{\Lambda}$ into 
$\mathfrak{H}_{\BbbZ}$ satisfying the following:
\begin{itemize}
\item[\rm (i)] If $\Lambda\subseteq \Lambda\rq{}$, then $\iota_{\Lambda\rq{}}\iota_{\Lambda\rq{}\Lambda}=\iota_{\Lambda}$.
\item[\rm (ii)] $\bigcup_{\Lambda \in \mathbb{F}} \iota_{\Lambda}\mathfrak{H}_{\Lambda}$ is dense in $\mathfrak{H}_{\BbbZ}$.
\end{itemize}
Furthermore, the Hilbert space $\mathfrak{H}_{\BbbZ}$ and the net of isometric linear mappings $ \{\iota_{\Lambda} : \Lambda\in \mathbb{F}\}$    are uniquely determined,   up to unitary equivalence. 
\item[\rm 3.] 
The same statements as 1. and 2. hold for the net $\{\mathfrak{H}_{\Lambda}^{\rm F} : \Lambda\in \mathbb{F}\}$. Therefore, the inductive limit $\mathfrak{H}_{\mathbb{Z}}^{\rm F}=\varinjlim \mathfrak{H}_{\Lambda}^{\rm F}$ can be defined. 
\item[\rm 4.] The mapping: $\h_{\Lambda}^{\rm F}\ni \vphi\mapsto \vphi\otimes \Omega^{\rm B}_{\Lambda}\in \h_{\Lambda}$ is an isometry. Therefore, we can consider $\h_{\Lambda}^{\rm F}$ as a closed subspace of $\h_{\Lambda}$. 
The orthogonal projection from $\h_{\Lambda}$ to $\h_{\Lambda}^{\rm F}$
 is explicitly  given by $S_{\Lambda}=1_{\Lambda}\otimes \ket{\Omega^{\rm B}_{\Lambda}}\bra{\Omega^{\rm B}_{\Lambda}}$, where
 $1_{\Lambda}$  represents the identity operator on $\h_{\Lambda}^{\rm F}$. The aforementioned statement holds true even in the case when $\Lambda=\BbbZ$.
\end{itemize}
\end{Prop}

Proposition \ref{HilInd2} is proved in Proposition \ref{PropIndLim}.

In the ensuing discussion, we shall identify $\h_{\Lambda}$ with $\iota_{\Lambda\rq{}\Lambda} \h_{\Lambda}$ and $\iota_{\Lambda} \h_{\Lambda}$, allowing us to treat $\h_{\Lambda}$ as a closed subspace of both $\h_{\Lambda\rq{}}$ and $\h_{\BbbZ}$.
The same type of identifications are also possible for $\mathfrak{H}_{\Lambda}^{\text{F}}$.
We use $P_{\Lambda\Lambda\rq{}}$ (resp. $P_{\Lambda\Lambda\rq{}}^{\rm F}$) to denote the orthogonal projection from $\h_{\Lambda\rq{}}$ onto $\h_{\Lambda}$ (resp. $\h_{\Lambda\rq{}}^{\rm F}$ onto $\h_{\Lambda}^{\rm F}$ ), and similarly, $P_{\Lambda}$ (resp. $P^{\rm F}_{\Lambda}$) denotes the orthogonal projection from $\h_{\BbbZ}$ onto $\h_{\Lambda}$ (resp. $\h_{\BbbZ}^{\rm F}$ onto $\h_{\Lambda}^{\rm F}$).
The following are some fundamental properties of $\rCone$ and $\rCone^{\rm F}$.

\begin{Thm}\label{PropConeSim}
\begin{itemize}
\item[\rm 1.]
Let $\Lambda, \Lambda\rq{}\in \mathbb{F}$. If $\Lambda\subseteq\Lambda\rq{}$, then  we have the following:
\begin{itemize}
\item[\rm (i)]
$P_{\Lambda\Lambda\rq{}}\rConed = \rCone$.
\item[\rm (ii)] $P_{\Lambda\Lambda\rq{}} \unrhd 0$ w.r.t. $\rConed$.
\end{itemize}
\item[\rm 2.] There is a self-dual cone $\Cone_{\rm r, \BbbZ}$ in $\h_{\BbbZ}$ satisfying the following: 
\begin{itemize}
\item[\rm (i)] $P_{\Lambda} \Cone_{\rm r, \BbbZ}=\Cone_{\rm r, \Lambda}$.
\item[\rm (ii)] $P_{\Lambda} \unrhd 0$ w.r.t. $\Cone_{\rm r, \BbbZ}$.
\end{itemize}
\item[\rm 3.] 
The statements 1. and 2. remain valid even if we replace $P_{\Lambda\Lambda\rq{}}, P_{\Lambda}$, $\rCone$, and $\Cone_{\rm r, \BbbZ}$ with $P^{\rm F}_{\Lambda\Lambda\rq{}}, P^{\rm F}_{\Lambda}$, $\rCone^{\rm F}$, and $\Cone^{\rm F}_{\rm r, \BbbZ}$, respectively.
\item[\rm 4.] The following statement holds true:
\begin{itemize}

\item[\rm (i)]  
$S_{\Lambda} \rCone=\Cone^{\rm F}_{\rm r, \Lambda}$. This relationship holds true even when $\Lambda=\BbbZ$.
\item[\rm (ii)] $S_{\Lambda} \unrhd 0$ w.r.t. $\rCone$.
\end{itemize}
\end{itemize}
\end{Thm}
See Theorem  \ref{StrongPropPP} for a more precise and  enhanced version of Theorem \ref{PropConeSim}.

\begin{rem}\upshape
\begin{itemize}
\item[(i)]
If $\Lambda\subseteq \Lambda'$, the reflection positivity in $\Lambda$ is consistently embedded within that in $\Lambda'$ as stipulated in Theorem \ref{PropConeSim}. This property holds even when $\Lambda'=\BbbZ$, and thus, reflection positivity can be naturally extended to the infinite chain.
\item[(ii)] 
Theorem \ref{PropConeSim}, item 4., implies the compatibility between reflection positivity in fermion-phonon systems and reflection positivity in fermionic systems.
\end{itemize}
\end{rem}

\subsection{Properties of the semigroup  $e^{-\beta H_{\Lambda}}$}
\subsubsection{The semigroup $e^{-\beta H_{\Lambda}}$ preserves  reflection positivity}

We present fundamental outcomes regarding the Hamiltonian $H_{\Lambda}$.
Hereinafter, we consistently assume the following:
\begin{description}
\item[\hypertarget{A}{ (A)}] 
\begin{itemize}
\item[(i)] $\ell$ is an odd number.
\item[(ii)] $U(i)$ is a bounded function on $\BbbZ$ such that $U(0)=0$ and $U(-i)=U(i)$ for all $i\in \BbbZ$.
\end{itemize}
\end{description}
We set $\mathbb{F}_{\rm o}=\{\Lambda_{\ell} : \mbox{$\ell$ is odd}\}$, where $\Lambda_{\ell}$ is given in Section \ref{DerPhSys}.
First, we  consider the following condition:
\begin{description}
\item[\hypertarget{B1}{ (B. 1)}] For each $\Lambda\in \mathbb{F}_{\rm o}$,  $\{ (-1)^{i+j} \Ue(i+j+1)\}_{i, j\in \LL}$ is a positive semi-definite matrix on $\BbbC^{\Lambda_L}$. Namely,
for every  $\{z_j\}_{j\in \Lambda_L}\in \BbbC^{\Lambda_L}$,
\begin{align}
\sum_{i, j\in \Lambda_L} z_i^*z_j(-1)^{i+j} \Ue(i+j+1)\ge 0.
\end{align}
\end{description}
Informally, this condition implies that the Coulomb interaction $U(i)$ is short range.

\begin{Ex}\upshape
Let us consider the nearest neighbor interaction:
\be
U(j)=U\delta_{j, -1},\ \ U\ge 0.
\ee
We can easily verify that $\{U(j): j \in \Lambda\}$ satisfies \hyperlink{B1}{\bf (B. 1)}.
It is worth noting that the Holstein model corresponds to the case where $U=0$.
In general,  $\{U(j): j \in \Lambda\}$
 satisfies \hyperlink{B1}{\bf (B. 1)} if,  and only if,  there exist a positive measure $\varrho$ on $[-1, 1]$ and a nonnegative constant  $c$ such that 
 \be
 (-1)^{j-1} U(j)= c\delta_{j, -1}+\int_{-1}^1 \lambda^{j-3} d\varrho(\lambda),\ \ j\le -1.
 \ee
 See \cite[Proof of Proposition 3.2]{Frhlich1978} for the proof.
\end{Ex}

\begin{Thm}\label{FirstThm}
Assume \hyperlink{A}{\bf (A)} and \hyperlink{B1}{\bf (B. 1)}. 
For each $\Lambda\in \mathbb{F}_{\rm o}$, we have the following.
\begin{itemize}
\item[\rm (i)] The semigroup $e^{-\beta H_{\Lambda}}$ preserves  the positivity w.r.t. $\rCone$ for all $\beta>0$.
\item[\rm (ii)]The ground state of $H_{\Lambda}$ is reflection  positive.
Moreover, we have
\be
\la A\otimes \tau A\tau^{-1} \ra_{\Lambda} \ge 0,\quad   A\in \mathfrak{M}_{\LL}. \label{GSRPP0}
\ee
In particular, we have
\begin{align}
(-1)^m \big\la \delta \hn_{i_m} \delta \hn_{i_{m-1}} \cdots  \delta \hn_{i_1} \delta \hn_{-i_1-1}
 \delta \hn_{-i_2-1} \cdots \delta \hn_{-i_m-1}  \big\ra_{\Lambda} \ge  0\label{GSRPP}
\end{align}
 for $i_1, i_2, \dots, i_m\in \LL$.
 \end{itemize}
 The aforementioned claims {\rm (i)} and {\rm (ii)} hold true even if we replace $H_{\Lambda}$, $\rCone$, $\mathfrak{M}_{\LL}$, and $\la \cdot \ra_{\Lambda}$ with $H_{\Lambda}^{\rm F}$, $\rCone^{\rm F}$, $\mathfrak{M}^{\rm F}_{\LL}$, and $\la \cdot \ra^{\rm F}_{\Lambda}$, respectively.
\end{Thm}

We will prove Theorem \ref{FirstThm} in Section \ref{Sect4}. 
\begin{rem}\upshape
\begin{itemize}
\item

The paper \cite{Lieb1995} essentially proved the reflection positivity of the ground state of a Hamiltonian for interacting fermions.
Theorem \ref{SparP} extends the result in \cite{Lieb1995} to the interacting fermion-phonon system.
\item 
Even when considering the interaction between fermions and phonons, the ground state still demonstrates CDW order, as indicated by the inequality \eqref{GSRPP}. This result aligns with existing findings in the literature, such as those presented in \cite{Fehske2005, McKenzie1996, Weie1998}.
\end{itemize}
\end{rem}

\subsubsection{Ergodicity of $e^{-\beta H_{\Lambda}^{\rm F}}$}

We consider the case of $g=0$. In this case, we can obtain more detailed information about the ground state of $H^{\rm F}_{\Lambda}$.

Let us now turn our attention to a stronger condition.

\begin{description}
\item[\hypertarget{B2}{ (B. 2)}] For each $\Lambda\in \mathbb{F}$,  $\{(-1)^{i+j} \Ue(i+j+1)\}_{i, j\in \LL}$ is a positive definite matrix on $\BbbC^{\Lambda_L}$. Namely, 
for every  $\{z_j\}_{j\in \Lambda_L}\in \BbbC^{\Lambda_L}\setminus \{{\bf 0}\}$, the following {\it  strict} inequality holds:
\begin{align}
\sum_{i, j\in \Lambda_L} z_i^*z_j(-1)^{i+j} \Ue(i+j+1)>0.
\end{align}
\end{description}
In contrast to \hyperlink{B1}{\bf (B. 1)}, this condition signifies that the interaction $\Ue(i)$ is long-range in nature.

\begin{Ex}\label{Long}\upshape
For a given  $\alpha>0$, suppose that $\Ue$ takes the following form:
\be
\Ue(j)=(-1)^{j+1}|j|^{-\alpha},\quad j\neq 0.
\ee
Here,  we set  $\Ue(0)=0$. 
By using the formula:
\be 
\Gamma(\alpha) |j|^{-\alpha}=\int_0^{\infty}e^{-j x}x^{\alpha-1}dx,
\ee
we readily confirm that $\{(-1)^{i+j}\Ue(i+j+1)\}_{i, j\in \LL}$ is positive definite for all $\alpha>0$, where $\Gamma(x)$ stands for the gamma function.
\end{Ex}

\begin{Thm}\label{SparP}
Assume \hyperlink{A}{\bf (A)} and \hyperlink{B2}{\bf (B. 2)}. 
Let us consider the Hamiltonian $H_{\Lambda}^{\rm F}$ of the  fermionic system.
For each $\Lambda\in \mathbb{F}_{\rm o}$, we have the following.
\begin{itemize}
\item[\rm (i)] The semigroup $e^{-\beta H^{\rm F}_{\Lambda}}$ is ergodic w.r.t. $\rCone^{\rm F}$.
\item[\rm (ii)]The ground state of $H_{\Lambda}^{\rm F}$  can be chosen  to be  strictly reflection  positive. 
Moreover, we have the following   strict inequality:
\be
\la A\otimes \tau A\tau^{-1} \ra_{\Lambda}^{\rm F} > 0,\quad A\in \mathfrak{M}_{\LL}^{\rm F}.
\ee
In particular, we have
\begin{align}
(-1)^m\big\la \delta \hn_{i_m} \delta \hn_{i_{m-1}} \cdots  \delta \hn_{i_1} \delta \hn_{-i_1-1}
 \delta \hn_{-i_2-1} \cdots \delta \hn_{-i_m-1}  \big\ra_{\Lambda}^{\rm F} > 0 \label{SharpCDW}
\end{align}
for $i_1, i_2, \dots, i_m\in \LL$.
\end{itemize}
\end{Thm}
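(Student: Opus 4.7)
The plan is to deduce part (ii) from part (i) via the Perron--Frobenius--Faris theorem (Theorem \ref{PFF}): reflection-positivity preservation of $e^{-\beta H_\Lambda}$ is already supplied by Theorem \ref{FirstThm} (note (B.2) implies (B.1)), so what the stronger hypothesis (B.2) must buy us is the strict inequality $\la x | e^{-\beta H_\Lambda} y\ra > 0$ for every pair $x,y \in \rCone\setminus\{0\}$, i.e.\ ergodicity of the semigroup with respect to $\rCone$.

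For part (i), I would split $H_\Lambda = H_{\mathrm{loc}} + V_{\mathrm{cross}}$, where $H_{\mathrm{loc}} = H_L + \tau H_L \tau^{-1}$ collects all terms living entirely on one side of the cut at $j=-1/2$ (intra-$\LL$ hopping, Coulomb, electron--phonon and phonon terms, plus their $\tau$-mirrors on $\LR$), and $V_{\mathrm{cross}}$ gathers the cross-hopping $-t(c_{-1}^*c_0 + c_0^*c_{-1}) - t(c_{\ell-1}^*c_{-\ell} + c_{-\ell}^*c_{\ell-1})$ together with the cross Coulomb term. Using the change of variables $j=r(k)$ for $k\in\LL$ and the intertwining relation $\tau \delta\hn_k \tau^{-1} = -\delta\hn_{r(k)}$ from Theorem \ref{RPConeBasic}, the cross Coulomb contribution rewrites as
\begin{align*}
V_{\mathrm{Coul}}^{\mathrm{cross}} = -2\sum_{i,k\in\LL} M_{ik}\, A_i \otimes \tau A_k \tau^{-1},\qquad M_{ik}=(-1)^{i+k}U(i+k+1),\ A_i=(-1)^i \delta\hn_i .
\end{align*}
Under (B.2) the matrix $M$ is strictly positive definite, so a spectral decomposition $M=\sum_p \lambda_p v_p v_p^*$ with $\lambda_p>0$ and $\{v_p\}$ an orthonormal basis of $\BbbC^{\LL}$ yields $-V_{\mathrm{Coul}}^{\mathrm{cross}} = 2\sum_p \lambda_p B_p\otimes \tau B_p\tau^{-1}$ with $B_p=\sum_i v_p(i)A_i$; crucially, the family $\{B_p\}$ spans the full real linear space $\mathrm{span}_\BbbR\{\delta\hn_i : i\in\LL\}$.

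Applying a Trotter expansion $e^{-\beta H_\Lambda} = \lim_n (e^{-\beta H_{\mathrm{loc}}/n} e^{-\beta V_{\mathrm{cross}}/n})^n$ and Taylor-expanding each cross exponential, every resulting term is a product of local semigroups $e^{-sH_L}\otimes e^{-s\tau H_L\tau^{-1}}$---which improve positivity on each half-chain w.r.t.\ the factored background cones by the half-chain analogue of Theorem \ref{BGP}---multiplied by monomials of $B_p\otimes \tau B_p\tau^{-1}$ type (from $V_{\mathrm{Coul}}^{\mathrm{cross}}$) together with the Majorana-type decomposition of the cross-hopping terms employed already in the proof of Theorem \ref{FirstThm}, each carrying a strictly positive coefficient. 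Combining the positivity-improving effect of the local half-chain semigroups with the completeness of the family $\{B_p\}$ provided by (B.2), no nonzero $y\in\rCone$ can be mapped by $e^{-\beta H_\Lambda}$ to a vector orthogonal to another nonzero $x\in\rCone$, yielding ergodicity. Once (i) is in hand, Theorem \ref{PFF} applied with $A=H_\Lambda$ and cone $\rCone$ produces a ground state strictly positive w.r.t.\ $\rCone$; by uniqueness in $\h_\Lambda$ (Theorem \ref{BGP}) this coincides with $\psi_\Lambda$. The strict inequality in (ii) follows from $\la \psi_\Lambda|(A\otimes \tau A\tau^{-1})\psi_\Lambda\ra > 0$, using that $(A\otimes \tau A\tau^{-1})\psi_\Lambda \in \rCone$ (Theorem \ref{RPConeBasic}) together with the strict positivity of $\psi_\Lambda$, and \eqref{SharpCDW} is the specialization to $A=\delta\hn_{i_1}\cdots\delta\hn_{i_m}$. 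The principal obstacle is the cross-hopping terms: they are not manifestly of $B\otimes \tau B\tau^{-1}$ form and must be rewritten via Majorana operators (or an analogous phase trick) before one can combine them with the $V_{\mathrm{Coul}}^{\mathrm{cross}}$ contribution, and one must verify that the resulting Taylor expansion reaches every direction of $\rCone\setminus\{0\}$ rather than merely some invariant subcone.
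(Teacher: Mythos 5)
Your reduction of (ii) to (i) via Theorem \ref{PFF} and Theorem \ref{RPConeBasic} is exactly what the paper does, and you correctly locate where {\bf (B. 2)} must enter: through the strict positive definiteness of the cross-Coulomb matrix $M_{ik}=(-1)^{i+k}U(i+k+1)$. Your spectral decomposition of that matrix is essentially the paper's Lemma \ref{WLowerBd}, except that the paper does not use the ``spanning'' property of the family $\{B_p\}$ at all --- it only uses that the smallest eigenvalue $w_0$ is strictly positive, which gives the operator lower bound $\mathbb{W}_{LR}\unrhd w_0\sum_{i\in\LL}{\bs L}(\delta\hn_i){\bs R}(\delta\hn_i)$ w.r.t.\ $\tCone$, i.e.\ a purely \emph{diagonal} cross term whose expansion produces configuration projections $E_X\otimes\vartheta E_X\vartheta^{-1}$.

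The genuine gap is in your proposed mechanism for ergodicity. You invoke the positivity-\emph{improving} property of the half-chain semigroups $e^{-sH_L}\otimes e^{-s\tau H_L\tau^{-1}}$, but that property (Theorem \ref{BGP}) holds with respect to the background cone $\bCone$, not with respect to $\rCone$. In the $\mathscr{L}^2$ picture the local semigroup acts as $\xi\mapsto e^{-s\mathbb{K}}\,\xi\, e^{-s\mathbb{K}}$, which preserves the rank of $\xi$; a rank-one element of $\mathscr{L}^2_+$ is mapped to a rank-one element and can never become strictly positive w.r.t.\ $\tCone$. So ``local improvement plus a spanning family of cross operators'' cannot yield ergodicity w.r.t.\ $\rCone$; the entire burden falls on the cross terms, and the statement you defer at the end (``one must verify that the expansion reaches every direction of $\rCone\setminus\{0\}$'') is not a loose end but the whole proof. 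In the paper this is Theorem \ref{PISemi}: after the reduction to the diagonal cross term one writes a Dyson expansion indexed by paths of fermion configurations (Proposition \ref{ExpHLower}), reduces the problem to exhibiting a single path with nonzero amplitude joining the configurations $X_0,Y_0$ supporting the given positive vectors (Proposition \ref{RedPI}), and then constructs such a path explicitly by a cluster decomposition of the configurations, using the intra-chain pair creation/annihilation in $\mathbb{K}$ together with the boundary cross-hoppings $B_{-1}^{\sharp}, B_{-\ell}^{\sharp}$ to adjust parities (Lemma \ref{Puzzle}, Appendix \ref{PfPuzzle}); a separate Feynman--Kac argument (Lemma \ref{NonVanishing}) shows the accompanying phonon matrix elements are strictly positive. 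None of this combinatorial and probabilistic content is present in your proposal, so as it stands the ergodicity claim (i) is unproven.
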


We will provide a proof of Theorem \ref{SparP} in Section \ref{Sect5}.

\begin{rem}{\rm 
\begin{itemize}
\item[(i)]
Based on the strict inequality \eqref{SharpCDW}, we can infer that the CDW order in the ground state is strengthened under the condition \hyperlink{B2}{\bf (B. 2)}. However, to determine whether the order is long-range or not, further detailed analysis is required, as outlined in Theorem \ref{LRO}.

\item[(ii)] Whether Theorem \ref{SparP} holds for the case of $g\neq 0$ remains an open problem. 
\end{itemize}

}
\end{rem}

\subsection{Infinite chain}\label{SecInfChain}
\subsubsection{Properties of the ground state}
In this subsection, we assume the conditions \hyperlink{A}{\bf (A)} and \hyperlink{B1}{\bf (B. 1)}.
Let us consider a system on the infinite chain. Intuitively, the expectation value of an observable $A$ in the ground state of this system is defined as
\be
\la A\ra_{\BbbZ}=\lim_{\substack{\Lambda\in \mathbb{F}_{\rm o},  \ \Lambda\uparrow \BbbZ}} \la A\ra_{\Lambda},\ \ A\in \bigcup_{\Lambda\in \mathbb{F}}\mathscr{L}(\h_{\Lambda}). \label{DefInfGS}
\ee
This notion can be justified as follows. By employing Proposition \ref{HilInd2}, we can treat $\psi_{\Lambda}$ as a unit vector in $\h_{\BbbZ}$. Consequently, $\{\la \cdot \ra_{\Lambda} : \Lambda\in \mathbb{F}_{\rm o}\}$ forms a net of states on $\mathscr{L}(\h_{\BbbZ})$.
Since the unit ball of  the dual of $\mathscr{L}(\h_{\BbbZ})$  is compact in the weak-$*$ topology \cite[Theorem IV.21]{Reed1981}, there is a convergent subnet of    $\{\la \cdot \ra_{\Lambda} : \Lambda\in \mathbb{F}_{\rm o}\}$.
   The state $\la \cdot \ra_{\BbbZ}$  is thus defined as the weak-$*$ limit of this convergent subnet.

 Let  $\BbbZ_-=\{j\in \BbbZ : j <0\}$. 
 By Proposition \ref{HilInd2}, $\mathfrak{M}_{\LL}$ is naturally embedded in $\mathfrak{M}_{\LL\rq{}}$, provided that $\Lambda\subseteq \Lambda\rq{}$. We denote by  $\mathfrak{M}_{\BbbZ_-}$ the von Neumann algebra 
 generated by $\bigcup_{\Lambda\in \mathbb{F}_{\rm o}}\mathfrak{M}_{\LL}$.
 
 Similarly, we can define the ground state $\la \cdot \ra_{\BbbZ}^{\rm F}$ and the von Neumann algebra  $\mathfrak{M}^{\rm F}_{\BbbZ_-}$ for the fermionic system on the infinite chain.
\begin{Thm}\label{GSGProp}
Assume \hyperlink{A}{\bf (A)} and \hyperlink{B1}{\bf (B. 1)}.
We have the following:
\begin{itemize}
\item[\rm 1.] For any $A\in \mathfrak{M}_{\BbbZ_-}$, we have
\be
\big\la A\otimes \tau_{\BbbZ} A\tau_{\BbbZ}^{-1} \big\ra_{\BbbZ} \ge 0, \label{InfiniteGS}
\ee
where $\tau_{\BbbZ}$ is the unique extension of $\tau$ given in Theorem \ref{RPConeBasic}.
\item[\rm 2.] Suppose that $\la \cdot \ra_{\BbbZ}$ is a normal state on $\mathfrak{M}_{\BbbZ_-}\otimes 1:=
\{A\otimes 1 : A\in \mathfrak{M}_{\BbbZ_-}\}$.  Then there exists a unique normalized vector $\psi_{\BbbZ} \in \h_{\BbbZ}$ satisfying the following:
\begin{itemize}
\item[\rm (i)] $\psi_{\BbbZ}$ is reflection  positive, i.e., $\psi_{\BbbZ} \in \Cone_{{\rm r},  \BbbZ}$.
\item[\rm (ii)] $\la A\ra_{\BbbZ}=\la \psi_{\BbbZ}|A\psi_{\BbbZ}\ra$ for all $ A\in \mathscr{L}(\h_{\BbbZ})$.
\end{itemize}
Furthermore, if $\la \cdot \ra_{\BbbZ}$ is a faithful state on $\mathfrak{M}_{\BbbZ_-}\otimes 1$, then $\psi_{\BbbZ}$ is strictly reflection  positive, i.e., 
$\psi_{\BbbZ}>0$ w.r.t. $\Cone_{\rm r, \BbbZ}$. Hence, the strict inequality holds in \eqref{InfiniteGS}.
In particular, \eqref{SharpCDW} holds even for $\Lambda=\BbbZ$.
\end{itemize}
The aforementioned properties 1. and 2. hold true even if we replace $\mathfrak{M}_{\BbbZ_-}$, $\Cone_{{\rm r},  \BbbZ}$, and $\la\cdot \ra_{\BbbZ}$ with $\mathfrak{M}_{\BbbZ_-}^{\rm F}$, $\Cone_{{\rm r},  \BbbZ}^{\rm F}$, and $\la\cdot \ra_{\BbbZ}^{\rm F}$, respectively.
\end{Thm}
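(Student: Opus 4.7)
For Part 1, my plan is to lift the finite-volume inequality in Theorem \ref{FirstThm}(ii) to the thermodynamic limit. First, for $A \in \mathfrak{M}_{\LL_0}$ with a fixed $\Lambda_0 \in \mathbb{F}_{\rm o}$, I identify $A$ with its image in $\mathscr{L}(\h_{\BbbZ})$ via Proposition \ref{HilInd2} and exploit the coherence of the antiunitaries $\tau_{\Lambda}$ under the inclusions, so that $\tau_{\BbbZ}$ may indeed be defined as their inductive limit. For every $\Lambda \in \mathbb{F}_{\rm o}$ with $\Lambda_0 \subseteq \Lambda$, the compression of $A \otimes \tau_{\BbbZ} A \tau_{\BbbZ}^{-1}$ to $\iota_{\Lambda} \h_{\Lambda}$ agrees with $A \otimes \tau_{\Lambda} A \tau_{\Lambda}^{-1}$, so Theorem \ref{FirstThm}(ii) furnishes $\la A \otimes \tau_{\Lambda} A \tau_{\Lambda}^{-1}\ra_{\Lambda} \ge 0$. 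Passing to the weak-$*$ limit along the convergent subnet defining $\la \cdot \ra_{\BbbZ}$ gives \eqref{InfiniteGS} for every $A$ in the $*$-algebraic union $\bigcup_{\Lambda} \mathfrak{M}_{\LL}$. Extension to general $A \in \mathfrak{M}_{\BbbZ_-}$ proceeds by Kaplansky density: approximate $A$ in the strong-$*$ operator topology by a bounded net $A_i$ from the union and then use the (ultraweak) continuity of $\la \cdot \ra_{\BbbZ}$ on $\mathfrak{M}_{\BbbZ_-}\otimes(\tau_{\BbbZ}\mathfrak{M}_{\BbbZ_-}\tau_{\BbbZ}^{-1})$ implicit in the standard-form framework.

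For Part 2, the decisive input is that $\Cone_{{\rm r}, \BbbZ}$ is, by design, the natural positive cone of the Haagerup standard form of $\mathfrak{M}_{\BbbZ_-} \otimes 1$ acting on $\h_{\BbbZ}$, with modular conjugation implemented through $\tau_{\BbbZ}$. This is the payoff of the framework announced in Section \ref{DerPhSys} and set up through Proposition \ref{HilInd2} and Theorem \ref{PropConeSim}. Granted this, Haagerup's representation theorem yields a unique $\psi_{\BbbZ} \in \Cone_{{\rm r}, \BbbZ}$ satisfying $\la A \otimes 1 \ra_{\BbbZ} = \la \psi_{\BbbZ} | (A \otimes 1) \psi_{\BbbZ}\ra$ for all $A \in \mathfrak{M}_{\BbbZ_-}$, which is the content of (i) and of (ii) restricted to the subalgebra. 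To strengthen (ii) to all of $\mathscr{L}(\h_{\BbbZ})$, I would refine the subnet defining $\la \cdot \ra_{\BbbZ}$ so that the finite-volume ground state vectors $\psi_{\Lambda}$ (all lying in the corresponding cones $\rCone$ by Theorem \ref{FirstThm}) converge weakly in $\h_{\BbbZ}$; the compatibility $P_{\Lambda} \unrhd 0$ from Theorem \ref{PropConeSim}(2)(ii) keeps the weak limit inside $\Cone_{{\rm r}, \BbbZ}$, and by uniqueness of the vector representative in the cone this limit must be $\psi_{\BbbZ}$. The vector state $\la \psi_{\BbbZ} | \cdot \psi_{\BbbZ}\ra$ then coincides with the weak-$*$ limit $\la \cdot \ra_{\BbbZ}$ on all of $\mathscr{L}(\h_{\BbbZ})$. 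For the last assertion, the standard-form dictionary identifies faithful normal states on $\mathfrak{M}_{\BbbZ_-} \otimes 1$ with cyclic (equivalently separating) vectors in the natural cone, and cyclicity is in turn equivalent to strict positivity $\psi_{\BbbZ} > 0$ w.r.t. $\Cone_{{\rm r}, \BbbZ}$ by self-duality. Invoking the strict version of Theorem \ref{RPConeBasic}(2) then yields the strict forms of \eqref{InfiniteGS} and \eqref{SharpCDW} for $\Lambda = \BbbZ$.

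The principal obstacle is not either of the two main steps above but the structural identification underlying them: namely, that $\Cone_{{\rm r}, \BbbZ}$ is genuinely the Haagerup natural cone for $(\mathfrak{M}_{\BbbZ_-} \otimes 1, \h_{\BbbZ})$, with $J_{\BbbZ}$ arising as the inductive limit of finite-volume modular conjugations built from $\tau_{\Lambda}$. This requires checking that the finite-volume standard-form data patch together through the embeddings $\iota_{\Lambda'\Lambda}$ exactly as prescribed by Theorem \ref{PropConeSim} (and its sharper form Theorem \ref{StrongPropPP}), with the positivity-preserving projections $P_{\Lambda} \unrhd 0$ playing the decisive role in transporting the cone structure to the limit. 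Once this identification is in place, Part 1 reduces to a Kaplansky-type continuity argument and Part 2 reduces to an invocation of Haagerup's theorem combined with the standard cyclic/separating dictionary.
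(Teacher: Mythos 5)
Your proposal follows the paper's own proof in all essentials: Part 1 is exactly the finite-volume inequality \eqref{GSRPP0} passed through the weak-$*$ limit together with density of $\bigcup_{\Lambda\in\mathbb{F}_{\rm o}}\mathfrak{M}_{\LL}$ in $\mathfrak{M}_{\BbbZ_-}$ in the strong operator topology, and Part 2 is an application of the vector-representation theorem for standard forms (the paper cites Bratteli--Robinson, Theorem 2.5.31) to the standard form $\{\mathfrak{N}_{\BbbZ,L},\tilde{\h}_{\BbbZ},J,\tilde{\Cone}_{\rm r,\BbbZ}\}$ that Section \ref{SectInfiniteCh} already constructs directly, so the ``structural identification'' you single out as the principal obstacle is not something to be patched together from finite volumes but is given by definition. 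The only real deviations are minor: the paper obtains strict positivity of $\tilde{\psi}_{\BbbZ}$ by testing against ${\bs L}(|\phi\ra\la\phi|)$ rather than quoting the cyclic/separating dictionary (both work), and your auxiliary weak-convergence argument for the $\psi_{\Lambda}$ is not needed for the paper's reading of (ii) and, as stated, would not be sound anyway, since vector states are not continuous under weak convergence of the vectors.
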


We give a proof of Theorem \ref{GSGProp} in Section \ref{PfThmInfSys}.

\begin{rem}
\upshape
We briefly mention the case where we do not assume the normality of the ground state $\la\cdot \ra_{\BbbZ}$ in  item $2$ of Theorem \ref{GSGProp}. In this case, we can simply consider the GNS representation with respect to $\la\cdot\ra_{\BbbZ}$, which leads to results similar to those stated in item $2$ of Theorem \ref{GSGProp}. However,  the drawback in this case is that the direct connection between finite systems and infinite systems, which will be discussed in detail in Section \ref{SectInfiniteCh}, becomes unclear. A thorough discussion on this matter will be provided in a separate paper.
\end{rem}

\subsubsection{Long-range order}

Thus far, we have demonstrated that the ground state of $H_{\Lambda}$ displays CDW order.
We now investigate whether this order  is long-range or not. 
To present the result, we require some preliminary information. Let us define
\be
W(j)=(-1)^{j+1} \Ue(j),\quad j\in \BbbZ.
\ee
   We assume the following condtion:
\begin{description}
\item[\hypertarget{C1}{ (C. 1)}] $\{W(j): j\in \BbbZ_+\} \in \ell^1(\BbbZ_+)$, where $\BbbZ_+=\{j\in \BbbZ :  j \ge 0\}$.
\end{description}
We now define a function on the interval $\mathbb{T}=[-\pi, \pi)$ as follows:
\be
\hat{R}(p)=4\sum_{j\ge 1} W(j)\{1-\cos (pj)\}.
\ee
It has been established in \cite{Frhlich1978} that $\hat{R}(p) \ge 0$ holds almost everywhere. For further details, refer to Section \ref{PfLRO}.
 Our second assumption is the following:
\begin{description}
\item[\hypertarget{C2}{ (C. 2)}]  $\hat{R}^{-1/2} \in L^1(\mathbb{T}, dp)$.
\end{description}
In Section \ref{PfLRO}, we will prove the following:
\begin{Thm}\label{LRO}
Assume \hyperlink{A}{\bf (A)}, \hyperlink{B1}{\bf (B. 1)},  \hyperlink{C1}{\bf (C. 1)} and \hyperlink{C2}{\bf (C. 2)}.
Define 
\be
\sigma=\frac{1}{4} (2\pi)^{1/2}  -(2\pi)^{-1/2} \int_{\mathbb{T}} dp \sqrt{\frac{F(p)}{\hat{R}(p)}},\label{DefSigma}
\ee
where 
\be
 F(p)=2t(1+\cos p).
\ee
If $t$ is sufficiently small such that $\sigma>0$, then
\begin{align}
\lim_{ |j|\to \infty} (-1)^j\big\la \delta \hn_j  \delta \hn_0\big\ra_{\BbbZ}\ge \sigma >0.
\end{align} 
Thus, CDW order is long-range ordered. 
This statement holds true even if we replace $\la\cdot \ra_{\BbbZ}$ with $\la \cdot \ra_{\BbbZ}^{\rm F}$.
\end{Thm}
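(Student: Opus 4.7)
The strategy is to adapt the Dyson--Lieb--Simon / Fr\"ohlich--Simon--Spencer infrared bound, sharpened for ground states by Kennedy--Lieb--Shastry \cite{Kennedy1988}, to the present fermion--phonon chain, using the reflection positivity of Theorem~\ref{FirstThm} and the thermodynamic-limit passage of Theorem~\ref{GSGProp}. Everything is first done in finite volume $\Lambda\in\mathbb{F}_{\rm o}$ with periodic boundary conditions. Introduce the staggered Fourier modes $\tilde{A}_p = |\Lambda|^{-1/2}\sum_{j\in\Lambda} e^{-ipj}(-1)^j\dhn_j$ and the staggered structure factor $\tilde{g}_\Lambda(p) = \la\tilde{A}_p^*\tilde{A}_p\ra_\Lambda$. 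Because $\dhn_j^2=1/4$, Parseval gives the sum rule $|\Lambda|^{-1}\sum_p \tilde{g}_\Lambda(p) = 1/4$, and the quantity to be bounded from below is $\lim_{\Lambda\to\BbbZ}\tilde{g}_\Lambda(0)/|\Lambda|$, which by translation invariance coincides with $\lim_{|j|\to\infty}(-1)^j\la\dhn_j\dhn_0\ra_{\BbbZ}$.

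The heart of the proof is a Gaussian domination inequality. Perturb $H_\Lambda$ by a reflection-symmetric staggered source, $H_\Lambda(\mathbf{h}) = H_\Lambda + \sum_{j\in\Lambda} h_j(-1)^j\dhn_j$. Writing the cross-reflection part of the Coulomb energy as a sum of squares via the positive-semidefinite matrix $\{(-1)^{i+j}U(i+j+1)\}_{i,j\in\LL}$ supplied by assumption \textbf{(B.\,1)}, and iterating the reflection-positivity inequality of Theorem~\ref{FirstThm}\,(i) on the semigroup $e^{-\beta H_\Lambda(\mathbf{h})}$, one establishes
\begin{equation*}
\la e^{-\beta H_\Lambda(\mathbf{h})}\ra \le \la e^{-\beta H_\Lambda(\mathbf 0)}\ra
\end{equation*}
in the appropriate trace sense. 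Differentiating twice at $\mathbf{h}=\mathbf 0$ and sending $\beta\to\infty$ yields the ground-state infrared bound on the Duhamel inner product, $(\tilde{A}_p,\tilde{A}_p)_\Lambda \le (2\hat{R}(p))^{-1}$ for $p\ne 0$.

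Next, I would apply the Falk--Bruch (Dyson--Lieb--Simon) inequality to convert this Duhamel estimate into one for $\tilde{g}_\Lambda$. Since the Coulomb interaction and the phonon coupling $g\sum_j\dhn_j(a_j+a_j^*)$ both commute with every $\dhn_k$, only the nearest-neighbour hopping contributes to the double commutator; a direct calculation gives $\la[\tilde{A}_p^*,[H_\Lambda,\tilde{A}_p]]\ra_\Lambda \le C\cdot F(p)$ with $F(p)=2t(1+\cos p)$, leading to the pointwise bound $\tilde{g}_\Lambda(p)\le\tfrac{1}{2}\sqrt{F(p)/\hat{R}(p)}$ for $p\ne 0$. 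Isolating the $p=0$ mode in the sum rule and applying the infrared estimate to every other momentum yields a finite-volume lower bound on $\tilde{g}_\Lambda(0)/|\Lambda|$. Condition \textbf{(C.\,2)} then guarantees that the Riemann sum over $p\ne 0$ converges, as $\Lambda\to\BbbZ$, to the integral appearing in \eqref{DefSigma}, while Theorem~\ref{GSGProp} justifies the passage of the whole inequality to the infinite-volume ground state, delivering the theorem whenever $t$ is small enough to force $\sigma>0$.

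The most delicate step will be executing the Gaussian domination in the presence of the unbounded phonon operators: classical DLS proofs trace over finite-dimensional spin spaces, whereas here one must work with the semigroup on the infinite-dimensional $\Fock_\Lambda$ and verify that reflection positivity w.r.t.\ $\rCone$ survives the linear source perturbation $H_\Lambda(\mathbf{h})$. Fortunately, the operator-theoretic framework of Section~\ref{RefPosiBasic} was designed precisely to handle the bosonic sector, so the concrete task reduces to upgrading the semigroup inequality $e^{-\beta H_\Lambda(\mathbf{h})}\unrhd 0$ (w.r.t.\ $\rCone$) into the matrix inequality needed for Gaussian domination, possibly by first imposing a phonon UV cutoff and then removing it. A secondary technical point is ensuring that the $\beta\to\infty$ limit inside Falk--Bruch passes through uniformly in $\Lambda$; both obstacles look manageable with the techniques already developed earlier in the paper.
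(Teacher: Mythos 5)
Your plan follows the original finite-temperature Dyson--Lieb--Simon route: Gaussian domination for $\Tr[e^{-\beta H_{\Lambda}(\bol h)}]$, a Duhamel infrared bound, the Falk--Bruch inequality, and a final limit $\beta\to\infty$. The paper instead runs the Kennedy--Lieb--Shastry \emph{ground-state} version from start to finish: the abstract energy inequality of Lemma \ref{EngInq} (a trace Cauchy--Schwarz estimate $|\Tr[\psi B\psi C]|\le\frac12\Tr[|\psi|B|\psi|B]+\frac12\Tr[|\psi|C|\psi|C]$ applied to a $J$-real Hilbert--Schmidt ground state in the $\mathscr L^2$-representation of Section \ref{L2OrderP}) yields $E(\bol 0)\le E(\bol h)$ (Proposition \ref{EnergyI}); second-order perturbation theory then gives the susceptibility bound of Corollary \ref{CoroEnergyI}, and the Cauchy--Schwarz inequality on the spectral measure of $\tilde H_{\Lambda}-E$, combined with the double-commutator computation (only the hopping contributes, as you correctly note), gives Lemma \ref{Infra}. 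This buys precisely what you flag as the two delicate points of your plan: there is no partition function, no Falk--Bruch inequality, and no interchange of $\beta\to\infty$ with the thermodynamic limit to justify, and the unbounded phonons cause no trouble because the argument only ever uses that the ground state is a Hilbert--Schmidt operator. Your route is viable in principle ($e^{-\beta H_{\Lambda}}$ is in fact trace class here since $\omega>0$ and the fermion space is finite-dimensional), but it would require re-establishing Gaussian domination for Gibbs states and a uniform-in-$\Lambda$ control of the zero-temperature limit, none of which is needed in the paper.

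One step as you have written it would fail. The perturbation you display, $H_{\Lambda}(\bol h)=H_{\Lambda}+\sum_j h_j(-1)^j\dhn_j$, is a bare linear source, and for such a source the domination inequality goes the \emph{wrong way}: since $\la\dhn_j\ra_{\Lambda}=0$ at half filling, convexity of $\ln\Tr[e^{-\beta(\,\cdot\,)}]$ (or, at $T=0$, concavity of the ground-state energy) in a linear perturbation forces $\Tr[e^{-\beta H_{\Lambda}(\bol h)}]\ge\Tr[e^{-\beta H_{\Lambda}}]$ and $E(\bol h)\le E(\bol 0)$. Gaussian domination holds only when the field enters through the completed square, i.e.\ through $-\frac12\sum_{i\ne j}W(i-j)(\dhn_i-h_i-\dhn_j+h_j)^2$, as in the paper's $\mathbb W(\bol h)$ (which moreover symmetrizes over $h_i$ and $h_{r(i)}$ so that the perturbed Hamiltonian remains compatible with the reflection); the quadratic-in-$\bol h$ counterterm is exactly what makes $E(\bol h)\ge E(\bol 0)$ true and what produces $\la\bol h|\bol h\ra_W$, hence the factor $\hat R(p)^{-1}$ in \eqref{DefSigma}, on the right-hand side of the infrared bound. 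Relatedly, the operator whose susceptibility is bounded is $\la\bol{\delta\hn}|\bol h\ra_W=\la\bol{\delta\hn}|R\bol h\ra$, not an arbitrary staggered field, so your normalization $(\tilde A_p,\tilde A_p)_{\Lambda}\le(2\hat R(p))^{-1}$ needs this change of variables to be justified. With the quadratic completion restored, the rest of your outline (sum rule from $\dhn_j^2=1/4$, isolation of the $p=0$ mode, condition {\bf (C.\,2)} for integrability, Riemann--Lebesgue to convert the Ces\`aro statement into the limit $|j|\to\infty$) matches the paper's Lemma \ref{UpperLemm} and the concluding argument.
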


The proof of Theorem \ref{LRO} will be presented in Section \ref{PfLRO}.

\begin{Ex}\upshape

We turn our attention to the long-range interaction $U(j)$ described in Example \ref{Long}.
According to \cite[Theorem 5.5]{Frhlich1978}, the function $1/\hat{R}(p)$ is integrable when $1<\alpha<2$, which consequently implies the integrability of $\hat{R}(p)^{-1/2}$. Consequently, Theorem \ref{LRO} holds true for the range $1<\alpha<2$.
\end{Ex}

\begin{rem}
\upshape 
We are uncertain about how far the parameter range in Theorem \ref{LRO} (i.e., $\sigma > 0$) is from the optimal conditions. However, given that the fermionic hopping term tends to destabilize CDW, while the Coulomb interaction term stabilizes it, the result of the theorem---that CDW exhibits long-range order when $t$ is small and $\hat{R}$ is large---appears physically reasonable.
\end{rem}

\subsection{Remark on the generalization of the interaction term}

So far, we have discussed only the case of on-site interaction when considering fermion-phonon interactions. The results of this paper can be extended to a more general fermion-phonon interaction, as elaborated below.

Suppose that the real-valued function, denoted as $g(j)$, defined on  $\mathbb{Z}$, is subject to the following conditions:
\begin{itemize}
\item[(i)] $g\in \ell^2(\BbbZ)$.
\item[(ii)] $g(-i)=g(i)$ for all $i\in \Lambda$.
\item[(iii)] $g\left(r(i)-r(j)\right)-g\left(r(i)-r(j+1)\right)=-\{g(i-j)-g(i-j-1)\}$.
\end{itemize}
In this context, we consider the extended Hamiltonian given by:
\begin{align}
H_{\Lambda}=H_{\Lambda}^{\rm F}+
\sum_{i, j\in \Lambda}g(i-j)\delta \hn_i(a_j+a_j^*)+\omega\sum_{j\in \Lambda}a_j^*a_j. \label{DefHamiFP2}
\end{align} 
It is worth noting that the previous results correspond to the special case of $g(i) = g \delta_{i, 0}$, where $\delta_{i, j}$ represents the Kronecker delta.
To express our findings, we introduce the effective Coulomb interaction as follows:
\be
U_{\rm eff, \Lambda}(i-j)=U(i-j)-\frac{2}{\omega}\sum_{k\in \Lambda} g(i-k) g(k-j).
\ee
When $\Lambda = \mathbb{Z}$, we simply denote $U_{\text{eff,} \Lambda}$ as $U_{\text{eff}}$. Under this condition, the following assertions hold:
\begin{itemize}
\item
Assuming the validity of the condition where $U$ in \hyperlink{B1}{\bf (B. 1)} is replaced by $U_{\text{eff,} \Lambda}$, Theorems \ref{FirstThm} and  \ref{GSGProp}  also hold for the extended Hamiltonian.
\item 
Assuming the conditions where $U$ is replaced by $U_{\text{eff}}$ in both \hyperlink{C1}{\bf (C. 1)} and \hyperlink{C2}{\bf (C. 2)}, Theorem \ref{LRO} holds for the extended Hamiltonian.
\end{itemize}

The proofs of these assertions are omitted here since they closely parallel the discussions in the subsequent sections regarding the case of onsite interactions.

The conditions of the theorems in this paper rarely involve those related to the interaction between fermions and phonons. The effects of this interaction become evident by examining interactions beyond the on-site interaction, as mentioned above.

\subsection{Organization}
The remaining sections of this paper are organized as follows:
In Section \ref{Sect2}, we commence by reviewing essential information regarding self-dual cones and standard forms. Additionally, we introduce a generalized form of order-preserving operator inequalities and present several fundamental lemmas. These concepts and results are crucial for expressing the notion of reflection positivity in the subsequent sections.
\smallskip

In Section \ref{DefRefPosi}, we provide a precise definition of $\rCone$ and thoroughly examine its properties. Additionally, we prove Theorems \ref{PropConeSim} and \ref{GSGProp}. These results serve as fundamental components for the subsequent sections.
\smallskip

Section \ref{Sect4} focuses on proving Theorem \ref{FirstThm}. We delve into the distinguishing properties of the semigroup generated by $H_{\Lambda}$ and establish that $e^{-\beta H_{\Lambda}}$ preserves the reflection positivity for all $\beta \ge 0$. By combining this result with Theorem \ref{PropConeSim}, we can deduce that the ground state displays CDW order.
\smallskip

Section \ref{Sect5} presents the proof of Theorem \ref{SparP}. We demonstrate that $e^{-\beta H_{\Lambda}^{\rm F}}$ is ergodic with respect to $\rCone^{\rm F}$. Due to the lengthiness of the proof, we divide it into several steps. The most intricate part of the proof is provided in Appendix \ref{PfPuzzle}. By utilizing this result in conjunction with Theorem \ref{FirstThm}, we ascertain that the CDW order is amplified when the interaction $U(i-j)$ is long-range.
\smallskip

In Section \ref{PfLRO}, we provide a proof of Theorem \ref{LRO}, which establishes the existence of long-range order in the ground state. Our proof extends the work presented in \cite{DLS}. However, we must exercise additional caution due to the presence of complex phase factors induced by the fermion-phonon interaction, which do not appear in \cite{DLS}.

\subsection*{Acknowledgements}

T. M.  was  supported by JSPS KAKENHI Grant Numbers  20KK0304 and 	23H01086.  T. M. expresses deep gratitude for the generous hospitality provided by Stefan Teufel at the Department of Mathematics, University of T\"{u}bingen, where T. M. authored a segment of this manuscript during his stay.

\subsection*{Declarations}
\begin{itemize}
\item  Conflict of interest: The Authors have no conflicts of interest to declare
that are relevant to the content of this article.
\item  Data availability: Data sharing is not applicable to this article as no
datasets were generated or analysed during the current study
\end{itemize}

\section{Order preserving operator inequalities}\label{Sect2}
\subsection{Basic properties of order preserving operator  inequalities}

In this subsection, we consider a separable complex Hilbert space $\h$. We assume that a self-dual cone $\Cone$ in $\h$ is given.

We readily confirm the following lemma:
\begin{lemm}
Let $A,B\in\mathscr L(\h)$.
 Suppose that  $A,B\unrhd0\wrt \Cone$. We have the following:
 \begin{itemize}
 \item[{\rm (i)}] If $a, b\ge 0$, then $aA+bB\unrhd0\wrt \Cone$;
 \item[{\rm  (ii)}] $AB\unrhd0\wrt \Cone$.
 \end{itemize}
\end{lemm}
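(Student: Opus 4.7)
The plan is to unwind the definition of $\unrhd 0$ directly; both parts follow by tracing through what it means for an operator to preserve the self-dual cone $\mathscr{P}$.

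First I would fix an arbitrary $x \in \mathscr{P}$ and observe that by hypothesis $Ax \in \mathscr{P}$ and $Bx \in \mathscr{P}$. For (i), the proof reduces to noting that $\mathscr{P}$ is a convex cone: since $a, b \ge 0$, both $aAx$ and $bBx$ lie in $\mathscr{P}$, and the sum $aAx + bBx = (aA + bB)x$ again lies in $\mathscr{P}$ by closure of $\mathscr{P}$ under addition and nonnegative scaling. As $x \in \mathscr{P}$ was arbitrary, this gives $(aA + bB)\mathscr{P} \subseteq \mathscr{P}$, i.e. $aA + bB \unrhd 0$ w.r.t. $\mathscr{P}$.

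For (ii), the argument is a one-line composition: for $x \in \mathscr{P}$, set $y = Bx$. By $B \unrhd 0$ w.r.t. $\mathscr{P}$ we have $y \in \mathscr{P}$, and then $A \unrhd 0$ w.r.t. $\mathscr{P}$ gives $Ay = ABx \in \mathscr{P}$. Hence $(AB)\mathscr{P} \subseteq \mathscr{P}$, which is the conclusion.

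There is essentially no obstacle here; the lemma is a pure bookkeeping statement about the semigroup of positivity-preserving operators being closed under composition and under conic combinations. The only subtlety worth flagging explicitly in the write-up is that closure of $\mathscr{P}$ under nonnegative linear combinations is built into the definition of a convex cone used in the paper (together with self-duality, which is not actually needed for this lemma but is standing throughout Section \ref{Sect2}). No topological closure, spectral theory, or involution structure enters the argument.
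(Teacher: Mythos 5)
Your proof is correct: both parts follow exactly by unwinding the definition $A\unrhd 0 \Leftrightarrow A\Cone\subseteq\Cone$, using only that $\Cone$ is a convex cone (closed under addition and nonnegative scaling), and your remark that self-duality is not needed here is accurate. The paper itself gives no argument but simply cites \cite{Miura2003, Miyao2016}, where the proof is this same direct definitional verification, so your route coincides with the standard one.
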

\begin{proof}
For proof, see, e.g., \cite{Miura2003, Miyao2016}. 
\end{proof}

Below, we present three fundamental lemmas concerning the order-preserving operator inequalities for subsequent utilization.

\begin{lemm}
Let $A,B,C,D\in\mathscr L(\h)$. Suppose $A\unrhd B\unrhd0\wrt\Cone$ and $C\unrhd D\unrhd0\wrt\Cone$. Then we have $AC\unrhd BD\unrhd0\wrt\Cone$
\end{lemm}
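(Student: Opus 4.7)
The plan is to reduce the claim to the previous lemma via the standard telescoping identity
\[
AC-BD \;=\; A(C-D) + (A-B)D.
\]
Each summand on the right is a product of two operators that are $\unrhd 0$ w.r.t.\ $\Cone$: for the first, $A\unrhd 0$ and $C-D\unrhd 0$ by hypothesis; for the second, $A-B\unrhd 0$ and $D\unrhd 0$ by hypothesis. By part (ii) of the preceding lemma (composition of positivity-preserving operators), both $A(C-D)$ and $(A-B)D$ are $\unrhd 0$ w.r.t.\ $\Cone$. By part (i) of the preceding lemma (with $a=b=1$), their sum is also $\unrhd 0$ w.r.t.\ $\Cone$, i.e.\ $AC-BD\unrhd 0$ w.r.t.\ $\Cone$. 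This gives $AC\unrhd BD$ w.r.t.\ $\Cone$.

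It remains to verify the two side conditions needed for the symbol $\unrhd$ to be meaningful, namely that $AC$ and $BD$ preserve the reality $\h_{J}\subseteq\h_{J}$, and that $BD\unrhd 0$ w.r.t.\ $\Cone$. The latter is immediate from part (ii) of the preceding lemma applied to $B,D\unrhd 0$. For the former, any operator $T$ with $T\Cone\subseteq\Cone$ automatically satisfies $T\h_{J}\subseteq\h_{J}$, since every $x\in\h_{J}$ decomposes as $x=x_{+}-x_{-}$ with $x_{\pm}\in\Cone$, and $Tx=Tx_{+}-Tx_{-}\in\Cone-\Cone\subseteq\h_{J}$. Applied to $A,B,C,D$, and then to $AC,BD$ (which have already been shown to be $\unrhd 0$), this yields reality preservation.

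I do not foresee a genuine obstacle here; the argument is a direct two-term telescoping identity combined with the additivity and multiplicativity properties of the order preserving operator inequality established in the two immediately preceding lemmas. The only point worth being careful about is bookkeeping the reality-preservation hypothesis in the definition of $\unrhd$, which, as noted above, is automatic for operators that preserve $\Cone$.
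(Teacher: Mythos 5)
Your proof is correct, and since the paper itself only refers to \cite{Miura2003,Miyao2016} for this lemma, your telescoping identity $AC-BD=A(C-D)+(A-B)D$ combined with the additivity and multiplicativity of $\unrhd$ from the preceding lemma is exactly the standard argument; your remark that any operator mapping $\Cone$ into $\Cone$ automatically preserves $\h_{J}$ (because $\Cone-\Cone\subseteq\h_{J}$) correctly settles the reality-preservation bookkeeping. One cosmetic point: $A\unrhd 0$ and $C\unrhd 0$ are not literally among the hypotheses but follow at once from $A=(A-B)+B$ and $C=(C-D)+D$ together with part (i) of the preceding lemma, so your appeal to them ``by hypothesis'' should be read in that sense.
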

\begin{proof}
For proof, see, e.g., \cite{Miura2003,Miyao2016}. 
\end{proof}

\begin{lemm}\label{WeakPP} Let $\{A_n\}_{n\in \BbbN}$ be a sequence of bounded operators on $\h$. Let $A$ be
a bounded operator on $\h$. Suppose that $A_n$ weakly converges to $A$ as $n\to \infty$.
If $A_n  \unrhd 0$  w.r.t. $\Cone$ for all $n\in \BbbN$, then $A\unrhd   0 $ w.r.t. $\Cone$.
\end{lemm}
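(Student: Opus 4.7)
The plan is to prove Lemma \ref{WeakPP} by reducing the statement $A\unrhd 0$ w.r.t.\ $\Cone$ to a family of scalar inequalities and passing to the weak limit in each.

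First, I would unpack the statement to be proved: $A\unrhd 0$ w.r.t.\ $\Cone$ means $A\Cone\subseteq\Cone$. Since $\Cone$ is self-dual, membership in $\Cone$ admits the dual characterization
\begin{equation*}
z\in\Cone\ \Longleftrightarrow\ \la z|y\ra\ge 0\ \text{for all }y\in\Cone.
\end{equation*}
Thus it suffices to show that for every $x,y\in\Cone$ one has $\la Ax|y\ra\ge 0$, or equivalently $\la y|Ax\ra\ge 0$ (this quantity will automatically be real, as it is the limit of real numbers below).

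Next, I would invoke the hypothesis. For each $n\in\BbbN$, $A_n\unrhd 0$ w.r.t.\ $\Cone$ means $A_n x\in\Cone$, and so by self-duality
\begin{equation*}
\la y|A_n x\ra\ge 0\quad\text{for all }n.
\end{equation*}
Weak convergence of $A_n$ to $A$ yields $\la y|A_n x\ra\to\la y|A x\ra$ as $n\to\infty$. Since the set of nonnegative reals is closed, the limit satisfies $\la y|A x\ra\ge 0$. As $y\in\Cone$ was arbitrary, self-duality gives $A x\in\Cone$, and since $x\in\Cone$ was arbitrary, $A\Cone\subseteq\Cone$, i.e., $A\unrhd 0$ w.r.t.\ $\Cone$.

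There is essentially no obstacle here: the argument is a one-line application of self-duality combined with closedness of $[0,\infty)$ under limits, and no regularity or domain issues arise because all operators $A_n$ and $A$ are already bounded on $\h$.
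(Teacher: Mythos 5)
Your argument is correct: the dual characterization of membership in a self-dual cone, applied to $\la y|A_nx\ra\ge 0$ and the weak limit, is exactly the standard proof of this fact. The paper itself does not spell out an argument but only cites \cite[Proposition 2.8]{Miyao2019-2}, and your reasoning is the same elementary self-duality argument that underlies that reference, so there is nothing to add.
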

\begin{proof}
See \cite[Proposition 2.8]{Miyao2019-2}.
\end{proof}

\begin{lemm}\label{ppiexp1}
Let $A,B$ be self-adjoint operators on $\h$. Assume that $A$ is bounded from below and  that  $B\in\mathscr L(\h)$. Furthermore, suppose that $e^{-tA}\unrhd0\wrt\Cone$ for all $t\geq0$ and $B\unrhd0\wrt\Cone$. Then we have $e^{-t(A-B)}\unrhd e^{-tA}\wrt\Cone$ for all $t\geq0$.
\end{lemm}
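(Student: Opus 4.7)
The plan is to expand $e^{-t(A-B)}$ as a Dyson series around the unperturbed semigroup $e^{-tA}$, and verify that each term beyond the zeroth is $\unrhd 0$ w.r.t.\ $\Cone$. Since $B\in\mathscr{L}(\h)$ is self-adjoint, $A-B$ is self-adjoint on $\D(A)$ and bounded from below, so $e^{-t(A-B)}$ is a well-defined strongly continuous semigroup. Standard perturbation theory (with perturbation $V=-B$) yields the norm-convergent Dyson expansion
\begin{align*}
e^{-t(A-B)} = e^{-tA} + \sum_{n=1}^\infty D_n(t),
\end{align*}
where
\begin{align*}
D_n(t) = \int_{0\le s_1\le\cdots\le s_n\le t} e^{-(t-s_n)A}\, B\, e^{-(s_n-s_{n-1})A}\, B\,\cdots\, B\, e^{-s_1 A}\, ds_1\cdots ds_n;
\end{align*}
the factor $(-1)^n$ from the Dyson prefactor and the $(-1)^n$ from the $n$ copies of $-B$ cancel, leaving positive coefficients throughout.

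Next, I would establish that $D_n(t)\unrhd 0$ w.r.t.\ $\Cone$ for each $n\ge 1$. Pointwise in $(s_1,\ldots,s_n)$, the integrand is a finite product of operators that are individually $\unrhd 0$ w.r.t.\ $\Cone$ (by the hypotheses $e^{-sA}\unrhd 0$ for $s\ge 0$ and $B\unrhd 0$); hence by the product rule for $\unrhd 0$-operators recalled at the start of this section, the integrand is itself $\unrhd 0$. Since $s\mapsto e^{-sA}$ is strongly continuous, the integrand is strongly continuous in the parameters, so the integral is a norm limit of Riemann sums. Each Riemann sum is a positive linear combination of $\unrhd 0$ operators and thus $\unrhd 0$. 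Applying Lemma~\ref{WeakPP} (the cone $\{X\in\mathscr{L}(\h):X\unrhd 0\}$ is closed under weak, and a fortiori norm, limits), we conclude $D_n(t)\unrhd 0$.

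Finally, $e^{-t(A-B)} - e^{-tA} = \sum_{n=1}^\infty D_n(t)$ is a norm-convergent sum whose partial sums are positive linear combinations of $\unrhd 0$ operators and hence $\unrhd 0$. A second application of Lemma~\ref{WeakPP} yields $e^{-t(A-B)} - e^{-tA} \unrhd 0$ w.r.t.\ $\Cone$, which is precisely $e^{-t(A-B)} \unrhd e^{-tA}$. The main technical point will be a careful justification of the Riemann-sum approximation of $D_n(t)$ and the interchange of the infinite sum with the $\unrhd 0$ property; both reduce to strong continuity of $s\mapsto e^{-sA}$ combined with Lemma~\ref{WeakPP}, so no serious obstacle is anticipated.
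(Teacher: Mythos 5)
Your argument is correct, but it takes a genuinely different route from the paper. The paper's proof is a two-line Trotter argument: since $B\unrhd 0$ w.r.t.\ $\Cone$, one has $e^{tB}=\sum_{n\ge0}\frac{t^n}{n!}B^n\unrhd 1$, hence $\bigl(e^{-\frac{t}{n}A}e^{\frac{t}{n}B}\bigr)^n\unrhd\bigl(e^{-\frac{t}{n}A}\bigr)^n=e^{-tA}$ by the product rule for ordered operators, and Lemma~\ref{WeakPP} applied to the Trotter limit gives $e^{-t(A-B)}\unrhd e^{-tA}$; this avoids any discussion of operator-valued integrals. Your Duhamel/Dyson expansion is sound: with $V=-B$ the signs do cancel, the series converges in norm because $B$ is bounded and $A$ is bounded below, each simplex integrand is a product of cone-preserving operators, and positivity of the integrals plus Lemma~\ref{WeakPP} for the partial sums closes the argument. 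Two small points: the Riemann sums converge only strongly (not in norm) since $s\mapsto e^{-sA}$ is merely strongly continuous, but that is enough because Lemma~\ref{WeakPP} only needs weak convergence; and reality preservation of $e^{-t(A-B)}$ and $e^{-tA}$, required by the definition of $\unrhd$, is automatic since cone-preserving operators preserve $\h_J=\Cone-\Cone$. What your approach buys is term-by-term information, namely $e^{-t(A-B)}\unrhd e^{-tA}+D_1(t)+\cdots+D_N(t)$ with each $D_n(t)\unrhd 0$, which is precisely the kind of Dyson-term lower bound the paper itself exploits later (Section~\ref{Sect5} and Appendix~\ref{PfPuzzle}); the Trotter route is shorter but yields only the final inequality.
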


\begin{proof}
Because  $B\unrhd0\wrt\Cone$, we have $e^{tB}=\sum_{n=0}^{\infty}\frac{t^n}{n!}B^n\unrhd1\wrt\Cone$ for all $t\geq0$. By the Trotter product formula \cite[Theorem S. 20]{Reed1981} and Lemma \ref{WeakPP}, for all $t\geq0$, we obtain
\begin{align}
e^{-t(A-B)}=\lim_{n\to\infty}\left(e^{-\frac{t}{n}A}e^{\frac{t}{n}B}\right)^n\unrhd e^{-tA}\wrt\Cone.
\end{align}
\end{proof}

The following proposition will play an important role in the present paper.
\begin{Prop}\label{GSP}
Let $A$ be a positive self-adjoint operator. Assume that $e^{-\beta A}
 \unrhd 0$ w.r.t. $\Cone$ for some $\beta \ge 0$. Assume that $E=\inf
 \mathrm{spec}(A)$ is an eigenvalue of $A$. Then there exists a nonzero vector
 $x\in \ker(A-E)$ such that $x\ge 0$ w.r.t. $\Cone$. 
 \end{Prop}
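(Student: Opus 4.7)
The plan is to adapt the classical Perron--Frobenius argument to the operator-inequality setting: starting from any nonzero $y\in \ker(A-E)$, I reduce to the case $y\in \h_J$, decompose $y=y_+-y_-$ with $y_\pm\in \Cone$ and $\la y_+|y_-\ra=0$, and then show that the candidate $x:=y_++y_-\in \Cone$ is itself an eigenvector for $E$.

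First I would verify that $\ker(A-E)$ is $J$-invariant. For $\phi\in \h_J$, the decomposition $\phi=\phi_+-\phi_-$ with $\phi_\pm\in \Cone$ combined with $e^{-\beta A}\unrhd 0$ gives $e^{-\beta A}\phi_\pm\in \Cone\subseteq \h_J$, so $\h_J$ is $e^{-\beta A}$-invariant, which is equivalent to $Je^{-\beta A}J=e^{-\beta A}$. Since $J$ is antiunitary, the functional-calculus identity $Jf(e^{-\beta A})J=\bar f(e^{-\beta A})$ applied to $f=\mathbf{1}_{\{e^{-\beta E}\}}$ then shows that $J$ commutes with the spectral projection $P_E$ of $A$ onto $\{E\}$. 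Hence $\ker(A-E)=\mathrm{ran}(P_E)$ is $J$-invariant, and for any nonzero $y\in \ker(A-E)$ at least one of $y+Jy$ or $\im^{-1}(y-Jy)$ is a nonzero element of $\ker(A-E)\cap \h_J$; I replace $y$ by this vector and thus assume $y\in \h_J$.

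Next I write $y=y_+-y_-$ with $y_\pm\in \Cone$ and $\la y_+|y_-\ra=0$, and set $x=y_++y_-\in \Cone$. Orthogonality gives $\|x\|^2=\|y\|^2>0$. Expanding and subtracting,
\begin{align*}
\la x|e^{-\beta A}x\ra-\la y|e^{-\beta A}y\ra=2\la y_+|e^{-\beta A}y_-\ra+2\la y_-|e^{-\beta A}y_+\ra\ge 0,
\end{align*}
the inequality being a consequence of $e^{-\beta A}y_\pm\in \Cone$ together with self-duality of $\Cone$. Combining this with $\la y|e^{-\beta A}y\ra=e^{-\beta E}\|y\|^2=e^{-\beta E}\|x\|^2$ and the upper bound $\la x|e^{-\beta A}x\ra\le e^{-\beta E}\|x\|^2$ (from $E=\inf\mathrm{spec}(A)$) forces equality throughout, so $\la x|(e^{-\beta E}-e^{-\beta A})x\ra=0$; since $e^{-\beta E}-e^{-\beta A}\ge 0$, this yields $e^{-\beta A}x=e^{-\beta E}x$. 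For $\beta>0$ the injectivity of $\lambda\mapsto e^{-\beta\lambda}$ on $\mathrm{spec}(A)\subseteq[E,\infty)$ then upgrades this to $Ax=Ex$, so $x$ is the required nonzero positive eigenvector.

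The main technical hurdle will be the first step, namely translating the positivity-preservation condition $e^{-\beta A}\Cone\subseteq \Cone$ into $J$-invariance of the spectral data of $A$; once this is in place the variational comparison is essentially a one-line computation. A minor point worth flagging is that the hypothesis \lq\lq{}for some $\beta\ge 0$\rq\rq{} must effectively be $\beta>0$ (the case $\beta=0$ being vacuous), which is harmless in the applications used later in the paper since there positivity preservation holds for every $\beta\ge 0$.
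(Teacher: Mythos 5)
Your proof is correct. The paper itself does not spell out an argument for this proposition (it only cites \cite[Proposition A.6]{Miyao2016(2)}), but your argument is the standard Perron--Frobenius--Faris one that the cited reference is based on: reduce to a $J$-real eigenvector via the $J$-invariance of $\ker(A-E)$, pass to $x=y_++y_-$, and use the variational characterization of $E$ together with self-duality of $\Cone$ to force $x$ back into the eigenspace. Your observation that the hypothesis must effectively read $\beta>0$ (the case $\beta=0$ being vacuous) is a legitimate and worthwhile clarification of the statement, and it is harmless here since in all applications positivity preservation holds for every $\beta\ge 0$.
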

\begin{proof} 
See  \cite[Proposition A.6]{Miyao2016(2)} for further details. It is worth noting that in \cite{Miyao2016(2)}, a stronger condition, namely $e^{-\beta A} \unrhd 0$ w.r.t.  $\Cone$ for {\it all} $\beta\ge 0$, is assumed. However, we can easily verify that the proof in \cite{Miyao2016(2)} can be extended to our current setting.
\end{proof}

\subsection{Order preserving operator inequalities on $\mathscr{L}^2(\h)$}\label{L2OrderP}

Let  $\mathscr{L}^2(\h)$ be the set of all Hilbert--Schmidt operators on $\h$:
 $\mathscr{L}^2(\h)=\{\xi\in \mathscr{L}(\h)\, :\, \tr[\xi^*\xi]<\infty\}$.
In what follows, we regard $\mathscr{L}^2(\h)$ as a Hilbert space equipped with the inner product 
$\langle \xi| \eta\rangle_2=\tr[\xi^* \eta],\ \xi, \eta\in \mathscr{L}^2(\h)$.
We will often omit the subscript $2$ in the inner product notation when there is no ambiguity.

Let $\h$ and $\overline{\h}$ be Hilbert spaces. We assume that an antiunitary operator $\vartheta$ is given, mapping $\h$ onto $\overline{\h}$. We define   the mapping $\Psi_\vartheta:\h\otimes\overline{\h}\longrightarrow\mathscr L^2(\h)$ as follows:
\begin{align}
\Psi_\vartheta(\phi\otimes\vartheta\psi)=|\phi\rangle\langle\psi|,\quad\phi,\psi\in\h. \label{DefTheta}
\end{align}
Since $\Psi_\vartheta$ is a unitary operator, we can identify $\h\otimes\overline{\h}$ with $\mathscr{L}^2(\h)$ in a natural way. This identification is expressed as:
\begin{align}
\h\otimes \overline{\h}\underset{\Psi_{\vartheta}}{=}\mathscr{L}^2(\h). \label{IdnSym}
\end{align}
Sometimes, we may omit the subscript $\Psi_{\vartheta}$ in (\ref{IdnSym}) if there is no ambiguity.

Let $A\in\mathscr L(\h)$ be given. We define the left multiplication operator $\mathcal{L}(A)$ and the right multiplication operator $\mathcal{R}(A)$ as follows:
\begin{align}
\mathcal{L}(A) \xi=A\xi,\ \ \mathcal{R}(A)\xi=\xi A,\ \ \xi\in \mathscr{L}^2(\h). \label{DefLRM}
\end{align}
It is trivial to see that $\mathcal{L}(A)$ and $\mathcal{R}(A)$ are bounded operators on $\mathscr{L}^2(\h)$.
Additionally, we can readily confirm the following properties:
\begin{align}
\mathcal{L}(A)\mathcal{L}(B)=\mathcal{L}(AB),\ \ \ \mathcal{R}(A)\mathcal{R}(B) =\mathcal{R}(BA),\\
(\mathcal{L}(A))^*=\mathcal{L}(A^*),\ \ \ (\mathcal{R}(A))^*=\mathcal{R}(A^*)
\end{align}
and 
\be
[\mathcal{L}(A), \mathcal{R}(B)]=0.
\ee
Under the identification (\ref{IdnSym}), we have the following identities:
\begin{align}
A\otimes 1=\mathcal L(A),\ \
1\otimes \vartheta A \vartheta^{-1}= \mathcal R(A^*),\ \ A\in \mathscr{L}(\h). \label{LRIden}
\end{align}
These identities establish the correspondence between the left and right multiplication operators on $\mathscr{L}^2(\h)$ and the tensor product operators on $\h\otimes \overline{\h}$ in terms of the identification (\ref{IdnSym}).

Let $\mathscr L^2_{+}(\h)$ be defined as follows:
\begin{align}
\mathscr L^2_{+}(\h)=\{\xi \in\mathscr L^2(\h)\,:\,\xi\geq0\}, \label{DefI_{2, +}}
\end{align}
where the inequality in the right-hand side of the equation denotes the standard operator inequality: Specifically,  $\xi\ge 0$ if and only if $\la x| \xi x\ra\ge 0$ for all $x\in \h$.
It is widely acknowledged that $\mathscr{L}^2_{+}(\h)$ constitutes a self-dual cone within $\mathscr{L}^2(\h)$, as presented in \cite[Proposition 2.5]{Miyao2016}. The involution associated with $\mathscr{L}^2_+(\h)$ is defined as follows:
\be
J\xi=\xi^*,\ \ \xi\in \mathscr{L}^2(\h).
\ee
Thus, we can observe that $\mathscr{L}^2(\h)_J=\{\xi\in \mathscr{L}^2(\h) : \mbox{$\xi$ is self-adjoint}\}$.
It is easily verified that
\be
J\mathcal{L}(A)J= \mathcal{R}(A^*),\ \ \ J\mathcal{R}(A)J= \mathcal{L}(A^*),\ \ A\in \mathscr{L}(\h). \label{ConjuProp}
\ee
Set 
\be
\mathfrak{N}_L=\{\mathcal{L}(A) : A\in \mathscr{L}(\h)\},\ \ \mathfrak{N}_R=\{\mathcal{R}(A) : A\in \mathscr{L}(\h)\}.
\ee
Then it follows from \eqref{ConjuProp} that 
\be
\mathfrak{N}_L\rq{}=J \mathfrak{N}_L J=\mathfrak{N}_R,
\ee
where $\mathfrak{N}_L\rq{}$ stands for the commutant of $\mathfrak{N}_L$.

Let $\xi$ denote an arbitrary vector in $\mathscr{L}^2(\h)$ that is strictly positive. Consequently, $\xi$ possesses the properties of being both cyclic and separating for $\mathfrak{N}_L$. Moreover, we can establish the equality:
\be
\mathscr{L}^2_+(\h)=\overline{\{A JAJ\xi : A\in \mathfrak{N}_L\}},
\ee
where the closure is taken with respect to the strong topology.
In conclusion, we can summarize the aforementioned points as follows.

\begin{Prop}\label{PPI2}
 The quadruple $\{\mathfrak{N}_L,  \mathscr{L}^2(\h), J, \mathscr{L}^2_+(\h)\}$ is a standard form\footnote{As for the definition of  the standard form, see \cite{Takesaki2003}} of $\mathfrak{N}_L$. In particular, 
we have $\mathcal L(A)\mathcal R(A^*)\unrhd0\wrt \mathscr{L}^2_{ +}(\h)$ for $A\in \mathscr{L}(\h)$. Hence, under  the identification (\ref{IdnSym}), we have $A\otimes \vartheta A \vartheta^{-1}
 \unrhd 0$ w.r.t. $\mathscr{L}^2_+(\h)$.
\end{Prop}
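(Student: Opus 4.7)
The plan is to establish the two claims of the proposition in sequence, with the first (standard form) reducing most of its work to facts already collected in the paragraph preceding the statement, and the second (positivity preservation) following by a direct computation on $\mathscr{L}^2_+(\h)$.

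For the standard form assertion, I would verify Haagerup's four axioms: that $J$ is an involution on $\mathscr{L}^2(\h)$ satisfying $J\mathfrak{N}_L J=\mathfrak{N}_L'$, that $J\xi=\xi$ for every $\xi\in\mathscr{L}^2_+(\h)$, that $JxJ=x^*$ for every $x$ in the center of $\mathfrak{N}_L$, and that $A\,JAJ\,\mathscr{L}^2_+(\h)\subseteq\mathscr{L}^2_+(\h)$ for every $A\in\mathfrak{N}_L$. The involutive property of $J\xi=\xi^*$ is immediate, and $J\mathfrak{N}_L J=\mathfrak{N}_R=\mathfrak{N}_L'$ is precisely the relation displayed via \eqref{ConjuProp} just before the statement. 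For $\xi\ge 0$, the equality $\xi^*=\xi$ is tautological. The center of $\mathfrak{N}_L\simeq\mathscr{L}(\h)$ consists only of scalar multiples of the identity (since $\mathscr{L}(\h)$ is a factor), so the third axiom reduces to $\bar{c}=c^*$, which is trivial. Finally, for the fourth axiom, I would use the identification $\mathcal{R}(A^*)=J\mathcal{L}(A)J$ from \eqref{ConjuProp} to rewrite $\mathcal{L}(A)\,J\mathcal{L}(A)J\,\xi=A\xi A^*$, and then note that any $\xi\in\mathscr{L}^2_+(\h)$ admits $\xi=\eta^*\eta$ with $\eta=\xi^{1/2}\in\mathscr{L}^2(\h)$, so that
\begin{equation*}
A\xi A^* = (\eta A^*)^*(\eta A^*)\ge 0,
\end{equation*}
which is again an element of $\mathscr{L}^2_+(\h)$.

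The second claim, that $\mathcal{L}(A)\mathcal{R}(A^*)\unrhd 0$ w.r.t.~$\mathscr{L}^2_+(\h)$, follows at once from the same computation: for every $\xi\in\mathscr{L}^2_+(\h)$,
\begin{equation*}
\mathcal{L}(A)\mathcal{R}(A^*)\xi = A\xi A^* \in \mathscr{L}^2_+(\h).
\end{equation*}
Finally, under the identification $\Psi_\vartheta\colon\h\otimes\overline{\h}\to\mathscr{L}^2(\h)$ defined in \eqref{DefTheta}, the relations \eqref{LRIden} give $A\otimes 1=\mathcal{L}(A)$ and $1\otimes \vartheta A\vartheta^{-1}=\mathcal{R}(A^*)$; multiplying these, the operator $A\otimes\vartheta A\vartheta^{-1}$ corresponds to $\mathcal{L}(A)\mathcal{R}(A^*)$, so the positivity-preservation transfers directly.

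I do not anticipate any substantial obstacle: the bulk of the proof is a repackaging of \eqref{ConjuProp}, the cone description, and the observation that $\xi\mapsto A\xi A^*$ preserves operator positivity. The only point where a little care is required is the verification that $\mathfrak{N}_L$ is a factor (used for the center-axiom of Haagerup's definition), and that $\mathscr{L}^2_+(\h)$ coincides with the closure $\overline{\{AJAJ\xi:A\in\mathfrak{N}_L\}}$ for a strictly positive cyclic separating $\xi$; the latter was asserted in the excerpt and can be justified by approximating a general positive Hilbert--Schmidt operator by operators of the form $A\xi A^*$ through functional calculus on $\xi$.
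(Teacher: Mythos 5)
Your proof is correct. The only substantive difference from the paper is one of packaging: the paper does not check the axioms one by one but instead observes that any strictly positive $\xi\in\mathscr{L}^2_+(\h)$ is cyclic and separating for $\mathfrak{N}_L$ and that $\mathscr{L}^2_+(\h)=\overline{\{AJAJ\xi : A\in\mathfrak{N}_L\}}$, so that the quadruple is recognized as the standard form canonically attached to a cyclic separating vector (with self-duality of the cone quoted from the literature); you instead verify Haagerup's defining conditions directly — $J\mathfrak{N}_LJ=\mathfrak{N}_R=\mathfrak{N}_L'$ from \eqref{ConjuProp}, $J\xi=\xi$ on the cone, triviality of the center condition since $\mathfrak{N}_L\cong\mathscr{L}(\h)$ is a factor, and $AJAJ\,\mathscr{L}^2_+(\h)\subseteq\mathscr{L}^2_+(\h)$ via $\mathcal{L}(A)\mathcal{R}(A^*)\xi=A\xi A^*\ge 0$. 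Both routes rest on exactly the same two ingredients, \eqref{ConjuProp} and the positivity of $\xi\mapsto A\xi A^*$, and both deduce the final statement about $A\otimes\vartheta A\vartheta^{-1}$ from \eqref{LRIden}; your version is slightly more self-contained (it does not need the natural-cone characterization, so your closing sketch of $\mathscr{L}^2_+(\h)=\overline{\{AJAJ\xi\}}$ is actually dispensable), while the paper's version imports self-duality and the standard-form structure in one stroke from the cited references. Do note that Haagerup's definition presupposes the cone is self-dual, so your argument still tacitly uses the self-duality of $\mathscr{L}^2_+(\h)$ stated just before the proposition.
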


\begin{coro}\label{ExpPP}
Let $\vphi \in \mathscr{L}^2_+(\h)$. Then we have
\be
\la \vphi|\mathcal{L}(A) \mathcal{R}(A^*) \vphi\ra=\la \vphi|A\otimes \vartheta A\vartheta^{-1} \vphi\ra\ge 0,\quad A\in \mathscr{L}(\h). \label{AbstExp}
\ee 
If $\vphi$ is strictly positive w.r.t. $\mathscr{L}^2_+(\h)$, then the strict inequality holds in \eqref{AbstExp}.
\end{coro}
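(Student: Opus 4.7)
The plan is to reduce everything to Proposition~\ref{PPI2} together with the identification~\eqref{LRIden}, so the corollary will reduce to elementary consequences of the basic observation \eqref{BaseEx} about strictly positive vectors in a self-dual cone.

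The equality in~\eqref{AbstExp} is purely a rewriting. Under the unitary $\Psi_\vartheta$ one reads off from \eqref{LRIden} that $A\otimes 1 = \mathcal L(A)$ and $1\otimes \vartheta A\vartheta^{-1} = \mathcal R(A^*)$, and these two operators commute; hence their product equals $\mathcal L(A)\mathcal R(A^*) = A \otimes \vartheta A \vartheta^{-1}$, so the two inner products coincide. For the inequality $\la \vphi | \mathcal L(A)\mathcal R(A^*)\vphi\ra \ge 0$, I would invoke Proposition~\ref{PPI2}, which provides $\mathcal L(A)\mathcal R(A^*) \unrhd 0$ with respect to $\mathscr{L}^2_+(\h)$. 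This gives $\mathcal L(A)\mathcal R(A^*)\vphi \in \mathscr{L}^2_+(\h)$, and the self-duality of the cone then forces its pairing with $\vphi \in \mathscr{L}^2_+(\h)$ to be nonnegative; this part is just an instance of~\eqref{BaseEx}.

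For the strict inequality, assume $\vphi$ is strictly positive with respect to $\mathscr{L}^2_+(\h)$ and $A\neq 0$ (the latter being the implicit hypothesis flagged immediately after \eqref{BaseEx}). By the general principle that a strictly positive vector pairs strictly positively against every nonzero element of the cone, it suffices to show that $A\vphi A^* = \mathcal L(A)\mathcal R(A^*)\vphi$ is a nonzero element of $\mathscr{L}^2_+(\h)$. Strict positivity of $\vphi$ in the cone sense is equivalent, via pairing against rank-one positive Hilbert--Schmidt operators, to $\la v|\vphi v\ra > 0$ for every $v\neq 0$; in particular $\vphi$ is injective as a positive Hilbert--Schmidt operator, and therefore so is $\vphi^{1/2}$. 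If $A\vphi A^*$ were zero, then $\|\vphi^{1/2}A^*v\|^2 = \la v | A\vphi A^*v\ra = 0$ for every $v\in\h$, forcing $A^*=0$ and contradicting $A\neq 0$.

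I do not anticipate any real obstacle: the corollary is essentially a restatement of Proposition~\ref{PPI2} combined with the defining property of strictly positive vectors in a self-dual cone. The only mildly subtle point is translating \emph{strictly positive with respect to $\mathscr{L}^2_+(\h)$} into the operator-theoretic statement that $\vphi$ has trivial kernel, which is a standard unpacking of the definitions.
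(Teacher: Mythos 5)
Your proposal is correct and follows essentially the same route the paper intends: the equality is the identification \eqref{LRIden}, the inequality is Proposition \ref{PPI2} combined with \eqref{BaseEx}, and the strict case is the remark after \eqref{BaseEx} applied with $x=y=\vphi$. Your explicit verification that $A\vphi A^*=\mathcal{L}(A)\mathcal{R}(A^*)\vphi$ is a nonzero element of $\mathscr{L}^2_+(\h)$ when $A\neq 0$ (via injectivity of the strictly positive $\vphi$) just makes concrete the nonvanishing that the paper's abstract remark presupposes, so there is no gap.
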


\begin{define} \upshape
The standard form $\{\mathfrak{N}_L,  \mathscr{L}^2(\h), J, \mathscr{L}^2_+(\h)\}$ determined by the triplet 
$\{\h, \overline{\h}, \vartheta\}$ is  called the {\it standard form associated with }$\{\h, \overline{\h}, \vartheta\}$.
\end{define}

To handle unbounded operators, we provide a brief explanation below:
Given an unbouded operator $A$ on $\h$ with a dense domain, the left multiplication operator $\mathcal{L}(A)$ can be defined as follows:
\begin{align}
\D(\mathcal{L}(A))&=\{\xi\in \mathscr{L}^2(\h) : \|A \xi\| <\infty\},\label{DefDom}\\
\mathcal{L}(A) \xi&=A\xi,\quad\xi\in \D(\mathcal{L}(A)). \label{DefAct}
\end{align}
Similarly, we can define the right multiplication operator $\mathcal{R}(A)$.
It is worth noting that if $A$ is a self-adjoint operator, then both $\mathcal{L}(A)$ and $\mathcal{R}(A)$ are also self-adjoint. Moreover, the identities \eqref{LRIden} hold in this case.

\subsection{Direct sum of self-dual cones}\label{SectDirectSum}

Consider a family of pairs of Hilbert spaces $\{\{\h_{\alpha}, \overline{\h}{\alpha}\} : \alpha\in I\}$. We assume that each pair is equipped with an antiunitary operator $\vartheta_{\alpha}$ mapping $\h_{\alpha}$ onto $\overline{\h}_{\alpha}$ for all $\alpha\in I$. For simplicity, suppose that the index set $I$ is countable.
For each $\alpha$, we can define the standard form $\mathscr{S}_{\alpha}=\{\mathfrak{N}_{L, \alpha}, \mathscr{L}^2(\h_{\alpha}), J_{\alpha}, \mathscr{L}_+^2(\h_{\alpha})\}$  associated with $\{\h_{\alpha}, \overline{\h}_{\alpha}, \vartheta_{\alpha}\}$.

By defining the following direct sums:
\be
\mathfrak{N}_L=\bigoplus_{\alpha\in I} \mathfrak{N}_{L, \alpha},\ \ \mathfrak{X}=\bigoplus_{\alpha\in I} \mathscr{L}^2(\h_{\alpha}),\ \ J=\bigoplus_{\alpha\in I}J_{\alpha},\ \ \Cone=\bigoplus_{\alpha\in I} \mathscr{L}^2_+(\h_{\alpha}),
\ee
we obtain a new standard form  $\mathscr{S}=\{\mathfrak{N}_L, \mathfrak{X}, J, \Cone\}$.
\begin{define}\label{DefDirectSum}\upshape
The standard form $\mathscr{S}$ is called the {\it direct sum of $\{\mathscr{S}_{\alpha} : \alpha \in I\}$}, and is denoted by $\mathscr{S}=\bigoplus_{\alpha\in I} \mathscr{S}_{\alpha}$.
\end{define}

For every $\mathcal{A} \in \mathfrak{N}_L$, we can associate a family of operators $A=\{A_{\alpha} \in \mathscr{L}(\h_{\alpha}) : \alpha\in I\}$ such that
$
\mathcal{A}=\bigoplus_{\alpha\in I} \mathcal{L}(A_{\alpha}).
$
To simplify notation, we can express this relationship as
\be
\mathcal{A}={\bs L}(A).
\ee
In a similar manner, for every $\mathcal{B}\in \mathfrak{N}_L' =J\mathfrak{N}_LJ$, there exists a family of operators $B=\{B_{\alpha}\in \mathscr{L}(\h_{\alpha}) : \alpha\in I\}$ such that $\mathcal{B}=\bigoplus_{\alpha\in I} \mathcal{R}(B_{\alpha})$. We denote this relationship as
\be
\mathcal{B}={\bs R}(B).
\ee

Let us define
\be
\h=\bigoplus_{\alpha\in I} \h_{\alpha},\ \ \overline{\h}=\bigoplus_{\alpha\in I} \overline{\h}_{\alpha},\ \ \vartheta=\bigoplus_{\alpha\in I} \vartheta_{\alpha}.
\ee
The standard form associated with the triplet $\{\h, \overline{\h}, \vartheta\}$ is denoted by $\mathscr{T}$.
The standard form $\mathscr{T}$ can be viewed as a natural extension of $\mathscr{S}$ in the following manner:
It is evident that $\mathfrak{X}$ is a closed subspace of $\mathscr{L}^2(\h)$. Let $P$ be the orthogonal projection from $\mathscr{L}^2(\h)$ to $\mathfrak{X}$. We can easily verify that $\Cone=P \mathscr{L}^2_+(\h)$. Furthermore, the involution $J_{\mathscr{T}}$ associated with $\mathscr{T}$ satisfies $J_{\mathscr{T}} \restriction \mathfrak{X}=J$.

For $A=\bigoplus_{\alpha\in I} A_{\alpha} \in \mathscr{L}(\h)$,  we can express
\be
{\bs L}(A)=\mathcal{L}(A) \restriction \mathfrak{X},\ \ {\bs R}(A)=\mathcal{R}(A)\restriction \mathfrak{X}. \label{ExRel}
\ee
In this context, we identify $A$ with the set of operators $\{A_{\alpha} : \alpha \in I\}$.
The matter of concern here is that ${\bs L}(A)$ and ${\bs R}(A)$ are exclusively defined when $A$ can be expressed as a direct sum. Nonetheless, during subsequent analysis, situations frequently arise where $A$ is not provided as a direct sum of operators. To address such cases as well, we extend the definitions of ${\bs L}(\cdot)$ and ${\bs R}(\cdot)$ by utilizing the relationship \eqref{ExRel} in the following manner:
Let
\be
\mathfrak{A}=\{A \in \mathscr{L}(\h) : A\xi,\ \xi A\in \mathfrak{X}\  \mbox{ for all $\xi\in \mathfrak{X}$
}\}.
\ee
It is evident that $\bigoplus_{\alpha\in I} \mathscr{L}(\h_{\alpha}) \subset \mathfrak{A}$ holds. For each $A\in \mathfrak{A}$,
we provide the following definitions:
\be
{\bs L}(A)=\mathcal{L}(A) \restriction \mathfrak{X},\ \ {\bs R}(A)=\mathcal{R}(A)\restriction \mathfrak{X}. \label{leftrightOP}
\ee
Let us emphasize that the extended definitions of ${\bs L}(A)$ and ${\bs R}(B)$ as described above indeed establish the operators on $\mathfrak{X}$ even when $A$ is not given as a direct sum.

By applying Proposition \ref{PPI2}, we obtain the following:
\begin{Prop}\label{PPI3}
\begin{itemize}
\item[\rm (i)]
We have ${\bs L}(A){\bs R}(A^*)\unrhd0\wrt \Cone$ for all  $A\in \mathfrak{A}$. Hence, by regarding $\mathfrak{X}$ as a closed subspace of $\mathscr{L}^2(\h)(=\h \otimes \overline{\h})$, we have $
A\otimes \vartheta A  \vartheta^{-1} \restriction \mathfrak{X}
 \unrhd 0$ w.r.t. $\Cone$.
 \item[\rm (ii)] Let $\vphi \in \Cone$. Then we have
\be
\la \vphi|{\bs L}(A) {\bs R}(A^*) \vphi\ra=\la \vphi|A\otimes \vartheta A\vartheta^{-1} \vphi\ra\ge 0,\quad A\in \mathfrak{A}. \label{AbstExp22}
\ee 
If $\vphi$ is strictly positive w.r.t. $\Cone$, then the strict inequality holds in \eqref{AbstExp22}.
 \end{itemize}
\end{Prop}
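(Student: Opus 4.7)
The plan is to reduce Proposition \ref{PPI3} to Proposition \ref{PPI2} applied in the ambient standard form $\mathscr{T}$ associated with $\{\h,\overline{\h},\vartheta\}$, by exploiting the inclusion $\mathfrak{X}\subset \mathscr{L}^2(\h)$. The key preliminary identification I would establish is
\ben
\Cone = \mathfrak{X}\cap \mathscr{L}^2_+(\h),
\een
which holds because a block-diagonal Hilbert--Schmidt operator $\xi=\bigoplus_{\alpha}\xi_\alpha$ is positive on $\h=\bigoplus_\alpha \h_\alpha$ if and only if each $\xi_\alpha\in \mathscr{L}^2_+(\h_\alpha)$. Since each summand $\mathscr{L}^2(\h_\alpha)$ is closed under the $\ast$-operation, so is $\mathfrak{X}$; this in turn forces $\mathfrak{A}$ to be $\ast$-closed, because for $A\in \mathfrak{A}$ and $\xi\in \mathfrak{X}$ one has $A^*\xi=(\xi^*A)^*\in \mathfrak{X}$ and $\xi A^*=(A\xi^*)^*\in \mathfrak{X}$.

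Assertion (i) then follows quickly. Given $A\in \mathfrak{A}$ and $\xi\in \Cone\subset \mathfrak{X}$, two applications of the definition of $\mathfrak{A}$ (using $A^*\in \mathfrak{A}$) show that $A\xi A^*\in \mathfrak{X}$, while Proposition \ref{PPI2} simultaneously gives $A\xi A^*\in \mathscr{L}^2_+(\h)$. Hence $A\xi A^*\in \Cone$, i.e.\ ${\bs L}(A){\bs R}(A^*)\unrhd 0$ w.r.t.\ $\Cone$. The identification ${\bs L}(A){\bs R}(A^*)=A\otimes \vartheta A\vartheta^{-1}\restriction \mathfrak{X}$ is then \eqref{LRIden} read in the big Hilbert space and restricted to $\mathfrak{X}$ via \eqref{leftrightOP}.

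For (ii), the nonstrict inequality $\la \vphi|{\bs L}(A){\bs R}(A^*)\vphi\ra \ge 0$ is immediate from (i) and self-duality of $\Cone$. For the strict inequality when $\vphi$ is strictly positive and $A\neq 0$, I would use the identity
\ben
\la \vphi|A\vphi A^*\ra_2 = \tr\bigl(\vphi A\vphi A^*\bigr) = \bigl\|\vphi^{1/2}A\vphi^{1/2}\bigr\|_2^2.
\een
Strict positivity of $\vphi$ in $\Cone$ forces each component $\vphi_\alpha$ to be a faithful positive Hilbert--Schmidt operator, so $\vphi^{1/2}$ has dense range on each $\h_\alpha$; consequently the right-hand side vanishes only if $A=0$.

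The main obstacle in the whole argument is purely one of bookkeeping: one must carefully distinguish the block-diagonal subspace $\mathfrak{X}$ from $\mathscr{L}^2(\h)$ and ensure that all products $A\xi$, $\xi A^*$, $A\xi A^*$ stay inside $\mathfrak{X}$ rather than escaping into off-diagonal components. Once the $\ast$-closure of $\mathfrak{A}$ and the identification $\Cone=\mathfrak{X}\cap \mathscr{L}^2_+(\h)$ are in hand, the statement becomes a clean restriction of Proposition \ref{PPI2} to an invariant subspace.
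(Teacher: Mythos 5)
Your proof is correct and takes essentially the same route as the paper: it reduces the statement to Proposition \ref{PPI2} in the ambient standard form associated with $\{\h,\overline{\h},\vartheta\}$ and restricts to the block-diagonal subspace $\mathfrak{X}$, using the identification $\Cone=\mathfrak{X}\cap \mathscr{L}^2_+(\h)$ (the paper records the same fact as $\Cone=P\mathscr{L}^2_+(\h)$ together with ${\bs L}(A)=\mathcal{L}(A)\restriction\mathfrak{X}$). Your direct verification of the strict inequality via $\la \vphi|A\vphi A^*\ra=\|\vphi^{1/2}A\vphi^{1/2}\|_2^2>0$ for $A\neq 0$ is a clean, self-contained substitute for the paper's implicit appeal to the general fact that a nonzero positivity-preserving operator tested on strictly positive vectors gives a strictly positive expectation.
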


Lastly, we remark that the left- and right multiplication operators are defined for  unbounded operators:
Let $A$ be a densely defined unbounded operator on $\h$. As discussed in Section \ref{L2OrderP}, the left multiplication operator $\mathcal{L}(A)$ is defined by \eqref{DefDom} and \eqref{DefAct}.  
If $A$ satisfies
\be
A\xi\in \mathfrak{X},\quad  \xi\in \D(\mathcal{L}(A)),
\ee
then  we define ${\bs L}(A)$ by ${\bs L}(A)=\mathcal{L}(A) \restriction \mathfrak{X}$. 
Let us once again note that the operator $A$ is not necessarily provided as a direct sum in this context.
Similarly, for each densely defined unbouded operator $A$, we can define ${\bs R}(A)$.

\section{Reflection positivity in one-dimensional  fermion-phonon  system} \label{DefRefPosi}

\subsection{Canonical transformation of fermion operators}

Following the reference \cite{Frhlich1980}, we introduce a beneficial transformation in the following manner. For any $i\in \Lambda$, we define:
\begin{align}
u_{i  }=\Bigg\{
\prod_{j\neq i} (-\one)^{\hat{n}_{j  }}
\Bigg\}(c_{i  }^*+c_{i  }).
\end{align} 
It is noteworthy that $u_i$ represents a unitary operator.
We  readily  confirm the following:
\begin{itemize}
\item[1.] For each $i\in \Lambda$, 
\begin{align}
u_{i  }^2=\one,\ \  \ u_{i  }^*=u_{i  },
\ \ \ u_{i  } c_{i  }u_{i  }=c_{i  }^*.
\end{align}
\item[2.] For each $i, i'\in \Lambda$ with $i   \neq i'  $, 
\begin{align}
\{u_{i  },
 u_{i'  }\}=0,\ \ \ \
u_{i  }c_{i'  }u_{i  }=c_{i'  }.
\end{align} 
\end{itemize}

Let $\Lambda_e$ denote the set of even sites and $\Lambda_o$ denote the set of odd sites as defined:
\begin{align}
\Lambda_e=\{j\in \Lambda\, :\, \mbox{$j$ is even}\},\ \ \Lambda_o=\{j\in \Lambda\, :\, \mbox{$j$ is odd}\}.
\end{align}
We now introduce a unitary operator $\mathcal{U}_{\Lambda}$ on $\Fock_{\Lambda}$ given by:
\begin{align}
\mathcal{U}_{\Lambda}= \prod_{j\in \Lambda_o}
 u_{j}. \label{DefUniU}
\end{align}
Utilizing the aforementioned properties, we can establish the following lemma.
\begin{lemm}\label{Uni}
One has the following:
\begin{itemize}
\item[{\rm (i)}] $\mathcal{U}_{\Lambda} c_{j  } \mathcal{U}_{\Lambda}^* =c_{j  }^*$ for  each  $j\in
	     \Lambda_o$.
\item[{\rm (ii)}]  $\mathcal{U}_{\Lambda} c_{j  } \mathcal{U}_{\Lambda}^* =c_{j  }$ for  each  $j\in
	     \Lambda_e$.
\item[\rm (iii)] For each $\varphi \in \BFock_{\Lambda}$, 
$
\mathcal{U}_{\Lambda} \Omega^{\rm F}_{\Lambda} \otimes \varphi=\Omega^{\rm CDW}_{\Lambda} \otimes \vphi
$ holds, where $\Omega^{\rm CDW}_{\Lambda}$ is defined by \eqref{VacCDW}.
\end{itemize} 
\end{lemm}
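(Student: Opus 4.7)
The plan is to prove items (i) and (ii) by a direct conjugation argument using the two properties of $u_i$ listed just before the lemma, and to prove (iii) by applying the operator factor-by-factor to the Fock vacuum and tracking the accumulated sign.

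For (i), I would fix $j\in\Lambda_o$ and split $\mathcal{U}_\Lambda=L\,u_j\,R$, where $L=\prod_{j'<j,\,j'\in\Lambda_o}u_{j'}$ and $R=\prod_{j'>j,\,j'\in\Lambda_o}u_{j'}$. By property~2, $u_{j'}c_j=c_{j}u_{j'}$ whenever $j'\neq j$; hence both $L$ and $R$ commute with $c_j$, and, by adjoining, with $c_j^*$ as well. Combining this with $u_j^*=u_j$, $u_j^2=1$ and property~1 gives
\begin{align*}
\mathcal{U}_\Lambda c_j\mathcal{U}_\Lambda^{-1}
=L\,u_j(Rc_jR^{-1})u_jL^{-1}
=L(u_jc_ju_j)L^{-1}
=Lc_j^*L^{-1}=c_j^*,
\end{align*}
and the unitarity $\mathcal{U}_\Lambda^{-1}=\mathcal{U}_\Lambda^*$ converts this into the stated identity. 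For (ii), take $j\in\Lambda_e$; every factor $u_{j'}$ of $\mathcal{U}_\Lambda$ has $j'\in\Lambda_o$ and hence $j'\neq j$, so property~2 lets $c_j$ commute past each factor in turn, giving $\mathcal{U}_\Lambda c_j\mathcal{U}_\Lambda^{-1}=c_j$.

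For (iii), the first observation is that each $u_j$ is a polynomial in the fermion operators only, so it acts as the identity on the bosonic tensor factor; thus $\mathcal{U}_\Lambda(\Omega^{\rm F}_\Lambda\otimes\varphi)=(\mathcal{U}_\Lambda\Omega^{\rm F}_\Lambda)\otimes\varphi$ and it suffices to compute $\mathcal{U}_\Lambda\Omega^{\rm F}_\Lambda$. Order $\Lambda_o=\{o_1<\cdots<o_m\}$ with $m=|\Lambda_o|=\ell$ and apply $u_{o_m},\dots,u_{o_1}$ in that order to $\Omega^{\rm F}_\Lambda$. At the $k$-th step, the $c_{o_k}$ in the factor $u_{o_k}=[\prod_{l\neq o_k}(-1)^{\hat n_l}](c_{o_k}^*+c_{o_k})$ is killed because $o_k\notin\{o_{k+1},\dots,o_m\}$, so only the creation part contributes and produces a ket in the standard ordering:
\begin{align*}
u_{o_k}\ket{\{o_{k+1},\dots,o_m\}}
=\Big[\prod_{l\neq o_k}(-1)^{\hat n_l}\Big]\ket{\{o_k,\dots,o_m\}}
=(-1)^{m-k}\ket{\{o_k,\dots,o_m\}},
\end{align*}
the last equality because the new ket has exactly $m-k$ occupied modes other than $o_k$. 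Composing these signs from $k=m$ down to $k=1$ multiplies to the overall phase $(-1)^{\sum_{k=1}^{m}(m-k)}=(-1)^{m(m-1)/2}$ in front of $[\prod_{j\in\Lambda_o}c_j^*]\Omega^{\rm F}_\Lambda$.

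The main technical point is then the final bookkeeping: one must identify the computed sign $(-1)^{m(m-1)/2}$ with the prescribed $(-1)^{(|\Lambda|+2)/4}$ in the definition of $\Omega^{\rm CDW}_\Lambda$, using $|\Lambda|=2\ell$ and the assumption (A)(i) that $\ell$ is odd (so $m=\ell$). This is an elementary parity check but it is the only nontrivial arithmetic in the argument, and in particular it is the place where the assumption that $\ell$ is odd is actually used in the proof of the lemma.
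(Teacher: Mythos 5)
Your treatment of (i) and (ii) is correct: splitting the ordered product as $\mathcal{U}_{\Lambda}=L\,u_j\,R$ and using that every factor $u_{j'}$ with $j'\neq j$ commutes with $c_j$ (hence with $c_j^*$), together with $u_jc_ju_j=c_j^*$, is exactly the verification the paper leaves to the reader (it states the lemma with no written proof), and your per-step bookkeeping for $\mathcal{U}_{\Lambda}\Omega^{\rm F}_{\Lambda}$ in (iii) is also right: with the increasing-order convention $\mathcal{U}_{\Lambda}=u_{o_1}\cdots u_{o_m}$, $o_1<\cdots<o_m$, one gets $\mathcal{U}_{\Lambda}\Omega^{\rm F}_{\Lambda}=(-1)^{m(m-1)/2}\big[\prod_{j\in\Lambda_o}c_j^*\big]\Omega^{\rm F}_{\Lambda}$ with $m=\ell$.

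The gap sits precisely at the step you defer as ``an elementary parity check'': that check does not go through. Since $|\Lambda|=2\ell$, the prescribed exponent is $(|\Lambda|+2)/4=(\ell+1)/2$, and
\begin{equation*}
\frac{\ell(\ell-1)}{2}+\frac{\ell+1}{2}=\frac{\ell^2+1}{2},
\end{equation*}
which is odd for every odd $\ell$ (write $\ell=2p+1$). Hence $(-1)^{m(m-1)/2}=-(-1)^{(|\Lambda|+2)/4}$ for all $\ell$ in the relevant range: for $\ell=3$ your computation gives $-\,c_{-3}^*c_{-1}^*c_{1}^*\Omega^{\rm F}_{\Lambda}$ while \eqref{VacCDW} gives $\Omega^{\rm CDW}_{\Lambda}=+\,c_{-3}^*c_{-1}^*c_{1}^*\Omega^{\rm F}_{\Lambda}$, and for $\ell=1$ you get $+\,c_{-1}^*\Omega^{\rm F}_{\Lambda}$ while $\Omega^{\rm CDW}_{\Lambda}=-\,c_{-1}^*\Omega^{\rm F}_{\Lambda}$. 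So, taken at face value, your argument establishes $\mathcal{U}_{\Lambda}(\Omega^{\rm F}_{\Lambda}\otimes\varphi)=-\,\Omega^{\rm CDW}_{\Lambda}\otimes\varphi$ under the stated ordering conventions, not the asserted identity; the only delicate point of the proof is exactly this phase, and it cannot be waved off as routine. To close the gap you must either exhibit and verify the ordering convention under which the phase in \eqref{VacCDW} is reproduced (note that taking the factors of $\mathcal{U}_{\Lambda}$ in decreasing order yields overall phase $+1$, which still does not equal $(-1)^{(\ell+1)/2}$ for all odd $\ell$), or conclude that the identity holds with the opposite overall sign, i.e.\ flag the discrepancy with the printed phase factor. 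Incidentally, the oddness of $\ell$ enters your argument only in making $(|\Lambda|+2)/4$ an integer; the phase you compute, $(-1)^{m(m-1)/2}$, is obtained without it.
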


\subsection{A representation of the half-filled subspace}

In order to describe  mathematical structures of $\mathcal{U}_{\Lambda}\mathfrak{H}_{\Lambda}$, we introduce several operators as follows:
\begin{itemize}
\item[1.] 
\begin{align}
\hat{N}_{ e}^{(L)}=\sum_{j\in \Lambda_e\cap \Lambda_L}
 \hat{n}_{j  },\ \ 
\hat{N}_{ o}^{(L)}=\sum_{j\in \Lambda_o\cap \Lambda_L}
 \hat{n}_{j  },\\
\hat{N}_{ e}^{(R)}=\sum_{j\in \Lambda_e\cap \Lambda_R}
 \hat{n}_{j  },\ \ 
\hat{N}_{ o}^{(R)}=\sum_{j\in \Lambda_o\cap \Lambda_R}
 \hat{n}_{j  }.
\end{align} 
\item[2.] 
\begin{align}
\hat{Q}_{\Lambda}^{(L)}=\Nel-\Nol,\ \
\hat{Q}_{\Lambda}^{(R)}= -\Big(\Ner-\Nor\Big).
\end{align} 

\end{itemize}

\begin{Prop}\label{UHil}
For each $\Lambda\in \mathbb{F}$, 
we have 
\begin{align}
\mathcal{U}_{\Lambda}\mathfrak{H}_{\Lambda}= \ker(\hat{Q}_{\Lambda}^{(L)}-\hat{Q}_{\Lambda}^{(R)}).
\end{align} 
\end{Prop}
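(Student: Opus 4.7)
The plan is to reduce the identification of $\mathcal{U}_{\Lambda}\mathfrak{H}_{\Lambda}$ to an elementary computation of $\mathcal{U}_{\Lambda}\hat{N}_{\Lambda}\mathcal{U}_{\Lambda}^{*}$. Since $\mathcal{U}_{\Lambda}$ is unitary, for any self-adjoint operator $X$ one has $\mathcal{U}_{\Lambda}\ker(X)=\ker(\mathcal{U}_{\Lambda}X\mathcal{U}_{\Lambda}^{*})$. Applied to $X=\hat{N}_{\Lambda}-|\Lambda|/2$, this reduces the problem to showing the operator identity
\begin{equation*}
\mathcal{U}_{\Lambda}\hat{N}_{\Lambda}\mathcal{U}_{\Lambda}^{*}-\tfrac{|\Lambda|}{2}\;=\;\hat{Q}^{(L)}_{\Lambda}-\hat{Q}^{(R)}_{\Lambda}.
\end{equation*}

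The first step is to conjugate the local number operators using Lemma \ref{Uni}. For $j\in\Lambda_{e}$, $\mathcal{U}_{\Lambda}c_{j}\mathcal{U}_{\Lambda}^{*}=c_{j}$, so $\mathcal{U}_{\Lambda}\hat{n}_{j}\mathcal{U}_{\Lambda}^{*}=\hat{n}_{j}$. For $j\in\Lambda_{o}$, $\mathcal{U}_{\Lambda}c_{j}\mathcal{U}_{\Lambda}^{*}=c_{j}^{*}$, hence $\mathcal{U}_{\Lambda}\hat{n}_{j}\mathcal{U}_{\Lambda}^{*}=c_{j}c_{j}^{*}=\one-\hat{n}_{j}$. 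Summing these over all $j\in\Lambda$ gives
\begin{equation*}
\mathcal{U}_{\Lambda}\hat{N}_{\Lambda}\mathcal{U}_{\Lambda}^{*}\;=\;|\Lambda_{o}|\,\one+\bigl(\hat{N}^{(L)}_{e}+\hat{N}^{(R)}_{e}\bigr)-\bigl(\hat{N}^{(L)}_{o}+\hat{N}^{(R)}_{o}\bigr).
\end{equation*}

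Next, I will use the purely combinatorial fact that the $2\ell$ consecutive integers in $\Lambda=[-\ell,\ell-1]\cap\BbbZ$ split evenly between even and odd sites, giving $|\Lambda_{o}|=\ell=|\Lambda|/2$. Rearranging the right-hand side and invoking the definitions $\hat{Q}^{(L)}_{\Lambda}=\hat{N}^{(L)}_{e}-\hat{N}^{(L)}_{o}$ and $\hat{Q}^{(R)}_{\Lambda}=-(\hat{N}^{(R)}_{e}-\hat{N}^{(R)}_{o})=\hat{N}^{(R)}_{o}-\hat{N}^{(R)}_{e}$ yields
\begin{equation*}
\bigl(\hat{N}^{(L)}_{e}-\hat{N}^{(L)}_{o}\bigr)+\bigl(\hat{N}^{(R)}_{e}-\hat{N}^{(R)}_{o}\bigr)\;=\;\hat{Q}^{(L)}_{\Lambda}-\hat{Q}^{(R)}_{\Lambda},
\end{equation*}
which establishes the required operator identity. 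There is no real obstacle here beyond keeping track of the sign conventions in the definition of $\hat{Q}^{(R)}_{\Lambda}$; the combinatorial identity $|\Lambda_{o}|=|\Lambda|/2$ is the only input besides Lemma \ref{Uni} and the unitarity of $\mathcal{U}_{\Lambda}$. Conjugation by $\mathcal{U}_{\Lambda}$ then gives $\mathcal{U}_{\Lambda}\mathfrak{H}_{\Lambda}=\mathcal{U}_{\Lambda}\ker(\hat{N}_{\Lambda}-|\Lambda|/2)=\ker(\hat{Q}^{(L)}_{\Lambda}-\hat{Q}^{(R)}_{\Lambda})$, as desired.
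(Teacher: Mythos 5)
Your proof is correct and follows essentially the same route as the paper, which simply notes the identity $\mathcal{U}_{\Lambda}\hat{N}_{\Lambda}\mathcal{U}_{\Lambda}^{*}=\hat{Q}^{(L)}_{\Lambda}-\hat{Q}^{(R)}_{\Lambda}+|\Lambda|/2$; you merely spell out the conjugation of the $\hat{n}_j$ via Lemma \ref{Uni} and the count $|\Lambda_o|=|\Lambda|/2$ that the paper leaves implicit.
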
 
\begin{proof} We just note that 
$\mathcal{U}_{\Lambda}\hat{N}_{\Lambda}\mathcal{U}_{\Lambda}^*=\hat{Q}_{\Lambda}^{(L)}-\hat{Q}_{\Lambda}^{(R)}+|\Lambda|/2$.
 \end{proof}

\subsection{Properties  of  the  reflection mapping}

First,  we recall the following
identification:
\begin{align}
\Fock_{\Lambda}
=\IL \otimes
 \IR.\label{IdnLR}
\end{align}
Under this identification, we have
\begin{align}
c_j=
\begin{cases}
c_j\otimes 1,  & \mbox{if $j\in \Lambda_L$}\\
(-1)^{\hat{N}^{(L)}} \otimes c_j, & \mbox{if  $j\in \Lambda_R$},
\end{cases},\ \ 
a_j =
\begin{cases}
a_j \otimes \one,  & \mbox{ if $j \in \Lambda_L$}\\
\one \otimes a_j, & \mbox{ if $j\in \Lambda_R$}
\end{cases},
\end{align}
where $\hat{N}^{(L)}=\hat{N}^{(L)}_e+\hat{N}^{(L)}_o$. Recall that  $\Omega_{\Lambda}^{\rm F}$ and $\Omega^{\rm B}_{\Lambda}$ denote the Fock vacuums in $\AFock_{\Lambda}$ and $\BFock$, respectively. 
It is worth noting that, in correspondence with equation \eqref{IdnLR}, the following identifications hold:  $\Omega_{\Lambda}^{\rm \#}=\Omega_{\Lambda_L}^{\#}\otimes \Omega_{\Lambda_R}^{\rm \#}\ (\#=\rm B, F)$.
The following vector 
will be often useful:
\be
\Omega_{\Lambda}=\Omega_{\Lambda}^{\rm F} \otimes \Omega_{\Lambda}^{\rm B}. \label{DefOme}
\ee
Using the factorization properties of the Fock vacua mentioned above,  we can conclude that $\Omega_{\Lambda}=\Omega_{\Lambda_L}\otimes \Omega_{\Lambda_R}$.

For each $j\in \Lambda_L$, we define:
\begin{align}
b_{j  }=
\begin{cases}
&(-\one)^{\hat{N}^{(L)}}c_{j  },\ \ \mbox{if $j$ is even}\\
& c_{j  } (-\one)^{\hat{N}^{(L)}}, \ \ \mbox{if $j$ is odd}.
\end{cases}\label{Defb}
\end{align} 
It should be noted that $b_j$ and $b_j^*$ satisfy the canonical anti-commutation relations:
\be
\{b_i, b_j^*\}=\delta_{ij},\ \ \{b_i, b_j\}=0.
\ee

Before proceeding, recall the definition of $r$ given by equation \eqref{DefRefMap}. For convenience, we extend the mapping $r$ to the entire $\Lambda$ by defining $r(j)=-j-1$ for {\it all} $j\in \Lambda$.
Now, we introduce an involution $\vartheta$ from $\IL$ onto $\IR$ defined by the conditions $\vartheta \Omega_{\LL}=\Omega_{\LR}$ and:
\begin{align}
\vartheta b_j \vartheta^{-1}=(-1)^{r(j)}c_{r(j)}
, \ \ \vartheta a_j\vartheta^{-1}=a_{r(j)},\quad j\in \Lambda_L. \label{DefTheta}
\end{align} 
It is important to note the presence of the factor $(-1)^{r(j)}$ in equation \eqref{DefTheta}, as it will play a significant role, as seen in the proof of Lemma \ref{TransLR}.

Due to the property that $r$ maps even numbers to odd numbers, we can readily deduce the following:
\begin{align}
\hat{Q}_{\Lambda}^{(R)}=\vartheta \hat{Q}_{\Lambda}^{(L)} \vartheta^{-1}. \label{QRtoQL}
\end{align} 
We set 
\begin{align}
\mathbb{I}_{\Lambda}=
\begin{cases}
\{-|\Lambda|/2, -|\Lambda|/2+1, \dots, |\Lambda|/2-1, |\Lambda|/2\}, & \mbox{if $\ell$ is even}\\
\{-(|\Lambda|+1)/2, -(|\Lambda|+1)/2+1, \dots, (|\Lambda|-1)/2\}, & \mbox{if $\ell$ is odd}.
\end{cases}
\end{align}
It is readily confirmed that $\mathbb{I}_{\Lambda}=\mathrm{spec}(\hat{Q}^{(L)}_{\Lambda})=\mathrm{spec}(\hat{Q}_{\Lambda}^{(R)})$, where $\mathrm{spec}(A)$ represents the spectrum of the operator $A$.
For each $q\in \mathbb{I}_{\Lambda}$, we define:
\begin{align}
\IL(q)=\{\vphi\in \IL\, :\, \hat{Q}_{\Lambda}^{(L)} \vphi=q\vphi\},\ \
\IR(q)=\{\vphi\in \IR\, :\, \hat{Q}_{\Lambda}^{(R)} \vphi=q\vphi\}. 
\end{align} 
Furthermore,  we set $\AFL(q)=\{\vphi\in \AFock_{\LL} : \hat{Q}^{(L)}_{\Lambda}\vphi=q\vphi\}$. Then, we have the following expression:
\be
\IL(q)=\AFock_{\Lambda_L}(q) \otimes \BFock_{\Lambda_L}. \label{ITensor}
\ee

Applying Proposition \ref{UHil}, we have the following.
\begin{lemm}
For each $\Lambda \in \mathbb{F}$, one has 
\begin{align}
\mathcal{U}_{\Lambda}\mathfrak{H}_{\Lambda}=\bigoplus_{q\in \mathbb{I}_{\Lambda}} \IL(q)\otimes \vartheta\IL(q)
.
\label{0subspace}
\end{align}
\end{lemm}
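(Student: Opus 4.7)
The plan is to combine Proposition \ref{UHil} with the spectral decompositions of $\hat{Q}^{(L)}_{\Lambda}$ and $\hat{Q}^{(R)}_{\Lambda}$ on the factors of the tensor product $\Fock_{\Lambda} = \IL \otimes \IR$. By Proposition \ref{UHil} it suffices to show
\begin{equation*}
\ker(\hat{Q}^{(L)}_{\Lambda}-\hat{Q}^{(R)}_{\Lambda}) = \bigoplus_{q\in \mathbb{I}_{\Lambda}} \IL(q)\otimes \vartheta\IL(q).
\end{equation*}

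First I would observe that $\hat{Q}^{(L)}_{\Lambda}$ is self-adjoint on $\IL$ with spectrum $\mathbb{I}_{\Lambda}$, so we have the eigenspace decomposition $\IL = \bigoplus_{q\in\mathbb{I}_{\Lambda}} \IL(q)$. Analogously, $\IR = \bigoplus_{q\in\mathbb{I}_{\Lambda}} \IR(q)$. The key identity \eqref{QRtoQL}, $\hat{Q}^{(R)}_{\Lambda} = \vartheta \hat{Q}^{(L)}_{\Lambda} \vartheta^{-1}$, together with the fact that $\vartheta$ is antiunitary from $\IL$ onto $\IR$, implies $\vartheta \IL(q) = \IR(q)$ for every $q\in\mathbb{I}_{\Lambda}$. (Antiunitarity preserves eigenspaces of conjugated self-adjoint operators at the same real eigenvalue.)

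Second, under the identification \eqref{IdnLR}, the operator $\hat{Q}^{(L)}_{\Lambda}-\hat{Q}^{(R)}_{\Lambda}$ is the closure of $\hat{Q}^{(L)}_{\Lambda}\otimes 1 - 1\otimes \hat{Q}^{(R)}_{\Lambda}$. Using the double eigenspace decomposition
\begin{equation*}
\IL\otimes \IR = \bigoplus_{q_1, q_2\in \mathbb{I}_{\Lambda}} \IL(q_1)\otimes \IR(q_2),
\end{equation*}
the operator acts on the $(q_1,q_2)$-summand as the scalar $q_1-q_2$. Therefore its kernel is exactly $\bigoplus_{q\in\mathbb{I}_{\Lambda}} \IL(q)\otimes \IR(q)$. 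Substituting $\IR(q) = \vartheta \IL(q)$ gives the claimed formula.

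No serious obstacle is expected here; the statement is essentially a restatement of Proposition \ref{UHil} in tensor-product form, and the only non-trivial input is the intertwining relation \eqref{QRtoQL} which has already been verified. Care should be taken only to note that $\vartheta$ being antiunitary does not disrupt the eigenspace identification, since conjugation by an antiunitary map still sends the spectral projection at $q\in\BbbR$ to the spectral projection at the same real value $q$.
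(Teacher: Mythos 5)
Your argument is correct and is essentially the proof given in the paper: the paper likewise invokes Proposition \ref{UHil} to reduce to $\ker(\hat{Q}^{(L)}_{\Lambda}-\hat{Q}^{(R)}_{\Lambda})$, decomposes this kernel under the identification \eqref{IdnLR} as $\bigoplus_{q\in\mathbb{I}_{\Lambda}}\IL(q)\otimes\IR(q)$, and then uses \eqref{QRtoQL} to conclude $\vartheta\IL(q)=\IR(q)$. You merely spell out the eigenspace bookkeeping and the (correct) remark that antiunitary conjugation preserves eigenspaces at real eigenvalues, which the paper leaves implicit.
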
 
\begin{proof}
By utilizing the identification \eqref{IdnLR}, we can observe the following:
\be
\mathcal{U}_{\Lambda} \h_{\Lambda}=\mathcal{U}_{\Lambda} \ker(\hat{N}_{\Lambda}-|\Lambda|/2)
=\ker(\hat{Q}^{(L)}_{\Lambda}-\hat{Q}_{\Lambda}^{(R)}) =\bigoplus_{q\in \mathbb{I}_{\Lambda}}
\IL(q) \otimes \IR(q).
\ee
Furthermore, based on the relation \eqref{QRtoQL}, it holds that $\vartheta \IL(q)=\IR(q)$.
\end{proof}

\subsection{Definition of $\rCone$ }\label{DefConeRP}

 Let  
 $\tilde{\mathfrak{H}}_{\Lambda}=\mathcal{U} _{\Lambda}\mathfrak{H}_{\Lambda}$.
 With the help of \eqref{0subspace}, we can express $\tilde{\mathfrak{H}}_{\Lambda}$ as follows:
 \be
 \tilde{\mathfrak{H}}_{\Lambda}=\bigoplus_{q\in \mathbb{I}_{\Lambda}} 
 \THL(q),\ \ \ \THL(q)=\mathscr{L}^2(\IL(q)). \label{THLDirect}
 \ee
Using this expression,  we define a self-dual cone in $\tilde{\mathfrak{H}}_{\Lambda}$ as
\begin{align}
\tCone=\bigoplus_{q\in \mathbb{I}_{\Lambda}} 
\tCone(q),\ \ \tCone(q)=
\mathscr{L}^2_+(\IL(q)). \label{Fiber1}
\end{align} 
The involution associated with $\tCone$ can be expressed as:
\be
J=\bigoplus_{q\in \mathbb{I}_{\Lambda}} J_q. \label{tConeJ}
\ee
Here, $J_q$ represents the involution associated with $\tCone(q)$, given by $J_q\xi=\xi^*\ (\xi\in \mathscr{L}^2(\IL(q)))$.

Now, we can provide the precise definition of $\rCone$ as mentioned in Section \ref{RefPosiBasic}.
\begin{define}\label{DefRPCone2} {\rm
For each $\Lambda\in \mathbb{F}$, 
 we define a self-dual cone in $\mathfrak{H}_{\Lambda}$ as follows:
\begin{align}
\rCone=\mathcal{U}_{\Lambda}^{-1} \tCone. 
\end{align}
Here,  $\mathcal{U}_{\Lambda}$ is defined by \eqref{DefUniU}.
}
\end{define}

Set $\mathfrak{N}_{\Lambda, L}(q)=\{\mathcal{L}(A) : A\in \mathscr{L}(\Fock_{\LL}(q))\}\ (q\in \mathbb{I}_{\Lambda})$.
According to Proposition \ref{PPI2}, the quadruple $\{ \mathfrak{N}_{\Lambda, L}(q), \THL(q), J_q, \tCone(q)\}$
 is the standard form associated with the triplet $\{\Fock_{\LL}(q), \Fock_{\LR}(q), \vartheta\}$ for each $q\in \mathbb{I}_{\Lambda}$. Hence, by recalling  Definition \ref{DefDirectSum}, we can establish the equality between the standard forms:
 \be
 \{\mathfrak{N}_{\Lambda, L}, \THL, J, \tCone\} = \bigoplus_{q\in \mathbb{I}_{\Lambda}} \{
 \mathfrak{N}_{\Lambda, L}(q), \THL(q), J_q, \tCone(q)
 \}, 
 \ee 
where $
\mathfrak{N}_{\Lambda, L}=\bigoplus_{q\in \mathbb{I}_{\Lambda}} \mathfrak{N}_{\Lambda, L}(q)
$. Define
\be
\mathfrak{A}_{\Lambda}=\{A\in \mathscr{L}(\oplus_{q\in \mathbb{I}_{\Lambda}} \Fock_{\LL}(q)) : A\xi\in \THL, \xi A\in \THL\ \mbox{ for all $\xi\in \THL$}\}.
\ee
For each $A\in \mathfrak{A}_{\Lambda}$, we can define the (extended) left- and right multiplication operators, ${\bs L}(A)$ and ${\bs R}(A)$, by \eqref{leftrightOP}.

\subsection{Basic properties of $\tCone$ and definition of $\rCone^{\rm F}$}\label{Sub3.5}

It is a general property that for Hilbert spaces $\mathfrak{X}$ and $\mathfrak{Y}$, we have $\mathscr{L}^2(\mathfrak{X} \otimes \mathfrak{Y})=\mathscr{L}^2(\mathfrak{X}) \otimes \mathscr{L}^2(\mathfrak{Y})$.
Applying this property and using the expression \eqref{ITensor}, we obtain
\be
\THL(q)=\THL^{\rm F}(q) \otimes \THL^{\rm B},\quad  q\in \mathbb{I}_{\Lambda}, 
\ee
where
\be
\THL^{\rm F}(q)=\mathscr{L}^2(\AFock_{\Lambda_L}(q)),\ \ \ \THL^{\rm B}= \mathscr{L}^2(\BFock_{\Lambda_L}).
\ee

Given the Hilbert spaces denoted as $\mathfrak{X}_1$, $\mathfrak{X}_2$, and $\mathfrak{Y}$, it is widely recognized that the following relationship holds true:
$(\mathfrak{X}_1\oplus \mathfrak{X}_2)\otimes \mathfrak{Y}=(\mathfrak{X}_1\otimes \mathfrak{Y})\oplus (\mathfrak{X}_2\otimes \mathfrak{Y})$.
By utilizing this fact, we observe that 
\be
\THL=\THL^{\rm F} \otimes \THL^{\rm B}, \ \ \THL^{\rm F}=\bigoplus_{q\in \mathbb{I}_{\Lambda}} \THL^{\rm F}(q).
\ee

Similarly, we have
\be
\tCone(q)=
\tCone^{\rm F}(q) \otimes \tCone^{\rm B}, \quad  q\in \mathbb{I}_{\Lambda}, \label{PTensor}
\ee
where
\be
\tCone^{\rm F}(q)=\mathscr{L}_+^2(\AFock_{\Lambda_L}(q)),\ \ \ \tCone^{\rm B}= \mathscr{L}_+^2(\BFock_{\Lambda_L}). \label{DefrPq}
\ee
Here, the right-hand side of \eqref{PTensor} represents the tensor product of self-dual cones (see \cite[Appendix B]{Miyao2021}).
For clarity, we explain the definitions explicitly below:
Let $\mathfrak{N}_{\Lambda, L}^{\rm B}=\{\mathcal{L}(A) : A\in \mathscr{L}(\BFock_{\LL}) \}$.
In the setting of  this section, $\tCone^{\rm F}(q) \otimes \tCone^{\rm B}$ is defined by
\be
\tCone^{\rm F}(q) \otimes \tCone^{\rm B}
=\overline{\{AJ_qAJ_q \xi^{\rm F} \otimes \xi^{\rm B} : A\in \mathfrak{N}_{\Lambda, L}(q) \otimes \mathfrak{N}_{\Lambda, L}^{\rm B}\}},
\ee
where $\xi^{\rm F}$ and $\xi^{\rm B}$ are any strictly positive vectors in $\THL^{\rm F}(q)$ and 
$\THL^{\rm B}$, respectively. 

Furthermore, it holds that 
\be
\tCone=\tCone^{\rm F} \otimes \tCone^{\rm B},\ \ \ \tCone^{\rm F}= \bigoplus_{q\in \mathbb{I}_{\Lambda}} \tCone^{\rm F}(q), \label{DirectSumP}
\ee
where  $\tCone^{\rm F} \otimes \tCone^{\rm B}$ is defined by 
\be
\tCone^{\rm F} \otimes \tCone^{\rm B}
=\overline{\{AJAJ \eta^{\rm F} \otimes \eta^{\rm B} : A\in \mathfrak{N}_{\Lambda, L} \otimes \mathfrak{N}_{\Lambda, L}^{\rm B}\}},
\ee
where $\eta^{\rm F}$ and $\eta^{\rm B}$ are  any  strictly positive vectors in $\THL^{\rm F}$ and 
$\THL^{\rm B}$, respectively

We are prepared to establish the definition of reflection positivity in the Hilbert space $\h_{\Lambda}^{\F}$.
\begin{define}\label{DefRPCone3} 
\upshape
For each $\Lambda\in \mathbb{F}$, 
 we define a self-dual cone in $\mathfrak{H}^{\F}_{\Lambda}$ as
$\rCone^{\F}=\mathcal{U}_{\Lambda}^{-1} \tCone^{\F}$.
\end{define}

Before we proceed, let us demonstrate the assertion stated in Example \ref{CDWVex} within Section \ref{RefPosiBasic}.

\begin{Ex} \label{PfCDWV}\upshape
Let us examine the vector $\Omega_{\Lambda}$ defined by \eqref{DefOme}. Using the identification \eqref{IdnLR}, we find that:
\be
\Omega_{\Lambda}=\Omega_{\LL}\otimes \Omega_{\LR}=\Omega_{\LL} \otimes \vartheta \Omega_{\LL}.
\ee
This implies that $\Omega_{\Lambda} \ge 0$ w.r.t. $\tCone$. Hence, $\mathcal{U}^{-1}_{\Lambda} \Omega_{\Lambda}$ is reflection positive. By recalling \eqref{VacCDW}, we have
$\mathcal{U}_{\Lambda}^{-1} \Omega_{\Lambda}= \Omega_{\Lambda}^{\rm CDW} \otimes \Omega_{\Lambda} ^{\rm B}$, which implies that $ \Omega_{\Lambda}^{\rm CDW} \otimes \Omega_{\Lambda} ^{\rm B}$ is reflection positive. Similary, we can show that $\Omega_{\Lambda}^{\rm CDW}$ is reflection positive as well. 
Therefore, the proof of Example \ref{CDWVex} in Section \ref{RefPosiBasic} is complete.
\end{Ex}

Set $P_{\Omega_{\LL}^{\rm B}}= |\Omega_{\LL}^{\rm B}\ra\la \Omega_{\LL}^{\rm B}|$.
We introduce  an isometric  linear operator $\iota_{\Lambda}: \THL^{\rm F}(q) \to \THL(q)$ given by 
\be
\iota_{\Lambda}\vphi=\vphi\otimes P_{\Omega_{\LL}^{\rm B}} , \ \  \  \vphi\in \THL^{\rm F}(q).
\ee
Through the identification of $\THL^{\rm F}(q)$ with $\iota_{\Lambda} \THL^{\rm F}(q)$, we can consider $\THL^{\rm F}(q)$ as a closed subspace of $\THL(q)$.
Let $\tilde{s}_{\Lambda}$ be the orthogonal projection from $\THL(q)$ to $\THL^{\rm F}(q)$. 
Note that $\tilde{s}_{\Lambda}$ can be expressed as 
\be
\tilde{s}_{\Lambda}=\mathcal{L}\Big(1_q\otimes P_{\Omega_{\LL}^{\rm B}} \Big)\mathcal{R}\Big(1_q\otimes P_{\Omega_{\LL}^{\rm B}}\Big), \label{sExLR}
\ee
where $1_q$ is the identity operator on $\AFock_{\LL}(q)$.

\begin{Prop}\label{PropsL}
For any $q\in \mathbb{I}_{\Lambda}$, we have the following:
\begin{itemize}
\item[\rm (i)] $\tilde{s}_{\Lambda} \tCone(q)=\tCone^{\rm F}(q)$. More precisely, $
\tilde{s}_{\Lambda} \tCone(q)=\tCone^{\rm F}(q)\otimes P_{\Omega_{\LL}^{\rm B}}
$.
\item[\rm (ii)] $\tilde{s}_{\Lambda} \unrhd 0$ w.r.t. $\tCone(q)$.
\end{itemize}
\end{Prop}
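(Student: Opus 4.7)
The plan is to derive both parts directly from the explicit expression \eqref{sExLR} for $s_{\Lambda}$ together with the tensor-product decompositions $\IL(q) = \AFock_{\LL}(q) \otimes \BFock_{\LL}$ and $\tCone(q) = \tCone^{\rm F}(q) \otimes \tCone^{\rm B}$ from \eqref{PTensor}. The proposition is essentially the assertion that compressing by the rank-one projection $1_q \otimes P_{\Omega_{\LL}^{\rm B}}$ on the bosonic factor leaves the fermionic factor untouched and collapses the bosonic factor to a scalar multiple of $P_{\Omega_{\LL}^{\rm B}}$.

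For part (ii), the operator $A := 1_q \otimes P_{\Omega_{\LL}^{\rm B}}$ is self-adjoint, so \eqref{sExLR} expresses $s_{\Lambda}$ in the form $\mathcal{L}(A)\mathcal{R}(A^{*})$. Proposition \ref{PPI2} then yields at once $s_{\Lambda} \unrhd 0$ w.r.t. $\mathscr{L}^2_+(\IL(q)) = \tCone(q)$.

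For part (i) I would verify the two inclusions separately. First, take $\xi \in \tCone(q) = \mathscr{L}^2_+(\IL(q))$. The identity
\[
s_{\Lambda}\xi = \bigl(1_q \otimes P_{\Omega_{\LL}^{\rm B}}\bigr)\,\xi\,\bigl(1_q \otimes P_{\Omega_{\LL}^{\rm B}}\bigr)
\]
exhibits $s_{\Lambda}\xi$ as a two-sided compression of a positive operator by a projection, hence positive; moreover its range lies inside $\AFock_{\LL}(q) \otimes \BbbC\Omega_{\LL}^{\rm B}$, so there is a uniquely determined $\eta \in \mathscr{L}(\AFock_{\LL}(q))$ with $s_{\Lambda}\xi = \eta \otimes P_{\Omega_{\LL}^{\rm B}}$. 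Positivity of the compression gives $\eta \ge 0$, and the Hilbert--Schmidt bound $\|\eta\|_2 \le \|\xi\|_2$ shows $\eta \in \mathscr{L}^2(\AFock_{\LL}(q))$, so $\eta \in \tCone^{\rm F}(q)$. Conversely, given $\eta \in \tCone^{\rm F}(q)$, the operator $\xi := \eta \otimes P_{\Omega_{\LL}^{\rm B}}$ belongs to $\tCone^{\rm F}(q) \otimes \tCone^{\rm B} = \tCone(q)$, since $P_{\Omega_{\LL}^{\rm B}}$ is a positive Hilbert--Schmidt (rank-one self-adjoint) operator and hence lies in $\tCone^{\rm B}$; idempotence $P_{\Omega_{\LL}^{\rm B}}^{2} = P_{\Omega_{\LL}^{\rm B}}$ then gives $s_{\Lambda}\xi = \eta \otimes P_{\Omega_{\LL}^{\rm B}}$, establishing the reverse inclusion.

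I do not expect any genuine obstacle; the argument is essentially bookkeeping once \eqref{sExLR} and the tensor-product structure \eqref{PTensor} are in place. The only point requiring a quick verification is that the $\eta$ produced in the first inclusion is indeed Hilbert--Schmidt, which follows immediately from the norm inequality for compressions.
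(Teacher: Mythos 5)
Your proof is correct and follows essentially the same route as the paper, which simply invokes the expression \eqref{sExLR} for $s_{\Lambda}$; you have merely written out the routine details (Proposition \ref{PPI2} for part (ii), and the two-sided compression bookkeeping for part (i)). Nothing further is needed.
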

\begin{proof}
By using \eqref{sExLR}, we readily confirm the assertions.
\end{proof}

By employing similar arguments as mentioned above, we can consider $\THL^{\rm F}$ as a closed subspace of $\THL$. Let $\tilde{S}_{\Lambda}$ denote the orthogonal projection from $\THL$ onto $\THL^{\rm F}$.
The following expression can be occasionally advantageous:
\be
\tilde{S}_{\Lambda}={\bs L} \Big(1 \otimes P_{\Omega_{\LL}^{\rm B}} \Big){\bs R}\Big(1 \otimes P_{\Omega_{\LL}^{\rm B}}\Big).\label{SLLR}
\ee 

\begin{Prop}
We have the following:
\begin{itemize}
\item[\rm (i)] $\tilde{S}_{\Lambda} \tCone=\tCone^{\rm F}$. More precisely, $
\tilde{S}_{\Lambda} \tCone=\tCone^{\rm F}\otimes P_{\Omega_{\LL}^{\rm B}}
$.
\item[\rm (ii)] $\tilde{S}_{\Lambda} \unrhd 0$ w.r.t. $\tCone$.
\end{itemize}
\end{Prop}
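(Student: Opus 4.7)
The plan is two-pronged, treating (ii) and (i) separately and using different pieces of the machinery already set up.

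For (ii) I will invoke the product formula \eqref{SLLR} together with Proposition \ref{PPI3}. The element $1\otimes P_{\Omega_{\LL}^{\rm B}}$ is self-adjoint and block-diagonal with respect to the decomposition $\bigoplus_{q\in \mathbb{I}_{\Lambda}}\AFock_{\LL}(q)$ tensored with a bounded operator on $\BFock_{\LL}$, hence it preserves each fiber $\IL(q)$ and lies in $\mathfrak{A}_{\Lambda}$. Proposition \ref{PPI3}(i) applied to $A=1\otimes P_{\Omega_{\LL}^{\rm B}}=A^{*}$ then gives
\begin{equation*}
S_{\Lambda}={\bs L}(A){\bs R}(A^{*})\unrhd 0\ \wrt \tCone,
\end{equation*}
which is precisely the claim of (ii).

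For (i) I will reduce to the fiber statement already established in Proposition \ref{PropsL}. The central observation is that $S_{\Lambda}$ respects the orthogonal sector decomposition \eqref{THLDirect}: on each sector $\THL(q)=\THL^{\rm F}(q)\otimes \THL^{\rm B}$, the operator $S_{\Lambda}$ coincides with the fiberwise projection $s_{\Lambda}$, because both are given by the same $\mathcal{L}(1_{q}\otimes P_{\Omega_{\LL}^{\rm B}})\mathcal{R}(1_{q}\otimes P_{\Omega_{\LL}^{\rm B}})$ action inside $\THL(q)$. Consequently $S_{\Lambda}=\bigoplus_{q\in\mathbb{I}_{\Lambda}} s_{\Lambda}$ with respect to \eqref{THLDirect}. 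Combining this decomposition with \eqref{DirectSumP} and Proposition \ref{PropsL}(i) yields
\begin{equation*}
S_{\Lambda}\tCone \;=\; \bigoplus_{q\in \mathbb{I}_{\Lambda}} s_{\Lambda}\,\tCone(q) \;=\; \bigoplus_{q\in \mathbb{I}_{\Lambda}} \big(\tCone^{\rm F}(q)\otimes P_{\Omega_{\LL}^{\rm B}}\big) \;=\; \tCone^{\rm F}\otimes P_{\Omega_{\LL}^{\rm B}},
\end{equation*}
which, under the identification used to view $\THL^{\rm F}$ as a closed subspace of $\THL$, is exactly (i).

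There is no serious obstacle: the main verification is the sector-wise identification of $S_{\Lambda}$ with $s_{\Lambda}$, which is immediate from the defining formulas \eqref{sExLR} and \eqref{SLLR} once one notes that $1\otimes P_{\Omega_{\LL}^{\rm B}}$ decomposes as $\bigoplus_{q}(1_{q}\otimes P_{\Omega_{\LL}^{\rm B}})$. All remaining work is bookkeeping of the tensor-product-of-self-dual-cones identity \eqref{DirectSumP}, which has already been recorded.
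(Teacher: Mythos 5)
Your proposal is correct and follows essentially the same route as the paper: part (i) is obtained from the fiberwise statement of Proposition \ref{PropsL} together with the direct-sum/tensor decompositions \eqref{Fiber1} and \eqref{DirectSumP}, and part (ii) from the factorization \eqref{SLLR} combined with Proposition \ref{PPI3} (after noting $1\otimes P_{\Omega_{\LL}^{\rm B}}\in\mathfrak{A}_{\Lambda}$). Your sector-wise identification $S_{\Lambda}=\bigoplus_{q}s_{\Lambda}$ just makes explicit the bookkeeping the paper leaves implicit.
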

\begin{proof}
The validity of (i) can be derived from (i) of Proposition \ref{PropsL} and \eqref{DirectSumP}. By employing \eqref{SLLR}, we can easily verify (ii).
\end{proof}

\subsubsection*{\it Proof of Theorem \ref{RPConeBasic}}
We will demonstrate the proof of the assertions regarding the fermion-phonon system. The same can be established for the fermionic system in a similar manner.

Set $\tilde{\mathfrak{M}}_{\LL}=\{A \in \mathscr{L}(\Fock_{\LL}) : [A, \hat{Q}^{(L)}_{\Lambda}]=0\}$.
It is worth noting that $\mathfrak{M}_{\LL}=\mathcal{U}_{\Lambda}^* \tilde{\mathfrak{M}}_{\LL} \mathcal{U}_{\LL}$. Additionally, we have $\tilde{\mathfrak{M}}_{\LL}\otimes 1=\mathfrak{N}_{\Lambda, L}$. Let $\tilde{\vphi}$ be an arbitrary vector in $\tCone$. By applying Corollary \ref{ExpPP}, we obtain the inequality:
\be
\big\la\tilde{\vphi}|\tilde{A} \otimes \vartheta \tilde{A} \vartheta^{-1} \tilde{\vphi}\ra\ge 0,\quad  \tilde{A} \in \tilde{\mathfrak{M}}_{\LL}.
\ee
Especially,  it holds that 
\be
 \big\la \tilde{\vphi}| \delta \hn_{i_m} \delta \hn_{i_{m-1}} \cdots  \delta \hn_{i_1} \delta \hn_{-i_1-1}
 \delta \hn_{-i_2-1} \cdots \delta \hn_{-i_m-1} \tilde{\vphi} \big\ra \ge  0
\ee
for $i_1, \dots, i_m\in \LL$. By setting
\be
\vphi=\mathcal{U}_{\Lambda}^*\tilde{\vphi},\ \ A=\mathcal{U}_{\Lambda}^* \tilde{A} \mathcal{U}_{\LL},\ \ 
\tau=\mathcal{U}_{\LR}^* \vartheta \mathcal{U}_{\LL},
\ee
we get \eqref{VecPP1}. In addition, we  obtain \eqref{VecPP2} by using the property $\mathcal{U}_{\Lambda}^* \delta \hn_j\mathcal{U}_{\Lambda}=(-1)^j\delta \hn_j$ for all $j\in \Lambda$. \qed

\subsection{Infinite chain}\label{SectInfiniteCh}

In this subsection, our focus is on the infinite system.
Firstly, it is worth noting that both $\hat{Q}^{(L)}_{\BbbZ}$ and $\hat{Q}^{(R)}_{\BbbZ}$ are well-defined self-adjoint operators on $\AFock_{\BbbZ_-}$ and $\AFock_{\BbbZ_+}$, respectively.
Here, $\BbbZ_-=\{j\in \BbbZ : j< 0 \}$ and $\BbbZ_+=\{j\in \BbbZ : j \ge 0\}$.
Therefore, $\AFock_{ \BbbZ_-}(q)\ (q\in \mathbb{I}_{\BbbZ}=\BbbZ)$ can be defined as 
\be
\AFock_{\BbbZ_-}(q)=\{\vphi \in \Fock_{\BbbZ_-} : \hat{Q}^{(L)}_{\BbbZ} \vphi=q\vphi\}.
\ee 
In addition, $\mathfrak{F}_{\BbbZ_-}$,  $\mathfrak{F}_{\BbbZ_-}(q)$  and $\BFock_{\BbbZ_-}$can be  well-defined,   and it holds that 
\be
\mathfrak{F}_{\BbbZ_-}=\AFock_{\BbbZ_-}\otimes \BFock_{\BbbZ_-},\ \ 
\mathfrak{F}_{\BbbZ_-}(q)=\AFock_{\BbbZ_-}(q)\otimes \BFock_{\BbbZ_-}.
\ee
From these facts, it is natural to define $\tilde{\h}_{\BbbZ}$ as 
\be
\tilde{\h}_{\BbbZ}=\bigoplus_{q\in \mathbb{I}_{\BbbZ}}
\tilde{\h}_{\BbbZ}(q),\ \ \tilde{\h}_{\BbbZ}(q)=
\mathscr{L}^2(\mathfrak{F}_{\BbbZ_-}(q)).
\ee
In this setting, we define
\be
\tilde{\Cone}_{\rm r, \BbbZ}=\bigoplus_{q\in \mathbb{I}_{\BbbZ}}
\tilde{\Cone}_{\rm r, \BbbZ}(q),\ \ \tilde{\Cone}_{\rm r, \BbbZ}(q)=
\mathscr{L}_+^2(\mathfrak{F}_{\BbbZ_-}(q)).
\ee
The involution associated with $\tilde{\Cone}_{\rm r, \BbbZ}$ is given by 
\be
J=\bigoplus_{q\in \mathbb{I}_{\BbbZ}} J_q,
\ee
where $J_q$ is the involution associated with $\tilde{\Cone}_{\rm r, \BbbZ}(q)$: $J_q\xi=\xi^*\ (\xi\in \mathscr{L}^2(\Fock_{\BbbZ_-}(q)))$.
Let $\mathfrak{N}_{\BbbZ, L}(q)=\{\mathcal{L}(A) : A\in  \mathscr{L}(\Fock_{\BbbZ_-}(q))\}\ (q\in \mathbb{I}_{\BbbZ})$. Note that the involution $\vartheta$ in Section \ref{DefConeRP} can be defined even for $\Lambda=\BbbZ$. According to  Proposition \ref{PPI2}, the quadruple $\{
\mathfrak{N}_{\BbbZ, L}, \tilde{\h}_{\BbbZ}(q), J_q, \tilde{\Cone}_{\rm r, \BbbZ}(q)
\}$  is a standard form associated with $\{\Fock_{\BbbZ_-}(q), \Fock_{\BbbZ_+}(q), \vartheta\}$.
Furthermore, we have
\be
\{
\mathfrak{N}_{\BbbZ, L}, \tilde{\h}_{\BbbZ}, J, \tilde{\Cone}_{\rm r, \BbbZ}
\}=\bigoplus_{q\in \mathbb{I}_{\BbbZ}} \{
\mathfrak{N}_{\BbbZ, L}, \tilde{\h}_{\BbbZ}(q), J_q, \tilde{\Cone}_{\rm r, \BbbZ}(q)
\},
\ee
where $\mathfrak{N}_{\BbbZ, L}=\bigoplus_{q\in \mathbb{I}_{\BbbZ}} \mathfrak{N}_{\BbbZ, L}(q)$.
This can be regarded as the standard form for describing the infinite system.

Compared to the representation on $\h_{\Lambda}$, the primary benefit of utilizing the representation on $\tilde{\h}_{\Lambda}$ is that it allows for the explicit construction of various mathematical objects related to the infinite chain. For instance, we can concretely define and construct objects such as $\tilde{\h}_{\BbbZ}$, $\tilde{\Cone}_{\rm r, \BbbZ}$, and so on.

Taking the above definitions into account, we set
\be
\tilde{\h}_{\BbbZ}^{\rm F}=\bigoplus_{q\in \mathbb{I}_{\BbbZ}}
\tilde{\h}_{\BbbZ}^{\rm F}(q),\ \ \tilde{\h}^{\rm F}_{\BbbZ}(q)=
\mathscr{L}^2(\AFock_{\BbbZ_-}(q))
\ee
and 
\be
\tilde{\Cone}_{\rm r, \BbbZ}^{\rm F}=\bigoplus_{q\in \mathbb{I}_{\BbbZ}}
\tilde{\Cone}_{\rm r, \BbbZ}^{\rm F}(q),\ \ \tilde{\Cone}^{\rm F}_{\rm r, \BbbZ}(q)=
\mathscr{L}_+^2(\AFock_{\BbbZ_-}(q)).
\ee
It holds that 
\be
\tilde{\h}_{\BbbZ}=\tilde{\h}_{\BbbZ}^{\rm F} \otimes \tilde{\h}_{\BbbZ}^{\rm B},
 \ \ \ \tilde{\h}_{\BbbZ}(q)=\tilde{\h}_{\BbbZ}^{\rm F}(q) \otimes \tilde{\h}_{\BbbZ}^{\rm B},
\ee
where
\be
 \tilde{\h}_{\BbbZ}^{\rm B}=\mathscr{L}^2(\BFock_{\BbbZ_-}).
\ee
Set  $P_{\Omega_{\BbbZ_-}^{\rm B}}=|\Omega^{\rm B}_{\BbbZ_-}\ra \la \Omega_{\BbbZ_-}^{\rm B}| $, where $\Omega_{\BbbZ_-}^{\rm B}$ is the  Fock vacuum in $\BFock_{\BbbZ_-}$.
Through the identification of $\vphi\in \tilde{\h}_{\BbbZ}^{\rm F}$ with $\vphi\otimes P{\Omega_{\BbbZ_-}^{\rm B}} \in \tilde{\h}_{\BbbZ}$, we can consider $\tilde{\h}_{\BbbZ}^{\rm F}$ as a closed subspace of $\tilde{\h}_{\BbbZ}$.
 The orthogonal projection, $\tilde{S}_{\BbbZ}$,  from $\tilde{\h}_{\BbbZ}$ to $\tilde{\h}_{\BbbZ}^{\rm F}$ is given by 
\be
\tilde{S}_{\BbbZ}={\bs L} \Big(1 \otimes P_{\Omega_{\BbbZ_-}^{\rm B}} \Big){\bs R}\Big(1 \otimes P_{\Omega_{\BbbZ_-}^{\rm B}}\Big).
\ee

Recall here the definition of $\Omega_{\Lambda}$, i.e., \eqref{DefOme}.
For $\Lambda, \Lambda\rq{}\in \mathbb{F}$ with 
 $\Lambda\subseteq \Lambda\rq{}$,  we find that
\be
\Omega_{\Lambda_L\rq{}\setminus \Lambda_L}\otimes \IL\subseteq \ILd,
\ee
where $\Omega_{\Lambda_L\rq{}\setminus \Lambda_L} \otimes \IL
=\{\Omega_{\Lambda_L\rq{}\setminus \Lambda_L} \otimes \vphi : \vphi \in \IL\}
$. 
Here, we used the following identification:
\be
\Omega_{\LL\rq{}\setminus \LL} \otimes \chi^{\rm F} \otimes \chi^{\rm B}=\Big(\chi^{\rm F} \otimes \Omega^{\rm F}_{\LL\rq{}\setminus \LL} \Big) \otimes \Big(\chi^{\rm B} \otimes \Omega^{\rm B}_{\LL\rq{}\setminus \LL}\Big),\ \ \chi^{\rm F} \in \AFock_{\LL},\ \ \chi^{\rm B} \in \BFock_{\LL},
\ee
which implies that 
\be
\Omega_{\LL\rq{}\setminus \LL} \otimes \IL=\Big[
\AFock_{\LL} \otimes \Omega^{\rm F}_{\LL\rq{}\setminus \LL}
\Big]\otimes \Big[
\BFock_{\LL} \otimes \Omega^{\rm B}_{\LL\rq{}\setminus \LL}
\Big].
\ee
Due to the inclusion $\mathbb{I}_{\Lambda} \subseteq \mathbb{I}_{\Lambda\rq{}}$, for every  $q\in \mathbb{I}_{\Lambda}$, we have 
\be
\Omega_{\Lambda_L\rq{}\setminus \Lambda_L}\otimes \IL(q) \subseteq \ILd(q).
\ee
Taking this into consideration, let $\tau_{\Lambda\rq{} \Lambda}: \IL(q) \to \ILd(q)$
be an isometric   linear mapping  given by 
\be
\tau_{\Lambda\rq{}\Lambda} \vphi=\Omega_{\Lambda_L\rq{}\setminus \Lambda_L}\otimes \vphi,\ \ \vphi \in 
\IL(q).
\ee
Using this, we can construct an isometric  linear mapping  $\tilde{U}_{\Lambda\rq{}\Lambda} : \tilde{\mathfrak{H}}_{\Lambda} \to \tilde{\mathfrak{H}}_{\Lambda\rq{}}$ as follows: 
\begin{align}
\tilde{U}_{\Lambda\rq{}\Lambda} \bigoplus_{q\in \mathbb{I}_{\Lambda}} \vphi_q\otimes \vartheta \psi_q=
\bigoplus_{q\in \mathbb{I}_{\Lambda\rq{}}} \tilde{\vphi}_q \otimes \vartheta \tilde{\psi}_q,
\end{align}
where
\begin{align}
\tilde{\vphi}_q=
\begin{cases}
 \tau_{\Lambda\rq{}\Lambda} \vphi_q &  \mbox{ for $q\in \mathbb{I}_{\Lambda}$}\\
 0 & \mbox{ for $q\in \mathbb{I}_{\Lambda\rq{}} \setminus \mathbb{I}_{\Lambda}$}.
\end{cases}
\end{align}

For $\Lambda, \Lambda', \Lambda''\in \mathbb{F}$ with $\Lambda\subseteq \Lambda'\subseteq \Lambda''$, we can easily verify that
\be
\tau_{\Lambda''\Lambda'}\tau_{\Lambda'\Lambda}=\tau_{\Lambda''\Lambda}.
\ee
Using this relation, we also have
\be
\tilde{U}_{\Lambda\rq{}\rq{}\Lambda\rq{}}\tilde{U}_{\Lambda\rq{}\Lambda}=\tilde{U}_{\Lambda\rq{}\rq{}\Lambda}.
\ee
Therefore, by considering $\mathbb{F}$ as a directed set with respect to the inclusion relation, we can define the following inductive limit \cite{KaRi}:
$
\ilim \tilde{\h}_{\Lambda}.$ 
Trivially, we have the following identification:
\be
\tilde{\mathfrak{H}}_{\BbbZ}=\ilim \tilde{\h}_{\Lambda}.
\ee
Similarly, it holds that 
\be
\tilde{\mathfrak{H}}_{\BbbZ}^{\rm F}=\ilim \tilde{\h}_{\Lambda}^{\rm F}.
\ee

According to the reference \cite{Bratteli1997}, the following result holds:
\begin{Prop}\label{PropIndLim}
For each $\Lambda\in \mathbb{F}$, there exists an isometric linear mapping $\tilde{U}_{\Lambda}$ from $\THL$ into 
$\tilde{\mathfrak{H}}_{\BbbZ}$ satisfying the following:
\begin{itemize}
\item[\rm (i)] If $\Lambda\subseteq \Lambda\rq{}$, then $\tilde{U}_{\Lambda\rq{}}\tilde{U}_{\Lambda\rq{}\Lambda}=\tilde{U}_{\Lambda}$.
\item[\rm (ii)] $\bigcup_{\Lambda \in \mathbb{F}} \tilde{U}_{\Lambda}\tilde{\mathfrak{H}}_{\Lambda}$ is dense in $\tilde{\mathfrak{H}}_{\BbbZ}$.
\end{itemize}
Furthermore, the Hilbert space $\tilde{\mathfrak{H}}_{\BbbZ}$ and the net of isometric linear mappings $ \{\tilde{U}_{\Lambda} : \Lambda\in \mathbb{F}\}$    are uniquely determined,   up to unitary equivalence. 

Similar statements hold true for $\tilde{\mathfrak{H}}_{\BbbZ}^{\rm F}$ and the corresponding net of isometric linear mappings.
\end{Prop}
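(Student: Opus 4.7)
The plan is to invoke the standard inductive-limit construction in the category of Hilbert spaces with isometric linking maps, as developed in Bratteli--Robinson. The setup already built in Section \ref{SectInfiniteCh} supplies all the pieces: the spaces $\{\THL : \Lambda \in \mathbb{F}\}$ form a net indexed by the directed set $\mathbb{F}$, the maps $\tilde{U}_{\Lambda'\Lambda} : \THL \to \THLd$ are isometries (they embed a vector $\bigoplus_q \vphi_q \otimes \vartheta \psi_q$ into the larger space by tensoring with the bosonic/fermionic Fock vacuum in the added sites, which is norm-preserving), and the cocycle identity $\tilde{U}_{\Lambda''\Lambda'}\tilde{U}_{\Lambda'\Lambda}=\tilde{U}_{\Lambda''\Lambda}$ was verified just before the proposition statement via $\tau_{\Lambda''\Lambda'}\tau_{\Lambda'\Lambda}=\tau_{\Lambda''\Lambda}$.

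Given this data, I would construct $\tilde{\h}_{\BbbZ}$ as follows. On the algebraic disjoint union $\bigsqcup_{\Lambda \in \mathbb{F}} \THL$, introduce the equivalence relation generated by $\vphi \sim \tilde{U}_{\Lambda'\Lambda}\vphi$ whenever $\Lambda \subseteq \Lambda'$. Equip the quotient $\mathfrak{X}_0$ with the pre-inner product
\begin{equation*}
\langle [\vphi],[\psi]\rangle = \langle \tilde{U}_{\Lambda''\Lambda}\vphi , \tilde{U}_{\Lambda''\Lambda'}\psi\rangle_{\THL[''] },
\end{equation*}
where $\Lambda'' \in \mathbb{F}$ is any element containing both representatives' index sets (existence of such $\Lambda''$ uses that $\mathbb{F}$ is directed). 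This is well-defined and positive-semidefinite because the linking maps are isometries and satisfy the cocycle condition. Let $\tilde{\h}_{\BbbZ}$ be the completion of $\mathfrak{X}_0$ modulo null vectors, and define $\tilde{U}_\Lambda: \THL \to \tilde{\h}_{\BbbZ}$ as the natural map sending $\vphi$ to its equivalence class. Properties (i) and (ii) follow immediately from the construction: the compatibility $\tilde{U}_{\Lambda'}\tilde{U}_{\Lambda'\Lambda}=\tilde{U}_\Lambda$ is built into the equivalence relation, and density of $\bigcup_{\Lambda} \tilde{U}_\Lambda \THL$ holds because $\mathfrak{X}_0$ is by definition this union.

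Uniqueness up to unitary equivalence follows from the universal property. Given any other pair $(\mathcal{K}, \{V_\Lambda\}_{\Lambda \in \mathbb{F}})$ of a Hilbert space and isometries satisfying (i) and (ii), the prescription $\tilde{U}_\Lambda \vphi \mapsto V_\Lambda \vphi$ is well defined on $\bigcup_\Lambda \tilde{U}_\Lambda \THL$, isometric (both sides compute inner products by pulling back to a common $\THL['']$), and extends by density to a unitary between $\tilde{\h}_{\BbbZ}$ and $\mathcal{K}$. The fermionic statement is proved by repeating the construction verbatim after replacing $\THL$ by $\THL^{\rm F}$ and restricting the linking maps to the fermionic factors, which is legitimate because $\tilde{U}_{\Lambda'\Lambda}$ factors through the tensor decomposition $\THL = \THL^{\rm F}\otimes \THL^{\rm B}$.

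The plan contains no substantial obstacle; the only point requiring care is verifying that the concrete description of $\tilde{\h}_{\BbbZ}$ proposed in Section \ref{SectInfiniteCh} (namely, $\bigoplus_{q \in \mathbb{I}_{\BbbZ}} \mathscr{L}^2(\mathfrak{F}_{\BbbZ_-}(q))$) actually realizes this abstract inductive limit. This reduces to checking that the maps $\tilde{U}_\Lambda$ constructed as tensoring with vacua have dense range in that concrete space, which follows from the density of finitely-supported vectors in $\mathfrak{F}_{\BbbZ_-}(q)$ and of finite-rank operators in $\mathscr{L}^2(\mathfrak{F}_{\BbbZ_-}(q))$, together with the fact that $\mathbb{I}_\Lambda \uparrow \mathbb{I}_{\BbbZ}$ as $\Lambda \uparrow \BbbZ$.
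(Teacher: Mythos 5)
Your proposal is correct and follows essentially the same route as the paper: the paper obtains Proposition \ref{PropIndLim} by invoking the standard inductive-limit construction for Hilbert spaces along the isometric linking maps $\tilde{U}_{\Lambda'\Lambda}$ (citing Bratteli--Robinson, after having verified the cocycle identity via $\tau_{\Lambda''\Lambda'}\tau_{\Lambda'\Lambda}=\tau_{\Lambda''\Lambda}$), which is exactly the construction you carry out explicitly. Your additional check that the concrete space $\bigoplus_{q\in\mathbb{I}_{\BbbZ}}\mathscr{L}^2(\mathfrak{F}_{\BbbZ_-}(q))$ realizes this abstract limit matches the identification the paper records just before the proposition.
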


Now we are ready to prove Proposition \ref{HilInd2}.
\subsubsection*{\it Proof of Proposition \ref{HilInd2}}
By setting
$
\iota_{\Lambda\rq{}\Lambda}=\mathcal{U}_{\Lambda\rq{}}^{-1} \tilde{U}_{\Lambda\rq{}\Lambda} \mathcal{U}_{\Lambda},
$
we can observe that the following diagram is commutative:
\be
\begin{tikzcd}
  \h_{\Lambda}   \ar[r, "\iota_{\Lambda\rq{}\Lambda}"] \arrow[d, "\mathcal{U}_{\Lambda}"]& \h_{\Lambda\rq{}} \arrow[d, "\mathcal{U}_{\Lambda\rq{}}"]  \\
  \THL \arrow[r, "\tilde{U}_{\Lambda\rq{}\Lambda}"]& \THLd
\end{tikzcd}
\ee
Hence, Proposition \ref{HilInd2} immediately follows from Proposition \ref{PropIndLim}. \qed
\bigskip

In the following discussion, we identify $\tilde{\h}_{\Lambda}$ with $\tilde{U}_{\Lambda}\tilde{\h}_{\Lambda}$. Consequently, $\tilde{\h}_{\Lambda}$ can be viewed as a closed subspace of $\tilde{\h}_{\BbbZ}$. We represent the orthogonal projection from $\tilde{\h}_{\BbbZ}$ to $\tilde{\h}_{\Lambda}$ as $\tilde{P}_{\Lambda}$. By employing similar reasoning as mentioned above, given that $\Lambda\subseteq \Lambda'$, we can regard $\tilde{\h}_{\Lambda}$ as a subspace of $\tilde{\h}_{\Lambda'}$. The orthogonal projection from $\tilde{\h}_{\Lambda'}$ to $\tilde{\h}_{\Lambda}$ is denoted as $\tilde{P}_{\Lambda\Lambda'}$.

Regarding the fermion part, we can also consider $\THL^{\rm F}$ as a subspace of $\tilde{\mathfrak{H}}_{\BbbZ}^{\rm F}$. We use $\tilde{P}_{\Lambda}^{\rm F}$ to represent the orthogonal projection from $\tilde{\mathfrak{H}}_{\BbbZ}^{\rm F}$ to $\THL^{\rm F}$. Similarly, the orthogonal projection from $\THLd^{\rm F}$ to $\THL^{\rm F}$ is denoted as $\tilde{P}_{\Lambda \Lambda'}^{\rm F}$.

By construction, there is a unique unitary operator $\mathcal{U}_{\BbbZ}$ from $\h_{\BbbZ}$ onto $\tilde{\h}_{\BbbZ}$ such that $\mathcal{U}_{\BbbZ} \restriction \h_{\Lambda}=\mathcal{U}_{\Lambda}$ for each $\Lambda \in \mathbb{F}$.
\begin{define}\upshape
The {\it reflection positivity on $\h_{\BbbZ}$} is defined by $\Cone_{\rm r, \BbbZ}=\mathcal{U}_{\BbbZ}^{-1} \tilde{\Cone}_{\rm r, \BbbZ}$. Similarly, we can define the reflection positivity on $\h^{\rm F}_{\BbbZ}$.
\end{define}

The following lemma is an immediate consequence of the definitions.
\begin{lemm}
Let $\Lambda, \Lambda\rq{}\in \mathbb{F}$ with $\Lambda\subseteq \Lambda\rq{}$. Then we have the following:
\begin{itemize}
\item[\rm (i)]
$
\tilde{P}_{\Lambda \Lambda\rq{}}\tilde{P}_{\Lambda\rq{}}=\tilde{P}_{\Lambda}.
$
\item[\rm (ii)] $\tilde{P}_{\Lambda} \le \tilde{P}_{\Lambda\rq{}}$. In addition, the net $\{P_{\Lambda} : \Lambda\in \mathbb{F}\}$ converges $\sigma$-strongly to the identity $1$.
\end{itemize}
Similar statements hold true for $\tilde{P}_{\Lambda}^{\rm F}$ and $\tilde{P}_{\Lambda \Lambda\rq{}}^{\rm F}$.
\end{lemm}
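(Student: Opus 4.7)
The plan is to derive both statements directly from the compatibility relation $\tilde{U}_{\Lambda'}\tilde{U}_{\Lambda'\Lambda}=\tilde{U}_{\Lambda}$ supplied by Proposition \ref{PropIndLim}(i). Once all spaces $\THL$ are identified with their images $\tilde{U}_{\Lambda}\THL$ inside $\tilde{\h}_{\BbbZ}$, this compatibility forces the nested inclusion $\THL \subseteq \THLd \subseteq \tilde{\h}_{\BbbZ}$ as closed subspaces of $\tilde{\h}_{\BbbZ}$, since for any $\xi\in \THL$ we have $\xi = \tilde{U}_{\Lambda}\xi = \tilde{U}_{\Lambda'}(\tilde{U}_{\Lambda'\Lambda}\xi)\in \THLd$. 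Under this identification, $\tilde{P}_{\Lambda\Lambda'}$ is nothing but the restriction to $\THLd$ of the orthogonal projection $\tilde{P}_{\Lambda}$ of $\tilde{\h}_{\BbbZ}$ onto $\THL$, because the isometric embeddings preserve the inner product.

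For (i), I would invoke the elementary Hilbert-space fact that if $M\subseteq N\subseteq H$ are closed subspaces with orthogonal projections $P_M,P_N$ in $H$, then $P_M P_N = P_M = P_N P_M$, and moreover $P_M\restriction N$ coincides with the orthogonal projection of $N$ onto $M$. Applied to $\THL\subseteq \THLd\subseteq \tilde{\h}_{\BbbZ}$, this reads $\tilde{P}_{\Lambda\Lambda'}\tilde{P}_{\Lambda'} = \tilde{P}_{\Lambda}$, which is (i). For the first part of (ii), observe that (i) together with $\tilde{P}_{\Lambda'}\tilde{P}_{\Lambda} = \tilde{P}_{\Lambda}$ (valid because $\THL\subseteq \THLd$) gives $(\tilde{P}_{\Lambda'}-\tilde{P}_{\Lambda})^2=\tilde{P}_{\Lambda'}-\tilde{P}_{\Lambda}$, so $\tilde{P}_{\Lambda'}-\tilde{P}_{\Lambda}$ is a self-adjoint projection and hence $\ge 0$; equivalently, $\tilde{P}_{\Lambda}\le \tilde{P}_{\Lambda'}$.

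For the convergence statement, the net $\{\tilde{P}_{\Lambda}\}_{\Lambda\in\mathbb{F}}$ is monotone increasing and bounded by $1$, so by the standard monotone convergence for projections it converges strongly to the orthogonal projection $P_{\infty}$ onto the closure of $\bigcup_{\Lambda\in\mathbb{F}}\THL$. By Proposition \ref{PropIndLim}(ii) this closure equals $\tilde{\h}_{\BbbZ}$, so $P_{\infty}=1$. To upgrade strong convergence to $\sigma$-strong convergence, I would use that for any $\ell^2$-summable sequence $\{\xi_n\}\subset \tilde{\h}_{\BbbZ}$ the bound $\|(1-\tilde{P}_{\Lambda})\xi_n\|^2\le \|\xi_n\|^2$ is valid (since $1-\tilde{P}_{\Lambda}$ is a projection), strong convergence gives $\|(1-\tilde{P}_{\Lambda})\xi_n\|^2\to 0$ for each $n$, and dominated convergence yields $\sum_n\|(1-\tilde{P}_{\Lambda})\xi_n\|^2\to 0$. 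The statement for $\tilde{P}_{\Lambda}^{\rm F}$ and $\tilde{P}_{\Lambda\Lambda'}^{\rm F}$ is proved by repeating the above argument with the fermionic part of Proposition \ref{PropIndLim}.

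There is no genuine obstacle in this lemma; it is essentially a consistency check for the inductive-limit construction, and everything reduces to the compatibility $\tilde{U}_{\Lambda'}\tilde{U}_{\Lambda'\Lambda}=\tilde{U}_{\Lambda}$ and the density of $\bigcup_{\Lambda}\tilde{U}_{\Lambda}\THL$. The only point requiring care is to verify that $\tilde{P}_{\Lambda\Lambda'}$, as defined on $\THLd$, is compatible with $\tilde{P}_{\Lambda}$ on $\tilde{\h}_{\BbbZ}$ under the identifications, which is exactly what the commuting diagram of isometries guarantees.
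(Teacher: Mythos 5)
Your proposal is correct and follows exactly the route the paper has in mind: the paper offers no written proof, declaring the lemma ``immediate from the definitions,'' and your argument simply spells out those immediate facts (the nesting $\THL\subseteq\THLd\subseteq\tilde{\h}_{\BbbZ}$ forced by $\tilde{U}_{\Lambda'}\tilde{U}_{\Lambda'\Lambda}=\tilde{U}_{\Lambda}$, the elementary projection identities, monotone convergence of the projection net, and density from Proposition \ref{PropIndLim}(ii), with $\sigma$-strong convergence following since the net is uniformly bounded). Nothing is missing.
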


\begin{coro}Let $\Lambda, \Lambda\rq{}\in \mathbb{F}$ with $\Lambda\subseteq \Lambda\rq{}$.
The following diagram is commutative:
\be
\begin{tikzcd}
  \THL \arrow[d, "\tilde{S}_{\Lambda}"]  &\arrow[l, "\tilde{P}_{\Lambda\Lambda\rq{}}"] \THLd \arrow[d, "\tilde{S}_{\Lambda\rq{}}"]  &\arrow[l, "\tilde{P}_{\Lambda\rq{}}"]  \tilde{\h}_{ \BbbZ} \arrow[d, "\tilde{S}_{\BbbZ}"]\\
  \THL^{\rm F} &\arrow[l, "\tilde{P}_{\Lambda\Lambda\rq{}}^{\rm F}"] \THLd^{\rm F} &\arrow[l, "\tilde{P}_{\Lambda\rq{}}^{\rm F}"] \tilde{\h}_{\BbbZ}^{\rm F} 
\end{tikzcd}
\ee
\end{coro}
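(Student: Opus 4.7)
The plan is to express every map appearing in the diagram as a composition of left- and right-multiplication operators by rank-one projections onto vacuum vectors, and then to exploit the fact that the relevant projections act on mutually disjoint tensor factors, so all pairs of them commute. First, extending the formula \eqref{SLLR} and tracing through the definitions of $\tilde{U}_{\Lambda\rq{}\Lambda}$ and $\tilde{U}_{\Lambda}$ in Section \ref{SectInfiniteCh}, I would establish the operator identities
\begin{align*}
\tilde{P}_{\Lambda\Lambda\rq{}} &= {\bs L}\big(P_{\Omega_{\LL\rq{}\setminus \LL}} \otimes 1\big)\, {\bs R}\big(P_{\Omega_{\LL\rq{}\setminus \LL}} \otimes 1\big),\\
\tilde{P}^{\rm F}_{\Lambda\Lambda\rq{}} &= {\bs L}\big(P_{\Omega^{\rm F}_{\LL\rq{}\setminus \LL}} \otimes 1\big)\, {\bs R}\big(P_{\Omega^{\rm F}_{\LL\rq{}\setminus \LL}} \otimes 1\big),
\end{align*}
together with analogous expressions for $\tilde{P}_{\Lambda\rq{}}$, $\tilde{P}^{\rm F}_{\Lambda\rq{}}$, $S_{\Lambda\rq{}}$, and $S_{\BbbZ}$. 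The key observation is that the embedding $\tilde{U}_{\Lambda\rq{}\Lambda}$ sends a Hilbert--Schmidt operator $\xi \in \THL$ to $P_{\Omega_{\LL\rq{}\setminus \LL}} \otimes \xi \in \mathscr{L}^2(\ILd)$ (cf.\ Example~\ref{PfCDWV}), so the orthogonal projection $\tilde{P}_{\Lambda\Lambda\rq{}}$ is exactly left-and-right multiplication by $P_{\Omega_{\LL\rq{}\setminus \LL}} \otimes 1$.

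Next, I would exploit the tensor factorizations
$\Omega_{\LL\rq{}\setminus \LL} = \Omega^{\rm F}_{\LL\rq{}\setminus \LL} \otimes \Omega^{\rm B}_{\LL\rq{}\setminus \LL}$
and
$\Omega^{\rm B}_{\LL\rq{}} = \Omega^{\rm B}_{\LL\rq{}\setminus \LL} \otimes \Omega^{\rm B}_{\LL}$,
which yield the corresponding factorizations of the projection operators. Because the three factors $P_{\Omega^{\rm F}_{\LL\rq{}\setminus \LL}}$, $P_{\Omega^{\rm B}_{\LL\rq{}\setminus \LL}}$ and $P_{\Omega^{\rm B}_{\LL}}$ act on mutually disjoint tensor factors, they commute pairwise. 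Combined with the multiplicativity of ${\bs L}(\cdot)$ and ${\bs R}(\cdot)$, both compositions $S_{\Lambda}\tilde{P}_{\Lambda\Lambda\rq{}}$ and $\tilde{P}^{\rm F}_{\Lambda\Lambda\rq{}}S_{\Lambda\rq{}}$ reduce to the same operator
\ben
{\bs L}\!\left(P_{\Omega^{\rm F}_{\LL\rq{}\setminus \LL}} \otimes 1 \otimes P_{\Omega^{\rm B}_{\LL\rq{}\setminus \LL}} \otimes P_{\Omega^{\rm B}_{\LL}}\right) {\bs R}\!\left(P_{\Omega^{\rm F}_{\LL\rq{}\setminus \LL}} \otimes 1 \otimes P_{\Omega^{\rm B}_{\LL\rq{}\setminus \LL}} \otimes P_{\Omega^{\rm B}_{\LL}}\right),
\een
which establishes commutativity of the left square. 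The right square is handled identically, with $\BbbZ$ replacing $\Lambda\rq{}$ and using the Fock vacuum $\Omega_{\BbbZ_-\setminus \LL\rq{}}$ in $\mathfrak{F}_{\BbbZ_-\setminus \LL\rq{}}$, which is well-defined through the inductive limit construction of Section~\ref{SectInfiniteCh}.

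The main obstacle I anticipate is the bookkeeping required to handle the direct-sum decomposition $\THLd = \bigoplus_{q \in \mathbb{I}_{\Lambda\rq{}}} \mathscr{L}^2(\ILd(q))$ properly: one must verify that each of the projection operators listed above actually belongs to the algebra $\mathfrak{A}_{\Lambda\rq{}}$, i.e., preserves the charge-sector decomposition, so that ${\bs L}(\cdot)$ and ${\bs R}(\cdot)$ are well-defined on it. This is true because each vacuum vector $\Omega^{(\cdot)}_{\LL\rq{}\setminus \LL}$ carries zero charge and hence its rank-one projection commutes with $\hat{Q}^{(L)}_{\Lambda\rq{}}$; the analogous statement holds for $\Omega_{\BbbZ_-\setminus \LL\rq{}}$. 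Once this technical point is in place, the remainder is a routine algebraic manipulation of commuting left- and right-multiplication operators.
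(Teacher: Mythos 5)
Your proposal is correct and follows essentially the same route as the paper, which treats the corollary as immediate from the explicit formulas \eqref{sExLR}, \eqref{SLLR}, \eqref{ExPLL} and \eqref{ExPL} expressing $S_{\Lambda}$, $\tilde{P}_{\Lambda\Lambda\rq{}}$ and $\tilde{P}_{\Lambda}$ as left- and right-multiplications by vacuum (second-quantized) projections on disjoint tensor factors. Your write-up merely makes explicit the commuting-projection computation and the charge-sector bookkeeping that the paper leaves implicit.
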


For $\Lambda, \Lambda' \in \mathbb{F}$ with $\Lambda \subseteq \Lambda'$, we define $\pi_{\Lambda\Lambda'}$ as the orthogonal projection from $\ell^2(\Lambda'_L)$ to $\ell^2(\Lambda_L)$. Then, the orthogonal projection from $\AFock_{\Lambda'L}$ to $\AFock_{\Lambda_L}$ is given by
\be
p^{\rm F}_{\Lambda\Lambda\rq{}} = \bigoplus_{n=0}^{|\Lambda|} \otimes^n \pi_{\Lambda\Lambda\rq{}}\restriction \wedge^n \ell^2(\Lambda'_L).
\ee
Similarly, the orthogonal projection from $\BFock_{\Lambda_L'}$ to $\BFock_{\Lambda_L}$ is given by 
\be
p^{\rm B}_{\Lambda\Lambda' } = \bigoplus_{n=0}^{\infty} \otimes^n \pi_{\Lambda\Lambda\rq{}}\restriction \otimes_{\rm s}^n \ell^2(\Lambda'_L).
\ee
By using the orthogonal projections $p^{\rm F}_{\Lambda\Lambda\rq{}}$ and $p^{\rm B}_{\Lambda\Lambda\rq{}}$, we can express $\tilde{P}_{\Lambda\Lambda\rq{}}$ as 
\be
\tilde{P}_{\Lambda\Lambda'} =\bigoplus_{q\in \mathbb{I}_{\Lambda\rq{}}} \mathcal{L}\Big(p^{(q)}_{\Lambda\Lambda\rq{}}\Big) \mathcal{R}\Big(p_{\Lambda\Lambda\rq{}}^{(q)}\Big), \label{ExPLL}
\ee
where 
\begin{align}
p_{\Lambda\Lambda\rq{}}^{(q)} = \begin{cases}
p^{\rm F}_{\Lambda\Lambda\rq{}} \otimes p_{\Lambda\Lambda'}^{\rm B} & \mbox{for $q\in \mathbb{I}_{\Lambda}$}\\
0  & \mbox{for $q\in \mathbb{I}_{\Lambda\rq{}}\setminus \mathbb{I}_{\Lambda}$}.
\end{cases}
\end{align}
\begin{Prop}\label{ConeConect}
Let $\Lambda, \Lambda\rq{}\in \mathbb{F}$. If $\Lambda\subseteq\Lambda\rq{}$, then  we have the following:
\begin{itemize}
\item[\rm (i)] $\tCone \subseteq \tConed$.
\item[\rm (ii)]
$\tilde{P}_{\Lambda\Lambda\rq{}}\tConed = \tCone$.
\item[\rm (iii)] $\tilde{P}_{\Lambda\Lambda\rq{}} \unrhd 0$ w.r.t. $\tConed$.
\end{itemize}
Similar statements hold true for $\tCone^{\rm F}$ and $\tilde{P}^{\rm F}_{\Lambda \Lambda\rq{}}$.
\end{Prop}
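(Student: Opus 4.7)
The plan is to reduce every assertion to a fiber-wise statement using the direct-sum decompositions \eqref{THLDirect} and \eqref{Fiber1}, and then exploit the explicit formula \eqref{ExPLL} that expresses $\tilde{P}_{\Lambda\Lambda\rq{}}$ fiber-wise as a composition of a left and a right multiplication.

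First, I would fix $q\in\mathbb{I}_{\Lambda}$ and identify the embedding $\IL(q)\hookrightarrow\ILd(q)$ induced by $\tau_{\Lambda\rq{}\Lambda}$ with the map $\vphi\mapsto\Omega_{\LL\rq{}\setminus\LL}\otimes\vphi$ under the natural factorization $\mathfrak{F}_{\LL\rq{}}=\mathfrak{F}_{\LL\rq{}\setminus\LL}\otimes\mathfrak{F}_{\LL}$. The induced embedding $\mathscr{L}^2(\IL(q))\hookrightarrow\mathscr{L}^2(\ILd(q))$ then sends $|\vphi\ra\la\psi|$ to $|\Omega_{\LL\rq{}\setminus\LL}\otimes\vphi\ra\la\Omega_{\LL\rq{}\setminus\LL}\otimes\psi|$, i.e., under the corresponding factorization of Hilbert--Schmidt operators, $\xi\mapsto P_{\Omega_{\LL\rq{}\setminus\LL}}\otimes\xi$. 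Since $P_{\Omega_{\LL\rq{}\setminus\LL}}$ is a positive rank-one projection, this map sends $\tCone(q)=\mathscr{L}^2_+(\IL(q))$ into $\tConed(q)=\mathscr{L}^2_+(\ILd(q))$; summing over $q\in\mathbb{I}_{\Lambda}$ yields (i).

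Next, for (ii) and (iii), I would use \eqref{ExPLL}: on each fiber $\THLd(q)$ the operator $\tilde{P}_{\Lambda\Lambda\rq{}}$ acts as the sandwich $\xi\mapsto p^{(q)}_{\Lambda\Lambda\rq{}}\,\xi\, p^{(q)}_{\Lambda\Lambda\rq{}}$, where $p^{(q)}_{\Lambda\Lambda\rq{}}$ is a self-adjoint projection (and vanishes identically on fibers with $q\in\mathbb{I}_{\Lambda\rq{}}\setminus\mathbb{I}_{\Lambda}$). Applying Proposition \ref{PPI2} with $A=p^{(q)}_{\Lambda\Lambda\rq{}}$ gives $\mathcal{L}(p^{(q)}_{\Lambda\Lambda\rq{}})\mathcal{R}(p^{(q)}_{\Lambda\Lambda\rq{}})\unrhd 0$ with respect to $\mathscr{L}^2_+(\ILd(q))$; moreover, since the range of $p^{(q)}_{\Lambda\Lambda\rq{}}$ is precisely the embedded copy of $\IL(q)$, the image of this sandwich sits in the embedded $\mathscr{L}^2_+(\IL(q))=\tCone(q)$. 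Assembling over $q$ gives $\tilde{P}_{\Lambda\Lambda\rq{}}\tConed\subseteq\tCone$, which with (i) yields $\tilde{P}_{\Lambda\Lambda\rq{}}\unrhd 0$ with respect to $\tConed$, proving (iii). The reverse inclusion in (ii), $\tCone\subseteq\tilde{P}_{\Lambda\Lambda\rq{}}\tConed$, follows from a direct check that each $\tilde{\xi}=P_{\Omega_{\LL\rq{}\setminus\LL}}\otimes\xi$ is already a fixed point of $\tilde{P}_{\Lambda\Lambda\rq{}}$, since $p^{(q)}_{\Lambda\Lambda\rq{}}(\Omega_{\LL\rq{}\setminus\LL}\otimes\vphi)=\Omega_{\LL\rq{}\setminus\LL}\otimes\vphi$.

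The fermionic analogue for $\tCone^{\rm F}$ and $\tilde{P}^{\rm F}_{\Lambda\Lambda\rq{}}$ follows by the very same argument with $p^{\rm F}_{\Lambda\Lambda\rq{}}$ in place of $p^{(q)}_{\Lambda\Lambda\rq{}}$, simply dropping the bosonic tensor factor throughout. I do not foresee a conceptual obstacle; the only care required is bookkeeping the tensor-factor identifications $\AFock_{\LL\rq{}}\cong\AFock_{\LL\rq{}\setminus\LL}\otimes\AFock_{\LL}$ and $\BFock_{\LL\rq{}}\cong\BFock_{\LL\rq{}\setminus\LL}\otimes\BFock_{\LL}$, together with the correct identification $p^{\rm F}_{\Lambda\Lambda\rq{}}=P_{\Omega^{\rm F}_{\LL\rq{}\setminus\LL}}\otimes\mathbf{1}$ and its bosonic counterpart, so that the sandwich formula for $\tilde{P}_{\Lambda\Lambda\rq{}}$ can be applied cleanly.
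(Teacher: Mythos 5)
Your argument is correct and follows essentially the same route as the paper: decompose into the charge fibers, use the formula \eqref{ExPLL} to write $\tilde{P}_{\Lambda\Lambda\rq{}}$ fiber-wise as the sandwich $\xi\mapsto p^{(q)}_{\Lambda\Lambda\rq{}}\xi p^{(q)}_{\Lambda\Lambda\rq{}}$ (whose positivity preservation is Proposition \ref{PPI2}), obtain (i) from the extension-by-zero embedding $\xi\mapsto P_{\Omega_{\LL\rq{}\setminus\LL}}\otimes\xi$, get the reverse inclusion in (ii) from the fixed-point property $p^{(q)}_{\Lambda\Lambda\rq{}}(\Omega_{\LL\rq{}\setminus\LL}\otimes\vphi)=\Omega_{\LL\rq{}\setminus\LL}\otimes\vphi$, and deduce (iii) from (i) and (ii). This matches the paper's proof, with the fermionic case handled by the same reduction.
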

\begin{proof}
(i) Let $\xi\in \tCone$. Thus, we can express $\xi$ as
\be
\xi=\bigoplus_{q\in \mathbb{I}_{\Lambda}} \xi_q,\ \ \ \xi_q\in \tCone(q).
\ee 
Because $\IL(q)$ is a subspace of $\ILd(q)$, we have 
$\xi_q\in \tConed(q)$, provided that $q\in \mathbb{I}_{\Lambda}$.
Set 
\be
\tilde{\xi}_q=\begin{cases}
\xi_q & \mbox{for $q\in \mathbb{I}_{\Lambda}$}\\
0 & \mbox{for $q\in \mathbb{I}_{\Lambda\rq{}}\setminus \mathbb{I}_{\Lambda}$}
\end{cases}
\ee
and $\tilde{\xi}=\bigoplus_{q\in \mathbb{I}_{\Lambda\rq{}}} \tilde{\xi}_q$.
The vector $\xi$ can be identified with $\tilde{\xi}$ which belongs to $\tConed$.

(ii) Let $\xi=\bigoplus_{q\in \mathbb{I}_{\Lambda\rq{}}} \xi_q\in \tConed$. By using the formula \eqref{ExPLL}, we see that 
\be
\tilde{P}_{\Lambda\Lambda\rq{}}\xi=\bigoplus_{q\in \mathbb{I}_{\Lambda\rq{}}} p_{\Lambda\Lambda\rq{}}^{(q)} \xi_q p_{\Lambda\Lambda\rq{}}^{(q)} \in \tCone,  
\ee
which implies that $\tilde{P}_{\Lambda\Lambda\rq{}} \tCone\subseteq \tConed$. The converse inclusion is straightforward.

The assertion (iii) follows immediately from (i) and (ii).

By using similar arguments as presented above, we can prove the corresponding assertions for  $\tCone^{\rm F}$ and $\tilde{P}^{\rm F}_{\Lambda \Lambda\rq{}}$.
\end{proof}

Next, we derive a useful expression for $\tilde{P}_{\Lambda}$. 
Given a $\Lambda\in \mathbb{F}$,  let $\pi_{\Lambda}$ be the orthogonal projection from $\ell^2(\BbbZ_-)$ to $\ell^2(\Lambda_L)$. By setting 
\be
p^{\rm F}_{\Lambda} =\bigoplus_{n=0}^{\infty} \otimes^n \pi_{\Lambda}\restriction \wedge^n \ell(\BbbZ_-),\ \ p^{\rm B}_{\Lambda} =\bigoplus_{n=0}^{\infty} \otimes^n \pi_{\Lambda}\restriction \otimes_{\rm s}^n \ell(\BbbZ_-), 
\ee
we obtain the following formula:
\be
\tilde{P}_{\Lambda} =\bigoplus_{q \in \mathbb{I}_{\BbbZ}} \mathcal{L}\Big(p^{(q)}_{\Lambda}\Big) \mathcal{R}\Big(p_{\Lambda}^{(q)}\Big), \label{ExPL}
\ee
where 
\begin{align}
p_{\Lambda}^{(q)} = \begin{cases}
p_{\Lambda}^{\rm F}\otimes p_{\Lambda}^{\rm B} & \mbox{for $q\in \mathbb{I}_{\Lambda}$}\\
0  & \mbox{for $q\in \mathbb{I}_{\BbbZ}\setminus \mathbb{I}_{\Lambda}$}.
\end{cases}
\end{align}
The formulas \eqref{ExPLL} and \eqref{ExPL} will play a crucial role in the subsequent arguments.

\begin{Prop} For each $\Lambda \in \mathbb{F}$, 
we have the following:
\begin{itemize}
\item[\rm (i)] $\tilde{P}_{\Lambda} \tilde{\Cone}_{\rm r, \BbbZ}=\tilde{\Cone}_{\rm r, \Lambda}$.
\item[\rm (ii)] $\tilde{P}_{\Lambda} \unrhd 0$ w.r.t. $\tilde{\Cone}_{\rm r, \BbbZ}$.
\end{itemize}
Similar statements hold true for $\tilde{P}_{\Lambda}^{\rm F},\ \tCone^{\rm F}$ and $\tilde{\Cone}^{\rm F}_{\rm r, \BbbZ}$.
\end{Prop}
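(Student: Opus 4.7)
The plan is to exploit the explicit formula \eqref{ExPL} so that the argument becomes a verbatim adaptation of Proposition \ref{ConeConect}, with the finite-to-finite restriction $\tilde{P}_{\Lambda\Lambda\rq{}}$ replaced by the infinite-to-finite restriction $\tilde{P}_{\Lambda}$ and \eqref{ExPLL} replaced by \eqref{ExPL}.

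I would begin with (ii). Writing $\tilde{P}_{\Lambda} = \bigoplus_{q \in \mathbb{I}_{\BbbZ}} \mathcal{L}\big(p^{(q)}_{\Lambda}\big) \mathcal{R}\big(p_{\Lambda}^{(q)}\big)$, each $p_{\Lambda}^{(q)}$ is a self-adjoint projection, so Proposition \ref{PPI2} gives $\mathcal{L}\big(p_{\Lambda}^{(q)}\big) \mathcal{R}\big(p_{\Lambda}^{(q)}\big) \unrhd 0$ with respect to $\mathscr{L}^2_+(\Fock_{\BbbZ_-}(q)) = \tilde{\Cone}_{\rm r, \BbbZ}(q)$ for every $q\in \mathbb{I}_{\BbbZ}$. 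Since $\tilde{\Cone}_{\rm r, \BbbZ} = \bigoplus_q \tilde{\Cone}_{\rm r, \BbbZ}(q)$, summand-by-summand positivity preservation yields (ii).

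For (i), the inclusion $\tilde{P}_{\Lambda}\tilde{\Cone}_{\rm r, \BbbZ} \subseteq \tCone$ is an immediate consequence: any $\xi = \bigoplus_q \xi_q \in \tilde{\Cone}_{\rm r, \BbbZ}$ satisfies $\tilde{P}_{\Lambda}\xi = \bigoplus_q p_{\Lambda}^{(q)} \xi_q p_{\Lambda}^{(q)}$, whose nonzero summands (those with $q\in \mathbb{I}_{\Lambda}$) are positive Hilbert--Schmidt operators with range contained in $\IL(q)$, hence elements of $\tCone(q)$. For the reverse inclusion I would use the natural embedding $\tCone \hookrightarrow \tilde{\Cone}_{\rm r, \BbbZ}$ obtained from $\IL(q) \hookrightarrow \Fock_{\BbbZ_-}(q)$ (together with padding by zero for $q \notin \mathbb{I}_{\Lambda}$), as arranged in Section \ref{SectInfiniteCh}; on the image of this embedding $\tilde{P}_{\Lambda}$ acts as the identity, so every $\eta \in \tCone$ arises as $\tilde{P}_{\Lambda}\tilde{\eta}$ for some $\tilde{\eta} \in \tilde{\Cone}_{\rm r, \BbbZ}$.

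The fermion-only assertion for $\tilde{P}^{\rm F}_{\Lambda}$, $\tCone^{\rm F}$ and $\tilde{\Cone}^{\rm F}_{\rm r, \BbbZ}$ is proved identically, simply replacing every Fock space by its antisymmetric part and using the analogous product expression for $\tilde{P}_{\Lambda}^{\rm F}$. I do not expect a genuine obstacle here: the decomposition \eqref{ExPL} reduces both parts to the standard-form positivity of Proposition \ref{PPI2} combined with the direct-sum structure of $\tilde{\Cone}_{\rm r, \BbbZ}$; the only point requiring a moment of care is identifying the image of $\tCone$ under the embedding $\THL \hookrightarrow \tilde{\h}_{\BbbZ}$ with the range of the multiplication factors $p^{(q)}_{\Lambda}$, which is precisely what \eqref{ExPL} encodes.
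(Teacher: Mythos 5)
Your proposal is correct and follows essentially the same route as the paper: the paper proves (ii) from the formula \eqref{ExPL} together with the standard-form positivity of left-right multiplications (Proposition \ref{PPI3}, which is just the direct-sum packaging of the Proposition \ref{PPI2} argument you use summand by summand), and proves (i) by repeating the argument of Proposition \ref{ConeConect}, exactly as you do via the vacuum-padding embedding of $\tCone$ into $\tilde{\Cone}_{\rm r, \BbbZ}$ on which $\tilde{P}_{\Lambda}$ acts as the identity.
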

\begin{proof}
(i) By using arguments similar to those in the proof of Proposition \ref{ConeConect}, we can show (i).

(ii) can be derived by applying Proposition \ref{PPI3} and utilizing the expression \eqref{ExPL}.

By employing similar reasoning as presented above, we can establish the corresponding results for $\tilde{P}_{\Lambda}^{\rm F},\ \tCone^{\rm F}$, and $\tilde{\Cone}^{\rm F}_{\rm r, \BbbZ}$.
\end{proof}

Basic  properties of the  reflection positivity  can be summarized as follows.
\begin{Thm}\label{StrongPropPP} Let $\Lambda, \Lambda\rq{}\in \mathbb{F}$ with $\Lambda\subseteq \Lambda\rq{}$.
The following diagram is commutative:
\be
\begin{tikzcd}
\rCone \arrow[d, "\mathcal{U}_{\Lambda}"]  &\arrow[l, "P_{\Lambda\Lambda\rq{}}"] \rConed \arrow[d, "\mathcal{U}_{\Lambda\rq{}}"]  &\arrow[l, "P_{\Lambda\rq{}}"]  \Cone_{\rm r, \BbbZ} \arrow[d, "\mathcal{U}_{\BbbZ}"]\\
  \tCone \arrow[d, "\tilde{S}_{\Lambda}"]  &\arrow[l, "\tilde{P}_{\Lambda\Lambda\rq{}}"] \tConed \arrow[d, "\tilde{S}_{\Lambda\rq{}}"]  &\arrow[l, "\tilde{P}_{\Lambda\rq{}}"]  \tilde{\Cone}_{\rm r, \BbbZ} \arrow[d, "\tilde{S}_{\BbbZ}"]\\
  \tCone^{\rm F} &\arrow[l, "\tilde{P}_{\Lambda\Lambda\rq{}}^{\rm F}"] \tConed^{\rm F} &\arrow[l, "\tilde{P}_{\Lambda\rq{}}^{\rm F}"] \tilde{\Cone}_{\rm r, \BbbZ}^{\rm F} 
\end{tikzcd}
\ee
Furthermore, every projection in the diagram is positivity preserving.
\end{Thm}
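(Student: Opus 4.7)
The theorem bundles together many statements, but essentially every non-trivial ingredient has already been supplied. My plan is to verify the three commutative squares of the diagram in turn, and then invoke positivity preservation for the six projections involved.

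For the top square (relating $\rCone$-world to $\tCone$-world via $\mathcal{U}$), I would first observe that by construction in the proof of Proposition~\ref{HilInd2}, the embeddings satisfy $\iota_{\Lambda\rq{}\Lambda}=\mathcal{U}_{\Lambda\rq{}}^{-1}\tilde{U}_{\Lambda\rq{}\Lambda}\mathcal{U}_{\Lambda}$ and, passing to the inductive limit, $\iota_{\Lambda}=\mathcal{U}_{\BbbZ}^{-1}\tilde{U}_{\Lambda}\mathcal{U}_{\Lambda}$. Taking orthogonal-projection adjoints of these intertwining identities gives $\mathcal{U}_{\Lambda}P_{\Lambda\Lambda\rq{}}=\tilde{P}_{\Lambda\Lambda\rq{}}\mathcal{U}_{\Lambda\rq{}}$ and $\mathcal{U}_{\Lambda}P_{\Lambda}=\tilde{P}_{\Lambda}\mathcal{U}_{\BbbZ}$, which is precisely commutativity of the top square. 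The cone-level statements $P_{\Lambda\Lambda\rq{}}\rConed=\rCone$ and $P_{\Lambda}\Cone_{\rm r,\BbbZ}=\rCone$ then follow by applying $\mathcal{U}_{\Lambda}^{-1}$ to the corresponding identities for $\tilde{P}_{\Lambda\Lambda\rq{}}$ and $\tilde{P}_{\Lambda}$ established in Proposition~\ref{ConeConect} and the subsequent proposition, together with the defining equalities $\rCone=\mathcal{U}_{\Lambda}^{-1}\tCone$ and $\Cone_{\rm r,\BbbZ}=\mathcal{U}_{\BbbZ}^{-1}\tilde{\Cone}_{\rm r,\BbbZ}$.

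The middle square is essentially tautological once the previous steps are in place: Proposition~\ref{ConeConect} gives $\tilde{P}_{\Lambda\Lambda\rq{}}\tConed=\tCone$ with $\tilde{P}_{\Lambda\Lambda\rq{}}\unrhd0$ w.r.t.\ $\tConed$, and the unnamed proposition on the infinite chain provides the same for $\tilde{P}_{\Lambda}$. For the bottom square, I would exploit the explicit left/right-multiplication formulas: $S_{\Lambda}={\bs L}(1\otimes P_{\Omega^{\rm B}_{\LL}}){\bs R}(1\otimes P_{\Omega^{\rm B}_{\LL}})$ from \eqref{SLLR}, and the analogous expressions \eqref{ExPLL}, \eqref{ExPL} for $\tilde{P}_{\Lambda\Lambda\rq{}}$ and $\tilde{P}_{\Lambda}$. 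Because $\Omega^{\rm B}_{\Lambda\rq{}_L}=\Omega^{\rm B}_{\Lambda_L}\otimes\Omega^{\rm B}_{\Lambda\rq{}_L\setminus\Lambda_L}$, the boson projectors satisfy $p^{\rm B}_{\Lambda\Lambda\rq{}}P_{\Omega^{\rm B}_{\Lambda\rq{}_L}}=P_{\Omega^{\rm B}_{\Lambda_L}}=P_{\Omega^{\rm B}_{\Lambda_L}}p^{\rm B}_{\Lambda\Lambda\rq{}}$ on the appropriate subspaces, and from the multiplicative rules $\mathcal{L}(A)\mathcal{L}(B)=\mathcal{L}(AB)$, $\mathcal{R}(A)\mathcal{R}(B)=\mathcal{R}(BA)$ (together with $[\mathcal{L}(\cdot),\mathcal{R}(\cdot)]=0$) one reads off $\tilde{P}^{\rm F}_{\Lambda\Lambda\rq{}}S_{\Lambda\rq{}}=S_{\Lambda}\tilde{P}_{\Lambda\Lambda\rq{}}$ and $\tilde{P}^{\rm F}_{\Lambda}S_{\BbbZ}=S_{\Lambda}\tilde{P}_{\Lambda}$.

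Finally, positivity preservingness of each projection follows from an already established source: $\tilde{P}_{\Lambda\Lambda\rq{}}$ and $\tilde{P}_{\Lambda}$ (and their fermionic analogues) from Proposition~\ref{ConeConect} and the proposition following it; $S_{\Lambda}$, $S_{\Lambda\rq{}}$, $S_{\BbbZ}$ from Proposition~\ref{PropsL} assembled over the direct-sum decomposition \eqref{DirectSumP}; and $P_{\Lambda\Lambda\rq{}}$, $P_{\Lambda}$ by unitary conjugation through $\mathcal{U}$, which carries $\rCone$ to $\tCone$ by definition. The step I expect to require the most care is the bottom square: one must track the tensor-factor bookkeeping (fermion vs.\ boson, and $\Lambda_L$ vs.\ $\Lambda\rq{}_L\setminus\Lambda_L$) so that the boson vacuum projections inserted by $S$ commute correctly with the lattice restriction projections inserted by $\tilde{P}$; everything else is a direct application of the cited results.
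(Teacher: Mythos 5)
Your proposal is correct and follows essentially the same route as the paper: the theorem is obtained by assembling Definition \ref{DefRPCone2} and the intertwining relations from the proof of Proposition \ref{HilInd2} for the top row, Proposition \ref{ConeConect} and its infinite-volume analogue for the middle row, and the $\mathcal{L}$/$\mathcal{R}$ expressions \eqref{SLLR}, \eqref{ExPLL}, \eqref{ExPL} (already recorded in the corollary giving the $S$--$\tilde{P}$ commutative diagram) for the bottom row, with positivity preservation of $P_{\Lambda\Lambda\rq{}}$ and $P_{\Lambda\rq{}}$ transported through the unitaries $\mathcal{U}_{\Lambda}$, $\mathcal{U}_{\BbbZ}$. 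Your explicit verification of the bottom square via $p^{\rm B}_{\Lambda\Lambda\rq{}}P_{\Omega^{\rm B}_{\Lambda\rq{}_L}}=P_{\Omega^{\rm B}_{\Lambda_L}}p^{\rm B}_{\Lambda\Lambda\rq{}}$ is exactly the computation implicit in that corollary, so no new ideas are needed and no gaps remain.
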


\subsection{Proof of Theorem \ref{GSGProp} assuming Theorem \ref{FirstThm}}\label{PfThmInfSys}
Below, we provide a detailed proof regarding the assertions of the theorems concerning the fermion-phonon system. The same can be proven for the fermionic system as well.

1. 
For each $\Lambda\in \mathbb{F}_{\rm o}$, let us consider the ground state $\psi_{\Lambda}$ of the finite system. As already demonstrated, $\psi_{\Lambda}$ exhibits positivity with respect to $\rCone$. By utilizing Theorem \ref{StrongPropPP}, we establish its positivity with respect to $\Cone_{\rm r, \BbbZ}$. Hence, we have $\big\la A\otimes \tau_{\BbbZ} A\tau_{\BbbZ}^{-1}\big\ra_{\Lambda}\ge 0$ for all $A\in \mathfrak{M}_{\BbbZ_-}$. Taking the limit $\Lambda\uparrow \BbbZ$, we conclude the validity of \eqref{InfiniteGS}. It is important to note that we utilize the embedding property of $\rCone$ into $\Cone_{\rm r, \BbbZ}$ in this part.

2. 
  We set 
\be
\la\!\la A\ra\!\ra_{\BbbZ}=\big\la \mathcal{U}_{\BbbZ}^*A\,  \mathcal{U}_{\BbbZ}\big\ra_{\BbbZ}.
\ee
Trivially, $\la\!\la\cdot \ra\!\ra_{\BbbZ}$ is a state on $\mathscr{L}(\tilde{\h}_{\BbbZ})$.
Assuming that $\la \cdot \ra_{\BbbZ}$ is a normal state on $\mathfrak{M}_{\BbbZ_-}\otimes 1$, the application of \cite[Theorem 2.5.31]{Bratteli1987} leads us to the conclusion that there exists a unique $\psi_{\BbbZ}\in \h_{\BbbZ}$ that fulfills properties (i) and (ii).
 Hence, $\la\!\la A\ra\!\ra_{\BbbZ}=\big\la \tilde{\psi}_{\BbbZ}|A\tilde{\psi}_{\BbbZ}\big\ra_{\BbbZ}$ holds, where $\tilde{\psi}_{\BbbZ}=\mathcal{U}_{\BbbZ} \psi_{\BbbZ}$.
It follows from (i) that $\tilde{\psi}_{\BbbZ} \ge 0$ w.r.t. $\tilde{\Cone}_{{\rm r}, \BbbZ}$.

Suppose further that $\la \cdot \ra_{\BbbZ}$ is faithful on $\mathfrak{M}_{\BbbZ_-}\otimes 1$. Then $\la\!\la \cdot \ra\!\ra_{\BbbZ}$ is faithful on $\tilde{\mathfrak{M}}_{\BbbZ_-}\otimes 1=\mathfrak{N}_{\BbbZ, L}$ as well. Take $\phi=\bigoplus_{q\in \mathbb{I}_{\BbbZ}} \phi_q\in \Fock_{\BbbZ_-}\setminus \{0\}$, arbitrarily and set $A=|\phi\ra\la\phi|$.
Because $\la\!\la \cdot \ra\!\ra_{\BbbZ}$ is faithful on $\tilde{\mathfrak{M}}_{\BbbZ_-} \otimes 1$, we have
\be
0< \la \tilde{\psi}_{\BbbZ}|{\bs L}(A)\tilde{\psi}_{\BbbZ}\ra=\la
\phi|\tilde{\psi}_{\BbbZ}^2\phi\ra.
\ee
Since $\phi$ is arbitrary, we conclude that $\tilde{\psi}_{\BbbZ}>0$, which implies that $\psi_{\BbbZ}>0$ w.r.t. $\Cone_{{\rm r}, \BbbZ}$. \qed

\section{Reflection  positivity preserving property of $e^{-\beta H_{\Lambda}}$}\label{Sect4}
\subsection{Statement of the result}
The purpose of this subsection is to prove the following:
\begin{Thm}\label{ThmPP}
Assume \hyperlink{A}{\bf (A)} and \hyperlink{B1}{\bf (B. 1)}.
For all   $\beta\ge 0$, we have $e^{-\beta H_{\Lambda}}\unrhd 0$ w.r.t. $\rCone$ and $e^{-\beta H_{\Lambda}^{\F}} \unrhd 0$ w.r.t. $\rCone^{\F}$.
\end{Thm} 

\subsubsection*{\it Proof of Theorem \ref{FirstThm} assuming Theorem \ref{ThmPP}}
The validity of statement (i) in Theorem \ref{FirstThm} directly follows from the application of Theorem \ref{ThmPP}.
Considering Proposition \ref{GSP} and the uniqueness of the ground state of $H_{\Lambda}$, it follows that $\psi_{\Lambda}$ must exhibit positivity with respect to $\rCone$. Consequently, by employing Theorem \ref{RPConeBasic}, we can promptly deduce  \eqref{GSRPP0} and \eqref{GSRPP}.
\qed

\subsection{The Lang--Firsov transformation}
In order to show Theorem  \ref{ThmPP}, we need some preliminaries.
We define
\begin{align}
T&=\sum_{j\in \Lambda}(-t)\big(
c_{j}^*
 c_{j+1}
+c_{j+1}^*c_{j}
\big),\label{EKinetic}\\
P&=\sum_{i, j\in \Lambda}U(i-j) \dhn_i\dhn_j,\\
I&=g \sum_{j\in \Lambda} \dhn_j(a_j+a_j^*),\\
K&= \omega \sum_{j\in \Lambda} a_j^* a_j\label{BKinetic}.
\end{align}
It can be easily shown that $H_{\Lambda}=T+P+I+K$.
For each $j\in \Lambda$, let us define:
\begin{align}
 \phi_j=\sqrt{\frac{1}{2\omega}}(a_j^*+a_j),\ \ \pi_j=\im 
 \sqrt{\frac{\omega}{2}}(a_j^*-a_j).
\end{align} 
Both $\phi_j$ and $\pi_j$ are essentially self-adjoint operators, and we denote their closures using the same symbols.
 In terms of $\phi_j$ and $\pi_j$, the operator $K$ can be represented as
 \be
 K=\frac{1}{2}\sum_{j\in \Lambda} (\pi_j^2+\omega^2 \phi_j^2).
 \ee
 Next, let
\begin{align}
L=-\im  \sqrt{2} \omega^{-3/2} g\sum_{j\in \Lambda} \dhn_j \pi_j.
\end{align} 
The operator $L$ is essentially antiself-adjoint. We also denote its closure by the
same symbol.
The transformation known as the {\it Lang--Firsov transformation} is a unitary operator defined by
\begin{align}
\mathcal{V}_{\Lambda}=e^{-\im \pi N{\mathrm{p}}/2}e^L, \label{DefUniV}
\end{align}
where $N_{\mathrm{p}}=\sum_{j\in\Lambda} a_j^*a_j$ \cite{Lang1963}.
 We readily confirm  the
following:
\begin{align}
\mathcal{V}_{\Lambda} c_j \mathcal{V}_{\Lambda}^{-1}&=e^{\im \alpha \phi_j} c_{j},
 \ \ \alpha=\sqrt{2} \omega^{-1/2} g, \label{LF1}\\
\mathcal{V}_{\Lambda} a_j\mathcal{V}_{\Lambda}^{-1}&=\im a_j-\frac{g}{\omega}\dhn_j. \label{LF2}
\end{align} 
Using  these formulas, we obtain:
\begin{lemm}

We have $\mathcal{V}_{\Lambda} H_{\Lambda} \mathcal{V}_{\Lambda}^{-1}=\mathsf{T}+\mathsf{P}+K-g^2 |\Lambda|/4\omega$, where 
\begin{align}
\mathsf{T}&= \sum_{j\in \Lambda} (-t)
\Big(e^{-\im  \alpha(\phi_j-\phi_{j+1})}
c_{j}^* c_{j+1}
+
e^{+\im  \alpha(\phi_j-\phi_{j+1})}
c_{j+1}^* c_{j}
\Big), \label{DefKineT}\\
\mathsf{P}&=\sum_{i, j\in \Lambda}U(i-j) \dhn_i\dhn_j.
\end{align} 
\end{lemm}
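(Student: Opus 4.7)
The plan is to transform each of the four summands of $H_\Lambda=T+P+I+K$ separately, using the action formulas \eqref{LF1}, \eqref{LF2} and the fact that $\mathcal{V}_\Lambda$ is unitary so that $\mathcal{V}_\Lambda(AB)\mathcal{V}_\Lambda^{-1} = (\mathcal{V}_\Lambda A\mathcal{V}_\Lambda^{-1})(\mathcal{V}_\Lambda B\mathcal{V}_\Lambda^{-1})$. The adjoint formulas follow by taking Hermitian conjugates:
\begin{align*}
\mathcal{V}_\Lambda c_j^*\mathcal{V}_\Lambda^{-1} = c_j^*e^{-\im\alpha\phi_j}, \qquad \mathcal{V}_\Lambda a_j^*\mathcal{V}_\Lambda^{-1} = -\im a_j^* - \tfrac{g}{\omega}\delta\hn_j,
\end{align*}
where one uses that $\phi_j$ is self-adjoint and commutes with all fermion operators.

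First I would handle $T$: since $[\phi_j,\phi_{j+1}]=0$ and since the bosonic operators commute with $c_j,c_{j+1}$, we get $\mathcal{V}_\Lambda c_j^*c_{j+1}\mathcal{V}_\Lambda^{-1} = c_j^*e^{-\im\alpha\phi_j}e^{\im\alpha\phi_{j+1}}c_{j+1} = e^{-\im\alpha(\phi_j-\phi_{j+1})}c_j^*c_{j+1}$, producing exactly $\mathsf T$ in \eqref{DefKineT}. Next, for $P$: the combination $\mathcal{V}_\Lambda c_j^*c_j\mathcal{V}_\Lambda^{-1} = c_j^*e^{-\im\alpha\phi_j}e^{\im\alpha\phi_j}c_j = \hn_j$ shows that $\delta\hn_j$ and hence $P$ are invariant, giving $\mathsf P$.

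The real computation is in $I+K$. I would compute
\begin{align*}
\mathcal{V}_\Lambda a_j^*a_j\mathcal{V}_\Lambda^{-1} = \bigl(-\im a_j^* - \tfrac{g}{\omega}\delta\hn_j\bigr)\bigl(\im a_j - \tfrac{g}{\omega}\delta\hn_j\bigr) = a_j^*a_j + \im\tfrac{g}{\omega}(a_j^*-a_j)\delta\hn_j + \tfrac{g^2}{\omega^2}\delta\hn_j^2,
\end{align*}
and similarly $\mathcal{V}_\Lambda (a_j+a_j^*)\mathcal{V}_\Lambda^{-1} = \im(a_j-a_j^*) - \tfrac{2g}{\omega}\delta\hn_j$, so that
\begin{align*}
\mathcal{V}_\Lambda I\mathcal{V}_\Lambda^{-1} = \im g\sum_j \delta\hn_j(a_j-a_j^*) - \tfrac{2g^2}{\omega}\sum_j \delta\hn_j^2.
\end{align*}
Adding the two expressions, the cross terms $\im g\sum_j(a_j^*-a_j)\delta\hn_j$ and $\im g\sum_j\delta\hn_j(a_j-a_j^*)$ cancel (boson and fermion operators commute), leaving $\mathcal{V}_\Lambda(I+K)\mathcal{V}_\Lambda^{-1} = K - \tfrac{g^2}{\omega}\sum_j\delta\hn_j^2$.

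The final step is the simple but essential identity $\delta\hn_j^2 = (\hn_j-\tfrac12)^2 = \tfrac14$, valid because $\hn_j^2=\hn_j$ for spinless fermions. Summing over $j\in\Lambda$ yields the constant $g^2|\Lambda|/4\omega$, completing the identity. The only delicate point is bookkeeping of signs and the cancellation of the imaginary cross terms in the $I+K$ calculation; everything else is algebraic routine given the commutation relations.
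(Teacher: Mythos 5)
Your proposal is correct and follows essentially the same route as the paper: apply the transformation formulas \eqref{LF1}, \eqref{LF2} term by term, observe that $\hn_j$ (hence $\mathsf P$) and the hopping phases come out as stated, and convert the polaron shift $-\tfrac{g^2}{\omega}\sum_j(\delta\hn_j)^2$ into the constant $-g^2|\Lambda|/4\omega$ via $\hn_j^2=\hn_j$. The paper merely compresses your explicit $I+K$ algebra into the remark that the Coulomb interaction is shifted to $U_{\rm eff}(i)=U(i)-\tfrac{g^2}{\omega}\delta_{i,0}$, so there is no substantive difference.
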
 
\begin{proof}
 Due to \eqref{LF1} and \eqref{LF2}, the Coulomb interaction term is modified as 
\be
\sum_{i, j\in \Lambda} U_{\rm eff}(i-j) \delta \hn_i \delta \hn_j,
\ee
where
\be
U_{\rm eff}(i)=U(i)-\frac{g^2}{\omega} \delta_{i, 0}\ \ \forall i\in \Lambda.
\ee 
Here,  $\delta_{i, j}$ stands for  the Kronecker delta. However, because $n_i^2=n_i$ for all $i\in \Lambda$, 
we see that $(\delta n_i)^2=1/4$ for all $i\in \Lambda$. Hence, we conclude the desired assertion.
\end{proof}

\subsection{Expressions of the  Hamiltonian}
To demonstrate that the heat semigroup generated by the Hamiltonian preserves reflection positivity, this subsection is divided into three parts, sequentially transforming the Hamiltonian. By employing the representation of the resulting Hamiltonian, it becomes possible to prove Theorem \ref{ThmPP} .

\subsubsection{Expression  I}

We introduce a unitary operator defined as follows:
\be
\mathcal{W}_{\Lambda} =\mathcal{U}_{\Lambda} \mathcal{V}_{\Lambda},
\ee
where $\mathcal{U}_{\Lambda}$ and $\mathcal{V}_{\Lambda}$ are defined by \eqref{DefUniU} and \eqref{DefUniV}, respectively.
We initiate our discussion with the subsequent lemma.

\begin{lemm} Let $\tilde{H}_{\Lambda}=\mathcal{W}_{\Lambda}H_{\Lambda} \mathcal{W}_{\Lambda}^{-1}+g^2|\Lambda|/4\omega$. 
As a linear operator on $\tilde{\h}_{\Lambda}$, we have 
\begin{align}
\tilde{ H}_{\Lambda}=\mathbb{T}-\mathbb{W}+K,
\end{align} 
where
\begin{align}
\mathbb{T}&=
\sum_{j\in \Lambda_{\mathrm{e}}}\sum_{\vepsilon=\pm 1}
(-t) \Big(
e^{-\im \alpha(\phi_j-\phi_{j+\vepsilon })}
c_{j}^* c_{j+\vepsilon }^*
+
e^{+\im  \alpha(\phi_j-\phi_{j+\vepsilon })}
c_{j+\vepsilon} c_{j}
\Big), \label{PP}\\
\mathbb{W}&=\sum_{i, j\in
 \Lambda}W(i-j)\delta \hn_{i}\delta \hn_{j},\ \
W(j)=(-1)^{j+1} U(j).
\end{align} 
\end{lemm}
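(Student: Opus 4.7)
The plan is to combine the already-stated Lang--Firsov identity for $\mathcal{V}_\Lambda H_\Lambda \mathcal{V}_\Lambda^{-1}$ with the hole-particle transformation rules for $\mathcal{U}_\Lambda$ collected in Lemma \ref{Uni}. Since $\mathcal{W}_\Lambda = \mathcal{U}_\Lambda \mathcal{V}_\Lambda$ and the claim can be rewritten as
\begin{equation*}
\mathcal{U}_\Lambda \bigl(\mathsf{T} + \mathsf{P} + K\bigr) \mathcal{U}_\Lambda^{-1} = \mathbb{T} - \mathbb{W} + K,
\end{equation*}
the proof reduces to checking the transformation of the three summands separately. The phonon term $K$ is fermionic-independent, and since $\mathcal{U}_\Lambda$ acts only on the fermionic factor, $\mathcal{U}_\Lambda K \mathcal{U}_\Lambda^{-1} = K$ is immediate. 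The same observation shows that every bosonic factor $e^{\pm i\alpha(\phi_j - \phi_{j+1})}$ appearing in $\mathsf{T}$ commutes with $\mathcal{U}_\Lambda$, so only the fermionic bilinears need to be transformed.

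For the Coulomb term, Lemma \ref{Uni} yields $\mathcal{U}_\Lambda \delta\hat{n}_j \mathcal{U}_\Lambda^{-1} = (-1)^j \delta\hat{n}_j$: on odd sites the particle-hole flip sends $\hat n_j$ to $1-\hat n_j$, hence $\delta\hat n_j$ to $-\delta\hat n_j$, while even sites are untouched. Therefore
\begin{equation*}
\mathcal{U}_\Lambda \mathsf{P}\, \mathcal{U}_\Lambda^{-1} = \sum_{i,j\in\Lambda} (-1)^{i+j} U(i-j)\, \delta\hat{n}_i \delta\hat{n}_j = \sum_{i,j\in\Lambda}(-1)^{i-j} U(i-j)\, \delta\hat{n}_i \delta\hat{n}_j = -\mathbb{W},
\end{equation*}
where we used $W(i-j) = (-1)^{i-j+1}U(i-j)$. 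This is the easy half.

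The more delicate step is the kinetic term. Split $\mathsf{T}$ into the contributions with $j$ even and $j$ odd (using the periodic identification so that every nearest-neighbor bond is counted once). For $j \in \Lambda_e$ we have $j+1 \in \Lambda_o$, so Lemma \ref{Uni} gives $\mathcal{U}_\Lambda c_j^* c_{j+1} \mathcal{U}_\Lambda^{-1} = c_j^* c_{j+1}^*$, and the Hermitian conjugate transforms as $c_{j+1}^* c_j \mapsto c_{j+1} c_j$. This already has the form of the $\varepsilon = +1$ summand of $\mathbb{T}$. For $j \in \Lambda_o$ one instead gets $c_j^* c_{j+1} \mapsto c_j c_{j+1}$ and $c_{j+1}^* c_j \mapsto c_{j+1}^* c_j^*$; reindexing $j' = j+1 \in \Lambda_e$, so that $j = j' - 1 = j' + \varepsilon$ with $\varepsilon = -1$, and correspondingly rewriting $\phi_j - \phi_{j+1} = \phi_{j'+\varepsilon} - \phi_{j'} = -(\phi_{j'} - \phi_{j'+\varepsilon})$ (which flips the sign inside the exponential), these terms reorganize into the $\varepsilon = -1$ summand of $\mathbb{T}$. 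Combining the two cases yields $\mathcal{U}_\Lambda \mathsf{T}\, \mathcal{U}_\Lambda^{-1} = \mathbb{T}$, completing the proof.

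The main obstacle is purely bookkeeping: one must track carefully both the phase convention (the sign flip of $\phi_j - \phi_{j+\varepsilon}$ when $\varepsilon = -1$) and the fact that the operators $c_j c_{j+1}$ and $c_{j+1} c_j$ differ only by anticommutation, so the reindexing is consistent with the form of $\mathbb{T}$ written in the statement. No analytical difficulty arises, and the use of the periodic boundary condition is what guarantees that the bond sum is reorganized without boundary remainders.
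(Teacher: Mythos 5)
Your proposal is correct and follows essentially the same route as the paper: both reduce the claim to conjugating $\mathsf{T}$, $\mathsf{P}$, $K$ by $\mathcal{U}_{\Lambda}$ via Lemma \ref{Uni} (the paper merely rewrites the bond sum over $j\in\Lambda_e$, $\vepsilon=\pm1$ before conjugating, whereas you conjugate first and then reindex the odd-$j$ bonds by $j'=j+1$, which is the same bookkeeping). One tiny remark: after your reindexing the operator orders $c_{j'}^*c_{j'+\vepsilon}^*$ and $c_{j'+\vepsilon}c_{j'}$ already match the form of $\mathbb{T}$ exactly, so no anticommutation sign actually enters.
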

\begin{proof} Firstly, let us recall that $\mathsf{T}$ is defined by equation \eqref{DefKineT}. It is noteworthy that $\mathsf{T}$ can be represented as follows:
\begin{align}
\mathsf{T}
=\sum_{j\in \Lambda_{\mathrm{e}}}
 \sum_{\vepsilon=\pm 1}(-t)
\Big(
e^{-\im  \alpha (\phi_j-\phi_{j+\vepsilon})}c_{j}^*
 c_{j+\vepsilon }+e^{+\im \alpha (\phi_j-\phi_{j+\vepsilon})}
 c_{j+\vepsilon }^*c_j
\Big).
\end{align} 
By applying Lemma \ref{Uni}, we have
\begin{align}
\mathcal{U}_{\Lambda} \mathsf{T}\mathcal{U}_{\Lambda}^*&=\sum_{j\in \Lambda_e} \sum_{\vepsilon=\pm 1} 
(-t)\Big(e^{-\im  \alpha (\phi_j-\phi_{j+\vepsilon})} \mathcal{U}_{\Lambda}c_{j  }^* \mathcal{U}_{\Lambda}^*\mathcal{U}_{\Lambda}c_{j+\vepsilon}\mathcal{U}_{\Lambda}^*
 +e^{+\im  \alpha (\phi_j-\phi_{j+\vepsilon})}
 \mathcal{U}_{\Lambda}c_{j+\vepsilon}^* \mathcal{U}_{\Lambda}^*\mathcal{U}_{\Lambda}
 c_{j  } \mathcal{U}_{\Lambda}^* \Big)\no
 &=\mathbb{T}.
\end{align}
Likewise, employing Lemma \ref{Uni}, we find that 
$\mathcal{U}_{\Lambda}\delta \hn_j \mathcal{U}_{\Lambda}^*=(-1)^j \delta \hn_j$, hereby implying that
\be
\mathcal{U}_{\Lambda}\sum_{i, j\in
 \Lambda}U(i-j)\delta \hn_{i}\delta \hn_{j}\mathcal{U}_{\Lambda}^*=-\mathbb{W}.
 \ee
  \end{proof}

Taking  the identification (\ref{IdnLR}) into account, 
we have the following:
\begin{itemize}
\item {\it  Fermion-phonon interaction:} $\mathbb{T}$ can be written as 
\begin{align}
\mathbb{T}=&\mathbb{T}_L\otimes \one + \one \otimes \mathbb{T}_R+\mathbb{T}_{LR},
\end{align}
where
\begin{align}
\mathbb{T}_L=&\sum_{{j\in \Lambda_L\cap \Lambda_e}\atop{j\pm 1\in \Lambda_L}}
\sum_{\vepsilon=\pm 1}
(-t)\Big(e^{-\im  \alpha (\phi_j-\phi_{j+\vepsilon})}c_{j  }^* c_{j+\vepsilon  }^*
+e^{+\im  \alpha (\phi_j-\phi_{j+\vepsilon})}c_{j+\vepsilon  }c_{j  }\Big), \\
\mathbb{T}_R=&
 \sum_{{j\in \Lambda_R\cap \Lambda_e}\atop{j\pm 1\in \Lambda_R}}
\sum_{\vepsilon=\pm 1}
(-t) \Big(e^{-\im  \alpha (\phi_j-\phi_{j+\vepsilon})}c_{j  }^* c_{j+\vepsilon  }^*
+e^{+\im  \alpha (\phi_j-\phi_{j+\vepsilon})}c_{j+\vepsilon  }c_{j  }\Big).
 \end{align}
 If $\ell$ is even, the cross term $\mathbb{T}_{LR}$ can be expressed as follows:
 \begin{align}
\mathbb{T}_{LR}=& (-t)
\bigg\{\Big[ e^{+\im \alpha \phi_{-1}}
(-\one)^{\hat{N}^{(L)}}c_{-1  }^* \Big]\otimes\Big[ e^{-\im \alpha \phi_{0}}
 c_{0  }^*\Big]+\no
 & \hspace{7mm}+\Big[ e^{-\im \alpha \phi_{-1}}c_{-1}(-\one)^{\hat{N}^{(L)}} \Big]\otimes\Big[ e^{+\im \alpha \phi_{0}} c_{0  }\Big]
\bigg\}+\no
&+(-t)
\bigg\{\Big[ e^{-\im \alpha \phi_{-\ell}}
c_{-\ell  }^* (-\one)^{\hat{N}^{(L)}} \Big] \otimes 
 \Big[
 e^{+\im \alpha \phi_{\ell-1}}
 c_{\ell-1  }^*\Big]+\no
 &\hspace{7mm} + 
 \Big[ e^{+\im \alpha \phi_{-\ell}} (-\one)^{\hat{N}^{(L)}}c_{-\ell}
 \Big]
 \otimes \Big[
  e^{-\im \alpha \phi_{\ell-1}}
 c_{\ell-1}\Big]
\bigg\}, \label{ellEven}
\end{align} 
while if $\ell $ is odd, then 
\begin{align}
\mathbb{T}_{LR}=& (-t)
\bigg\{\Big[ e^{+\im \alpha \phi_{-1}}
(-\one)^{\hat{N}^{(L)}}c_{-1  }^* \Big]\otimes\Big[ e^{-\im \alpha \phi_{0}}
 c_{0  }^*\Big]+\no
 & \hspace{7mm}+\Big[ e^{-\im \alpha \phi_{-1}}c_{-1}(-\one)^{\hat{N}^{(L)}} \Big]\otimes\Big[ e^{+\im \alpha \phi_{0}} c_{0  }\Big]
\bigg\}+\no
&+ (-t)
\bigg\{
\Big[
e^{+\im \alpha \phi_{\ell}}
(-\one)^{\hat{N}^{(L)}}c_{-\ell  }^*  \Big]
\otimes 
\Big[e^{-\im \alpha \phi_{\ell-1}} c_{\ell -1}^*\Big]+\no
&\hspace{7mm}+
\Big[
e^{-\im \alpha \phi_{\ell}}
c_{-\ell   }  (-\one)^{\hat{N}^{(L)}} \Big]\otimes
\Big[
e^{+\im \alpha \phi_{\ell-1}}
 c_{\ell-1}\Big]
\bigg\}. \label{ellOdd}
\end{align} 
\begin{rem} \upshape
The difference between \eqref{ellEven} and \eqref{ellOdd} is significant, as illustrated in Lemma \ref{TransLR} and its corresponding proof.
\end{rem}
\item {\it  Coulomb interaction:} $\mathbb{W}$ can be expressed as 
\begin{align}
\mathbb{W}= & \mathbb{W}_L \otimes \one + \one \otimes \mathbb{W}_R+
\mathbb{W}_{LR},
\end{align}
where
\begin{align}
\mathbb{W}_L=&  \sum_{i, j\in \Lambda_L}W(i-j)
\delta \hn_i\delta \hn_{j},\\
\mathbb{W}_R=&  \sum_{i, j\in \Lambda_R}
W(i-j)\delta \hn_{i}\delta \hn_{j},\\
\mathbb{W}_{LR}=& 2 \sum_{i\in \Lambda_L, j\in \Lambda_R} W(i-j)\delta \hn_{i} \otimes \delta \hn_{j}.
\end{align} 
\item {\it Phonon energy: } We have \be
K=K_L\otimes \one +\one \otimes K_R,\ee where 
\begin{align}
K_L=\frac{1}{2}\sum_{j \in \Lambda_L}(\pi_j^2+\omega^2\phi_j^2),
\ \ \ K_R=\frac{1}{2}\sum_{j \in \Lambda_R}(\pi_j + \omega^2\phi_j^2).  \label{KLR}
\end{align} 
\end{itemize}
\subsubsection{Expression II}
Let us recall the definition of $b_j$ given by equation \eqref{Defb}. Assuming that $\ell$ is odd, we can rewritten $\mathbb{T}_L$ and $\mathbb{T}_{LR}$ as follows:
\begin{align}
\mathbb{T}_L=&
\sum_{{j\in \Lambda_L\cap \Lambda_e}\atop{j\pm 1\in \Lambda_L}}
\sum_{\vepsilon=\pm 1}
(-t)
\Big(e^{-\im \alpha (\phi_j-\phi_{j+\vepsilon})}
b_{j  }^* b_{j+\vepsilon  }^*+
e^{+\im  \alpha (\phi_j-\phi_{j+\vepsilon})}b_{j+\vepsilon  }b_{j  }\Big),
\\
\mathbb{T}_{LR}=&  (-t)
\Big\{(e^{+\im \alpha \phi_{-1}}
 b_{-1  }^*)\otimes 
 (e^{-\im \alpha \phi_0}c_{0}^*)+(e^{-\im \alpha \phi_{-1}}b_{-1  })\otimes (e^{+\im \alpha \phi_0}c_{0})
\no 
& +(
e^{+\im \alpha \phi_{-\ell}} b_{-\ell}^*)\otimes(e^{-\im \alpha \phi_{\ell-1}} c_{\ell-1}^*)+
 (e^{-\im \alpha \phi_{-\ell}} b_{-\ell})\otimes(e^{+\im \alpha \phi_{\ell-1}} c_{\ell-1}
 )\Big\}.
\end{align}

\begin{lemm}\label{TransLR}
Let $\vartheta$ be the involution defined by \eqref{DefTheta}.
Suppose that $\ell$ is an odd number.
We have the following:
\begin{itemize}
\item[\rm (i)]
 $
\mathbb{T}_R=\vartheta \mathbb{T}_L \vartheta^{-1}$.
\item[\rm (ii)] 
$
 \mathbb{W}_R=\vartheta \mathbb{W}_L \vartheta^{-1}$.
 \item[\rm (iii)] 
 \begin{align}
\mathbb{T}_{LR}=&(-t)
\Big\{(e^{+\im \alpha \phi_{-1}}b_{-1  }^*) \otimes (\vartheta e^{+\im \alpha \phi_{-1}} b_{-1  }^* \vartheta^{-1})
 +(e^{-\im \alpha \phi_{-1}}b_{-1  })\otimes (\vartheta e^{-\im \alpha \phi_{-1}} b_{-1  }\vartheta^{-1})+\no
 &+
 (e^{+\im \alpha \phi_{-\ell}}b_{-\ell  }^*) \otimes (\vartheta e^{+\im \alpha \phi_{-\ell}} b_{-\ell  }^* \vartheta^{-1})
 +(e^{-\im \alpha \phi_{-\ell}}b_{-\ell})\otimes (\vartheta e^{-\im \alpha \phi_{-\ell}}b_{-\ell  }\vartheta^{-1})
 \Big\}. \label{TLRTensor}
 \end{align}
 \item[\rm (iv)]
 $\displaystyle 
\mathbb{W}_{LR}=2 \sum_{i, j\in \Lambda_L}W(i+j+1) \delta \hn_{i} \otimes \vartheta\delta \hn_{j} \vartheta^{-1}.
$
\item[\rm (v)] $K_R=\vartheta K_L\vartheta^{-1}$.
\end{itemize}
\end{lemm}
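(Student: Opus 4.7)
The plan is to reduce everything to direct computation of $\vartheta X \vartheta^{-1}$ for the basic building blocks appearing in $\mathbb{T}_L$, $\mathbb{W}_L$, $K_L$ and $\mathbb{T}_{LR}$. Using antilinearity of $\vartheta$ together with $\vartheta a_j \vartheta^{-1} = a_{r(j)}$, I first establish
\begin{align*}
\vartheta \phi_j \vartheta^{-1} = \phi_{r(j)},\quad \vartheta \pi_j \vartheta^{-1} = -\pi_{r(j)},\quad \vartheta e^{\pm \im \alpha \phi_j}\vartheta^{-1} = e^{\mp \im \alpha \phi_{r(j)}},
\end{align*}
and, from $b_j^* b_j = \hn_j$ (valid for either parity of $j$, since $c_j^* c_j$ commutes with $(-1)^{\hat{N}^{(L)}}$) together with $\vartheta b_j \vartheta^{-1} = (-1)^{r(j)} c_{r(j)}$, I derive $\vartheta \dhn_j \vartheta^{-1} = \dhn_{r(j)}$ for $j \in \LL$.

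With these rules, (v) is immediate. For (ii), the change of variables $k = r(i)$, $l = r(j)$ together with the evenness of $W$ (a consequence of (A)(ii) and $W(j)=(-1)^{j+1}U(j)$) used as $W(i-j)=W(l-k)=W(k-l)$ gives the claim; and (iv) is obtained by writing $W(i+j+1) = W(i - r(j))$ and reindexing the sum over $\LL \times \LR$ as a sum over $\LL \times \LL$ via $k = r(j)$. Part (iii) is a four-term verification: at $j=-1$, $r(-1)=0$, and, crucially because $\ell$ is odd, also $r(-\ell) = \ell-1$ is even, so $(-1)^{r(-1)} = (-1)^{r(-\ell)} = 1$, giving $\vartheta b_{-1}^*\vartheta^{-1} = c_0^*$ and $\vartheta b_{-\ell}^*\vartheta^{-1} = c_{\ell-1}^*$. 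Combining with the phase-conjugation rule reproduces the four summands of $\mathbb{T}_{LR}$ displayed in the odd-$\ell$ expression \eqref{ellOdd}, now in the factorized form asserted in (iii). Had $\ell$ been even, $r(-\ell)$ would be odd and an extra sign would obstruct this clean factorization.

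Part (i) is the computational heart. Applying $\vartheta$ to a generic pair-creation summand of $\mathbb{T}_L$ yields
\begin{align*}
\vartheta\, b_j^* b_{j+\vepsilon}^* \,\vartheta^{-1} = (-1)^{r(j)+r(j+\vepsilon)}\, c_{r(j)}^*\, c_{r(j+\vepsilon)}^*.
\end{align*}
Since $r(j)+r(j+\vepsilon) = -2j-\vepsilon-2$ is odd, the prefactor is $-1$; Fermi anticommutation $c_{r(j)}^* c_{r(j+\vepsilon)}^* = -c_{r(j+\vepsilon)}^* c_{r(j)}^*$ restores a $+1$, so the hopping amplitude $-t$ is preserved. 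Setting $k = r(j+\vepsilon)$, we have $r(j) = k+\vepsilon$ and $\phi_{r(j)}-\phi_{r(j+\vepsilon)} = -(\phi_k - \phi_{k+\vepsilon})$, so each term of $\vartheta \mathbb{T}_L \vartheta^{-1}$ acquires exactly the form of a term of $\mathbb{T}_R$. It remains to check that $(j,\vepsilon)\mapsto (r(j+\vepsilon),\vepsilon)$ is a bijection from $\{(j,\vepsilon) : j \in \LL\cap\Lambda_e,\ j\pm 1\in\LL\}$ onto $\{(k,\vepsilon): k\in\LR\cap\Lambda_e,\ k+\vepsilon\in\LR\}$; this is an elementary parity-and-endpoint check that uses $\ell$ odd (so that $\LR \cap \Lambda_e = \{0,2,\dots,\ell-1\}$, and the boundary sites $0$ and $\ell-1$ are excluded by precisely the matching constraint on each side).

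The main obstacle is the sign bookkeeping in (i): three independent sources of signs---the $(-1)^{r(\cdot)}$ factors from $\vartheta b\vartheta^{-1}$, the Fermi anticommutation, and the antilinearity-induced $\alpha\mapsto -\alpha$---must combine to give exactly $+1$, and the summation indexing must match under the substitution, both of which rely on $\ell$ being odd. The same parity assumption is what makes (iii) work with no residual sign.
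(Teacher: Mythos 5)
Your proposal is correct and follows essentially the same route as the paper: direct conjugation by $\vartheta$ using \eqref{DefTheta}, antilinearity, the parity factors $(-1)^{r(j)}$ and CAR anticommutation, with the oddness of $\ell$ entering exactly where you say (the overall sign cancellation in (i), and $r(-\ell)=\ell-1$ being even in (iii)). The only cosmetic difference is in (i): the paper first re-indexes $\mathbb{T}_L$ as a sum over odd sites (anticommuting before conjugating) and then applies $\vartheta$, whereas you conjugate first and anticommute afterwards, which is the same sign bookkeeping in a different order.
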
 
\begin{proof} (i)
We can rewrite $\mathbb{T}_L$ as 
\begin{align}
\mathbb{T}_L=\sum_{{j\in \Lambda_L\cap \Lambda_o}\atop{j\pm 1\in \Lambda_L}}
\sum_{\vepsilon=\pm 1}
(-t)\Big\{- \big(e^{+\im \alpha (\phi_j-\phi_{j+\vepsilon})}b_{j  }^* b_{j+\vepsilon  }^*+e^{-\im \alpha (\phi_j-\phi_{j+\vepsilon})}b_{j+\vepsilon  }b_{j  }\big)\Big\}.
\end{align}
It is important to note that the summation over $j$ in the given expression runs over all odd numbers in $\Lambda_L$. By utilizing \eqref{DefTheta} and the antilinearity property of $\vartheta$, we can derive the following:
\begin{align}
\vartheta \mathbb{T}_L\vartheta^{-1}=& \sum_{{j\in \Lambda_L\cap \Lambda_o}\atop{j\pm 1\in \Lambda_L}}
\sum_{\vepsilon=\pm 1}
(-t)\Big(e^{-\im \alpha (\phi_{r(j)}-\phi_{r(j)+\vepsilon})}c_{r(j)  }^* c_{r(j)+\vepsilon  }^*+
e^{+\im \alpha (\phi_{r(j)}-\phi_{r(j)+\vepsilon})}c_{r(j)+\vepsilon  }c_{r(j)  }\Big)\no
=& \sum_{{j\in \Lambda_R\cap \Lambda_e}\atop{j\pm 1\in \Lambda_R}}
\sum_{\vepsilon=\pm 1}
(-t)\Big(e^{-\im \alpha (\phi_j-\phi_{j+\vepsilon})}c_{j  }^* c_{j+\vepsilon  }^*+e^{+\im \alpha (\phi_j-\phi_{j+\vepsilon})}c_{j+\vepsilon  }c_{j  }\Big)\no
=&\mathbb{T}_R.
\end{align}
 It should be observed that, at this stage, the factor $(-1)^{j}$ in \eqref{DefTheta} is employed.

(ii) and (iv) are straightforward.

(iii) Special attention must be given to the cross term.
Since $\ell$ is an odd number, we have
$
\vartheta^{-1} c_{\ell-1 } \vartheta=b_{-\ell }
$. Consequently, we arrive at \eqref{TLRTensor}.  (It should be noted that if $\ell$ is even, an additional negative factor arises, thereby invalidating the subsequent arguments.)

(v) By employing \eqref{DefTheta}, we find
\be
\vartheta \phi_j\vartheta^{-1}=\phi_{r(j)},\ \ \vartheta \pi_j\vartheta^{-1}=-\pi_{r(j)},\ \ j\in \Lambda_L.
\ee
Consequently, by virtue of \eqref{KLR}, we obtain (v).
\end{proof}

\subsubsection{Expression III}\label{ExpressionIII}

In the subsequent analysis, it is assumed that $\ell$ is always odd.
For a given $\Lambda\in \mathbb{F}$, we introduce the notation:
\be
\mathfrak{A}_{\Lambda}=\left\{A\in \mathscr{L}(\Fock_{\Lambda_L}) : A\xi, \ \xi A\in \THL\ \mbox{for all $\xi\in \THL$}\right\}. \label{DefALamba}
\ee
It can be readily verified that
$\mathbb{T}_{L}, \mathbb{T}_R, \mathbb{W}_L, \mathbb{W}_R\in \mathfrak{A}_{\Lambda}
$.
Consequently, employing the definitions of ${\bs L}(\cdot)$ and ${\bs R}(\cdot)$ outlined in Section \ref{SectDirectSum}, we deduce the following:

\begin{itemize}
\item {\it Fermion-phonon interaction:}
\begin{align}
\mathbb{T}_L\otimes \one\restriction \tilde{\mathfrak{H}}_{\Lambda}=&{\bs L}(\mathbb{T}_L)\no
=&
\sum_{{j\in \Lambda_L\cap \Lambda_e}\atop{j\pm 1\in \Lambda_L}}
\sum_{\vepsilon=\pm 1}
(-t){\bs L}\big(
e^{-\im \alpha(\phi_j-\phi_{j+\vepsilon})}
b_{j  }^* b_{j+\vepsilon  }^*+
e^{+\im \alpha(\phi_j-\phi_{j+\vepsilon})}
b_{j+\vepsilon  }b_{j  }\big),
\end{align}
where
$
{\bs L}(A)
$ is defined by \eqref{leftrightOP}.
Similarly, we have 
\begin{align}
\one \otimes \mathbb{T}_R\restriction \tilde{\mathfrak{H}}_{\Lambda}=&{\bs R}(\mathbb{T}_L)\no
=&
\sum_{{j\in \Lambda_L\cap \Lambda_e}\atop{j\pm 1\in \Lambda_L}}
\sum_{\vepsilon=\pm 1}
(-t){\bs R}\big(e^{-\im \alpha(\phi_j-\phi_{j+\vepsilon})}b_{j  }^* b_{j+\vepsilon  }^*+
e^{+\im \alpha(\phi_j-\phi_{j+\vepsilon})}b_{j+\vepsilon  }b_{j  }\big),
\end{align}
where 
$
{\bs R}(A)
$ is defined by \eqref{leftrightOP}.
Regarding the cross term, we obtain the following expression:
\begin{align}
\mathbb{T}_{LR}\restriction \tilde{\mathfrak{H}}_{\Lambda}=&  (-t)
\Big\{{\bs L}\big(e^{+\im \alpha \phi_{-1}}b_{-1  }^*\big) {\bs R}\big(e^{-\im \alpha\phi_{-1}}b_{-1}\big)+{\bs L}\big(e^{-\im \alpha\phi_{-1}}b_{-1  }\big) {\bs R}\big( e^{+\im \alpha\phi_{-1}} b_{-1  }^*\big)\no
 &+{\bs L}\big(
e^{+\im \alpha\phi_{-\ell}} b_{-\ell  }^*
\big) {\bs R}\big(
e^{-\im \alpha\phi_{-\ell}}
 b_{-\ell  } \big)
 +{\bs L}\big( e^{-\im \alpha\phi_{-\ell}}b_{-\ell} \big){\bs R}\big(
 e^{+\im \alpha\phi_{-\ell}}
  b_{-\ell  }^*\big) \Big\}.
\end{align}

\item 
{\it Coulomb interaction:}
\begin{align}
\mathbb{W}_L\otimes \one\restriction \tilde{\mathfrak{H}}_{\Lambda}=& \sum_{i, j\in \Lambda_L}W(i-j){\bs L}\big(
\delta \hn_i\delta \hn_{j}
\big),\\
\one \otimes \mathbb{W}_R\restriction \tilde{\mathfrak{H}}_{\Lambda}=& \sum_{i,j\in \Lambda_L}W(i-j){\bs R}\big(
\delta \hn_i\delta \hn_{j}
\big),\\
\mathbb{W}_{LR}\restriction \tilde{\mathfrak{H}}_{\Lambda}=& 2\sum_{i, j\in \Lambda_L}W(i+j+1){\bs L}\big(\delta \hn_{i}\big){\bs R}\big(\delta \hn_{j}\big).
\end{align} 
\item {\it Phonon energy:} 
\be
K={\bs L}(K_L)+{\bs R}(K_L).
\ee
Here, recall the definition of ${\bs L}(\cdot)$ and ${\bs R}(\cdot)$ for unbounded operators as provided in Section \ref{SectDirectSum}.
\end{itemize}

\subsection{Proof of Theorem  \ref{ThmPP}} \label{PfSect4}
By recalling Definition \ref{DefRPCone2}, it suffices to prove that $e^{-\beta \tilde{H}_{\Lambda}} \unrhd 0$ w.r.t. $\tCone$ for all $\beta \ge 0$.

First, it should be noted that using the expressions provided in Section \ref{ExpressionIII}, we can express $\tilde{H}_{\Lambda}$ as
\begin{align}
\tilde{H}_{\Lambda}={\bs L}(\mathbb{K})+{\bs R}(\mathbb{K})-\mathbb{V},
\end{align}
where
\begin{align}
\mathbb{K}=\mathbb{T}_L-\mathbb{W}_L+K_L,\ \ \mathbb{V}=-\mathbb{T}_{LR}+\mathbb{W}_{LR}. \label{DefKW}
\end{align}
Let us define
\be
G_0={\bs L}(\mathbb{K})+{\bs R}(\mathbb{K}). \label{DefG0}
\ee
Then, by applying Proposition \ref{PPI3}, we have
\be
e^{-\beta G_0}={\bs L}(e^{-\beta \mathbb{K}}){\bs R}(e^{-\beta \mathbb{K}}) \unrhd 0\ \ \ \mbox{w.r.t. $\tCone$}.
\ee
Furthermore, employing Proposition \ref{PPI3} once again, it follows that
$\mathbb{V} \unrhd 0$ w.r.t.  $\tCone$. Consequently, based on Lemma \ref{ppiexp1}, we can conclude that $e^{-\beta \tilde{H}_{\Lambda}} \unrhd 0$ w.r.t.  $\tCone$ for all $\beta \ge 0$.
\qed

\section{Ergodicity of $e^{-\beta H_{\Lambda}^{\F}}$}\label{Sect5}

\subsection{Statement of the result}

Our goal in this section is to prove the following:

\begin{Thm}\label{PISemi}
Assume \hyperlink{A}{\bf (A)} and \hyperlink{B2}{\bf (B. 2)}. 
The semigroup $e^{-\beta H_{\Lambda}^{\F}}$ is ergodic   w.r.t. $\rCone^{\F}$.
\end{Thm}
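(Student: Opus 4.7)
The plan is to reduce the problem to analyzing $e^{-\beta \tilde{H}_\Lambda}$ on $\tilde{\h}_\Lambda$ equipped with the self-dual cone $\tCone$, and then to extract strict positivity from the Dyson expansion used in the proof of Theorem \ref{ThmPP}. By the reduction used there (together with Definition \ref{DefRPCone2}), ergodicity of $e^{-\beta H_\Lambda}$ w.r.t.\ $\rCone$ will follow from ergodicity of $e^{-\beta \tilde{H}_\Lambda}$ w.r.t.\ $\tCone$. With the decomposition $\tilde{H}_\Lambda = G_0 - \mathbb{V}$, $G_0 = {\bs L}(\mathbb{K}) + {\bs R}(\mathbb{K})$, $\mathbb{V} = -\mathbb{T}_{LR} + \mathbb{W}_{LR}$ from Section \ref{PfSect4}, the Dyson series
\begin{align*}
e^{-\beta \tilde{H}_\Lambda} = \sum_{n \ge 0} \int_{\Delta_n(\beta)} e^{-s_0 G_0}\, \mathbb{V}\, e^{-s_1 G_0}\, \mathbb{V} \cdots \mathbb{V}\, e^{-s_n G_0}\, ds
\end{align*}
presents $e^{-\beta \tilde{H}_\Lambda}$ as a sum of operators each $\unrhd 0$ w.r.t.\ $\tCone$. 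Thus it will suffice to exhibit, for each $\xi, \eta \in \tCone \setminus \{0\}$, some $n$ and some summand of the series whose pairing with $\xi$ and $\eta$ is strictly positive.

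The new ingredient from condition {\bf (B.2)} enters through a spectral decomposition of $\mathbb{W}_{LR}$: strict positive-definiteness of $\{(-1)^{i+j} U(i+j+1)\}_{i,j \in \Lambda_L}$ yields
\begin{align*}
\mathbb{W}_{LR} \restriction \THL = 2 \sum_{k=1}^{|\Lambda_L|} \lambda_k\, {\bs L}(N_k)\, {\bs R}(N_k), \qquad \lambda_k > 0,
\end{align*}
where $\{N_k\}$ is a basis of $\mathrm{span}\{\delta \hat{n}_i : i \in \Lambda_L\}$, so in particular each $\delta \hat{n}_i$ is recoverable as a linear combination of the $N_k$'s. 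I would next exploit the fact that $G_0$ and $\mathbb{W}_{LR}$ preserve each summand of the sector decomposition $\tilde{\h}_\Lambda = \bigoplus_q \THL(q)$, while the cross-hopping term $\mathbb{T}_{LR}$, written as $\sum {\bs L}(A)\,{\bs R}(A^*)$ with $A = e^{\pm \im \alpha \phi_j} b_j^{\sharp}$ for $j \in \{-1, -\ell\}$, shifts the sector index $q$ by $\pm 1$. Hence inserting enough $\mathbb{T}_{LR}$ factors into the series couples arbitrary sectors $q_\xi$ and $q_\eta$ in finitely many steps.

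With these ingredients in hand, for each $\xi, \eta \in \tCone \setminus \{0\}$ the scheme is to assemble a specific chain
\begin{align*}
e^{-s_0 G_0}\, \mathcal{M}_1\, e^{-s_1 G_0}\, \mathcal{M}_2\, \cdots\, \mathcal{M}_n\, e^{-s_n G_0}
\end{align*}
appearing inside the Dyson series, where each $\mathcal{M}_j$ is a positive summand of $\mathbb{V}$ drawn either from $-\mathbb{T}_{LR}$ (to navigate between $q$-sectors) or from the decomposition of $\mathbb{W}_{LR}$ just displayed (to generate within-sector positivity), producing a strictly positive pairing with $\xi$ and $\eta$. Since every other summand in the Dyson series is already nonnegative, strict positivity of this single term will yield $\langle \xi | e^{-\beta \tilde{H}_\Lambda} \eta \rangle > 0$ for some $\beta > 0$.

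The hard part will be the within-sector step: one needs to show that operators of the form
\begin{align*}
e^{-s_0 \mathbb{K}}\, N_{k_1}\, e^{-s_1 \mathbb{K}}\, N_{k_2}\, \cdots\, N_{k_m}\, e^{-s_m \mathbb{K}}
\end{align*}
(and their right-multiplication analogues) span a rich enough class of operators on $\IL(q)$ that the associated ${\bs L}(\cdot)\,{\bs R}(\cdot)$ compositions act irreducibly on $\mathscr{L}^2_+(\IL(q))$, in the sense needed to guarantee strict positivity of the pairing for arbitrary nonzero positive vectors concentrated in a common sector. Equivalently, the algebra generated by $\{\delta \hat{n}_i\}_{i \in \Lambda_L}$ together with the free flow $e^{-s \mathbb{K}}$ must be shown to separate all positive Hilbert--Schmidt operators on $\IL(q)$. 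This is expected to require a delicate combinatorial analysis of such products and is presumably the content of the technical appendix (Appendix \ref{PfPuzzle}).
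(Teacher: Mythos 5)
Your skeleton (pass to $\tilde{H}_\Lambda$ and $\tCone$, Dyson-expand, note that each summand is $\unrhd 0$, and exhibit one strictly positive term) matches the paper's strategy, but the step that actually makes the argument work is missing, and your proposed substitute for it would not function. The paper does \emph{not} insert the spectral summands ${\bs L}(N_k){\bs R}(N_k)$ of $\mathbb{W}_{LR}$ as the interaction vertices. Instead, condition {\bf (B.\,2)} is used once, in Lemma \ref{WLowerBd}, to bound $\mathbb{W}_{LR}$ from below (w.r.t.\ $\tCone$) by the \emph{diagonal} operator $w_0\sum_{i\in\Lambda_L}{\bs L}(\delta\hn_i){\bs R}(\delta\hn_i)$, where $w_0>0$ is the smallest eigenvalue of $\{W(i+j+1)\}$; this uses the orthogonality of the diagonalizing matrix, not merely the fact that the $\delta\hn_i$ are linear combinations of the $N_k$ (such recombinations are differences and need not preserve the cone). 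The point of the diagonal bound is that products over all sites of the resulting projections ${\bs L}(\hn_i){\bs R}(\hn_i)$ and ${\bs L}(1-\hn_i){\bs R}(1-\hn_i)$ produce the rank-one configuration projections $\mathbb{E}_X={\bs L}(E_X){\bs R}(E_X)$, $E_X=|e_X\ra\la e_X|$. Pinning intermediate configurations with these $\mathbb{E}_X$ makes the Hilbert--Schmidt pairing factorize into the absolute square of a single scalar amplitude on $\Fock_{\LL}$ (equation \eqref{StrictI} and Proposition \ref{RedPI}), so ergodicity reduces to the non-vanishing of one matrix element, which is what Appendix \ref{PfPuzzle} establishes. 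Your scheme never achieves this reduction, and the ``irreducibility of the algebra generated by $\{\delta\hn_i\}$ and $e^{-s\mathbb{K}}$ on the positive Hilbert--Schmidt cone'' that you defer to the appendix is not what is proved there and is left without any proposed mechanism.

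There is also a conceptual misassignment in your within-sector step: the operators $N_k$ (like all $\delta\hn_i$) are diagonal in the configuration basis $\{e_X\}$, so inserting them generates no motion in the fermion configuration space; they can only select configurations. The within-sector connectivity comes from the pair creation/annihilation term $\mathbb{T}_L$ sitting \emph{inside} $\mathbb{K}$, extracted by a low-order Duhamel expansion of $e^{-\tau\mathbb{K}}$ (Sublemma \ref{FirstSub} and the cluster analysis), with the boundary operators $B^{\sharp}_{-1},B^{\sharp}_{-\ell}$ from $\mathbb{T}_{LR}$ used to adjust parity/sector. Finally, you do not address the bosonic factor at all: one needs Lemma \ref{LemmNonZero} to extract lower bounds $P_{F_\vphi},P_{F_\psi}$ with $\hat F_\vphi,\hat F_\psi$ positive, and Lemma \ref{NonVanishing} (Fourier transformation plus a Feynman--Kac estimate and the positivity-improving property of $e^{-\vepsilon K_L}$) to show the phonon overlap is strictly positive. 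Without the diagonal bound, the factorization through $\mathbb{E}_X$, and the phonon positivity input, the proposal does not yield a proof.
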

We will provide a proof of Theorem \ref{PISemi} in Subsection \ref{PfThmPI}.
\subsubsection*{\it 
Proof of Theorem \ref{SparP} assuming Theorem \ref{PISemi}}

(i) is straightforward. By utilizing Proposition \ref{PFF}, we observe that the ground state of $H_{\Lambda}^{\F}$ is strictly positive w.r.t.  $\rCone^{\F}$. Consequently, based on Theorem \ref{RPConeBasic}, we can establish (ii) of Theorem \ref{SparP}.
\qed
\subsection{Preliminaries}
We decompose $\Lambda_L$ as $\Lambda_L=\Lambda_{L, e} \cup \Lambda_{L, o}$, where 
\begin{align}
\Lambda_{L, e}=\{j\in \Lambda_L\, :\, \mbox{$j$ is even}\},\ \ \ 
\Lambda_{L, o}=\{j\in \Lambda_L\, :\, \mbox{$j$ is odd}\}.
\end{align} 
For a given $n\in \BbbN$, we define
\begin{align}
\Theta_{\Lambda, e}^n:=\{I\subseteq \Lambda_{L, e}\, :\, 
  |I|=n
\},\ \
\Theta_{\Lambda, o}^n:=\{I \subseteq  \Lambda_{L, o}\, :\, 
  |I|=n
\},
\end{align} 
where $ |I|$ indicates the cardinality of $I$.
Now we define
\begin{align}
\Theta_{\Lambda}(q):=
\bigcup_{{m_e-m_o=q}}\Theta_{\Lambda, e}^{m_e}\times \Theta_{\Lambda, o}^{m_o}
,\quad q\in \mathbb{I}_{\Lambda}.
\end{align} 
For each $X=(I_e, I_o)\in \Theta_{\Lambda}(q)$, we introduce a vector $e_X\in \AFock_{\LL}(q)$ defined as 
\begin{align}
e_X:=\B_{I_e}^* \B_{I_o}^*  \Omega^{\rm F}_{\LL}, \label{DefEx}
\end{align} 
where $\Omega^{\rm F}_{\LL}$ represents  the Fock vacuum in $\AFock_{\LL}$ and, for $I=\{i_1, i_2, \dots, i_n\}$ with $i_1<i_2<\cdots<i_n$, 
\begin{align}
\B_{I  }^*:= b_{i_1  }^* b_{i_2  }^*\cdots b_{i_n  }^*.
\end{align}

The following lemma can be readily derived from the definitions of $\AFock_{\LL},\ \THL^{\rm F}(q)$ and $\THL^{\rm F}$.

\begin{lemm}\label{CONSsLmm}
We have the following:
\begin{itemize}
\item[{\rm (i)}] For each $q\in \mathbb{I}_{\Lambda}$,  $\{e_X\, :\, X\in \Theta_{\Lambda}(q)\}$ is a CONS of
$\AFock_{\LL}(q)$.

\item[{\rm (ii)}] For each $q\in \mathbb{I}_{\Lambda}$,  $\{|e_X\ra\la e_Y|\, :\, X, Y\in \Theta_{\Lambda}(q)\}$ is a
	     CONS of $\THL^{\rm F}(q)$.
\item[{\rm (iii)}] $\bigcup_{q\in \mathbb{I}_{\Lambda}}\{|e_X\ra\la e_Y|\, :\, X, Y\in \Theta_{\Lambda}(q)\}$ is a
	     CONS of $\tilde{\mathfrak{H}}_{\Lambda}^{\rm F}$. 
\end{itemize} 
\end{lemm}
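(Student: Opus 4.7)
The plan is to prove (i) first and then derive (ii) and (iii) as immediate corollaries using standard facts about Hilbert--Schmidt spaces and direct sums.

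For (i), I would start from the relationship between the $b_j$'s and the fermionic operators $c_j$'s given in \eqref{Defb}. Since $(-\mathbf{1})^{\hat{N}^{(L)}}$ is a unitary and commutes up to a sign with $c_j^{\#}$, each vector $b_j^{*}\Omega^{\mathrm{F}}_{\LL}$ equals $\pm c_j^{*}\Omega^{\mathrm{F}}_{\LL}$. More generally, for $X=(I_e,I_o)\in\Theta_{\Lambda}(q)$, the vector $e_X=\mathbf{B}_{I_e}^{*}\mathbf{B}_{I_o}^{*}\Omega^{\mathrm{F}}_{\LL}$ coincides, up to a sign, with the standard Slater determinant $\prod_{j\in I_e\cup I_o}c_j^{*}\Omega^{\mathrm{F}}_{\LL}$ (ordered naturally). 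Indeed, the CAR relations $\{b_i,b_j^{*}\}=\delta_{ij}$, $\{b_i,b_j\}=0$ noted just after \eqref{Defb} guarantee that $\{e_X\}_{X\subseteq\Lambda_L}$ is orthonormal in $\AFock_{\LL}$ by the same calculation that works for Slater determinants.

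Next I would verify the eigenvalue property $\hat{Q}_{\Lambda}^{(L)}e_X=(|I_e|-|I_o|)e_X$. Because each $b_j^{*}$ shifts the eigenvalue of $\hat{N}_j$ by one and the number operator does not depend on the sign convention relating $b_j$ and $c_j$, we have $\hat{N}^{(L)}_{e}e_X=|I_e|e_X$ and $\hat{N}^{(L)}_{o}e_X=|I_o|e_X$. Hence $e_X\in\AFock_{\LL}(q)$ when $X\in\Theta_{\Lambda}(q)$. Conversely, the collection $\{e_X:X\subseteq\Lambda_L\}$ spans a dense subspace of $\AFock_{\LL}$ (being $\pm$ the usual Slater basis), and it decomposes as the disjoint union $\bigsqcup_{q\in\mathbb{I}_{\Lambda}}\{e_X:X\in\Theta_{\Lambda}(q)\}$, so each $\{e_X:X\in\Theta_{\Lambda}(q)\}$ is an orthonormal basis of the eigenspace $\AFock_{\LL}(q)$. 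This proves (i).

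Statement (ii) then follows from the general fact that if $\{f_n\}_{n}$ is a CONS of a Hilbert space $\mathcal{K}$, then $\{|f_n\rangle\langle f_m|\}_{n,m}$ is a CONS of the Hilbert--Schmidt space $\mathscr{L}^{2}(\mathcal{K})$ with its trace inner product; apply this with $\mathcal{K}=\AFock_{\LL}(q)$ and use (i). Finally, (iii) is immediate from (ii) and the orthogonal decomposition $\tilde{\mathfrak{H}}_{\Lambda}^{\mathrm{F}}=\bigoplus_{q\in\mathbb{I}_{\Lambda}}\THL^{\mathrm{F}}(q)$ established earlier, since the direct sum of CONS's of the summands is a CONS of the whole space.

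There is essentially no obstacle here; the only point requiring a little care is tracking the signs when converting $\mathbf{B}_{I_e}^{*}\mathbf{B}_{I_o}^{*}\Omega^{\mathrm{F}}_{\LL}$ into a standard Slater determinant, but this does not affect orthonormality, the eigenvalue of $\hat{Q}^{(L)}_{\Lambda}$, or linear span, so the result follows directly from the definitions and the CAR for the $b_j^{\#}$'s.
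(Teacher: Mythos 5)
Your proposal is correct, and it follows the route the paper intends: the paper offers no written proof, stating only that the lemma is ``immediate from the definitions,'' and your argument is exactly the natural elaboration — $e_X$ agrees up to a sign with the standard Slater vector (so orthonormality, the $\hat{Q}^{(L)}_{\Lambda}$-eigenvalue $|I_e|-|I_o|=q$, and completeness in each eigenspace all follow), after which (ii) is the standard Hilbert--Schmidt fact and (iii) follows from the direct sum decomposition of $\tilde{\mathfrak{H}}_{\Lambda}^{\rm F}$. No gaps.
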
 
\subsection{Lower bound for $e^{-\beta \tilde{H}_{\Lambda}^{\F}}$} \label{LowerBddE}

In the remaining part of this section, we will continue to use the expressions provided in Section \ref{PfSect4}.
We start by presenting the following lemma.
\begin{lemm}\label{WLowerBd}
Let $w_0>0$ be the smallest eigenvalue of the matrix $\{W_{i, j}\}_{i, j}$ with $W_{i, j}=W(i+j+1)$.
Then we have 
\begin{align}
\mathbb{W}_{LR} \unrhd \mathbb{W}_0\ \ \ w.r.t.\  \tCone^{\F},
\end{align}
where $\mathbb{W}_0=w_0 \sum_{i\in \Lambda_L}
{\bs L}(\delta \hn_i){\bs R}(\delta \hn_i)
$.
\end{lemm}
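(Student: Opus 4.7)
The plan is to rewrite $\mathbb{W}_{LR}-\mathbb{W}_0$ as a single double sum of ``left-times-right'' multiplication operators, and then exploit the positive (semi)definiteness built into assumption {\bf (B.~2)} to factor that sum as a sum of squares. Using the formula for $\mathbb{W}_{LR}\restriction \THL$ from Section~\ref{ExpressionIII}, the difference can be written as
\begin{align*}
\mathbb{W}_{LR}-\mathbb{W}_0=\sum_{i,j\in\Lambda_L} M_{ij}\,{\bs L}(\delta\hn_i)\,{\bs R}(\delta\hn_j),\qquad M_{ij}:=2\,W(i+j+1)-w_0\,\delta_{ij}.
\end{align*}

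First I would verify that the matrix $M=\{M_{ij}\}_{i,j\in\Lambda_L}$ is real symmetric and positive semidefinite. Real symmetry is immediate from the fact that $W$ is real valued and $W_{ij}=W(i+j+1)$ is manifestly symmetric in $i,j$. Positive semidefiniteness follows from the choice of $w_0$: since $w_0$ is by definition the smallest eigenvalue of $\{W(i+j+1)\}_{i,j}$, every eigenvalue of $M$ has the form $2\lambda-w_0\geq 2w_0-w_0=w_0>0$.

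Next I would apply the spectral theorem to obtain a \emph{real} matrix $X=\{X_{ki}\}_{k,i\in\Lambda_L}$ with $M_{ij}=\sum_k X_{ki}X_{kj}$ (for instance by taking the real symmetric square root of $M$). Setting $B_k:=\sum_{i\in\Lambda_L} X_{ki}\,\delta\hn_i$, each $B_k$ is self-adjoint, because $X_{ki}\in\BbbR$ and $\delta\hn_i^*=\delta\hn_i$. Substituting this factorisation into the previous display and interchanging the order of summation gives
\begin{align*}
\mathbb{W}_{LR}-\mathbb{W}_0=\sum_{k\in\Lambda_L}{\bs L}(B_k)\,{\bs R}(B_k)=\sum_{k\in\Lambda_L}{\bs L}(B_k)\,{\bs R}(B_k^*).
\end{align*}

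Finally, Proposition~\ref{PPI3} yields ${\bs L}(B_k)\,{\bs R}(B_k^*)\unrhd 0$ w.r.t.\ $\tCone$ for every $k$, and a sum of positivity preserving operators is again positivity preserving, whence $\mathbb{W}_{LR}\unrhd\mathbb{W}_0$ w.r.t.\ $\tCone$. There is no serious obstacle in this argument; the only point deserving attention is that the square root $X$ must have \emph{real} entries so that $B_k^*=B_k$ and Proposition~\ref{PPI3} applies directly—and this is possible precisely because $M$ is real symmetric.
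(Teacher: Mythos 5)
Your proposal is correct and rests on essentially the same mechanism as the paper's proof: a real sum-of-squares factorization of the Coulomb matrix, with each resulting term ${\bs L}(B_k){\bs R}(B_k^*)$ shown to be positivity preserving w.r.t.\ $\tCone$ via Proposition \ref{PPI3}. The only cosmetic difference is that you factor the difference matrix $2W(i+j+1)-w_0\delta_{ij}$ through its real square root, whereas the paper diagonalizes $\{W(i+j+1)\}_{i,j}$ by a real orthogonal matrix and bounds its eigenvalues from below by $w_0$.
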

\begin{proof}
Based on the assumption \hyperlink{B2}{\bf (B. 2)}, the matrix ${\bs W}=\{W_{i, j}\}_{i, j}$ is positive definite. Consequently, all eigenvalues, $\lambda_1, \dots, \lambda_{|\LL|}$, are strictly positive. Moreover, there exists a real unitary matrix ${\bs M}$ that diagonalizes ${\bs W}$ as follows:
\be
{\bs W} = {\bs M}^{-1} {\rm diag}(\lambda_1, \dots, \lambda_{|\LL|}) {\bs M}^{-1}.
\ee
 By setting $ \mu_i=\sum_{j\in \LL} M_{i, j} \delta \hn_j$, we have
 \begin{align}
 \mathbb{W}_{LR}= \sum_{k} \lambda_k {\bs L}(\mu_k){\bs R}(\mu_k) \unrhd w_0 \sum_k{\bs L}(\mu_k){\bs R}(\mu_k) \ \ \mbox{w.r.t. $\tCone^{\F}$}.
 \end{align}
 Here, we have used the fact that $
 {\bs L}(\mu_k){\bs R}(\mu_k) \unrhd 0 
 $ w.r.t. $\tCone^{\F}$.
This completes the proof of Lemma \ref{WLowerBd}.
\end{proof}

By applying Lemmas  \ref{ppiexp1} and \ref{WLowerBd}, we obtain  the following:
\begin{coro}\label{CorRed}
Let 
\begin{align}
\tilde{H}^{\F}_{\Lambda, 0}={\bs L}(\mathbb{K}^{\F})+{\bs R}(\mathbb{K}^{\F})-\mathbb{V}_0,
\end{align}
where $\mathbb{V}_0=-\mathbb{T}_{LR}+\mathbb{W}_0$ and 
$\mathbb{K}^{\F}=\mathbb{T}_L-\mathbb{W}_L$ with $g=0$. Then we have 
$e^{-\beta \tilde{H}_{\Lambda}^{\F}} \unrhd e^{-\beta \tilde{H}^{\F}_{\Lambda, 0}}\unrhd  0$ w.r.t. $\tCone^{\F}$ for all $\beta \ge 0$.
\end{coro}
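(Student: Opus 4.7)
The proposal is to obtain the chain $e^{-\beta \tilde H_\Lambda}\unrhd e^{-\beta \tilde H_{\Lambda,0}}\unrhd 0$ w.r.t. $\tCone$ by two successive applications of Lemma \ref{ppiexp1}, feeding in Lemma \ref{WLowerBd} at the crucial step. The starting observation is the algebraic identity
\begin{equation*}
\tilde H_{\Lambda}=\tilde H_{\Lambda,0}-\bigl(\mathbb{W}_{LR}-\mathbb{W}_0\bigr),\qquad \tilde H_{\Lambda,0}=G_0-\mathbb{V}_0,
\end{equation*}
where $G_0={\bs L}(\mathbb{K})+{\bs R}(\mathbb{K})$ and $\mathbb{V}_0=-\mathbb{T}_{LR}+\mathbb{W}_0$ (using the notation of \eqref{DefKW} and \eqref{DefG0}). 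Both perturbations $\mathbb{V}_0$ and $\mathbb{W}_{LR}-\mathbb{W}_0$ are bounded (the exponentials $e^{\pm i\alpha\phi_j}$ are unitary, $b_j,b_j^*$ and the $\delta\hn_i$ are bounded, and $U$ is bounded by assumption \textbf{(A)}), so the hypotheses of Lemma \ref{ppiexp1} are met in each step.

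For the inner inequality, I would first check that $\mathbb{V}_0\unrhd 0$ w.r.t.\ $\tCone$. The cross-hopping part $-\mathbb{T}_{LR}$ is, by the computation in Section \ref{ExpressionIII}, a sum of four terms of the shape ${\bs L}(A){\bs R}(A^*)$ with $A\in\{e^{\pm i\alpha\phi_{-1}}b_{-1}^{(\ast)},\,e^{\pm i\alpha\phi_{-\ell}}b_{-\ell}^{(\ast)}\}$; since the phonon and fermion factors commute, these are genuine ${\bs L}(A){\bs R}(A^*)$'s, hence $\unrhd 0$ by Proposition \ref{PPI3}. The lower bound $\mathbb{W}_0=w_0\sum_{i\in\LL}{\bs L}(\delta\hn_i){\bs R}(\delta\hn_i)$ is also $\unrhd 0$ by the same proposition together with $\delta\hn_i^*=\delta\hn_i$. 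With $\mathbb{V}_0\unrhd 0$ in hand, and recalling from the proof of Theorem \ref{ThmPP} that $e^{-\beta G_0}={\bs L}(e^{-\beta\mathbb{K}}){\bs R}(e^{-\beta\mathbb{K}})\unrhd 0$, Lemma \ref{ppiexp1} applied with $A=G_0$, $B=\mathbb{V}_0$ yields $e^{-\beta\tilde H_{\Lambda,0}}\unrhd e^{-\beta G_0}\unrhd 0$ w.r.t.\ $\tCone$.

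For the outer inequality, Lemma \ref{WLowerBd} gives $\mathbb{W}_{LR}-\mathbb{W}_0\unrhd 0$ w.r.t.\ $\tCone$ (this is where condition \textbf{(B. 2)} enters, via positive-definiteness guaranteeing $w_0>0$, although for the present corollary one only needs $w_0\ge 0$, so \textbf{(B. 1)} would actually suffice for this step). A second application of Lemma \ref{ppiexp1}, now with $A=\tilde H_{\Lambda,0}$ and $B=\mathbb{W}_{LR}-\mathbb{W}_0$, gives $e^{-\beta\tilde H_\Lambda}=e^{-\beta(\tilde H_{\Lambda,0}-B)}\unrhd e^{-\beta\tilde H_{\Lambda,0}}$ w.r.t.\ $\tCone$, closing the chain. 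I do not foresee a genuine obstacle here: the only bookkeeping point is to confirm that the parity factor $(-1)^{r(j)}$ baked into $\vartheta$ (cf.\ Lemma \ref{TransLR}) is what makes the cross term $-\mathbb{T}_{LR}$ take the required ${\bs L}(A){\bs R}(A^*)$ form with the \emph{correct sign}; this is precisely what the choice of $\vartheta$ in \eqref{DefTheta} was engineered to do, so the verification reduces to reading off the expressions of Section \ref{ExpressionIII}.
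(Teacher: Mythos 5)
Your proposal is correct and follows essentially the same route as the paper: the corollary is obtained exactly by combining Lemma \ref{WLowerBd} (giving $\mathbb{W}_{LR}-\mathbb{W}_0\unrhd 0$) with two applications of Lemma \ref{ppiexp1}, using $e^{-\beta G_0}={\bs L}(e^{-\beta\mathbb{K}}){\bs R}(e^{-\beta\mathbb{K}})\unrhd 0$ and $\mathbb{V}_0\unrhd 0$ from Proposition \ref{PPI3}, just as in the proof of Theorem \ref{ThmPP}. Your side remarks (boundedness of the perturbations, the sign bookkeeping for $-\mathbb{T}_{LR}$, and the observation that $w_0\ge 0$ would suffice here) are accurate and do not change the argument.
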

Based on Corollary \ref{CorRed}, in order to establish Theorem \ref{PISemi}, it is sufficient to demonstrate the ergodicity of $e^{-\beta \tilde{H}_{\Lambda, 0}^{\F}}$ with respect to $\tCone^{\F}$. Therefore, we will examine the properties of the semigroup $e^{-\beta \tilde{H}^{\F}_{\Lambda, 0}}$ in the remaining part of this section.

It is worth noting that $\mathbb{W}_0$ can be represented as
\be
\mathbb{W}_0=\frac{w_0}{2} (P_0+P_1)-\frac{w_0|\Lambda|}{8},
\ee
where 
\begin{align}
P_0=\sum_{i\in \Lambda_L} {\bs L}(1-\hn_i){\bs R}(1-\hn_i),\ \
P_1= \sum_{i\in \Lambda_L} {\bs L}(\hn_i){\bs R}(\hn_i).
\end{align}
Hence, $\tilde{H}^{\F}_{\Lambda, 0}$ can be rewritten as 
\be
\tilde{H}_{\Lambda, 0}^{\F}=G-\frac{w_0}{2} P-\frac{w_0|\Lambda|}{8}, \label{tildeHDef}
\ee
where $P=P_0+P_1$ and 
\be
G={\bs L}(\mathbb{K}^{\F})+{\bs R}(\mathbb{K}^{\F})+\mathbb{T}_{LR}. \label{DefiG}
\ee
In the subsequent discussions, we will disregard the constant term $-w_0|\Lambda|/8$ in \eqref{tildeHDef}, as it does not affect the following arguments.
By using the Dyson formula, we have
\be
e^{-\beta \tilde{H}^{\F}_{\Lambda, 0}}=\sum_{n=0}^{\infty}D_n(\tilde{H}^{\F}_{\Lambda, 0}),\label{DysonH0}
\ee
where 
\be
D_n(\tilde{H}_{\Lambda, 0}^{\F})=\Big(\frac{w_0}{2}\Big)^n \int_{0\le s_1\le \cdots \le s_n \le \beta} d{\bs s}
P(s_1)\cdots P(s_n)e^{-\beta G}\label{DnH0}
\ee
with $P(s)=e^{-s G} Pe^{s G}$ and $D_0(\tilde{H}^{\F}_{\Lambda, 0})=e^{-\beta G}$. Note that the right-hand side of \eqref{DysonH0}
 converges in the operator norm topology.
Because $e^{-s G} \unrhd 0$ and $P\unrhd 0$ w.r.t. $\tCone^{\F}$, we find that $P(s_1)\dots P(s_n)e^{-\beta G}\unrhd 0$ w.r.t. $\tCone^{\F}$, provided that ${\bol s}=(s_1, \dots, s_n)\in R_{\beta, n}$,
where $R_{\beta, n}$ is defined by 
\be
R_{\beta, n}=\{{\bol s}=(s_1, \dots, s_n)\in \BbbR^n\, :\, 0\le s_1\le \cdots \le s_n \le \beta\}.
\ee
 This implies that
$D_n(\tilde{H}_{\Lambda, 0}^{\F})\unrhd 0$ w.r.t. $\tCone^{\F}$ for all $n\in \BbbZ_+$.
As a result,  we have the following:
\begin{lemm}
For each $n\in \BbbZ_+$ and $\beta \ge 0$, we have
\be
e^{-\beta \tilde{H}^{\F}_{\Lambda}} \unrhd D_n(\tilde{H}^{\F}_{\Lambda, 0})\unrhd 0\ \ \ \mbox{w.r.t. $\tCone^{\F}
$}.
\ee
\end{lemm}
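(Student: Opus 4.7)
The plan is to assemble three ingredients already available in the excerpt: (i) Corollary \ref{CorRed}, which gives $e^{-\beta \tilde H_\Lambda}\unrhd e^{-\beta \tilde H_{\Lambda,0}}\unrhd 0$ w.r.t.\ $\tCone$; (ii) the Dyson expansion \eqref{DysonH0}, which represents $e^{-\beta\tilde H_{\Lambda,0}}$ as a norm-convergent series $\sum_{m\ge 0}D_m(\tilde H_{\Lambda,0})$; and (iii) the observation just before the lemma that each term $D_m(\tilde H_{\Lambda,0})\unrhd 0$ w.r.t.\ $\tCone$, which follows since $e^{-sG}\unrhd 0$ (via \eqref{DefG0} and Proposition \ref{PPI3} together with $\mathbb{T}_{LR}\unrhd 0$ and Lemma \ref{ppiexp1}) and $P=P_0+P_1$ is a sum of operators of the form ${\bs L}(A){\bs R}(A)\unrhd 0$.

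The one nontrivial step is to deduce $e^{-\beta \tilde H_{\Lambda,0}}\unrhd D_n(\tilde H_{\Lambda,0})$ w.r.t.\ $\tCone$ for a fixed $n\in\BbbZ_+$. To do so, I would rewrite
\[
e^{-\beta \tilde H_{\Lambda,0}}-D_n(\tilde H_{\Lambda,0})=\sum_{m\ne n} D_m(\tilde H_{\Lambda,0}),
\]
where the right-hand side converges in operator norm because the full Dyson series does. Each finite partial sum is a sum of operators $\unrhd 0$ w.r.t.\ $\tCone$, so by additivity of the cone it is itself $\unrhd 0$. Since norm convergence implies weak convergence, Lemma \ref{WeakPP} (applied to the sequence of partial sums) shows that the limit remains $\unrhd 0$ w.r.t.\ $\tCone$, which is what we wanted.

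To close, I would chain the inequalities using the transitivity of $\unrhd$ w.r.t.\ $\tCone$, which follows at once from the fact that $\tCone$ is closed under sums: $A-B$ and $B-C$ both leaving $\tCone$ invariant implies the same for $A-C$. Combining Corollary \ref{CorRed} with the inequality just established gives
\[
e^{-\beta \tilde H_\Lambda}\ \unrhd\ e^{-\beta \tilde H_{\Lambda,0}}\ \unrhd\ D_n(\tilde H_{\Lambda,0})\ \unrhd\ 0\qquad\text{w.r.t. }\tCone,
\]
as claimed. I do not anticipate a real obstacle; the only point that requires genuine justification, rather than bookkeeping, is that positivity preservation passes to norm (equivalently, weak) limits of operators, and this is precisely Lemma \ref{WeakPP} already at our disposal.
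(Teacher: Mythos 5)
Your proposal is correct and is essentially the paper's own argument: the lemma is stated there as an immediate consequence of Corollary \ref{CorRed}, the norm-convergent Dyson expansion \eqref{DysonH0} with each term $D_m(\tilde{H}_{\Lambda,0})\unrhd 0$ w.r.t.\ $\tCone$, and the stability of positivity preservation under weak (in particular norm) limits as in Lemma \ref{WeakPP}, which is exactly the chain you spell out. One small correction to your parenthetical justification of $e^{-sG}\unrhd 0$: since the cross term carries the factor $-t$, the positivity-preserving operator is $-\mathbb{T}_{LR}\unrhd 0$ (not $\mathbb{T}_{LR}$), so Lemma \ref{ppiexp1} is applied to $G=G_0-(-\mathbb{T}_{LR})$; with that sign fixed, everything stands.
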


Without loss of generality, we may assume that $w_0/2=1$. 
This substitution simplifies the subsequent notation without altering the essence of the discussion.

Given an  $X\subseteq \Lambda_L$, we define 
${\bol \vepsilon^X}=(\vepsilon^X_i)_{i\in X}\in \{0, 1\}^{|\Lambda_L|}$ by 
\be
\vepsilon^X_i=\begin{cases}
1, & \mbox{if $i\in X$}\\
0,  & \mbox{if $i\in \Lambda_L\setminus X$}.
\end{cases}
\ee
Using this notation, we define $\mathbb{P}_X(\bol s)$ as follows:
\be
\mathbb{P}_X(\bol s)=\prod_{j\in \Lambda_L} P_{\vepsilon^X_j, j}(s_j),\ \ {\bol s}\in R_{\beta, |\Lambda_L|},
\ee
where $P_{\vepsilon, i}(s)=e^{-s G}P_{\vepsilon, i}e^{sG}$ with $P_{\vepsilon, i}={\bs L}(\hn_i)
{\bs R}(\hn_i)$ for $\vepsilon=1$,  $P_{\vepsilon, i}={\bs L}(1-\hn_i)
{\bs R}(1-\hn_i)$ for $\vepsilon=0$. 
 Using the fact that 
 \be
  |e_X\ra\la e_X|=\Bigg[\prod_{i\in X} \hn_i\Bigg]\Bigg[\prod_{i\in \Lambda_L\setminus X}(1- \hn_i)\Bigg], \ee
  we obtain
\be
\mathbb{P}_X(s, \dots, s)=e^{-s G} \mathbb{E}_Xe^{sG} ,\quad  s\in [0, \beta], \label{SpecialE}
\ee
where 
\be
\mathbb{E}_X={\bs L}(E_X){\bs R}(E_X),\ \ \ E_X=|e_X\ra\la e_X|.
\ee
For a given sequence $\mathscr{X}=(X_i)_{i=1}^N$ of subsets of $\Lambda_L$,  where $X_i\subseteq \Lambda_L$ for $i=1,\dots,N$, we define the operator 
\be
\mathbb{P}_{\mathscr{X}}({\bol s})=\mathbb{P}_{X_1}({\bol s}_1)\cdots\mathbb{P}_{X_N}({\bol s}_N), \label{BBPX}
\ee
where ${\bol s}=({\bol s}_1, \dots, {\bol s}_N) \in R_{\beta, n(N)}$ and $n(N)=N|\Lambda_L|$.
Then  $D_{n(N)}(\tilde{H}^{\F}_{\Lambda, 0})$ can be expressed as 
\be
D_{n(N)}(\tilde{H}^{\F}_{\Lambda, 0})=\sum_{\mathscr{X}\in \Upsilon_N} \int_{R_{\beta, n(N)}} d{\bs s}\,  \mathbb{P}_{\mathscr{X}}({\bol s})e^{-\beta G}, \label{PathIntD}
\ee
where $\Upsilon_N=\{\mathscr{X}=(X_j)_{j=1}^N\, :\, X_j\subseteq \Lambda_L\}$, the set of all \lq\lq{}paths with length $N$\rq\rq{} in the fermion configuration space.
Equation \eqref{PathIntD} provides a path integral representation of the operator $D_{n(N)}(\tilde{H}^{\F}_{\Lambda, 0})$. It expresses the operator as a sum over all possible paths $\mathscr{X}$ in the fermion configuration space, where each path is associated with an operator $\mathbb{P}_{\mathscr{X}}({\bol s})$ and weighted by the exponential factor $e^{-\beta G}$. This path integral representation allows us to analyze the operator in terms of paths and provides a convenient framework for further calculations and analysis.

Because $\mathbb{P}_{\mathscr{X}}({\bol s})e^{-\beta G} \unrhd 0$ w.r.t. $\tCone^{\F}$, we have the following:

\begin{Prop}\label{ExpHLower}
For all  $\beta \ge 0,\ N\in \BbbZ_+$ and $\mathscr{X}\in \Upsilon_N$, one obtains 
\be
e^{-\beta \tilde{H}^{\F}_{\Lambda}} \unrhd \int_{R_{\beta, n(N)}}d{\bs s}\,  \mathbb{P}_{\mathscr{X}}({\bol s})e^{-\beta G}, \label{PathLower}
\ee
where $n(N)=N|\Lambda_L|$, $\mathbb{P}_{\mathscr{X}}({\bol s})$ is given by \eqref{BBPX} and $G$ is defined by \eqref{DefiG}. 
\end{Prop}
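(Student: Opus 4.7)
My plan is to chain together three lower bounds for $e^{-\beta \tilde{H}_{\Lambda}}$ w.r.t.\ $\tCone$, each obtained by discarding a collection of non-negative terms from a representation already at hand, so that the proposition becomes an essentially bookkeeping consequence of the machinery assembled in Sections~\ref{Sect2}, \ref{Sect4} and \ref{LowerBddE}.

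First, I would invoke Corollary~\ref{CorRed} to reduce matters to $\tilde{H}_{\Lambda,0}$, giving $e^{-\beta \tilde{H}_{\Lambda}}\unrhd e^{-\beta \tilde{H}_{\Lambda,0}}$ w.r.t.\ $\tCone$. I would then substitute the Dyson expansion \eqref{DysonH0}--\eqref{DnH0}. Since $e^{-\beta G}\unrhd 0$ w.r.t.\ $\tCone$ (established in Section~\ref{PfSect4} via the Trotter argument for ${\bs L}(\mathbb{K})+{\bs R}(\mathbb{K})$ combined with $\mathbb{T}_{LR}\unrhd 0$ and Lemma~\ref{ppiexp1}), and since every $P_{\vepsilon,i}={\bs L}(\cdot){\bs R}(\cdot)\unrhd 0$ by Proposition~\ref{PPI3}, each conjugated factor $P(s)=e^{-sG}Pe^{sG}$ preserves the positivity and hence every $D_n(\tilde{H}_{\Lambda,0})\unrhd 0$ w.r.t.\ $\tCone$. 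Dropping all summands but the one of order $n=n(N)=N|\Lambda_L|$ then yields $e^{-\beta \tilde{H}_{\Lambda,0}}\unrhd D_{n(N)}(\tilde{H}_{\Lambda,0})$.

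Second, I would use the path decomposition \eqref{PathIntD} of $D_{n(N)}(\tilde{H}_{\Lambda,0})$ as a sum over $\mathscr{X}\in\Upsilon_N$. For any fixed $\mathscr{X}$ and any ${\bs s}\in R_{\beta,n(N)}$, the integrand $\mathbb{P}_{\mathscr{X}}({\bs s})\, e^{-\beta G}$ is a composition of the factors $P_{\vepsilon^{X_k}_j,j}(s_{k,j})=e^{-s_{k,j}G}P_{\vepsilon^{X_k}_j,j}e^{s_{k,j}G}$ and $e^{-\beta G}$, each preserving the positivity w.r.t.\ $\tCone$ by the same argument; hence the integrand is $\unrhd 0$. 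Retaining only the chosen path in the sum over $\Upsilon_N$ therefore gives $D_{n(N)}(\tilde{H}_{\Lambda,0})\unrhd\int_{R_{\beta,n(N)}}d{\bs s}\,\mathbb{P}_{\mathscr{X}}({\bs s})\,e^{-\beta G}$ w.r.t.\ $\tCone$, and chaining with the previous two bounds delivers \eqref{PathLower}.

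No serious obstacle is expected: the hard inputs, i.e.\ the positivity of $e^{-\beta G}$ and of each $P_{\vepsilon,i}$, have been settled by the material of Section~\ref{PfSect4} and Proposition~\ref{PPI3}, and what remains is purely bookkeeping. The one delicate point is that one must certify $\unrhd 0$ for each discarded term individually before truncating; this follows uniformly on $R_{\beta,n(N)}$ from the structural form $P_{\vepsilon,i}={\bs L}(\cdot){\bs R}(\cdot)$, and the operator-norm convergence of the Dyson series combined with Lemma~\ref{WeakPP} legitimises the truncation in the $\unrhd$-sense.
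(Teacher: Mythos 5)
Your overall route---Corollary \ref{CorRed}, the Dyson expansion \eqref{DysonH0}, positivity of every term, the path decomposition \eqref{PathIntD}, and discarding the remaining nonnegative summands via Lemma \ref{WeakPP}---is exactly the paper's. The genuine gap lies in how you certify positivity of the integrands. You assert that each conjugated factor $P(s)=e^{-sG}Pe^{sG}$ (and likewise each $P_{\vepsilon^{X}_j,j}(s)$) \emph{individually} preserves the positivity w.r.t.\ $\tCone$. This is neither established nor true in the paper's framework: $e^{sG}$ for $s>0$ is the inverse of the semigroup, it is unbounded (the phonon part of $\mathbb{K}$ is unbounded above), and there is no reason it should map $\tCone$ into itself; so $P(s)$ is not even a bounded operator to which the relation $\unrhd$ of Section \ref{Sect2} applies. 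Note also that your reasoning, if valid, would give positivity of $P(s_1)\cdots P(s_n)e^{-\beta G}$ for \emph{arbitrary} times, whereas the ordering ${\bs s}\in R_{\beta,n}$ is essential. The correct (and intended) argument telescopes the time-ordered product, $P(s_1)\cdots P(s_n)e^{-\beta G}=e^{-s_1 G}P\,e^{-(s_2-s_1)G}P\cdots P\,e^{-(\beta-s_n)G}$, so that only factors $e^{-uG}$ with $u\ge 0$ and $P$ (resp.\ $P_{\vepsilon,i}\unrhd 0$, by Proposition \ref{PPI3}) appear, each positivity preserving; the same telescoping gives $\mathbb{P}_{\mathscr{X}}({\bs s})e^{-\beta G}\unrhd 0$ on $R_{\beta,n(N)}$. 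Since you never invoke the simplex ordering for this purpose, the central positivity step is unjustified as written, although the fix is short and your conclusion is correct.

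A minor further slip: to obtain $e^{-\beta G}\unrhd 0$ from Lemma \ref{ppiexp1} you need $-\mathbb{T}_{LR}\unrhd 0$, not $\mathbb{T}_{LR}\unrhd 0$ as you wrote; because the hopping carries the prefactor $-t$ with $t>0$, one has $-\mathbb{T}_{LR}=t\sum_{\xi,\sharp}{\bs L}(B^{\sharp}_{\xi}){\bs R}(\{B^{\sharp}_{\xi}\}^{*})\unrhd 0$, and writing $G=G_0-(-\mathbb{T}_{LR})$ the lemma yields $e^{-\beta G}\unrhd e^{-\beta G_0}\unrhd 0$ w.r.t.\ $\tCone$.
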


As we shall observe, this lower bound serves as a valuable tool in establishing Theorem \ref{PISemi}.

\subsection{Reductions of the problem }

The proof of Theorem \ref{PISemi} turns out to be more intricate than anticipated. Here, we shall analyze the conditions for Theorem \ref{PISemi} to hold using the path integral representation.

\subsubsection{Reduction I}

Fix $\vphi, \psi\in \tCone^{\F}\backslash \{0\}$, arbitrarily.
Our purpose is to show the following:
\begin{Prop}\label{PIUnitary}
There exists a $\beta\ge 0$ such that 
$
\la\vphi|e^{-\beta\tilde{H}_{\Lambda}^{\F} }  \psi\ra>0.
$
\end{Prop}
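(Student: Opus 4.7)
The plan is to use the path integral lower bound provided by Proposition \ref{ExpHLower}: since
\[
e^{-\beta\tilde H_\Lambda}\;\unrhd\; \int_{R_{\beta,n(N)}} d{\bs s}\,\mathbb{P}_{\mathscr{X}}({\bs s})\,e^{-\beta G}\quad\text{w.r.t. }\tCone
\]
holds for every $N\in\BbbN$ and every path $\mathscr{X}=(X_j)_{j=1}^N$ in $\Upsilon_N$, and since $\vphi,\psi\in\tCone$ implies that the integrand is a pointwise nonnegative function of ${\bs s}$, it suffices to exhibit one triple $(\beta,N,\mathscr{X})$ together with one point of $R_{\beta,n(N)}$ at which the integrand is strictly positive. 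The whole Proposition would then follow by continuity.

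For concreteness I would work on the ``block-diagonal'' slice where the ${\bs s}$-variables are grouped in $N$ blocks of size $|\LL|$, each block collapsed to a single value $\sigma_k$. Along this slice, identity \eqref{SpecialE} lets one rewrite
\[
\mathbb{P}_{\mathscr{X}}({\bs s})=e^{-\sigma_1 G}\,\mathbb{E}_{X_1}\,e^{-(\sigma_2-\sigma_1)G}\,\mathbb{E}_{X_2}\cdots\mathbb{E}_{X_N}\,e^{\sigma_N G},
\]
so the matrix element $\la\vphi|\mathbb{P}_{\mathscr{X}}({\bs s})e^{-\beta G}\psi\ra$ factorises through the rank-one fermionic projections $E_{X_k}=|e_{X_k}\ra\la e_{X_k}|$. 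The first step of the proof is then to show that for any nonzero $\vphi,\psi\in\tCone$ one can find configurations $X_1,X_N\in\Theta_\Lambda(q)$ (for some common charge $q$) and small $\sigma_1,\beta-\sigma_N>0$ such that $\mathbb{E}_{X_1}e^{-\sigma_1 G}\vphi\neq 0$ and $\mathbb{E}_{X_N}e^{-(\beta-\sigma_N)G}\psi\neq 0$ as elements of $\THL$; this reduces to a standard approximation argument using the CONS of Lemma \ref{CONSsLmm} together with the positivity-improving property of the harmonic part of $e^{-sK_L}$ on the bosonic side.

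The main obstacle, and the reason the authors defer the heaviest computation to Appendix \ref{PfPuzzle}, is the intermediate step: one must choose the sequence $X_2,\ldots,X_{N-1}$ inside the same charge sector $\Theta_\Lambda(q)$ so that each consecutive one-step matrix element
\[
\la e_{X_k}\otimes\vartheta e_{X_k}\,|\,e^{-\tau_k G}\,|\,e_{X_{k+1}}\otimes\vartheta e_{X_{k+1}}\ra
\]
(integrated over the bosonic variables) is strictly positive. This requires two ingredients. Combinatorially, one must check that the pair-creation/annihilation structure of $\mathbb{T}_L$ and $\vartheta\mathbb{T}_L\vartheta^{-1}$, together with the cross-hopping $\mathbb{T}_{LR}$, generates enough moves on $\Theta_\Lambda(q)$ to connect any two configurations through a chain $X_1,\ldots,X_N$ of single-flip neighbours. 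Analytically, one must verify that after Gaussian integration over the phonons the complex phases $e^{\pm i\alpha(\phi_j-\phi_{j+\varepsilon})}$ and $e^{\pm i\alpha\phi_{-1}},e^{\pm i\alpha\phi_{-\ell}}$ in $G$ produce a strictly positive real number — essentially a reflection-positive Gaussian computation. Once this connectivity and positivity lemma is in place, chaining the one-step estimates yields $\la\vphi|e^{-\beta\tilde H_\Lambda}\psi\ra>0$ for $\beta$ large enough and completes the proof.
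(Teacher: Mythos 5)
Your overall starting point coincides with the paper's (use Proposition \ref{ExpHLower}, pick one path and one point of $R_{\beta,n(N)}$, and conclude by continuity and positivity of the integrand), but the core of your argument has a genuine gap. You propose to chain one-step matrix elements of the form $\la e_{X_k}\otimes\vartheta e_{X_k}|e^{-\tau_k G}|e_{X_{k+1}}\otimes\vartheta e_{X_{k+1}}\ra$ and to verify that each of them is strictly positive ``after Gaussian integration over the phonons.'' This is precisely the point that does not work as stated: because of the phases $e^{\pm\im\alpha(\phi_j-\phi_{j+\vepsilon})}$, $e^{\pm\im\alpha\phi_{-1}}$, $e^{\pm\im\alpha\phi_{-\ell}}$ and of the fermionic signs produced by the pair operators $b_j^*b_{j\pm1}^*$, an individual matrix element of $e^{-\tau G}$ between configuration states need not be positive, nor even real, and no ``reflection-positive Gaussian computation'' is offered that would force positivity. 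The paper circumvents exactly this obstruction: it never needs positivity of single amplitudes, because the $\mathbf{L}(\cdot)\mathbf{R}(\cdot)$ structure makes the relevant quantity a \emph{modulus squared} of a single fermion--boson amplitude (see \eqref{StrictI}), so that only \emph{nonvanishing} of the amplitude \eqref{KeyEx} must be shown (Proposition \ref{RedPI}). That nonvanishing is then obtained constructively in Appendix \ref{PfPuzzle}: a small-$\tau$ Dyson expansion along an explicit path that annihilates the fermions of $X_0$ and $Y_0$ cluster by cluster (giving $\pm\tau^N+O(\tau^{N+1})$, with a sign that is harmless since only nonvanishing matters), followed by the strictly positive vacuum-to-vacuum overlap of Lemma \ref{NonVanishing}, which uses the Feynman--Kac formula and the fact that the phases become translations after Fourier transform. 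For the last step it is essential that the bosonic vectors $F_\vphi,F_\psi$ can be chosen with $\hat F_\vphi,\hat F_\psi\in\BFock_{\LL,+}$ (Lemma \ref{LemmNonZero}, via Proposition \ref{GSP}); your sketch does not secure this momentum-space positivity, and without it even the final overlap could vanish.

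A second, related problem is your restriction of the intermediate configurations to a single charge sector $\Theta_\Lambda(q)$. The transformed hopping $\mathbb{T}_L$ only produces pair creation/annihilation $b_j^*b_{j\pm1}^*$ and its adjoint, which preserve $\hat Q^{(L)}_\Lambda$; the nonvanishing diagonal components of $\vphi$ and $\psi$ extracted as in Lemma \ref{LemmNonZero} may live in different sectors $q\neq q'$, and connecting them necessarily requires the boundary operators $B_{-1}^\sharp,B_{-\ell}^\sharp$ coming from the Dyson expansion of $e^{-sG}$ in $\mathbb{T}_{LR}$, which change $q$ by one. This is built into the paper's reduction (the insertions $B_{\xi_i}^{\sharp_i}(s_i)$ in \eqref{KeyEx}) and into the cluster construction of Lemma \ref{Puzzle}, but it is incompatible with your ``same charge sector'' chaining; moreover the moves are pair moves plus boundary single moves, not ``single-flip neighbours.'' So while your first reduction is sound, the connectivity-plus-positivity lemma you rely on is both unproved and, in the form stated, false in general; repairing it essentially forces you back to the paper's modulus-squared trick and explicit path construction.
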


We will provide a proof of Proposition \ref{PIUnitary} in Subsection \ref{PfThmPI}.
\begin{proof}[Proof of Theorem \ref{PISemi} assuming Proposition \ref{PIUnitary}] 
\  \smallskip\\
From Proposition \ref{PIUnitary}, it follows that the semigroup 
$
 e^{-\beta \tilde{H}_{\Lambda}^{\F}}$ is ergodic w.r.t. $\tCone^{\F}$. 
  Taking Definition \ref{DefRPCone3} into account, we conclude Theorem \ref{PISemi}.
\end{proof}
Based on the preceding discourse, to establish Theorem \ref{PISemi}, it is sufficient to demonstrate Proposition \ref{PIUnitary}. In the subsequent discussion, we will thoroughly examine the conditions for Proposition \ref{PIUnitary} to hold, utilizing the path integral representation constructed in the previous subsection.

The fundamental strategy is as follows. By virtue of Proposition \ref{ExpHLower}, in order to prove Proposition \ref{PIUnitary}, it is enough to demonstrate the existence of $\beta\ge 0$, $N\in \BbbZ_+$, and $\mathscr{X}\in \Upsilon_N$ such that
\be \la \vphi|\int_{R_{\beta, n(N)}} d{\bs s} \mathbb{P}_{\mathscr{X}}({\bs s}) \psi\ra>0. \label{PathIntBd}
\ee
In the ensuing analysis, we will effectively establish \eqref{PathIntBd}.

Because of \eqref{DefrPq} and \eqref{DirectSumP}, we can express $\vphi $ and $\psi$ as 
\begin{align}
\vphi=\bigoplus_{q\in \mathbb{I}_{\Lambda}} \sum_{X, Y\in \Theta_{\Lambda}(q)}\vphi^{(q)}_{X, Y}  |e_X \ra \la  e_Y|,
\ \ \psi=\bigoplus_{q\in \mathbb{I}_{\Lambda}} \sum_{X, Y\in \Theta_{\Lambda}(q)}  \psi^{(q)}_{X, Y} |e_X \ra \la  e_Y|
\end{align}
with $\vphi^{(q)}_{X, Y}, \psi^{(q)}_{X,Y} \in \BbbC$.
We begin with the following lemma.
\begin{lemm}\label{LemmNonZero}
There exist  $q, q\rq{}\in \mathbb{I}_{\Lambda}$,  $X_0\in \Theta_{\Lambda}(q)$ and $Y_0\in \Theta_{\Lambda}(q')$ such that 
$\vphi^{(q)}_{X_0X_0}>0 $ and $ \psi^{(q\rq{})}_{Y_0Y_0}>0$.
\end{lemm}

\begin{proof}
Because $\vphi$ is a non-zero vector, there exists a $q\in \mathbb{I}_{\Lambda}$ such that 
\be
\vphi^{(q)}=\sum_{X, Y\in \Theta_{\Lambda}(q)} \vphi^{(q)}_{XY} 
|e_X\ra\la e_Y|\label{Defphiq}
\ee
 is non-zero. Trivially, it holds that $\vphi^{(q)} \ge 0$ w.r.t. $\tCone^{\F}(q)$.
Assume that $\vphi^{(q)}_{X, X}=\la e_{X}|\vphi^{(q)} e_{X}\ra=0$ for all $X\in \Theta_{\Lambda}(q)$.
Then, because $(\vphi^{(q)})^2\le \|\vphi^{(q)}\| \vphi^{(q)}$ holds, we have
\be
\la e_{X}|(\vphi^{(q)} )^2e_{X}\ra
\le \| \vphi^{(q)}\|\la e_{X}|\vphi^{(q)} e_{X}\ra=0,
\ee
which implies that $\|\vphi^{(q)}\|_2^2=\Tr[(\vphi^{(q)})^2]=0$.  This contradicts with the condition that $\vphi^{(q)} \neq 0$.
Similarly, the proof can be established for $\psi^{(q')}_{Y_0Y_0}$.
\end{proof}

Let us define  $\BbbZ_{\ge 2}=\{N\in \BbbZ : N\ge 2\}$.  Consider $N \in \mathbb{Z}_{\geq 2}$, chosen arbitrarily. We examine  a path  $\mathscr{X}_0=(X_0, \mathscr{Z}, Y_0)\in \Upsilon_{N}$, where $\mathscr{Z}\in \Upsilon_{N-2}$ will be determined  later.
Consequently, we have
\be
\mathbb{P}_{\mathscr{X}_0}({\bol s}_0)=\mathbb{E}_{X_0} \mathbb{P}_{\mathscr{Z}}({\bol s}_*)  e^{-\beta G}\mathbb{E}_{Y_0}
e^{\beta G}. \label{ChoiceS0}
\ee
In this equation,  ${\bol s}_0\in R_{\beta, n(N)}$ is given by 
$
{\bol s}_0=({\bol s}_1, {\bol s}_{*}, {\bol s}_N),
$
where 
\be {\bol s}_1=(\underbrace{0, \dots, 0}_{|\Lambda_L|}),\ \ 
{\bol s}_N=(\underbrace{\beta, \dots, \beta}_{|\Lambda_L|})
\ee 
and 
\be
{\bol s}_*=(s_{*, 1}, \dots, s_{*, n(N-2)})\in R_{\beta, n(N-2)}.
\ee
Furthermore, it should be noted that the property \eqref{SpecialE} has been employed to derive \eqref{ChoiceS0}.
For $N=2$, we understand that $X_0=Y_0$, $\mathscr{Z}=\varnothing$ and $\mathbb{P}_{\mathscr{X}_0}({\bol s}_0)=\mathbb{E}_{X_0} 
e^{\beta G}.$
By using Proposition \ref{ExpHLower}, we have
\begin{align}
\la\vphi|e^{-\beta\tilde{H}_{\Lambda} } \psi\ra
 &\ge\int_{R_{\beta, n(N)}}d{\bs s}\,  \la\vphi|\mathbb{P}_{\mathscr{X}_0}({\bol s})e^{-\beta G}\psi\ra.\label{IntLower}
\end{align}

Let us recall the direct sum representation of $\THL^{\F}$, as given in equation \eqref{THLDirect}. Then, for a given $q \in \mathbb{I}_{\Lambda}$ and $X \in \Theta_{\Lambda}(q)$, we define:
\begin{align}
E^{(q)}_{X}&=(0, \dots, 0, \underbrace{E_{X}}_{q^{\rm th}}, 0, \dots, 0)\in \THL^{\F}.
\end{align}
\begin{lemm}\label{LemmRed1}
Let $q, q\rq{}\in \mathbb{I}_{\Lambda},$ $X_0\in \Theta_{\Lambda}(q)$ and $ Y_0\in \Theta_{\Lambda}(q\rq{})$
be given in Lemma \ref{LemmNonZero}. 
 Suppose that there exist $\beta\ge 0$,  $N\in \BbbZ_{\ge 2}$,  $\mathscr{Z}\in \Upsilon_{N-2} $ and $ {\bol s}\in R_{\beta, n(N-2)}$   such that 
\be
\big\la 
E^{(q)}_{X_0}
 \big|\mathbb{P}_{\mathscr{Z}}({\bol s}) e^{-\beta G}
 E^{(q\rq{})}_{Y_0}
\big \ra>0.\label{PZs}
\ee
Then  $\la \vphi|e^{-\beta \tilde{H}^{\F}_{\Lambda}}  \psi\ra>0$ holds. Hence, the semigroup $e^{-\beta \tilde{H}^{\F}_{\Lambda}} $ is ergodic 
w.r.t. $\tCone^{\F}$.
\end{lemm}
\begin{proof}
It is worth noting that $
\la \vphi|\mathbb{P}_{\mathscr{X}_0}({\bol s})e^{-\beta G}  \psi\ra
$ is continuous in $\bol s$ and $
\mathbb{P}_{\mathscr{X}_0}({\bol s})e^{-\beta G} \unrhd 0
$ w.r.t. $\tCone^{\F}$ for  all ${\bol s} \in R_{\beta, n(N)}$. Therefore, if there exists an  ${\bol s}\in R_{\beta, n(N)}$  such that $
\la \vphi|\mathbb{P}_{\mathscr{X}_0}({\bol s})e^{-\beta G} \psi\ra>0
$ holds, then the right-hand side of \eqref{IntLower} is strictly positive. Consequently, due to  \eqref{ChoiceS0}, it suffices to show that $\la \vphi| \mathbb{E}_{X_0} \mathbb{P}_{\mathscr{Z}}({\bol s}) e^{-\beta G}\mathbb{E}_{Y_0} \psi\ra>0$ for some ${\bs s}\in R_{\beta, n(N-2)}$.
To accomplish this, we note that
$
\vphi\ge \vphi^{(q)}\ \ \mbox{w.r.t. $\tCone^{\F}$}
$
holds for all $q\in \mathbb{I}_{\Lambda}$, where $\vphi^{(q)}$ is given by \eqref{Defphiq} and we identify $\vphi^{(q)}$ with
\be
(0, \dots, 0, \underbrace{\vphi^{(q)}}_{q^{\rm th}}, 0,\dots, 0) \in \IL^{\F}.
\ee
Similarly, we have $\psi \geq \psi^{(q')}$ as well.
Using these inequalities, we have 
\begin{align}
\la \vphi| \mathbb{E}_{X_0} \mathbb{P}_{\mathscr{Z}}({\bol s}) e^{-\beta G} \mathbb{E}_{Y_0}  \psi\ra
\ge& \big\la   \vphi^{(q)}| \mathbb{E}_{X_0} \mathbb{P}_{\mathscr{Z}}({\bol s}) e^{-\beta G} \mathbb{E}_{Y_0} \psi^{(q\rq{})}\big\ra\no
=& \vphi^{(q)}_{X_0X_0}\psi^{(q\rq{})}_{Y_0Y_0}\big\la  E_{X_0}^{(q)} \big| \mathbb{P}_{\mathscr{Z}}({\bol s})  e^{-\beta G} E^{(q\rq{})}_{Y_0} \big\ra
>0, \label{LowerB4}
\end{align}
where
\be
E^{(q)}_{X}=(0, \dots, 0, \underbrace{E_{X}}_{q^{\rm th}}, 0, \dots, 0)\in \THL^{\F}.
\ee
The proof of Lemma \ref{LemmRed1} is now concluded.
\end{proof}

\subsubsection{Reduction  II}
Our objective here is to prove Proposition \ref{RedPI}. Let us  define $\boldsymbol{u} \in R_{\beta, n(N-2)}$ as
\be
{\bol u}:=(\underbrace{u_1, \dots,u_1}_{|\Lambda_L|},  \underbrace{u_1+u_2, \dots, u_1+u_2}_{|\Lambda_L|},  \dots, \underbrace{u_{1}+\cdots+u_{N-2}, \dots, u_{1}+\cdots+u_{N-2}}_{|\Lambda_L|}), \label{Defu}
\ee
where $u_1, \dots, u_{N-2}\in [0, \beta)$ satisfy $u_1+\cdots+u_{N-2}=\beta$. Inserting  ${\bs s}={\bs u}$
into  the left-hand side of \eqref{PZs} and using \eqref{SpecialE}, 
 we obtain
\begin{align}
\big\la E^{(q)}_{X_0}|\mathbb{P}_{\mathscr{Z}}({\bol u}) e^{-\beta G}E^{(q')}_{Y_0}\big\ra
=
\Big\la E^{(q)}_{X_0}\Big|e^{-u_1 G}\mathbb{E}_{X_1}e^{-u_2 G} \mathbb{E}_{X_2}\cdots \mathbb{E}_{X_{N-3}}e^{-u_{N-2} G} E^{(q')}_{Y_0}\Big\ra. \label{A}
\end{align}
Here, we choose $\mathscr{Z}=(X_i)_{i=1}^{N-2}$ such that $X_{N-2}=Y_0$.
 In the subsequent analysis, our objective is to obtain a lower bound for the right-hand side of \eqref{A}.
We define $G_0^{\F}$ as $G_0^{\F}=\bs{L}(\mathbb{K}^{\F})+\bs{R}(\mathbb{K}^{\F})$.
Using the identity
$
G=G_0^{\F}+\mathbb{T}_{LR}
$ 
and the Dyson formula, 
we have
\be
e^{-s G}=\sum_{k=0}^{\infty}D_{s, k}(G),
\ee
where
\be
D_{s, k}(G)=\int_{R_{s, k}} d{\bs s}(-\mathbb{T}_{LR}(s_1)) \cdots (-\mathbb{T}_{LR}(s_k))e^{-sG_0^{\F}} \label{DefDGm}
\ee
with $\mathbb{T}_{LR}(s)=e^{-sG_0^{\F}} \mathbb{T}_{LR} e^{sG_0^{\F}}$.
Because $(-\mathbb{T}_{LR}(s_1)) \cdots (-\mathbb{T}_{LR}(s_k))e^{-sG_0^{\F}} \unrhd 0$ for ${\bol s } \in R_{s,k}$, we have
\be
e^{-sG} \unrhd D_{s, k}(G)\unrhd 0\ \mbox{w.r.t. $\tCone^{\F}$}, \quad   k\in \BbbZ_+.
\ee
Inserting this into the right-hand side of \eqref{A}, we find that 
\begin{align}
&\mbox{the right-hand side of \eqref{A}}\no
\ge&  \Big\la E^{(q)}_{X_0} \Big|D_{u_1, \vepsilon_1}(G) \mathbb{E}_{X_1} D_{u_2, \vepsilon_2}(G) \mathbb{E}_{X_2} \cdots \mathbb{E}_{X_{N-3}}   D_{u_{N-2}, \vepsilon_{N-2}}(G)E^{(q')}_{Y_0}\Big\ra \label{433Below}
\end{align}
for every ${\bol \vepsilon}=(\vepsilon_i)_{i=1}^{N-2}\in \{0, 1\}^{N-2}$, where $D_{u, \vepsilon=0}(G)=e^{-u G_0^{\F}}$
 and $D_{u, \vepsilon=1}(G)$  is  given by \eqref{DefDGm}
 with $k=1$.

We observe that  $\mathbb{T}_{LR}$ can be expressed as 
\be
\mathbb{T}_{LR}=\sum_{\xi\in \{-\ell, -1\}} \sum_{\sharp \in \{-, +\}}(-t) {\bs L}(b_{\xi}^{\sharp }) {\bs R}(\{b_{\xi}^{\sharp }\}^{*}),
\ee
where 
$b^{\sharp }_{\xi}=b_{\xi}$ if $\sharp =-$, and $b^{\sharp }_{\xi}=b_{\xi}^*$ if $\sharp =+$.
With this mind,  we  can rewrite $D_{u, 1}(G)$ as 
\be
D_{u, 1}(G)=\sum_{\xi\in \{-\ell, -1\}} \sum_{ \sharp \in \{-, +\}} t \int_0^u ds
J(\xi,  \sharp , s), 
\ee
where 
\be
J( \xi,  \sharp , s)=
 {\bs L}
\left(
b_{\xi  }^{\sharp }(s)e^{-u  \mathbb{K}^{\F}}
\right)
{\bs R}
\left(
\big\{
b_{\xi  }^{\sharp }(s) e^{-u  \mathbb{K}^{\F}}
\big\}^*
\right),\ \ b^{\sharp }_{\xi}(s)=e^{-s \mathbb{K}^{\F}} b_{\xi}^{\sharp }e^{s\mathbb{K}^{\F}}.
\ee
Because $J(\xi,   \sharp ,   s )\unrhd 0$ for $0\le s \le u$, we  obtain
\begin{align}
D_{u, 1}(G) \unrhd t \int_0^u  dsJ( \xi,  \sharp , s)\ \ \ \mbox{w.r.t. $\tCone^{\F}$} \label{DuGLower}
\end{align}
for each $\sharp \in \{-, +\}$ and $\xi\in \{-\ell, -1\}$.

Let ${\bol \mu}=(\mu_i)_{i=1}^k\subseteq \{1, 2, \dots, N-3\}$ be chosen such that $\mu_i < \mu_{i+1}$. 
 For  such a $\bol \mu$, we define
\be
{\bol \vepsilon}({\bol \mu})=(0, \dots, 0, \underbrace{1}_{\mu_1^{\rm th}}, 0, \dots, 0, \underbrace{1}_{\mu_2^{\rm th}}, 0, \dots
, \underbrace{1}_{\mu_k^{\rm th}}, 0, \dots, 0
) \in \{0, 1\}^{N-2}. \label{epsilonmu}
\ee
Corresponding to $\bol \mu$, we partition the path  $\mathscr{Z}\in \Upsilon_{N-2}$ as $\mathscr{Z}
=(\mathscr{X}_1^{\bol \mu}, \dots, \mathscr{X}_{k+1}^{\bol \mu})$, where 
\begin{align}
\mathscr{X}_1^{\bol \mu}&=(X_1, \dots, X_{\mu_1-1})\in \Upsilon_{\mu_1-1}, \no
\mathscr{X}_2^{\bol \mu}&=(X_{\mu_1}, \dots, X_{\mu_2-1})\in \Upsilon_{\mu_2-\mu_1}, \no
&\vdots\no
\mathscr{X}_{k+1}^{\bol \mu}&=(X_{\mu_k}, \dots, X_{N-2})\in \Upsilon_{\mu_{k+1}-\mu_k}
\end{align}
 with $\mu_{k+1}=N-2$. 
Here, it should be noted that we have chosen $X_{N-2}$ to be fixed as $X_{N-2}=Y_0$.
For each $i=1, 2,  \dots, k+1$,  let us define
\begin{align}
\mathscr{C}(\mathscr{X}^{\bol \mu}_i, {\bol u}_i)
=
e^{-u_{\mu_i}G_0^{\F}} \mathbb{E}_{X_{\mu_i}}e^{-u_{\mu_i+1}G_0^{\F}} \mathbb{E}_{X_{\mu_i+1}} \cdots 
e^{-u_{\mu_{i+1}-1} G_0^{\F}}\mathbb{E}_{X_{\mu_{i+1}-1}},
\end{align}
where ${\bol u}_i=(u_{\mu_i}, \dots, u_{\mu_{i+1}-1})$.
Inserting $\vepsilon=\vepsilon({\bs \mu})$ into the right-hand side of \eqref{433Below} and utilizing \eqref{DuGLower}, we find that 
\begin{align}
&\mbox{the right-hand side of \eqref{A}}\no
\ge &
\sum_{{\bol \xi}\in \{-\ell, -1\}^k} \sum_{{\bol \sharp }\in \{-, +\}^k}  t^k \int_0^{u_{\mu_1}}ds_1\dots \int_0^{u_{\mu_k}}ds_k
\mathcal{K}_{\mathscr{Z}, n(N-2)} ({\bol \xi},  {\bol \sharp }, {\bol u}, {\bol s}),
\end{align}
where
\begin{align}
\mathcal{K}_{\mathscr{Z}, n(N-2)}({\bol \xi},  {\bol \sharp }, {\bol u}, {\bol s})
=
& \Big\la E^{(q)}_{X_0}\Big|\mathscr{C}(\mathscr{X}^{\bol \mu}_1, {\bol u}_1)J(\xi_1,  \sharp _1, s_1)\mathscr{C}(\mathscr{X}^{\bol \mu}_2, {\bol u}_2)J(\xi_2,  \sharp _2, s_2)\cdots\no
& \cdots \mathscr{C}(\mathscr{X}^{\bol \mu}_k, {\bol u}_k)J(\xi_k,  \sharp _k, s_k)  \mathscr{C}(\mathscr{X}^{\bol \mu}_{k+1}, {\bol u}_{k+1})
 E^{(q')}_{Y_0}\Big\ra.
\end{align}
Furthermore,  we can rewrite $\mathcal{K}_{\mathscr{Z}, n(N-2)} ({\bol \xi},  {\bol \sharp }, {\bol u},  {\bs s})$ as 
\begin{align}
\mathcal{K}_{\mathscr{Z}, n(N-2)}({\bol \xi},  {\bol \sharp }, {\bol u}, {\bol s})=&
\Big|\Big\la e_{X_0} \Big|
C(\mathscr{X}^{\bol \mu}_1, {\bol u}_1)b_{\xi_{1}}^{\sharp _{1}}(s_{1})C(\mathscr{X}^{\bol \mu}_2, {\bol u}_2)b_{\xi_{2}}^{\sharp _{2}}(s_{2})\cdots \no
&\cdots C(\mathscr{X}^{\bol \mu}_k, {\bol u}_k)
b_{\xi_{k}}^{\sharp _{k}}(s_{k})
 C(\mathscr{X}^{\bol \mu}_{k+1}, {\bol u}_{k+1})
e_{Y_0}\Big\ra\Big|^2,
 \label{StrictI}
\end{align}
where, for given  path $\mathscr{X}=(X_i)_{i=1}^m\in \Upsilon_m$ and ${\bs s}=(s_i)_{i=1}^m \in R_{t, m}$, we set
\be
C(\mathscr{X}, {\bol s})
=
e^{-s_1\mathbb{K}^{\F}} E_{X_{1}}e^{-s_{2}\mathbb{K}^{\F}} E_{X_{2}} \cdots 
e^{-s_{m} \mathbb{K}^{\F}}E_{X_{m}}. \label{DefCXM}
\ee

Based on the preceding assertions and employing Lemma \ref{LemmRed1}, we deduce the ensuing outcome:
\begin{Prop}\label{RedPI}
Let $q, q\rq{}\in \mathbb{I}_{\Lambda},$ $X_0\in \Theta_{\Lambda}(q)$ and $ Y_0\in \Theta_{\Lambda}(q\rq{})$
 be given in Lemma \ref{LemmNonZero}. 
Suppose that there exist $0\le \beta<\infty$, $N\in \BbbZ_{\ge 2}, 0\le k\le N-3,$ $\xi_1, \dots, \xi_k\in \{-\ell, -1\}$, $\sharp _1, \dots, \sharp _k\in \{-, +\}, {\bs u} \in R_{\beta, n(N-2)}$ conforming to the form specified in \eqref{Defu}  and $\mathscr{Z}\in \Upsilon_{N-2}$ with $X_{N-2}=Y_0$ such that 
\begin{align}
\Big\la e_{X_0} \Big|&
C(\mathscr{X}^{\bol \mu}_1, {\bol u}_1)b_{\xi_{1}}^{\sharp _{1}}(s_{1})C(\mathscr{X}^{\bol \mu}_2, {\bol u}_2)b_{\xi_{2}}^{\sharp _{2}}(s_{2})\cdots  
b_{\xi_{k}}^{\sharp _{k}}(s_{k})
 C(\mathscr{X}^{\bol \mu}_{k+1}, {\bol u}_{k+1})
e_{Y_0}\Big\ra\neq 0.
\label{KeyEx}
\end{align}
Then $\big\la  \vphi \big|e^{-\beta \tilde{H}^{\F}_{\Lambda}}  \psi \big\ra>0$ holds. Thus, $e^{-\beta \tilde{H}^{\F}_{\Lambda}} $ is ergodic 
w.r.t. $\tCone^{\F}$. Note that, for $N=2$, we understand that the left-hand side of \eqref{KeyEx}  
equals to $
\delta_{X_0, Y_0}\big\la e_{X_0} \big|
e^{-\beta \mathbb{K}^{\F}}
e_{Y_0}\big\ra
$.
\end{Prop}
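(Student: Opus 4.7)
The plan is to synthesise the chain of order preserving lower bounds developed in Section 5.3 into a single statement. The nonvanishing of \eqref{KeyEx} supplies the seed of strict positivity, and each intermediate step is a $\unrhd$-inequality that transports it back to $\la\vphi|e^{-\beta\tilde H_\Lambda}\psi\ra>0$. First, I would invoke Proposition \ref{ExpHLower} with the path $\mathscr X_0=(X_0,\mathscr Z,Y_0)\in\Upsilon_N$---where $\mathscr Z$ is the one fixed in the hypothesis and the endpoint $X_{N-2}=Y_0$ is enforced---and the time vector $\bs s_0=(\bs s_1,\bs s_\ast,\bs s_N)$ built from the vector $\bs u$ of \eqref{Defu}. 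Identity \eqref{SpecialE} collapses the two extremal time-slabs into the projections $\mathbb E_{X_0}$ and $\mathbb E_{Y_0}$, producing \eqref{A}. Combining the order preserving estimates $\vphi\ge\vphi^{(q)}$, $\psi\ge\psi^{(q')}$ w.r.t.\ $\tCone$ with the operator inequalities $\vphi^{(q)}_{X_0X_0}\ge P_{F_\vphi}$ and $\psi^{(q')}_{Y_0Y_0}\ge P_{F_\psi}$ from Lemma \ref{LemmNonZero}, the boundary data on either side can be replaced by the rank one vectors $E^{(q)}_{X_0}\otimes P_{F_\vphi}$ and $E^{(q')}_{Y_0}\otimes P_{F_\psi}$ without losing the $\unrhd$-relation.

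Second, I would expand each remaining factor $e^{-u_iG}$ via the Dyson series $e^{-u_iG}=\sum_{k\ge 0}D_{u_i,k}(G)$ associated with the splitting $G=G_0+\mathbb T_{LR}$. Since $-\mathbb T_{LR}\unrhd 0$ w.r.t.\ $\tCone$ (Proposition \ref{PPI3}), each term $D_{u_i,k}(G)$ individually preserves $\tCone$, so truncating to a single prescribed value $k=\vepsilon_i\in\{0,1\}$ with $\bs\vepsilon=\bs\vepsilon(\bs\mu)$ as in \eqref{epsilonmu} only deepens the lower bound. For the slots with $k=1$, a further restriction to a single summand $J(\xi_i,\sharp_i,s_i)$ of $D_{u_i,1}(G)$ is permissible---each $J$ being positivity preserving by \eqref{DuGLower}---and one is left with the integrand $\mathcal K_{\mathscr Z,n(N-2)}(\bs\xi,\bs\sharp,\bs u,\bs s)$.

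The decisive third step is to collapse $\mathcal K$ into a squared modulus. Every elementary building block in the operator string---$e^{-uG_0}$, $\mathbb E_{X_j}$ and $J(\xi,\sharp,s)$---has the form ${\bs L}(A){\bs R}(A^*)$, with $A$ self-adjoint for the first two and $A=B^\sharp_\xi(s)e^{-u\mathbb K}$ for the third; since ${\bs L}(\cdot)$ commutes with ${\bs R}(\cdot)$, the whole string factors as ${\bs L}(\mathcal A){\bs R}(\mathcal A^*)$, where $\mathcal A$ is precisely the product of left-hand factors visible inside \eqref{KeyEx}. Writing $E^{(q)}_{X_0}\otimes P_{F_\vphi}=|e_{X_0}\otimes F_\vphi\ra\la e_{X_0}\otimes F_\vphi|$ and analogously at the other end, the elementary identity $\la|f\ra\la f|\,,\,{\bs L}(\mathcal A){\bs R}(\mathcal A^*)|g\ra\la g|\ra_2=|\la f|\mathcal A g\ra|^2$ reproduces \eqref{StrictI}. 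The hypothesis \eqref{KeyEx} then makes this squared modulus strictly positive, and propagating this strict inequality upward through the chain of $\unrhd$-estimates yields $\la\vphi|e^{-\beta\tilde H_\Lambda}\psi\ra>0$; since $\vphi,\psi\in\tCone\setminus\{0\}$ were arbitrary, $e^{-\beta\tilde H_\Lambda}$ is ergodic w.r.t.\ $\tCone$, and Definition \ref{DefRPCone2} delivers ergodicity w.r.t.\ $\rCone$. The main obstacle is the bookkeeping that makes this collapse work: one must check that the ${\bs L}$-factors, read left to right, reproduce exactly $\mathcal A$ and that the ${\bs R}$-factors, after being transported across by commutativity, give $\mathcal A^*$. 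Self-adjointness of $\mathbb K$ and of each $E_{X_j}$ takes care of most factors, but the $J$-blocks involve the nonself-adjoint boundary operators $b_\xi,b_\xi^*$ at $\xi\in\{-\ell,-1\}$, and it is precisely Lemma \ref{TransLR}(iii), together with the odd-$\ell$ identity $\vartheta^{-1}c_{\ell-1}\vartheta=b_{-\ell}$, that guarantees the correct pairing of these boundary terms---which is where hypothesis \textbf{(A)}(i) enters.
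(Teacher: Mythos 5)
Your proposal follows the paper's own route essentially verbatim: Proposition \ref{ExpHLower} applied to the path $(X_0,\mathscr{Z},Y_0)$ together with \eqref{SpecialE}, the boundary replacement $\vphi\ge\vphi^{(q)}\ge E^{(q)}_{X_0}\otimes P_{F_\vphi}$ (and likewise for $\psi$) from Lemma \ref{LemmNonZero}, the Dyson expansion of each $e^{-u_iG}$ with the term selection ${\bol\vepsilon}({\bol\mu})$ and the single summand $J(\xi_i,\sharp_i,s_i)$, and the collapse of $\mathcal{K}_{\mathscr{Z},n(N-2)}$ into the squared modulus \eqref{StrictI} via the ${\bs L}(\mathcal{A}){\bs R}(\mathcal{A}^*)$ factorization. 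The only point you leave implicit, and which the paper spells out, is that the successive lower bounds are integrals over the time variables, so one must use that the integrands are nonnegative (by $\unrhd 0$) and continuous in ${\bs s}$ to convert the strict positivity at the single point supplied by \eqref{KeyEx} into strict positivity of the integrals.
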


\subsection{Proof of Theorem \ref{PISemi}}\label{PfThmPI}

In Appendix \ref{PfPuzzle}, we prove the following:

\begin{lemm}\label{Puzzle}
There exist   $0\le \beta<\infty$, $N\in \BbbZ_{\ge 2}, 0\le k\le N-3,$ $\xi_1, \dots, \xi_k\in \{-\ell, -1\}$, $\sharp _1, \dots, \sharp _k\in \{-, +\}, {\bs u} \in R_{\beta, n(N-2)}$ 
in accordance with the form specified in \eqref{Defu}
and $\mathscr{Z}\in \Upsilon_{N-2}$ with $X_{N-2}=Y_0$, such that the validity of equation \eqref{KeyEx} is assured.
\end{lemm}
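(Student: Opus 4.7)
The lemma asks us to exhibit parameters $(\beta, N, k, \boldsymbol\xi, \boldsymbol\sharp, \mathbf u, \mathscr Z)$ for which the overlap in \eqref{KeyEx} is nonzero. My plan is to construct such a choice explicitly by combining (a)~a Perron--Frobenius / ergodicity property of the ``left-half'' Hamiltonian $\mathbb{K}$ on each parity sector of $\mathfrak F_{\LL}$, with (b)~boundary insertions $B^\sharp_\xi$ at $\xi\in\{-1,-\ell\}$ that adjust the charge between different sectors. Since $\mathbb{T}_L$ contains only pair-type fermion operators $b^*b^*$ and $bb$, the operator $\mathbb{K}$ preserves the parity of $\hat Q^{(L)}_\Lambda$, so the two-step structure (ergodicity within a parity sector, plus $B^\sharp_\xi$-insertions across parities) is natural.

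The technical core of the argument is to establish that, for $s$ sufficiently large, $e^{-s\mathbb{K}}$ is positivity-improving on each fixed sector $\IL(q)$ with respect to an auxiliary self-dual cone in which vectors of the form $e_X \otimes F$ (with $X\in\Theta_\Lambda(q)$ and $\hat F\in \BFock_{\LL,+}$) are strictly positive. Using the Fourier transform in the phonon variables, the phase factors $e^{\pm\im\alpha(\phi_j - \phi_{j+\vepsilon})}$ in $\mathbb{T}_L$ become translation operators with positive kernels, $e^{-s K_L}$ becomes Mehler-kernel positive, and $-\mathbb{W}_L$ is a diagonal positive multiplication operator (after the diagonalization used in Lemma~\ref{WLowerBd}). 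A Trotter/Dyson expansion then writes $e^{-s\mathbb{K}}$ as a sum of contributions each of which is positivity-preserving in this auxiliary cone, and Theorem~\ref{PFF} then yields strict positivity of the matrix elements between arbitrary strictly positive vectors for $s$ sufficiently large.

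With Step~1 in hand, I would construct the path $\mathscr Z$ as follows: take $k$ equal to the minimal number of boundary hops needed to connect $q$ and $q'$ through admissible sectors, choose $\xi_i\in\{-1,-\ell\}$ and $\sharp_i\in\{+,-\}$ so that the running charge ends at $q'$ (inserting extra $B^+_\xi B^-_\xi$ pairs if required for parity compatibility of the intermediate configurations), take the interior sub-paths $\mathscr X^{\boldsymbol\mu}_i$ to be short strings of projections onto configurations compatible with the running charge (culminating in $X_{N-2} = Y_0$), and take $\mathbf u$ of the form \eqref{Defu} with each $u_i$ sufficiently large. Then \eqref{KeyEx} factors as a product of propagator factors $\langle e_{X_j}\otimes \cdot | e^{-u_j\mathbb{K}} | e_{X_{j+1}}\otimes \cdot\rangle$, strictly nonzero by Step~1, and boundary factors $\langle \cdot |B_{\xi_i}^{\sharp_i}(s_i) |\cdot\rangle$, strictly nonzero by direct computation (the fermionic piece $b_\xi^{\sharp}$ acts between explicit configurations, and the phonon piece $e^{\pm\im\alpha\phi_\xi}$ gives a strictly positive Gaussian pairing with $F_\varphi, F_\psi$ on the Fourier side, thanks to $\hat F_\varphi, \hat F_\psi \ge 0$).

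The main obstacle is the positivity-improving property in Step~1. Although the ingredients $-\mathbb{W}_L$, $K_L$ and the individual summands of $\mathbb{T}_L$ are each well-behaved under an appropriate Fourier/position dual, the phase factors $e^{\pm\im\alpha(\phi_j - \phi_{j+\vepsilon})}$ couple the fermion and phonon sectors in a way that is not reality-preserving in the naive sense. A careful bookkeeping of the Dyson expansion is required to confirm that the net contribution of each order is nonnegative on the auxiliary cone. This part of the argument parallels, but is distinct from, the reflection-positivity analysis of Section~\ref{Sect4}, and it is the reason the detailed verification is deferred to Appendix~\ref{PfPuzzle}.
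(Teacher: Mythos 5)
Your proposal leaves the decisive step unproved, and it is not a deferrable technicality. The whole weight of your argument rests on Step~1: the existence of an auxiliary self-dual cone on (each sector of) $\IL$ in which every $e_X\otimes F$ with $\hat F\in\BFock_{\LL,+}$ is positive and with respect to which $e^{-s\mathbb{K}}$ is positivity preserving and ergodic/improving. You never exhibit such a cone, and there are two concrete reasons to doubt the plan as stated. First, vectors $e_X\otimes F$ and $e_{X'}\otimes F'$ with $X\neq X'$ are orthogonal, so they cannot all be \emph{strictly} positive with respect to one self-dual cone; at best they are positive, and then your appeal to Theorem~\ref{PFF} does not give strict positivity of $\la e_{X}\otimes F_\vphi|e^{-s\mathbb{K}}e_{Y}\otimes F_\psi\ra$ — for that you would need ergodicity (or positivity improvement) of $e^{-s\mathbb{K}}$ itself, which is essentially a statement of the same depth as Lemma~\ref{Puzzle}. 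Second, the sign structure of the pair terms $b_j^*b_{j\pm1}^*$, $b_{j\pm1}b_j$ in the $e_X$ basis is not coherent: this is visible in the paper's own computation, where each elementary step produces an amplitude $\pm\tau+O(\tau^2)$ with an \emph{uncontrolled} sign (Sublemma~\ref{FirstSub} and \eqref{Prot}). A cone-based argument would require these matrix elements to be of one sign, otherwise different Dyson/Trotter contributions can cancel; your "careful bookkeeping" remark is precisely the unsolved point. The paper avoids this entirely: it needs only \emph{nonvanishing} of \eqref{KeyEx} (note \eqref{StrictI} is a modulus squared), and it gets it by a small-$\tau$ expansion along a single explicitly constructed path of projections (Lemma~\ref{Conn2}), in which the intermediate configurations are pinned so no cancellation between paths can occur and the sign is irrelevant.

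A second genuine error is your claim that \eqref{KeyEx} "factors as a product of propagator factors and boundary factors." The projections $E_{X}$ are rank one only in the fermionic tensor factor; the phonon variables are never projected, so the expression is a single inner product in $\BFock_{\LL}$ of a long product of non-commuting phonon operators (the phases $e^{\pm\im\alpha\phi}$ dressed by $e^{-\tau K_L}$), not a product of scalars. This is exactly why the paper must carry the operators $\gamma\in\Ph$ through the entire chain and then prove separately (Lemma~\ref{NonVanishing}) — via the Fourier transform, positivity of translations, positivity improvement of $e^{-\vepsilon K_L}$, and a Feynman--Kac/strong-product-integral estimate controlling the fermionic dressing of $e^{-\vepsilon\mathbb{K}}$ — that the single residual phonon overlap $\la\varnothing;\gamma'F_\psi|e^{-\vepsilon\mathbb{K}}|\varnothing;\gamma F_\vphi\ra$ is bounded below by a strictly positive constant. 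Your "strictly positive Gaussian pairing" remark does not substitute for this, because the operators multiplying $F_\vphi$, $F_\psi$ depend on the whole path and are not simply $e^{\pm\im\alpha\phi_\xi}$. So while your two-stage skeleton (intra-sector propagation plus boundary insertions to shift $\hat Q^{(L)}$) mirrors the combinatorics the paper actually uses, the proposal as written does not prove the lemma: the ergodicity input is missing and likely unavailable in the form you assume, and the factorization step is false.
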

\begin{proof}[Proof of Theorem \ref{PISemi} assuming Lemma  \ref{Puzzle}] 
\  \smallskip\\
By combining this lemma with Proposition \ref{RedPI}, we derive the outcome presented in Proposition \ref{PIUnitary}. Consequently, we establish that the semigroup  $e^{-\beta H_{\Lambda}^{\F}}$ exhibits ergodicity with respect to $\rCone^{\F}$.
\end{proof}

\section{Long-range orders} \label{PfLRO}

\subsection{Energy inequality}

In this section, we establish the proof of Theorem \ref{LRO} by amalgamating the technique utilized in \cite{DLS} with our approach of order-preserving operator inequalities that has been established in the preceding sections.
The proof of the theorem for fermion-phonon systems is explained in detail below. The proof for fermionic systems can be given in the same way.

It is noteworthy that $\mathbb{W}$ can be represented as follows:
\begin{align}
\mathbb{W}=
-\frac{1}{2} \sum_{i\neq j} W(i-j) (\delta \hn_i-\delta \hn_j)^2+\frac{V|\Lambda|}{4},
\end{align} 
where $V=\sum_{j\in \Lambda} W(j)$.
Here, we have employed the fact that $(\delta \hn_j)^2=1/4\ \forall j\in \Lambda$.

For each $\bol{h}=\{h_j\}_{j\in \Lambda}\in \BbbR^{\Lambda}$, we introduce a modified Coulomb interaction denoted as $\mathbb{W}(\bol{h})$, given by:
\begin{align}
\mathbb{W}(\bol{h})= 
-\frac{1}{4} \sum_{i\neq j} W(i-j) \big\{ (\delta \hn_i-h_i-\delta \hn_j+h_j)^2
+(\delta \hn_i-h_{r(i)}-\delta \hn_j+h_{r(j)})^2\big\}+\frac{V|\Lambda|}{4}.
\end{align}
It is important to recall that we have extended the reflection mapping $r$ to the entire set $\Lambda$. Specifically, we set $r(i)=-i-1$ for {\it all} $i\in \Lambda$.
For every $\boldsymbol{h}\in \mathbb{R}^{\Lambda}$, we introduce a generalized Hamiltonian denoted by $\tilde{H}_{\Lambda}(\boldsymbol{h})$, defined as follows:
\begin{align}
\tilde{H}_{\Lambda}(\bol{h})=\mathbb{T}-\mathbb{W}(\bol{h}).
\end{align} 
Note that $\tilde{H}_{\Lambda}({\bol 0})=\tilde{ H}_{\Lambda} $.

We initiate our discussion with the following abstract lemma:

\begin{lemm}\label{EngInq}
Let $A, B_1, \dots, B_n, C_1, \dots, C_n$ denote self-adjoint operators that act on $\Fock_{\LL}$. Suppose that $B_1, \dots, B_n, C_1, \dots, C_n\in \mathfrak{A}_{\Lambda}$, where $\mathfrak{A}_{\Lambda}$ is given by \eqref{DefALamba}.
 Furthermore, assume that $A$ is positive and $e^{-\beta A} \in \mathfrak{A}_{\Lambda}\, \forall \beta \ge 0$. As a result, ${\bs L}(A)$ and ${\bs R}(A)$ can be defined.
Now, let us consider the Hamiltonian $H_A({\bol B}, {\bol C})$ acting on $\THL$: 
\be
H_A({\bol B}, {\bol C})={\bs L}(A)+{\bs R}(A)-\sum_{i, j=1}^n V_{ij}\{{\bs L}(B_j) {\bs R}(C_j)
+{\bs L}(C_j) {\bs R}(B_j)\},
\ee
where ${\bol B}=\{B_i\}_{i=1}^n$ and ${\bol C}=\{C_i\}_{i=1}^n$.
Let $E_A( {\bol B}, {\bol C}) =\inf \mathrm{spec}(
H_A({\bol B}, {\bol C})
)$. We assume that $(V_{ij})$ is real and  positive semi-definite. 
 For the sake of simplicity, suppose that  $H_A({\bol B}, {\bol C}), H_A({\bol B}, {\bol B})$ and $H_A({\bol C}, {\bol C})$
have ground states.
Under these assumptions, we obtain the inequality:
\be
E_A({\bol B}, {\bol C}) \ge \frac{1}{2}\{E_A( {\bol B}, {\bol B})
+
E_A( {\bol C}, {\bol C})
\}.
\ee
\end{lemm}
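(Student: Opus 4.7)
The plan is to carry out the classical Dyson–Lieb–Simon energy-inequality argument, adapted to the standard-form language of Section~\ref{L2OrderP}. First I would diagonalize the coupling: since $(V_{ij})$ is real symmetric and positive semi-definite, write $V_{ij}=\sum_{k}M_{ki}M_{kj}$ with real $M_{ki}$, and set $\widetilde B_k=\sum_i M_{ki}B_i$, $\widetilde C_k=\sum_i M_{ki}C_i$. Both are self-adjoint elements of $\mathfrak{A}_\Lambda$, and the interaction becomes the manifestly reflection-symmetric sum $\sum_k\bigl\{{\bs L}(\widetilde B_k){\bs R}(\widetilde C_k)+{\bs L}(\widetilde C_k){\bs R}(\widetilde B_k)\bigr\}$. (I read the displayed formula in the statement as depending on both summation indices; the bare indexing appears to be a typographical issue.)

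The key analytic tool is the reflection-positive Cauchy–Schwarz inequality. Under the identification $\tilde{\mathfrak{H}}_\Lambda\simeq\mathscr{L}^2(\mathfrak{F}_{\Lambda_L})$, every $\xi\in\tCone$ is a positive Hilbert–Schmidt operator, and the identity
$$\langle\xi|{\bs L}(X){\bs R}(Y)\xi\rangle=\mathrm{tr}(\xi X\xi Y)=\mathrm{tr}\bigl((\xi^{1/2}X\xi^{1/2})(\xi^{1/2}Y\xi^{1/2})\bigr)$$
combined with the Hilbert–Schmidt Cauchy–Schwarz inequality yields
$$\bigl|\langle\xi|{\bs L}(X){\bs R}(Y)\xi\rangle\bigr|\le\langle\xi|{\bs L}(X){\bs R}(X)\xi\rangle^{1/2}\langle\xi|{\bs L}(Y){\bs R}(Y)\xi\rangle^{1/2}$$
for self-adjoint $X,Y\in\mathfrak{A}_\Lambda$. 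Fixing a strictly positive reference vector $\xi_0\in\tCone$ and invoking the Trotter product formula
$$e^{-\beta H_A(\mathbf{B},\mathbf{C})}=\lim_{m\to\infty}\Bigl[e^{-\frac{\beta}{m}({\bs L}(A)+{\bs R}(A))}\,e^{\frac{\beta}{m}\sum_k({\bs L}(\widetilde B_k){\bs R}(\widetilde C_k)+{\bs L}(\widetilde C_k){\bs R}(\widetilde B_k))}\Bigr]^m$$
together with a power-series expansion of the interaction exponential, I would write $\langle\xi_0|e^{-\beta H_A(\mathbf{B},\mathbf{C})}\xi_0\rangle$ as an absolutely convergent sum of terms of the form $\langle\xi_0|{\bs L}(P_\sigma){\bs R}(Q_\sigma)\xi_0\rangle$, where $\sigma$ labels a choice of $(\widetilde B,\widetilde C)$ or $(\widetilde C,\widetilde B)$ at each interaction vertex and $P_\sigma$ (resp.\ $Q_\sigma$) is a product of free propagators $e^{-\beta A/m}$ interspersed with the operators on the ${\bs L}$-side (resp.\ ${\bs R}$-side). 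Following the classical DLS scheme, the iterated application of the Cauchy–Schwarz bound to this sum yields the semigroup inequality
$$\langle\xi_0|e^{-\beta H_A(\mathbf{B},\mathbf{C})}\xi_0\rangle\le\langle\xi_0|e^{-\beta H_A(\mathbf{B},\mathbf{B})}\xi_0\rangle^{1/2}\langle\xi_0|e^{-\beta H_A(\mathbf{C},\mathbf{C})}\xi_0\rangle^{1/2}.$$

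Taking $-\beta^{-1}\log$ and sending $\beta\to\infty$ then gives the claimed energy inequality, provided $\xi_0$ has non-zero overlap with the ground state of each of the three Hamiltonians; this can be arranged by choosing, for instance, $\xi_0=e^{-\gamma({\bs L}(A)+{\bs R}(A))}\eta_0$ for a strictly positive $\eta_0\in\tCone$ and $\gamma$ sufficiently large, and invoking analytic perturbation theory together with the strict positivity of $\xi_0$ w.r.t.\ $\tCone$. The main obstacle will be the bookkeeping in the iterated Cauchy–Schwarz step: one must verify that the mixed cross-terms generated at each application resum cleanly into the diagonal quantities $\langle\xi_0|{\bs L}(P){\bs R}(P)\xi_0\rangle$ and $\langle\xi_0|{\bs L}(Q){\bs R}(Q)\xi_0\rangle$ in the $m\to\infty$ Trotter limit, and that the resulting bound is tight enough to survive the $\log$ and $\beta\to\infty$ operations. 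This is precisely where the positive semi-definiteness of $(V_{ij})$—equivalently the factorisation $V=M^\top M$—and the standard-form structure encoded in Proposition~\ref{PPI3} do the essential work.
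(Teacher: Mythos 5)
Your route (Trotter expansion plus a reflection-positive Cauchy--Schwarz, then $-\beta^{-1}\log$ and $\beta\to\infty$) is genuinely different from the paper's, but as written it breaks down at the decisive last step. Your semigroup bound is only available for a reference vector $\xi_0\in\tCone$, and to convert it into the energy inequality you need $\lim_{\beta\to\infty}-\beta^{-1}\log\la\xi_0|e^{-\beta H_A({\bol B},{\bol C})}\xi_0\ra=E_A({\bol B},{\bol C})$, i.e.\ a nonvanishing overlap of $\xi_0$ with the bottom of the spectrum of the \emph{mixed} Hamiltonian (for $H_A({\bol B},{\bol B})$ and $H_A({\bol C},{\bol C})$ the trivial bound $\la\xi_0|e^{-\beta H}\xi_0\ra\le e^{-\beta E}$ suffices, so overlap is not the issue there). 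Your proposed justification --- strict positivity of $\xi_0$ w.r.t.\ $\tCone$ plus ``analytic perturbation theory'' --- does not deliver this: the mixed terms ${\bs L}(B_j){\bs R}(C_j)$ are not of the Perron--Frobenius form ${\bs L}(X){\bs R}(X^*)$, so $e^{-\beta H_A({\bol B},{\bol C})}$ need not preserve $\tCone$, its ground state has no reason to lie in (or even to overlap) the cone, and strict positivity of $\xi_0$ only guarantees nonzero pairing against vectors \emph{in} the cone. Without the overlap, the $\beta\to\infty$ limit of the left-hand side is merely $\ge E_A({\bol B},{\bol C})$, which is the wrong direction. A second inaccuracy: in your expansion the left and right insertions at each vertex are rigidly swapped ($B$ on the left forces $C$ on the right and vice versa), so the sum is \emph{not} an inner product of two independently indexed vectors; a Cauchy--Schwarz treatment yields domination by $\la\xi_0|e^{-\beta[\frac12H_A({\bol B},{\bol B})+\frac12H_A({\bol C},{\bol C})]}\xi_0\ra$ rather than the product form you assert. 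That particular point is harmless (the averaged form suffices, since $\inf\mathrm{spec}(X+Y)\ge\inf\mathrm{spec}\,X+\inf\mathrm{spec}\,Y$), but it shows the ``bookkeeping'' you defer is not merely routine.

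For contrast, the paper's proof needs no semigroup and no limit: after the same diagonalization of $(V_{ij})$, it observes that a ground state $\psi$ of $H_A({\bol B},{\bol C})$ can be chosen $J$-real, and then applies exactly your trace Cauchy--Schwarz \emph{to $\psi$ itself} via the polar decomposition $\psi=U|\psi|$: $\la\psi|{\bs L}(A)\psi\ra=\la|\psi|\,|{\bs L}(A)|\psi|\ra$ by cyclicity, and $|\Tr[\psi B\psi C]|\le\frac12\Tr[B|\psi|B|\psi|]+\frac12\Tr[|\psi|C|\psi|C]$, giving $\la\psi|H_A({\bol B},{\bol C})\psi\ra\ge\frac12\la|\psi|\,|H_A({\bol B},{\bol B})|\psi|\ra+\frac12\la|\psi|\,|H_A({\bol C},{\bol C})|\psi|\ra\ge\frac12\{E_A({\bol B},{\bol B})+E_A({\bol C},{\bol C})\}$, since $\||\psi|\|=1$. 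If you wish to keep the thermal route, the correct fix is the same idea: run your expansion bound with $\xi_0$ replaced by the $J$-real ground state of the mixed Hamiltonian, estimating each term through its polar decomposition as above; this removes the overlap problem entirely, but at that point the direct variational argument is both shorter and cleaner.
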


\begin{proof} 

To demonstrate the desired assertion, it is sufficient to consider the case where $V_{ij}=\delta_{ij}$. The rationale behind this is as follows:  Let $U$ be the unitary operator such that $V=U^{-1}{\rm diag}(\lambda_1, \dots, \lambda_n) U$, where $\lambda_1, \dots, \lambda_n\ge 0$ are the eigenvalues of the matrix $(V_{ij})$.
 By defining $\tilde{B}_i=\lambda_i^{1/2}\sum_{j=1}^n U_{ij}B_j$ and $\tilde{C}_i=\lambda_i^{1/2}\sum_{j=1}^n U_{ij}C_j$, we find  that 
 \be
 \sum_{i, j=1}^nV_{ij}\{{\bs L}(B_j) {\bs R}(C_j)
+{\bs L}(C_j) {\bs R}(B_j)\}
=\sum_{i=1}^n \{{\bs L}(\tilde{B}_i) {\bs R}(\tilde{C}_i)
+{\bs L}(\tilde{C}_i) {\bs R}(\tilde{B}_i)\}.
 \ee
Consequently, we establish the desired claim.
 
 Let $J$ be the involution associated with $\tCone$, as defined in equation \eqref{tConeJ}. 
We say that a vector $\xi$ in $\THL$ is $J$-{\it real}, if $J\xi=\xi$, that is, $\xi$ is self-adjoint. First, we claim that among the ground states of $H_A( {\bol B}, {\bol C})$, there is one that is $J$-real.
Let  $\xi\in \THL$. Then we can decompose $\xi$ as $\xi=\xi_r+\im \xi_i$ with $\xi_r
=(\xi+J\xi)/2, \xi_i=(\xi-J\xi)/2\im$.
Note that $\xi_r$ and $ \xi_i$ are $J$-real.
 Using this decomposition, we have
\begin{align}
&\la \xi|H_A({\bol B}, {\bol C})\xi\ra\no
=&\la \xi_r|H_A({\bol B}, {\bol C})\xi_r\ra+\la \xi_i|H_A({\bol B}, {\bol C})\xi_i\ra+\im \la \xi_r|H_A({\bol B}, {\bol C})\xi_i\ra-\im \la \xi_i|H_A({\bol B}, {\bol C})\xi_r\ra.
\end{align}
Since $A, B_i, C_i$ are self-adjoint, the terms $
\la \xi_r|H_A({\bol B}, {\bol C})\xi_i\ra
$ and $
\la \xi_i|H_A({\bol B}, {\bol C})\xi_r\ra
$
are real numbers. Therefore, we have  $
\la \xi|H_A({\bol B}, {\bol C})\xi\ra=\la \xi_r|H_A({\bol B}, {\bol C})\xi_r\ra+\la \xi_i|H_A({\bol B}, {\bol C})\xi_i\ra
$ holds, which implies that $
E_A({\bol B}, {\bol C})=\inf_{\xi:\ J\xi=\xi} \la \xi|H_A({\bol B}, {\bol C})\xi\ra
$. Thus, we can conclude that there exists a $J$-real ground state.

Let $\psi$ be a $J$-real ground state of $H_A({\bol B}, {\bol C})$. Assume that $\psi$ is normalized. 
By using the cyclic property  of the trace, we find that 
\be
\la \psi|{\bs L}(A)\psi\ra=\Tr[\psi A\psi]=\la |\psi||{\bs L}(A)|\psi|\ra.
\ee
Similarly, we have
\begin{align}
\la \psi|{\bs R}(A)\psi\ra=\la |\psi||{\bs R}(A)|\psi|\ra.
\end{align}
Let $\psi=U|\psi|$ be the polar decomposition of $\psi$.  By using the Cauchy--Schwarz inequality, we have
\begin{align}
|\la \psi| {\bs L}(B_i) {\bs R}(C_i) \psi\ra|&=|\Tr[\psi B_i \psi C_i ]|\no
& = |
\Tr[|\psi|^{1/2} BU |\psi|^{1/2} |\psi|^{1/2} CU |\psi|^{1/2}]
|\no
& \le (\Tr [B|\psi| B|\psi|])^{1/2} (\Tr[|\psi|C|\psi|C])^{1/2}\no
&\le \frac{1}{2}\Tr[B|\psi| B|\psi|]+\frac{1}{2} \Tr[|\psi|C|\psi|C]\no
&= \frac{1}{2} \la |\psi||{\bs L}(B) {\bs R}(B)|\psi|\ra+\frac{1}{2} \la |\psi||{\bs L}(C) {\bs R}(C)|\psi|\ra.
\end{align}
Applying above facts, we have
\begin{align}
\la \psi|H_A({\bol B}, {\bol C})\psi\ra
\ge& \frac{1}{2} \la |\psi||H_A({\bol B}, {\bol B})|\psi|\ra+\frac{1}{2}\la |\psi||H_A({\bol C}, {\bol C})|\psi|\ra\no
\ge &\frac{1}{2}\{E_A({\bol B}, {\bol B})
+
E_A( {\bol C}, {\bol C})
\}.
\end{align}
In the last inequality, we have utilized the fact that $\||\psi|\|=1$.
\end{proof}
Applying Lemma \ref{EngInq}, we obtain the following:
\begin{Prop}\label{EnergyI}
Let $E({\bol h})=\inf \mathrm{spec}(\tilde{H}_{\Lambda}({\bol h}))$.
For each $\bol{h}\in \BbbR^{\Lambda}$,  we have
$E(\bol{0}) \le E(\bol{h})$.
\end{Prop}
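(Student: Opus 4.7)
\begin{Proof}[Proof proposal.]
The natural tool is Lemma~\ref{EngInq}, and the plan is to rewrite $\tilde H_\Lambda({\bf h})$ in the form $H_A({\bf B}, {\bf C})$ with $A = \mathbb{K}$ independent of ${\bf h}$, apply the resulting convexity inequality, and reduce the symmetrized energies on the right-hand side to $E({\bf 0})$.

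First, expand both squares in the definition of $\mathbb{W}({\bf h})$:
\begin{equation*}
(\delta\hat n_i - h_i - \delta\hat n_j + h_j)^2 = (\delta\hat n_i - \delta\hat n_j)^2 - 2(h_i - h_j)(\delta\hat n_i - \delta\hat n_j) + (h_i - h_j)^2,
\end{equation*}
and analogously for the second square with $h\to h\circ r$.  Collecting yields
$\mathbb{W}({\bf h}) = \mathbb{W}({\bf 0}) + \sum_j \beta_j({\bf h})\delta\hat n_j + c_0({\bf h})\cdot 1$
with $\beta_j({\bf h}) = \sum_i L_{ij}(h_i + h_{r(i)})$, $L_{ij} = W_0 \delta_{ij} - W(i-j)$.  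A change of variables $i\mapsto r(i)$ together with $W(-k)=W(k)$ and $r(a)-r(b)=b-a$ gives $L_{r(i),r(j)} = L_{ij}$, hence the key reflection identity $\beta_{r(j)}({\bf h}) = \beta_j({\bf h})$.  This invariance is the whole point of averaging over $h$ and $h\circ r$ in the definition of $\mathbb{W}({\bf h})$.

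Second, pass to $\tilde{\mathfrak H}_\Lambda$ via $\mathcal W_\Lambda = \mathcal U_\Lambda \mathcal V_\Lambda$.  Since $\mathcal W_\Lambda \delta\hat n_j \mathcal W_\Lambda^* = (-1)^j \delta\hat n_j$ by Lemma~\ref{Uni}, setting $\gamma_j = (-1)^j \beta_j$ the linear term becomes $-\sum_j \gamma_j({\bf h})\delta\hat n_j$, and the reflection identity reads $\gamma_{r(j)} = -\gamma_j$.  Using $\delta\hat n_j \to {\bs L}(\delta\hat n_j)$ for $j\in\Lambda_L$ and $\delta\hat n_j \to -{\bs R}(\delta\hat n_{r(j)})$ for $j\in\Lambda_R$, the two sign flips cancel and produce the \emph{symmetric} combination
$-\sum_{k\in\Lambda_L}\gamma_k({\bf h})[{\bs L}(\delta\hat n_k) + {\bs R}(\delta\hat n_k)]$, which is $J$-invariant.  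Completing the square in the cross Coulomb term $\mathbb{W}_{LR} = 2\sum_{i,j\in\Lambda_L} V_{ij}{\bs L}(\delta\hat n_i){\bs R}(\delta\hat n_j)$ with $V_{ij} = W(i+j+1)$, positive semi-definite by hypothesis~{\bf (B.1)}, casts $\tilde H_\Lambda({\bf h})$ in the required form
\begin{equation*}
\tilde H_\Lambda({\bf h}) + c({\bf h})\cdot 1 = {\bs L}(\mathbb K) + {\bs R}(\mathbb K) - \sum_{i,j\in\Lambda_L} V_{ij}\bigl\{{\bs L}(B_i^{\bf h}){\bs R}(C_j^{\bf h}) + {\bs L}(C_j^{\bf h}){\bs R}(B_i^{\bf h})\bigr\} - \mathbb T_{LR},
\end{equation*}
with $\mathbb K$ the ${\bf h}$-independent operator from \eqref{DefKW}, $B_i^{\bf h}, C_i^{\bf h}$ shifted density operators absorbing the ${\bf h}$-dependence, and $c({\bf h})$ a c-number.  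The kinetic cross $\mathbb{T}_{LR}$ is already of the Lemma~\ref{EngInq} form (a rank-two positive coupling) and may be merged into the same sum.

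Third, apply Lemma~\ref{EngInq}:
$E({\bf h}) + c({\bf h}) \ge \tfrac12\bigl[E_{\mathbb K}({\bf B}^{\bf h}, {\bf B}^{\bf h}) + E_{\mathbb K}({\bf C}^{\bf h}, {\bf C}^{\bf h})\bigr]$.
The two symmetrized Hamiltonians on the right correspond, via the ${\bs L}/{\bs R}$ dictionary, to the shifted Hamiltonians $\tilde H_\Lambda(\tilde{\bf h}_B), \tilde H_\Lambda(\tilde{\bf h}_C)$ for specific reflection-symmetric shifts $\tilde{\bf h}_B, \tilde{\bf h}_C$ obtained from ${\bf h}$, up to explicit c-numbers.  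Iterating the inequality using reflections about the other reflection axes of the periodic chain (exploiting the translation invariance of the system, in the spirit of the chessboard estimate of \cite{DLS}) progressively symmetrizes the shifts.  In the limit the shifts become constant, and for a constant shift ${\bf h}$ all differences $h_i - h_j$ vanish, so $\tilde H_\Lambda({\bf h}) = \tilde H_\Lambda({\bf 0})$ and the corresponding energy equals $E({\bf 0})$.  Book-keeping the c-numbers produced at each iteration so that they cancel with $c({\bf h})$ then yields $E({\bf h}) \ge E({\bf 0})$.

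The main obstacle is the last step: arranging the iteration of Lemma~\ref{EngInq} over different reflection axes on the periodic chain so that the shifts are reduced to constant ones in a finite number of steps, and verifying that the accumulated c-numbers cancel.  A secondary technical issue is the completion of the square when $(V_{ij})$ is only semi-definite rather than strictly positive, but this is easily handled by working with the pseudo-inverse of $V$ on $\mathrm{range}(V)$ and noting that $\gamma({\bf h}) \in \mathrm{range}(V)$ by construction.
\end{Proof}
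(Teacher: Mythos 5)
Your overall strategy coincides with the paper's: put $\tilde H_\Lambda(\bol h)$ into the form of Lemma~\ref{EngInq}, deduce $E(\bol h_L,\bol h_R)\ge \min\{E(\bol h_L, r(\bol h_L)),\,E(r(\bol h_R),\bol h_R)\}$, and iterate over the other reflection axes of the periodic chain via translation invariance until the shift is constant, where $\tilde H_\Lambda(\bol h)=\tilde H_\Lambda(\bol 0)$ because only differences of the $h_j$ enter. The gap is in how you reach the Lemma~\ref{EngInq} form. You expand the squares in $\mathbb W(\bol h)$ into a linear perturbation $\sum_j\gamma_j(\bol h)\,\delta\hn_j$ and then try to reabsorb it by completing the square against the cross matrix $V_{ij}=W(i+j+1)$, $i,j\in\Lambda_L$. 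That requires $\gamma(\bol h)\in\mathrm{range}(V)$, which you assert ``by construction'' but never verify, and which fails whenever $(W(i+j+1))_{i,j\in\Lambda_L}$ is singular --- under {\bf (B. 1)} it is only positive semi-definite, and for a finite-range $W$ it has very low rank (rank one for nearest neighbours), while $\gamma(\bol h)$ is built from the unrelated matrix $W_0\delta_{ij}-W(i-j)$ and sweeps out a high-dimensional set as $\bol h$ varies. So the completion of the square genuinely breaks down, and with it the reduction to Lemma~\ref{EngInq}.

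The detour is also unnecessary, and this is precisely why $\mathbb W(\bol h)$ is defined with the squares kept and symmetrized over $\bol h$ and $(h_{r(j)})_j$: its cross part is \emph{already} of the required form, $\mathbb W_{LR}(\bol h)=\tfrac12\sum_{i,j\in\Lambda_L}W(i+j+1)\{{\bs L}(\delta\hn_i-h_i){\bs R}(\delta\hn_j-h_{r(j)})+{\bs L}(\delta\hn_i-h_{r(i)}){\bs R}(\delta\hn_j-h_j)\}$, so one simply reads off $B_i=\delta\hn_i-h_i$ and $C_i=\delta\hn_i-h_{r(i)}$; the remaining $\bol h$-dependence sits inside the diagonal piece ${\bs L}(\cdot)+{\bs R}(\cdot)$, and Lemma~\ref{EngInq} does not require $A$ to be independent of $\bol h$. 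A smaller slip: $\tilde H_\Lambda(\bol h)=\mathbb T-\mathbb W(\bol h)$ is already defined in the transformed picture on $\tilde{\mathfrak H}_\Lambda$, so conjugating once more by $\mathcal W_\Lambda$ and inserting the factors $(-1)^j$ double-counts the hole--particle transformation. The final iteration over reflection axes, which you flag as the main obstacle, is handled in the paper at essentially the level of detail you propose (translation invariance plus repeated application of the one-reflection inequality), so that part of your plan is consistent with the paper's.
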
 
\begin{proof}
First, observe that $\tilde{H}_{\Lambda}(\bol h)$ can be represented as follows:
\be
\tilde{H}_{\Lambda}({\bol h})={\bs L}(\mathbb{T}_L-\mathbb{W}_L({\bol h})+K_L)+{\bs R}(\mathbb{T}_L-\mathbb{W}_L({\bol h})+K_L)-\mathbb{V}({\bol h}),
\ee
where
\begin{align}
\mathbb{W}_L({\bol h})=&
-\frac{1}{4} \sum_{i,  j\in \Lambda_L} W(i-j) \big\{ (\delta \hn_i-h_i-\delta \hn_j+h_j)^2+
\no&+(\delta \hn_i-h_{r(i)}-\delta \hn_j+h_{r(j)})^2\big\}+\frac{V|\Lambda|}{8}
\end{align}
and 
\begin{align}
\mathbb{V}({\bol h})&=-\mathbb{T}_{LR}+\mathbb{W}_{LR}({\bol h})
\end{align}
with 
\begin{align}
\mathbb{W}_{LR}({\bol h})=&\frac{1}{2} \sum_{i, j\in \Lambda_L} W(i+j+1) \big\{
{\bs L}(\delta\hn_i-h_i) {\bs R}(\delta \hn_j-h_{r(j)})+\no
&+{\bs L}(\delta\hn_i-h_{r(i)} ){\bs R}(\delta \hn_j-h_{j})
\big\}.
\end{align}
For ${\bol h}_L=(h_{L, i})_{i\in \Lambda_L} \in \BbbR^{\Lambda_L}$ and ${\bol h}_R=(h_{R, i})_{i\in \Lambda_R} \in \BbbR^{\LR}$, we define ${\bol h}=({\bol h}_L, {\bol h}_R) \in \BbbR^{\Lambda}$ by 
\be
h_i=\begin {cases}
h_{L, i} & \mbox{if $i\in \Lambda_L$}\\
h_{R, i} & \mbox{if $i \in \Lambda_R$}.
\end{cases}
\ee
By applying Lemma \ref{EngInq}, we obtain 
\be
E({\bol h}_L, {\bol h}_R) \ge \frac{1}{2}E({\bol h}_L, r({\bol h}_L)) +\frac{1}{2}E(r({\bol h}_R), {\bol h}_R)\ge \min\{
E({\bol h}_L, r({\bol h}_L)), E(r({\bol h}_R), {\bol h}_R)
\},
\ee
where $r({\bol h}_L)=(h_{L, r(i)}) \in \BbbR^{\Lambda_R}$ and $r({\bol h}_R)=(h_{R, r(i)}) \in \BbbR^{\Lambda_L}$.
By considering the translation invariance and repeating the aforementioned argument multiple times, we derive the statement presented in the proposition.
\end{proof}

For any $\bol{h}, \bol{h}'\in \ell^2(\Lambda)$, we define the following inner products:
\begin{align}
\la \bol{h}|\bol{h}'\ra_W&=\sum_{i\neq j} W(i-j)
 (h_i^*-h_j^*)(h_i'-h_j'),\ \\ 
\la \bol{\delta \hn}|\bol{h}\ra_W&=\sum_{i\neq j}
 W(i-j)(\delta \hn_i-\delta \hn_j)(h_i-h_j). 
\end{align} 
It is important to note that $\langle \bol{h}|\bol{h}'\rangle_W$ is a complex number, while $\langle \bol{\delta \hn}|\bol{h}\rangle_W$ is a linear operator.

Let $\tilde{\psi}_{\Lambda}$ denote the ground state of $\tilde{H}_{\Lambda}$, and $\langle\!\langle \cdot \rangle\!\rangle_{\Lambda}$ represent the ground state expectation value, defined as $\langle\!\langle A \rangle\!\rangle_{\Lambda}=\langle \tilde{\psi}_{\Lambda}|A\tilde{\psi}_{\Lambda}\rangle$.
\begin{coro}\label{CoroEnergyI}
Let $E= \inf \mathrm{spec}(\tilde{H}_{\Lambda})$.
For each $\bol{h}\in \BbbC^{\Lambda}$,  we have
\be
\Big\la\!\!\Big\la \la \bol{\delta \hn}|\bol{h}\ra_{W} (\tilde{H}_{\Lambda}-E)^{-1} \la
 \bol{\delta \hn}|\bol{h}\ra_{W}\Big\ra\!\!\Big\ra_{\Lambda}
\le \la \bol{h}|\bol{h}\ra_W.\label{CoroEngInq}
\ee
\end{coro}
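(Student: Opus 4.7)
The plan is to derive \eqref{CoroEngInq} as the second-order perturbative consequence of Proposition \ref{EnergyI}.

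First I would expand $\tilde H_\Lambda(\bol h)$ as a quadratic polynomial in $\bol h$. Squaring out the definition of $\mathbb{W}(\bol h)$ and using $W(r(i)-r(j)) = W(i-j)$ (which implies $\la r\bol h|r\bol h\ra_W = \la\bol h|\bol h\ra_W$ for real $\bol h$, where $(r\bol h)_i := h_{r(i)}$) yields
\[
\tilde H_\Lambda(\bol h) = \tilde H_\Lambda + V_1(\bol h) + \tfrac{1}{2}\la\bol h|\bol h\ra_W,\qquad V_1(\bol h) := -\tfrac{1}{2}\bigl(\la\bol{\delta\hn}|\bol h\ra_W + \la\bol{\delta\hn}|r\bol h\ra_W\bigr),
\]
a linear-plus-scalar-quadratic perturbation of $\tilde H_\Lambda$.

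Next, fixing a real $\bol h \in \BbbR^\Lambda$, I would apply Proposition \ref{EnergyI} to the one-parameter family $\{t\bol h\}_{t\in\BbbR}$ to see that $t=0$ is a global minimum of $t\mapsto E(t\bol h)$. Since $\tilde H_\Lambda$ has purely discrete spectrum with unique isolated ground state $\tilde\psi_\Lambda$ of energy $E$ (Theorem \ref{BGP}), analytic Rayleigh--Schr\"odinger perturbation theory is valid and produces
\[
E(t\bol h) - E = t\,\la\!\la V_1(\bol h)\ra\!\ra_\Lambda + \tfrac{t^2}{2}\la\bol h|\bol h\ra_W - t^2\,\la\!\la V_1(\bol h)(\tilde H_\Lambda - E)^{-1}V_1(\bol h)\ra\!\ra_\Lambda + O(t^3),
\]
where the resolvent is understood on $\{\tilde\psi_\Lambda\}^\perp$. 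Applying $E(t\bol h)\ge E$ with both signs of $t$ first forces the linear coefficient to vanish, $\la\!\la V_1(\bol h)\ra\!\ra_\Lambda = 0$, and then the non-negativity of the quadratic coefficient gives
\[
\la\!\la V_1(\bol h)(\tilde H_\Lambda - E)^{-1}V_1(\bol h)\ra\!\ra_\Lambda \le \tfrac{1}{2}\la\bol h|\bol h\ra_W.
\]

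Finally, I would translate this bound into the form of \eqref{CoroEngInq}. By $\BbbC$-linearity of $\la\bol{\delta\hn}|\cdot\ra_W$, one has $V_1(\bol h) = -\tfrac{1}{2}\la\bol{\delta\hn}|\bol h + r\bol h\ra_W$, so the inequality directly controls the right-hand side of \eqref{CoroEngInq} on the subspace of $r$-symmetric test vectors; the extension to arbitrary $\bol h\in\BbbC^\Lambda$ then proceeds by decomposing $\bol h$ into real/imaginary and $r$-symmetric/antisymmetric parts and invoking the reflection symmetry of $\tilde H_\Lambda$, which makes the quadratic form $\bol h\mapsto\la\!\la\la\bol{\delta\hn}|\bol h\ra_W(\tilde H_\Lambda - E)^{-1}\la\bol{\delta\hn}|\bol h\ra_W\ra\!\ra_\Lambda$ invariant under $\bol h\mapsto r\bol h$. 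The main obstacle is the bookkeeping caused by this $r$-symmetrisation, which is essential for the reflection-positivity argument behind Proposition \ref{EnergyI} but couples $\bol h$ and $r\bol h$ in $V_1$; carefully tracking how the general case reduces to the $r$-symmetric one is the technically delicate step.
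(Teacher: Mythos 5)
Your first two paragraphs reproduce, in more detail, exactly the route the paper takes: expand $\tilde{H}_{\Lambda}(t\bol h)$ to second order in $t$, use Proposition \ref{EnergyI} to see that $t=0$ is a global minimum of $t\mapsto E(t\bol h)$, and read off the first- and second-order consequences. This part is correct, and what it actually delivers is the bound $\lla V_1(\bol h)(\tilde{H}_{\Lambda}-E)^{-1}V_1(\bol h)\rra_{\Lambda}\le \tfrac12\la\bol h|\bol h\ra_W$ with $V_1(\bol h)=-\tfrac12\la\bol{\delta\hn}|\bol h+r\bol h\ra_W$, i.e.\ control of the quadratic form $Q(\bol g):=\lla\la\bol{\delta\hn}|\bol g\ra_W(\tilde{H}_{\Lambda}-E)^{-1}\la\bol{\delta\hn}|\bol g\ra_W\rra_{\Lambda}$ only on $r$-symmetric real vectors $\bol g=\bol h+r\bol h$.

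The gap is in your final reduction. Invariance of $Q$ under $\bol h\mapsto r\bol h$ (which does hold, by reflection invariance of the model and uniqueness of the ground state) only tells you that the symmetric--antisymmetric cross terms vanish, so that $Q(\bol h)=Q(\bol h_s)+Q(\bol h_a)$ and $\la\bol h|\bol h\ra_W=\la\bol h_s|\bol h_s\ra_W+\la\bol h_a|\bol h_a\ra_W$; it gives no bound on $Q(\bol h_a)$ itself. And for an $r$-antisymmetric $\bol h_a$ your perturbation vanishes identically, $V_1(\bol h_a)=-\tfrac12\la\bol{\delta\hn}|\bol h_a+r\bol h_a\ra_W=0$, so Proposition \ref{EnergyI}, used with this single reflection, yields only the empty statement $0\le\tfrac12\la\bol h_a|\bol h_a\ra_W$ and no information about $Q(\bol h_a)$. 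Hence the announced reduction of the general case to the $r$-symmetric case cannot be carried out by the symmetry of $Q$ alone; one needs genuinely more input, e.g.\ the analogous energy inequalities for reflections across the other bonds of the ring (available through the periodicity/translation invariance already exploited in the chessboard-type iteration behind Proposition \ref{EnergyI}), or a suitably modified $\mathbb{W}(\bol h)$. Be aware that the paper's own proof is silent on this very point for real $\bol h$, so you have correctly located where the real work lies; but the mechanism you propose does not close it. Finally, do not conflate this with the passage to complex $\bol h$: that step is easier and is handled in the paper by the identity $\lla A^*(\tilde{H}_{\Lambda}-E)^{-1}A\rra_{\Lambda}=\lla A_{\rm R}(\tilde{H}_{\Lambda}-E)^{-1}A_{\rm R}\rra_{\Lambda}+\lla A_{\rm I}(\tilde{H}_{\Lambda}-E)^{-1}A_{\rm I}\rra_{\Lambda}$ for $A=A_{\rm R}+\im A_{\rm I}$, the cross terms cancelling because all operators involved preserve reality and the ground state is $J$-real; this reduces the complex case to the real case summand by summand, independently of the $r$-symmetrisation issue.
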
 
\begin{proof} 
Applying Proposition \ref{EnergyI}, we have
$
d^2 E(\lambda \bol{h})\big/d\lambda^2\big|_{\lambda=0} \ge 0.
$
By using second order perturbation theory,
we obtain \eqref{CoroEngInq} for $\bs h$ real-valued. 
To extend this to complex-valued
${\bs h}$\rq{}s,  we observe that, if we express $A = A_{\rm R} + \im A_{\rm I}$ with $A_{\rm R}^*=A_{\rm R}$ and $A_{\rm I}^*=A_{\rm I}$, then we have
\be
\big\la\!\big\la  A^* (\tilde{H}_{\Lambda}-E)^{-1} 
 A\big\ra\!\big\ra_{\Lambda}=\big\la\!\big\la  A_{\rm R} (\tilde{H}_{\Lambda}-E)^{-1} 
 A_{\rm R}\big\ra\!\big\ra_{\Lambda}+\big\la\!\big\la  A_{\rm I}(\tilde{H}_{\Lambda}-E)^{-1} 
 A_{\rm I}\big\ra\!\big\ra_{\Lambda}.
\ee
\end{proof}

\subsection{Infrared bound}
For each $\bol{h}\in \ell^2(\Lambda)$, we define a linear operator $R$ on
$\ell^2(\Lambda)$ as
\begin{align}
(R\bol{h})_j=2 V h_j-2\sum_{i : i\neq j} W(i-j)h_i,
\end{align} 
where $V=\sum_{j\in \Lambda} W(j)$.
Because $\la {\bs h}|R{\bs h}\ra=\la {\bs h}|{\bs h}\ra_W\ge 0$ for all ${\bs h} \in \ell^2(\Lambda)$, we have $R\ge 0$.

\begin{lemm}\label{Infra}
For each $\bol{h}\in \BbbC^{\Lambda}$,  one obtains 
\be
\Big(\Big\la\!\!\Big\la
 \la \bol{\delta \hn}|\bol{h}\ra_{W}^*  \la
 \bol{\delta \hn}|\bol{h}\ra_{W}
\Big\ra\!\!\Big\ra_{\Lambda}\Big)^2
\le \frac{t}{2} \la \bol{h}|\bol{h}\ra_W
\big\la R \bol{h}| \tau \Delta\tau^{-1}(R\bol{h})\big\ra,
\ee
where $\Delta$ is the discrete Laplacian  on $\Lambda$\footnote{Namely,
$\la {\bs h}|\Delta {\bs h}\ra=\sum_{j=-\ell}^{\ell}|h_j-h_{j+1}|^2$  with $h_{\ell}=h_{-\ell}$.}
 and $(\tau {\bs h})_j=(-1)^j h_j$.
\end{lemm}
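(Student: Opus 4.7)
The plan is to combine the energy inequality of Corollary~\ref{CoroEnergyI} with a Cauchy--Schwarz inequality and an explicit double-commutator computation. First, a direct manipulation using $W(-k)=W(k)$, $W(0)=0$ and $V=\sum_j W(j)$ gives the identity
\[
\la \bol{\dhn}|\bol{h}\ra_W \;=\; \sum_{j\in\Lambda}(R\bol{h})_j\,\dhn_j,
\]
so that, writing $g:=R\bol{h}$ and $X:=\la\bol{\dhn}|\bol{h}\ra_W = \sum_j g_j\,\dhn_j$, the left-hand side of the lemma equals $\bigl(\lla X^*X\rra_\Lambda\bigr)^2$. Next I would verify that $\lla\dhn_j\rra_\Lambda=0$ for every $j$: at half-filling with $U(0)=0$, the unitarily implemented symmetry $c_k\mapsto(-1)^k c_k^*$, $a_k\mapsto -a_k$ leaves $H_\Lambda$ invariant and flips $\dhn_k\mapsto-\dhn_k$; uniqueness of the ground state (Theorem~\ref{BGP}) forces $\la\dhn_j\ra_\Lambda=0$, and the identity $\mathcal{W}_\Lambda\dhn_j\mathcal{W}_\Lambda^{-1}=(-1)^j\dhn_j$ transports this to the transformed picture. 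In particular $X\tilde\psi_\Lambda\perp\tilde\psi_\Lambda$, so the standard spectral Cauchy--Schwarz inequality gives
\[
\bigl(\lla X^*X\rra_\Lambda\bigr)^2 \;\le\; \lla X^*(\tilde H_\Lambda-E)^{-1}X\rra_\Lambda\cdot \lla X^*(\tilde H_\Lambda-E)X\rra_\Lambda,
\]
and Corollary~\ref{CoroEnergyI} bounds the first factor by $\la\bol{h}|\bol{h}\ra_W$.

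For the second factor I would first treat real $\bol{h}$, in which case $X$ is self-adjoint and $\lla X(\tilde H_\Lambda-E)X\rra_\Lambda=\tfrac12\lla[X,[\tilde H_\Lambda,X]]\rra_\Lambda$. Since $\mathbb{W}$ and $K$ commute with every $\dhn_m$, only the modified kinetic term $\mathbb{T}=\sum_b\mathbb{T}_b$ contributes, summed over nearest-neighbour bonds $b=(b_-,b_+)$. A short calculation with the CAR yields
\[
[\mathbb{T}_b,\dhn_m]=-(\delta_{m,b_-}+\delta_{m,b_+})A_b,\qquad [\dhn_m,A_b]=(\delta_{m,b_-}+\delta_{m,b_+})\mathbb{T}_b,
\]
where $A_b:=(-t)\bigl(e^{-i\alpha(\phi_{b_-}-\phi_{b_+})}c_{b_-}^*c_{b_+}^*-e^{i\alpha(\phi_{b_-}-\phi_{b_+})}c_{b_+}c_{b_-}\bigr)$; iterating gives $[X,[\tilde H_\Lambda,X]]=-\sum_b(g_{b_-}+g_{b_+})^2\mathbb{T}_b$. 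A direct fermionic computation yields $\mathbb{T}_b^2=t^2 P_b$ with $P_b:=\hn_{b_-}\hn_{b_+}+(1-\hn_{b_-})(1-\hn_{b_+})$ a projection, so that $\mathbb{T}_b\ge-t$ and $-\lla\mathbb{T}_b\rra_\Lambda\le t$. Together with the elementary identity $\sum_b(g_{b_-}+g_{b_+})^2=\sum_{j\in\Lambda}|g_j+g_{j+1}|^2=\la g|\tau\Delta\tau^{-1}g\ra$, this produces $\lla X(\tilde H_\Lambda-E)X\rra_\Lambda\le\tfrac{t}{2}\la R\bol{h}|\tau\Delta\tau^{-1}R\bol{h}\ra$ in the real case.

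For complex $\bol{h}=\bol{h}_r+i\bol{h}_i$, I would decompose $X=X_r+iX_i$ with self-adjoint $X_r,X_i$; since $[\dhn_i,\dhn_j]=0$, one has $\lla X^*X\rra_\Lambda=\lla X_r^2\rra_\Lambda+\lla X_i^2\rra_\Lambda$, and both $\la\cdot|\cdot\ra_W$ and $\la R\cdot|\tau\Delta\tau^{-1}R\cdot\ra$ split additively into real and imaginary contributions. Applying the real-case bound to each of $\bol{h}_r,\bol{h}_i$ and combining via the elementary inequality $(\sqrt{a_rc_r}+\sqrt{a_ic_i})^2\le(a_r+a_i)(c_r+c_i)$ completes the proof. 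The main technical obstacle is the double-commutator computation: after the Lang--Firsov and odd-site particle--hole transformations, $\mathbb{T}$ acquires the ``superconducting'' form $c^*c^*+c\,c$ on each bond, and it is precisely this structure---rather than the usual $c^*c$ hopping---that produces the combination $(g_{b_-}+g_{b_+})^2$ instead of the more familiar $(g_{b_-}-g_{b_+})^2$, ultimately responsible for the appearance of the twisted Laplacian $\tau\Delta\tau^{-1}$ on the right-hand side of the infrared bound.
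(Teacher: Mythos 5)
Your proof is correct and takes essentially the same route as the paper: the identity $\la \bol{\dhn}|\bol{h}\ra_W=\la \bol{\dhn}|R\bol{h}\ra$, the spectral Cauchy--Schwarz inequality splitting $\big(\lla X^*X\rra_\Lambda\big)^2$ into a $(\tilde{H}_\Lambda-E)^{-1}$ factor controlled by Corollary \ref{CoroEnergyI} and a $(\tilde{H}_\Lambda-E)$ factor evaluated as a double commutator against the pair-hopping term $\mathbb{T}$. Two of your refinements are welcome rather than deviations: the observation $\mathbb{T}_b^2=t^2P_b$ (hence $\|\mathbb{T}_b\|\le t$) yields exactly the constant $t/2$ of the statement, where the paper's displayed estimate uses the cruder bound $2$, and your particle--hole symmetry argument showing $\lla \dhn_j\rra_\Lambda=0$ makes explicit the orthogonality that the paper leaves implicit in its use of $(\tilde{H}_\Lambda-E)^{-1}$.
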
 
\begin{proof}
Let $\eta=\la \bol{\delta \hn}|\bol{h}\ra_W\tilde{\psi}_{\Lambda}$, where $\tilde{\psi}_{\Lambda}$ is the ground state of $\tilde{H}_{\Lambda}$. Let $dE(\lambda)$ be the
spectral measure associated with $\tilde{H}_{\Lambda}-E$.
Then we define $d\mu(\lambda)
=\la \eta|dE(\lambda)\eta\ra
$. By using the Cauchy--Schwarz inequality, we have
\begin{align}
\Big\la\!\!\Big\la \la \bol{\delta \hn}|\bol{h}\ra_W^*\la
 \bol{\delta \hn}|\bol{h}\ra_W\Big\ra\!\!\Big\ra_{\Lambda}
=\int_0^{\infty}d\mu(\lambda)\le
 \Bigg(
\int_0^{\infty}\lambda d\mu(\lambda)
\Bigg)^{1/2}
 \Bigg(
\int_0^{\infty}\lambda^{-1} d\mu(\lambda)
\Bigg)^{1/2}.\label{CS}
\end{align}  
By applying Corollary  \ref{CoroEnergyI}, we have
\begin{align}
\int_0^{\infty}\lambda^{-1} d\mu(\lambda)=\Big\la\!\!\Big\la \la \bol{\delta \hn}|\bol{h}\ra_{W} (\tilde{H}_{\Lambda}-E)^{-1} \la
 \bol{\delta \hn}|\bol{h}\ra_{W}\Big\ra\!\!\Big\ra_{\Lambda}
\le \la \bol{h}|\bol{h}\ra_W.\label{CS2}
\end{align} 
In addition, we have
\begin{align}
\int_0^{\infty}\lambda d\mu(\lambda)
=\frac{1}{2}\Big\la\!\!\Big\la
\Big[
\big[\la\bol{\delta \hn}|\bol{h}\ra_W,  \tilde{H}_{\Lambda}\big], \la  \bol{\delta \hn}| \bol{h}\ra^*_W
\Big]
\Big\ra\!\!\Big\ra_{\Lambda}.\label{DoubleC}
\end{align} 
Since $\la \bol{\delta \hn}|\bol{h}\ra_W=\la \bol{\delta \hn}|R\bol{h}\ra$, we have
\begin{align}
&\mbox{the right-hand side  of (\ref{DoubleC})} \no
=& \frac{t}{2} \sum_{j\in \Lambda_{\rm e}} \sum_{\vepsilon=\pm 1} \big|
(R {\bs h})_{j+\vepsilon}+(R{\bs h})_j
\big|^2 \la\!\la e^{-\im \alpha (\phi_j-\phi_{j+\vepsilon})} c_j^*c_{j+\vepsilon}^*
+e^{+\im \alpha (\phi_j-\phi_{j+\vepsilon})} c_{j+\vepsilon}c_j\ra\!\ra_{\Lambda}\no
\le & t\sum_{j\in \Lambda_{\rm e}} \sum_{\vepsilon=\pm 1} \big|
(R {\bs h})_{j+\vepsilon}+(R{\bs h})_j
\big|^2 \no
=&t \big\la \tau
 R \bol{h}|\Delta(\tau R \bol{h})\big\ra.
\end{align} 
Summarizing (\ref{CS}), (\ref{CS2}) and (\ref{CS3}), we obtain the desired result.
\end{proof}

\begin{lemm}\label{UpperLemm}
Define a quantity denoted as $c_0$ by the expression: $\displaystyle
\lim_{|j|\to \infty} \la\!\la \delta \hn_j \delta \hn_0 \ra\!\ra_{\BbbZ}=(2\pi )^{-1/2} c_0
$, where $\la\!\la A\ra\!\ra_{\BbbZ}$ represents the ground state of the infinite system, which can be precisely defined as follows:
$
\la\!\la A\ra\!\ra_{\BbbZ}=\lim_{\Lambda \in \mathbb{F}_{\rm o}, \Lambda\uparrow \BbbZ} \la\!\la A\ra\!\ra_{\Lambda}
$.\footnote{
It should be noted that for accuracy, we need to consider a convergent subnet from the set $\{\la\!\la \cdot \ra\!\ra_{\Lambda} : \Lambda\in \mathbb{F}_{\rm o}\}$, see Section \ref{SecInfChain}.
} Subsequently, we obtain the following inequality:

\begin{align}
\la\!\la \delta\hn_0^2\ra\!\ra_{\BbbZ}
\le (2\pi)^{-1/2}c_0+(2\pi)^{-1}\int_{\mathbb{T}}dp \sqrt{\frac{F(p)}{\hat{R}(p)}},
\end{align} 
where  $F(p)$ and 
$
\hat{R}(p)
$ are defined in Theorem \ref{LRO}.
\end{lemm}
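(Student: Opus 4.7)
The plan is to Fourier-diagonalize the infrared bound of Lemma~\ref{Infra} using plane-wave test functions and then combine the resulting momentum-space estimate with the trivial on-site identity $(\delta\hn_0)^2 = 1/4$. The argument is a direct adaptation of the DLS Gaussian-domination scheme to the present setting.

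First I would take ${\bs h}={\bs h}_p$ with $(h_p)_j = e^{\im p j}$. Translation invariance of the kernel $W(i-j)$ and the evenness $W(-j)=W(j)$ yield $R{\bs h}_p = \hat R(p){\bs h}_p$ and $\langle {\bs h}_p|{\bs h}_p\rangle_W = \hat R(p)|\Lambda|$, with $\hat R(p)=4\sum_{j\ge 1}W(j)(1-\cos pj)$. The sign twist $(\tau{\bs h})_j=(-1)^j h_j$ converts ${\bs h}_p$ into ${\bs h}_{p+\pi}$, so a short computation gives $\langle {\bs h}_p|\tau\Delta\tau^{-1}{\bs h}_p\rangle = 2(1+\cos p)|\Lambda|$. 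Substituting these three identities into Lemma~\ref{Infra} produces the momentum-space infrared bound
\begin{align*}
\hat G_\Lambda(p)^2 \le \frac{t(1+\cos p)}{\hat R(p)} = \frac{F(p)}{2\hat R(p)} \qquad(p\ne 0),
\end{align*}
where $\hat G_\Lambda(p):=\sum_k e^{\im p k}\,\la\!\la\delta\hn_k\delta\hn_0\ra\!\ra_\Lambda$ is the CDW structure factor.

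Next I would invoke the Parseval-type identity $|\Lambda|^{-1}\sum_p \hat G_\Lambda(p) = \la\!\la\delta\hn_0^2\ra\!\ra_\Lambda = 1/4$, where the last equality follows from $\hn_0^2=\hn_0$. Separating the zero mode and inserting the momentum-space bound on the $p\ne 0$ part gives
\begin{align*}
\frac{1}{4} \le |\Lambda|^{-1}\hat G_\Lambda(0) + |\Lambda|^{-1}\sum_{p\ne 0}\sqrt{F(p)/(2\hat R(p))}.
\end{align*}
Passing to the thermodynamic limit $\Lambda\uparrow\BbbZ$ along the subnet defining $\la\cdot\ra_{\BbbZ}$, the $p\ne 0$ Riemann sum converges to $(2\pi)^{-1}\int_{\mathbb T}\sqrt{F/(2\hat R)}\,dp$, which is bounded above by $(2\pi)^{-1}\int_{\mathbb T}\sqrt{F/\hat R}\,dp$, while the zero-mode term $|\Lambda|^{-1}\hat G_\Lambda(0) = |\Lambda|^{-1}\sum_k \la\!\la\delta\hn_k\delta\hn_0\ra\!\ra_\Lambda$ is a Cesàro average that converges to $\lim_{|k|\to\infty}\la\!\la\delta\hn_k\delta\hn_0\ra\!\ra_{\BbbZ} = (2\pi)^{-1/2}c_0$ by the definition of $c_0$. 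Combining the two estimates yields the claimed inequality (in fact with $\sqrt{F/(2\hat R)}$, which is tighter than the stated $\sqrt{F/\hat R}$).

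The principal technical obstacle is the interchange of limits in the thermodynamic passage: one must dominate the $p\ne 0$ Riemann sums uniformly in $|\Lambda|$, which is exactly what the integrability assumption~{\bf (C. 2)} provides, while {\bf (C. 1)} ensures that $\hat R$ is well-defined and continuous on $\mathbb T\setminus\{0\}$. One must also justify that $|\Lambda|^{-1}\hat G_\Lambda(0)$ converges to the stated asymptotic limit along the chosen subnet, which follows from translation invariance of $\la\cdot\ra_{\BbbZ}$ (inherited from the cyclic boundary conditions at finite volume) together with the classical Cesàro-mean theorem applied to the convergent sequence $\{G_\Lambda(k)\}_k$.
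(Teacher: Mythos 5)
Your momentum-space machinery is fine: plugging plane waves into Lemma \ref{Infra}, using $R\bs{h}_p=\hat R(p)\bs{h}_p$, $\la \bs{h}_p|\bs{h}_p\ra_W=|\Lambda|\hat R(p)$ and $\la \tau\bs{h}_p|\Delta\tau\bs{h}_p\ra=2(1+\cos p)|\Lambda|$ does give $\hat G_\Lambda(p)^2\le F(p)/(2\hat R(p))$ for $p\neq 0$ (up to the finite-volume periodization of $W$ and $\hat R$, which you should at least acknowledge), and the sum rule $|\Lambda|^{-1}\sum_p\hat G_\Lambda(p)=\la\!\la\delta\hn_0^2\ra\!\ra_\Lambda=1/4$ together with {\bf (C.\,1)}, {\bf (C.\,2)} controls the $p\neq 0$ Riemann sums. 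Up to this point your argument runs parallel to the paper's, which however works directly with the infinite-volume two-point function $G(j)=\la\!\la\delta\hn_j\delta\hn_0\ra\!\ra_{\BbbZ}$: there one bounds $|\hat G(p)|\le(2\pi)^{-1/2}\sqrt{F(p)/\hat R(p)}$ off $p=0$, writes $\hat G(p)=c\,\delta(p)+I(p)$ with $0\le I\le(2\pi)^{-1/2}\sqrt{F/\hat R}$, integrates, and identifies $c$ with $c_0$ by the Riemann--Lebesgue lemma applied to the integrable part $I$.

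The genuine gap is your treatment of the zero mode. You assert that $|\Lambda|^{-1}\hat G_\Lambda(0)=|\Lambda|^{-1}\sum_{k\in\Lambda}\la\!\la\delta\hn_k\delta\hn_0\ra\!\ra_\Lambda$ converges to $\lim_{|k|\to\infty}\la\!\la\delta\hn_k\delta\hn_0\ra\!\ra_{\BbbZ}=(2\pi)^{-1/2}c_0$ ``by the classical Ces\`aro-mean theorem.'' That theorem applies to a single convergent sequence, whereas here the summand $G_\Lambda(k)$ depends on $\Lambda$: along the weak-$*$ subnet one only has $G_\Lambda(k)\to G_{\BbbZ}(k)$ pointwise in $k$, with no uniformity in $k$, and the terms with $|k|$ comparable to $|\Lambda|$ (which dominate the average) are not controlled by the infinite-volume correlations at large distance. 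Thus the interchange of the thermodynamic limit with the spatial average is unjustified, and without it you cannot pass from the finite-volume inequality $1/4\le|\Lambda|^{-1}\hat G_\Lambda(0)+\dots$ to a bound featuring $c_0$; at best you get $1/4\le\limsup_\Lambda|\Lambda|^{-1}\hat G_\Lambda(0)+\dots$, and relating that $\limsup$ to $c_0$ is precisely the nontrivial point. This is what the paper's route avoids: working with $G_{\BbbZ}$ itself, the constant $c$ in the decomposition $\hat G=c\,\delta+I$ is tied to the $|j|\to\infty$ limit of $G_{\BbbZ}(j)$ by Riemann--Lebesgue, so no Ces\`aro interchange is ever needed. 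To repair your proof you would either have to supply a uniform-in-$\Lambda$ control of $G_\Lambda(k)$ at large $k$, or (more simply) transfer the infrared bound to the infinite-volume Fourier measure and argue as the paper does.
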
 
\begin{proof} 
We provide a sketch of the  proof. Let $G(x)=\la\!\la \delta \hn_j \delta \hn_0\ra\!\ra_{\BbbZ}$.
By  applying the Fourier transformation and Lemma \ref{Infra}, we
have
\begin{align}
|\hat{G}(p)| \le (2\pi )^{-1/2}\sqrt{\frac{F(p)}{\hat{R}(p)}} \label{CS3}
\end{align} 
for all $p\in \mathbb{T}\backslash \{0\}$.
Therefore, $\hat{G}$ can be expressed as 
\begin{align}
\hat{G}(p) = c\delta(p)+I(p),\ \ p\in \mathbb{T},
\end{align} 
where $I(p)$ satisfies $0\le I(p)
\le (2\pi )^{-1/2} \sqrt{\frac{F(p)}{\hat{R}(p)}} 
$.  
Hence, we have
\begin{align}
\la\!\la\delta\hn_0^2\ra\!\ra_{\BbbZ}=(2\pi)^{-1/2} \int_{\mathbb{T}} dp \hat{G}(p)
\le (2\pi )^{-1/2}c+(2\pi )^{-1} \int_{\mathbb{T}} dp \sqrt{\frac{F(p)}{\hat{R}(p)}}.
\end{align} 
Moreover, because 
\begin{align}
\la\!\la \delta \hn_j \delta \hn_0\ra\!\ra_{\BbbZ}&=(2\pi )^{-1/2} \int_{\mathbb{T}} dp \hat{G}(p)e^{-\im jp}\no
&=(2\pi)^{-1/2} c+(2\pi)^{-1/2} \int_{\mathbb{T}} dp I(p) e^{-\im jp},
\end{align} 
we can deduce that $\displaystyle
\lim_{|j|\to \infty} \la\!\la \delta \hn_j\delta \hn_0\ra\!\ra_{\BbbZ}=(2\pi)^{-1/2} c
$.  This is a direct consequence of the fact that $\displaystyle 
\lim_{|j|\to \infty} \int_{\mathbb{T}} dp I(p) e^{-\im jp}=0
$, which follows   from the Riemann--Lebesgue lemma. 
\end{proof}

\subsection{Proof of Theorem \ref{LRO}}
By recalling the fact that $n_i^2 = n_i$ holds for all $i \in \Lambda$, we can easily verify that  $(\delta \hn_0)^2=1/4$.
Combining this with  Lemma \ref{UpperLemm}, we arrive at
\be
(2\pi)^{-1/2} c_0 \ge  \frac{1}{4}  -(2\pi)^{-1} \int_{\mathbb{T}} dp \sqrt{\frac{F(p)}{\hat{R}(p)}}.\label{c0Lower}
\ee
Consequently, if we choose a value for $t$ such that the right-hand side of the inequality is strictly positive, then $c_0$ must be strictly positive as well. By utilizing the fact that $(-1)^{i-j} \la \delta \hn_i \delta \hn_j\ra_{\BbbZ}=\la\!\la \delta \hn_i \delta \hn_j\ra\!\ra_{\BbbZ}$, we ultimately obtain the desired assertion stated in Theorem \ref{LRO}.
\qed

\appendix

\section{Proof of Lemma \ref{Puzzle}} \label{PfPuzzle}

\subsection{Preliminary analysis}
To establish the validity of Lemma \ref{Puzzle}, we require a technical lemma.

For each  $X \in \Theta_{\Lambda}(q)$, we define 
$
|X\ra=e_X\in \Fock_{\LL}^{\F}(q),
$
where $e_X$ is determined by \eqref{DefEx}.
Furthermore, we assign  $|\varnothing\ra=\Omega_{\LL}^{\rm F} $.

For a given path  $\mathscr{X}=(X_i)_{i=1}^m \in \Upsilon_m$,
we define
\be
C_{\tau}(\mathscr{X})
=e^{-\tau \mathbb{K}^{\F}}E_{X_1} e^{-\tau \mathbb{K}^{\F}} E_{X_2} \cdots e^{-\tau \mathbb{K}^{\F}} E_{X_m}.
 \label{DefCPath}
\ee

\begin{lemm}\label{Conn2}
For each  $X\in \Theta_{\Lambda}(q)$ with $X\neq \varnothing $, there exist 
$k\in \BbbZ_+, m_1, \dots, m_{k+1}\in \BbbZ_+$, $N\in \BbbZ_+$,  $\mathscr{X}_1\in \Upsilon_{m_1}, \dots, \mathscr{X}_{k+1} \in \Upsilon_{m_{k+1}}$,
 $\xi_1, \dots, \xi_k\in \{-\ell, -1\}$ and  $\sharp _1,  \dots, \sharp _k\in \{-, +\}$ such  that 
 \begin{align}
 C_{\tau}(\mathscr{X}_1)b_{\xi_1}^{\sharp _1}C_{\tau}(\mathscr{X}_2)b_{\xi_2}^{\sharp _2}\cdots b_{\xi_k}^{\sharp _k} C_{\tau}(\mathscr{X}_{k+1}) \ket{X} &=\{\pm \tau^{N}+O(\tau^{N+1})\} \ket{\varnothing }
 \end{align}
 holds true, 
 provided that  $\tau$ is sufficiently small.  
\end{lemm}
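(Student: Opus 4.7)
The plan is to prove Lemma~\ref{Conn2} by a combinatorial-cum-perturbative construction: for any nonempty configuration $X$, I will design an explicit sequence of bulk pair moves and boundary single-fermion moves that transports $|X;F\rangle$ to a scalar multiple of $|\varnothing;\gamma F\rangle$, where the scalar is of order $\tau^N$ and $\gamma$ is the product of the induced phonon factors.

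First I would carry out a small-$\tau$ analysis of the atomic building block $E_Y e^{-\tau\mathbb{K}} E_X$. Decomposing $\mathbb{K} = \mathbb{T}_L + \mathcal{H}_0$ with $\mathcal{H}_0:=K_L-\mathbb{W}_L$, and noting that $\mathcal{H}_0$ commutes with every $E_X$ (since $\mathbb{W}_L$ is diagonal in the fermion configuration basis and $K_L$ acts only on phonons), Duhamel's formula expands $e^{-\tau\mathbb{K}}$ as a series in the number of $\mathbb{T}_L$-insertions sandwiched between $\mathcal{H}_0$-evolutions. When $Y=X$, the zeroth-order contribution is $e^{\tau w_X} e^{-\tau K_L} = 1 + O(\tau)$. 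When $Y$ differs from $X$ by exactly one adjacent even--odd pair, only the single-insertion Duhamel term contributes at order $\tau$, and after the substitution $s\mapsto \tau s$ it yields precisely $-t\tau \, \gamma^{\mp}_j$ (with the sign and index determined by the specific $\mathbb{T}_L$ term) up to a fermion sign. Configuration gaps of two or more pairs produce $O(\tau^2)$. Consequently, the ``order-$\tau$ moves'' are exactly bulk even--odd pair creations and annihilations, each supplying one factor $\gamma^{\pm}_j\in\Ph$.

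Second I would build, by induction on $|X|$, a reduction sequence $X = X^{(0)} \to X^{(1)} \to \dots \to X^{(N)} = \varnothing$ in which each transition is either (a) a bulk create/annihilate of an adjacent even--odd pair in $\Lambda_L$, or (b) a single boundary move at $\xi\in\{-1,-\ell\}$ realized by $B^{\sharp}_\xi$. If $X$ contains an adjacent occupied even--odd pair, annihilate it (bulk move, $|X|\to|X|-2$); otherwise use bulk pair creations to \emph{hop} a fermion toward a boundary odd site via the standard create-then-annihilate-a-neighbor trick, then remove it with $B^-_\xi$. Boundary additions $B^+_\xi$ rebalance the charge whenever $|q|>0$, after which the remaining fermions can be paired off in the bulk. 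Half-filling guarantees $|X|< |\Lambda_L|$, so the empty sites required for the intermediate creations are always available.

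Finally I would assemble the sequence by grouping the consecutive bulk moves into paths $\mathscr{X}_1,\dots,\mathscr{X}_{k+1}$ separated by the boundary operators $B^{\sharp_i}_{\xi_i}$, choosing the intermediate configurations inside each $\mathscr{X}_i$ so that successive ones differ by exactly one pair; by the first step each such transition contributes one factor of $\tau\,\gamma^{\pm}_j$. The leading order then reads $\pm \tau^N|\varnothing;\gamma F\rangle$, with $N$ equal to the total number of bulk transitions and $\gamma$ the product of the corresponding $\gamma^{\pm}_j$'s, while boundary operations supply $O(1)$ factors absorbed into $\gamma$ within the $\Ph$ framework. The main obstacle is the combinatorial step: establishing uniformly across all $X$ the existence of such a reduction sequence---particularly for sparse configurations with no initial adjacent occupied pairs and for large $|q|$---requires careful bookkeeping to guarantee enough free space for the intermediate pair creations and to identify the minimal value of $N$ appearing in the final answer.
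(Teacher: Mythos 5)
Your first step (the Duhamel/Dyson analysis of $E_Y e^{-\tau\mathbb{K}}E_X$, identifying the order-$\tau$ elementary transitions with the operators $\gamma_j^{\pm}\in\Ph$) is essentially the paper's Sublemma on this point; putting $\mathbb{W}_L$ into the free part rather than the perturbation is immaterial since it is diagonal in the fermion configuration basis. The genuine gap is in the second step: the existence, for an \emph{arbitrary} $X\in\Theta_{\Lambda}(q)$, of a reduction sequence to $\varnothing$ built from adjacent even--odd pair moves and boundary insertions is exactly the content of the lemma, and you only sketch a greedy/inductive strategy and then concede that the ``careful bookkeeping'' is missing. The paper's proof consists precisely of supplying this construction: it decomposes $I_e\cup I_o$ into clusters separated by gaps of length at least two, annihilates each even-size cluster pairwise, and for each odd-size cluster reduces it to a single leftover fermion and then runs one of two explicitly different boundary procedures according to the parity of the leftover site. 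Without that (or an equivalent explicit algorithm with a proof of termination and of availability of the intermediate configurations), the proposal does not prove the lemma.

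Moreover, the specific heuristic you offer for lone fermions --- ``use bulk pair creations to hop a fermion toward a boundary odd site \dots then remove it with $B^-_\xi$'' --- fails as stated when the lone fermion sits on an \emph{even} site: each bulk move creates or annihilates an adjacent even--odd pair, so the create-then-annihilate trick displaces a single fermion by two sites and preserves the parity of its position, while both admissible boundary sites $-1$ and $-\ell$ are odd (recall $\ell$ is odd). An even-site fermion can therefore never be hopped onto a boundary site; this is exactly why the paper must instead insert $B_{-1}^{*}$, fill the interval between the leftover fermion and $-1$ by pair creations, and then annihilate everything pairwise (its Case 1), reserving the ``hop to the boundary and remove with $B_{-1}$'' move for the odd-site case (its Case 2). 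Your charge-rebalancing remark gestures at this but does not resolve it. Finally, the claim that ``half-filling guarantees $|X|<|\Lambda_L|$'' is unfounded: in the transformed picture the left configuration is unconstrained (it may equal all of $\Lambda_L$); this particular point is harmless (a full configuration contains adjacent occupied pairs), but it illustrates that the space-availability argument you rely on has not actually been established.
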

\begin{proof}
Without loss of generality, we can assume that the hopping amplitude is set to $t=1$.
Let $X=(I_e, I_o)\in \Theta_{\Lambda}(q)$. We define a subset ${\sf C}$ of $I_e\cup I_o$ as a {\it cluster} in $I_e\cup I_o$ if there exist $j, r\in \BbbZ_+$ such that ${\sf C}=\{j, j+1, \dots, j+r\}$ and $\mathrm{dist}(I_e\cup I_o\setminus {\sf C}; {\sf C})\ge 2$. Here, $\mathrm{dist}(A; B)$ denotes the minimum distance between the subsets $A$ and $B$ in $\LL$, given by $\mathrm{dist}(A; B)=\min\{|i-j| : i\in A, j\in B\}$.
Consequently, we can decompose $I_e\cup I_o$ as
\begin{align}
I_e\cup I_o=\bigcup_{k=1}^K {\sf C}_k,\ \ {\sf C}_k\cap {\sf C}_{k'} =\varnothing\ (k\neq k'), \label{DecK}
\end{align}
where each ${\sf C}_k$ represents a cluster within $I_e\cup I_o$, as illustrated in Figure \ref{ClusterPic}. In the subsequent analysis, we shall identify $X=(I_e, I_o)\in \Theta_{\Lambda}(q)$ with $I_e\cup I_o$, unless any ambiguity arises.

\begin{figure}
\begin{center}
\includegraphics[scale=0.6]{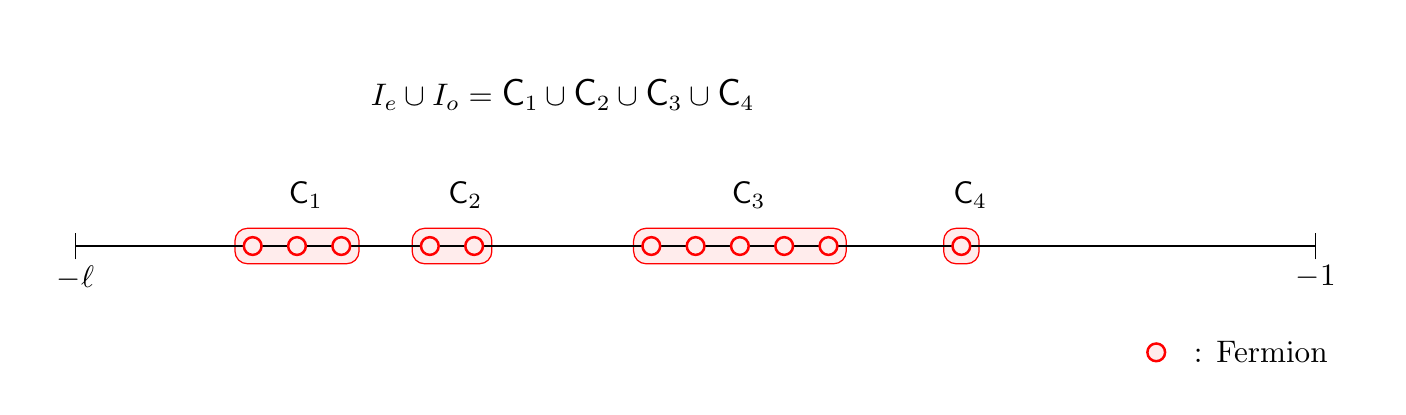}
\caption{Every configuration $I_e\cup I_o$ can be partitioned into clusters.} 
    \label{ClusterPic}
\end{center}
\end{figure}

\subsubsection*{\it Step 1: Proof for  the case where $K=1$}
 We consider the case where $I_e\cup I_o$ consists of a single cluster ${\sf C}=\{j, j+1, \dots, j+r\}$.

\subsubsection*{{\bf A.} \rm Suppose that $|{\sf C}|$ is even. i.e., $r$ is odd.}

Let $p_e=(r-1)/2$.  We define $\mathscr{Y}^{e, -}=(Y_0^{e, -}, Y_1^{e, -},\dots, Y_{p_e}^{e, -})\in \Upsilon_{p_e+1}$
as follows: 
\be
Y^{e, -}_i=Y_{i-1}^{e, -}\setminus \{j+2i, j+2i+1\},
\ee
where  $Y_0^{e, -}={\sf C}\setminus \{j, j+1\}$. It is worth noting that $Y_{p_e}^{e, -}=\varnothing$.
  The construction of $\mathscr{Y}^{e, -}$ is illustrated in Figure \ref{Proc1}.
\begin{figure}[h]
\begin{center}
\includegraphics[scale=0.56]{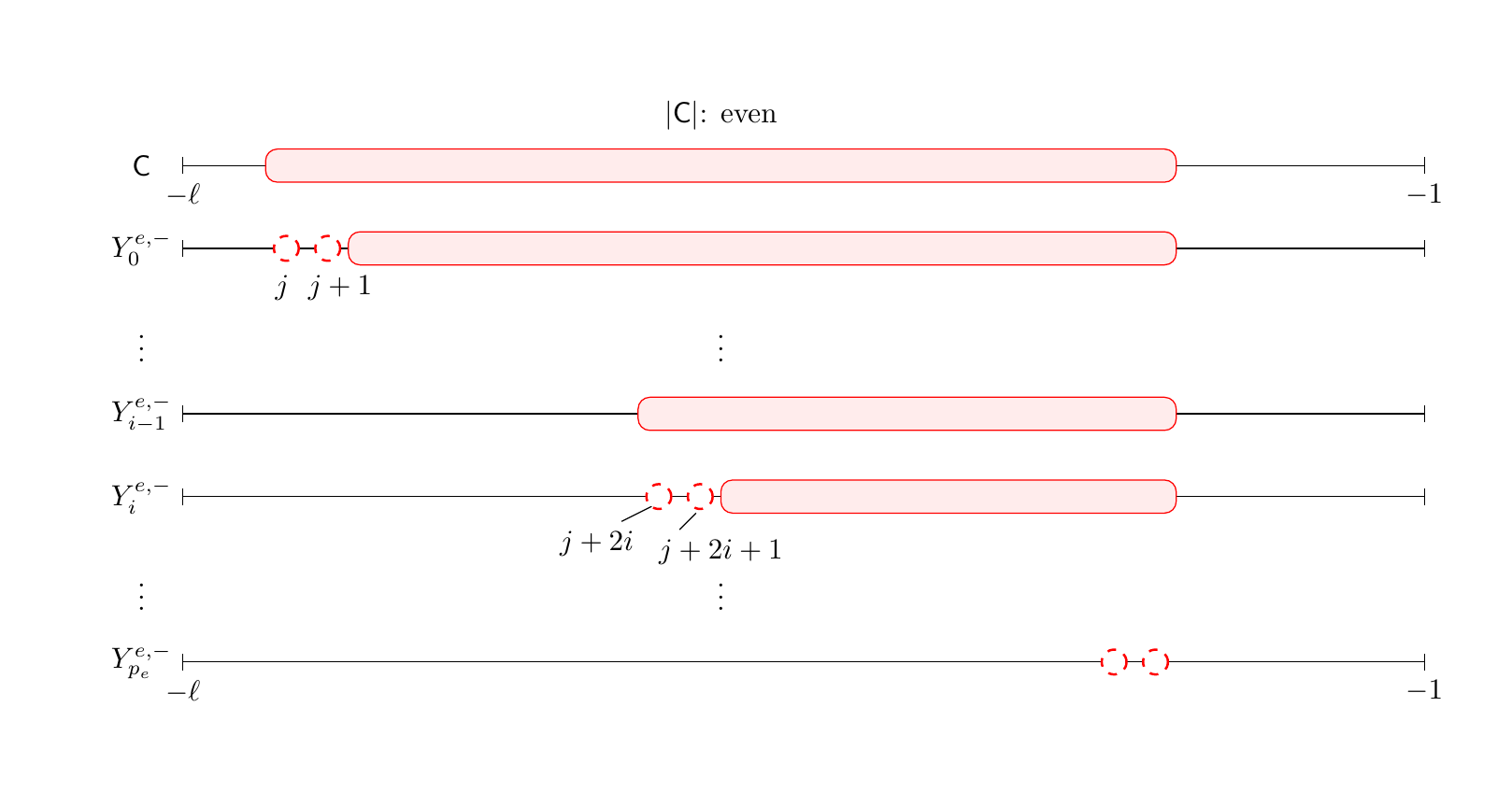}
\caption{
The fermions occupying ${\sf C}$ are successively annihilated in pairs.  This process is repeated iteratively until no fermions remain.} 
    \label{Proc1}
\end{center}
\end{figure}

The concept of proving the subsequent sublemma is sporadically utilized in the remaining process of proving Lemma \ref{Conn2}.
\begin{Subl}\label{FirstSub}
We have
\be
E_{Y_0^{e, -}} e^{-\tau \mathbb{K}^{\F}} |{\sf C}\ra
=\{\pm \tau+O(\tau^2)\} |Y_0^{e, -}\ra\ \ (\tau\to +0). \label{Overbasic}
\ee
\end{Subl}
\begin{proof}
Using Taylor's theorem,  we have 
\be
e^{-\tau \mathbb{K}^{\F}}=1+ (-\tau \mathbb{K}^{\F}) +I(\tau),\label{Dyson00}
\ee
where $I(\tau)$ represents a bounded operator with the property $\|I(\tau)\| = O(\tau^2)$ as $\tau \to +0$.
We observe that 
\be
E_{Y_0^{e, -}}|{\sf C}\ra=0, \quad E_{Y_0^{e, -}} (-\tau \mathbb{K}^{\F})  |{\sf C}\ra
=\pm \tau  |Y_0^{e, -}\ra.\label{E02}
\ee
Combining these with \eqref{Dyson00},  we obtain the desired result.
\end{proof}

  By employing reasoning analogous to that utilized in the proof of Sublemma \ref{FirstSub}, we can establish the following:
\begin{align}
E_{Y_1^{e, -}} e^{-\tau \mathbb{K}^{\F}}E_{Y_0^{e, -}} e^{-\tau \mathbb{K}^{\F}} |{\sf C}\ra
&=E_{Y_1^{e, -}} e^{-\tau \mathbb{K}^{\F}}\{\pm \tau+O(\tau^2)\} |Y_0^{e, -}\ra\no
&=\{\pm \tau^2+O(\tau^3)\}|Y_1^{e, -}\ra.
\end{align}
By iteratively applying this procedure, we arrive at 
\be
\overline{C}_{\tau}(\mathscr{Y}^{e, -})|{\sf C} \ra=\{\pm \tau^{p_e+1}+O(\tau^{p_e+2})\}|\varnothing\ra, \label{Prot}
\ee
where 
\be
\overline{C}_{\tau}(\mathscr{Y}^{e, -})=E_{Y_{p_e}^{e, -}} e^{-\tau \mathbb{K}^{\F}}E_{Y_{p_e-1}^{e, -}}e^{-\tau \mathbb{K}^{\F}}\cdots E_{Y_0^{e, -}} e^{-\tau \mathbb{K}^{\F}}. \label{BarC}
\ee
It is worth noting that $\overline{C}_{\tau}(
\mathscr{Y}^{e, -}
)=(C_{\tau}(\mathscr{Y}^{e, -}))^*$, where $C_{\tau}(\mathscr{Y}^{e, -})$ is defined  by \eqref{DefCPath}
with $\mathscr{X}=\mathscr{Y}^{e, -}$.
Furthermore, we have $
(C_{\tau}(\mathscr{Y}^{e, -}))^*=C_{\tau}(\overline{\mathscr{Y}}^{e, -}) 
$, where $
\overline{\mathscr{Y}}^{e, -}\in \Upsilon_{p_e+1}
$ represents the reversed  path of $\mathscr{Y}^{e, -}$ and is defined as  
$
\overline{\mathscr{Y}}^{e, -}=(  Y_{p_e}^{e, -}, Y_{p_e-1}^{e, -} ,\dots,Y_1^{e, -}, Y_0^{e, -})
$. 
Consequently, we obtain
\be
\overline{C}_{\tau}(
\mathscr{Y}^{e, -}
)=C_{\tau}(\overline{\mathscr{Y}}^{e, -}). \label{BarCY}
\ee
By combining this equation with \eqref{Prot}, we derive the statement presented in the lemma.
\subsubsection*{{\bf B.}  {\rm Suppose that  $|\sf C|$ is odd, i.e., $r$ is even. }}
In this scenario, additional efforts are required.
Let us set $p_o=r/2-1$. We define $\mathscr{Y}^{o, -}=(Y_1^{o, -}, Y_2^{o, -},\dots, Y_{p_o+1}^{o, -})\in \Upsilon_{p_o+1}$ as follows:
\be
Y_i^{o, -}=Y_{i-1}^{o, -}\setminus \{j+2i, j+2i+1\}
\ee
with $Y_1^{o, -}={\sf C} \setminus \{j, j+1\}$. It should be noted that $Y_{p_o+1}^{o, -}=\{j +r\}$.
Figure \ref{Proc2} illustrates the procedures involved.
\begin{figure}[h]
\begin{center}
\includegraphics[scale=0.56]{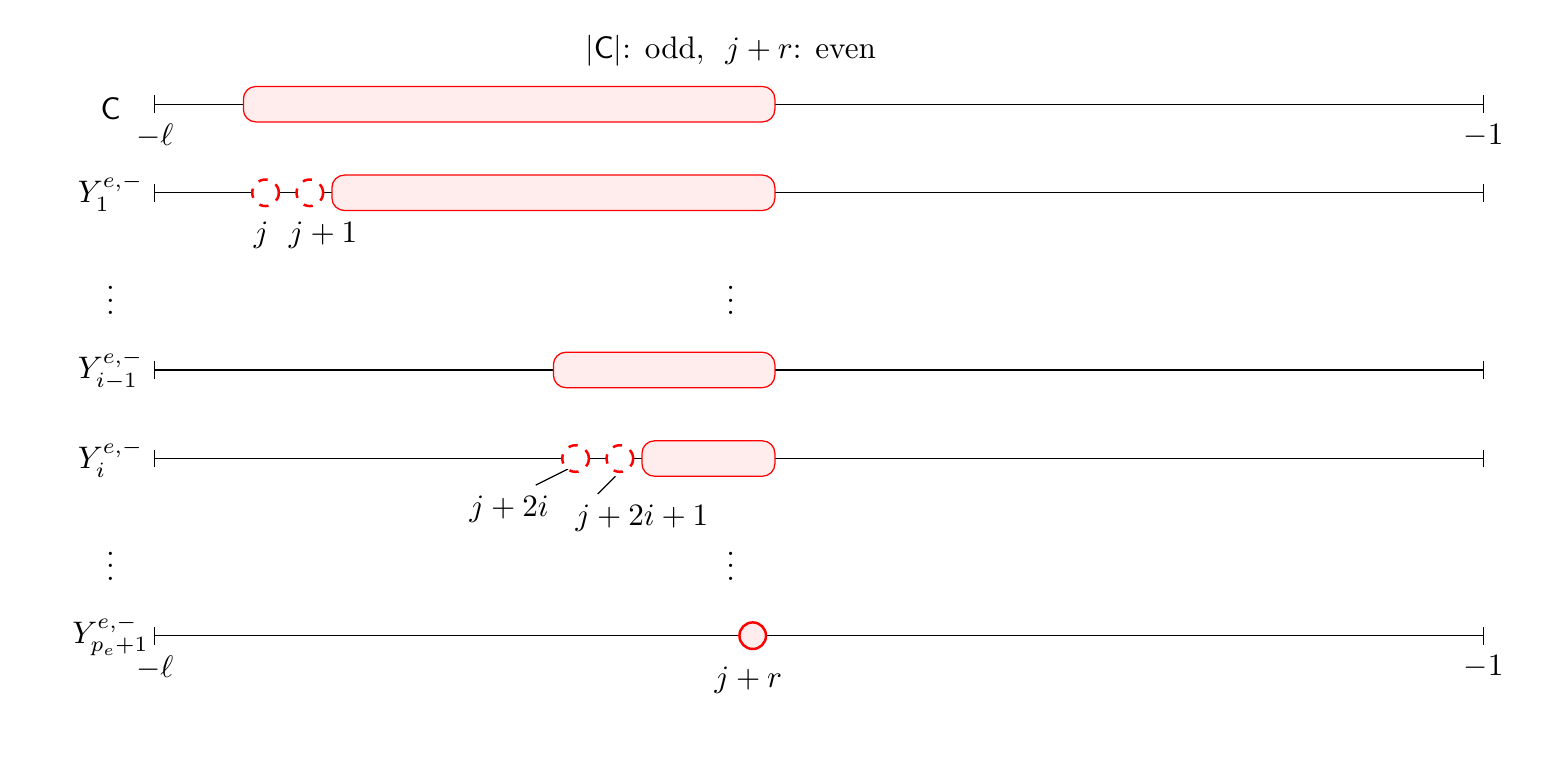}
\caption{The fermions present in $\sf C$ are eliminated in pairs. This process is iterated until only a single fermion remains at the site $j+r$.} 
    \label{Proc2}
\end{center}
\end{figure}
By employing reasoning similar to that utilized in proving \eqref{Prot}, we obtain
\begin{align}
\overline{C}_{\tau}(\mathscr{Y}^{o, -})|{\sf C}\ra=\{\pm \tau^{p_o+1}+O(\tau^{p_o+2})\}\big|\{j+r\}\big\ra.
 \end{align}
 Now, we will consider the following two scenarios:

 \subsubsection*{  \rm \underline{Case 1.}  Suppose that $j+r$ is even. }
 Initially, it should be noted that since $j+r\in \Lambda_L$, the value of $j+r$ is negative.
  We shall proceed as follows.
 \begin{description}
 \item[\rm 1 - 1.]
 Let us set  $q_e=|j+r|/2-2$.
 We define  $\mathscr{Z}^{e, +}=(Z_1^{e, +}, \dots, Z_{q_e}^{e, +})\in \Upsilon_{q_e}$ as
 \be
 Z_i^{e, +}=Z_{i-1}^{e, +} \cup \{j+r+2i, j+r+2i+1\}
 \ee
 with $Z_1^{e, +}=\{j+r, j+r+1, j+r+2, -1\}$.  It should be noted that $Z_{q_e}^{e, +}=\{j+r, j+r+1, \dots, -2, -1\}$.  Refer to Figure \ref{Proc3} for a visual representation.
 \begin{figure}
 \begin{center}
\includegraphics[scale=0.56]{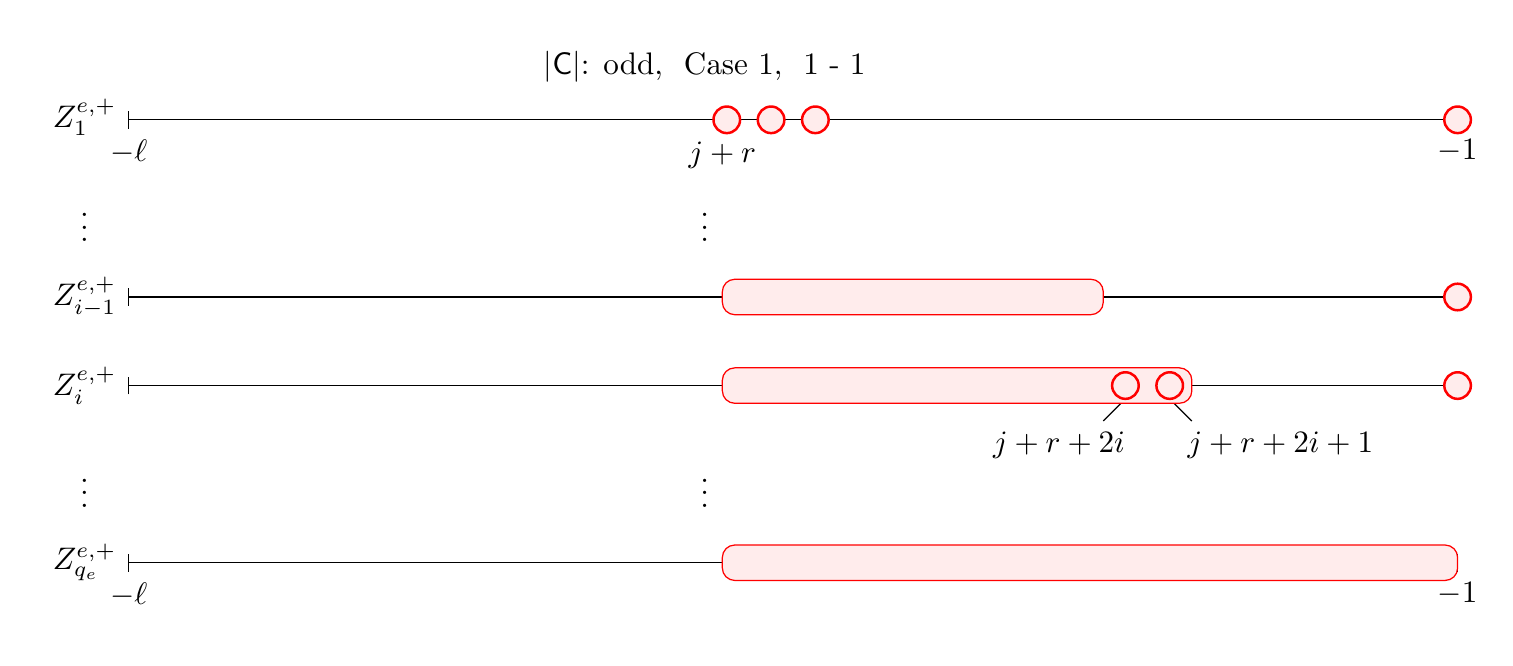}
\caption{Fermions are generated in pairs. This process is iterated until the sites ${j+r, j+r+1, \dots, -1}$ are occupied. } 
    \label{Proc3}
\end{center}
 \end{figure}
 
Subsequently, employing reasoning analogous to the proof of \eqref{Prot}, we obtain 
 \begin{align}
 \overline{C}_{\tau}(\mathscr{Z}^{e, +})b^*_{-1}|\{j+r\} \ra&= \overline{C}_{\tau}(\mathscr{Z}^{e, +})\big|\{j+r, -1\}\big\ra\no
 &=\{\pm \tau^{q_e}+O(\tau^{q_e+1})\}|Z_{q_e}^{e, +}\ra.
 \end{align}

 \item[\rm 1 - 2.]
 Let us define  $\mathscr{Z}^{e, -}=(Z_1^{e, -}, Z_2^{e, -}, \dots, Z^{e, -}_{q_e+1})\in \Upsilon_{q_e+1}$ as
 \be
 Z^{e, -}_i=Z_{i-1}^{e, -}\setminus \{j+r+2i-2, j+r+2i-1\}
 \ee
 with $Z_1^{e, -}=Z_{q_e}^{e, +}\setminus \{j+r, j+r+1\}$.
 Note that $Z_{q_e+1}^{e, -}=\varnothing$.  Refer to Figure \ref{Proc4} for a visual representation.
 \begin{figure}
 \begin{center}
\includegraphics[scale=0.56]{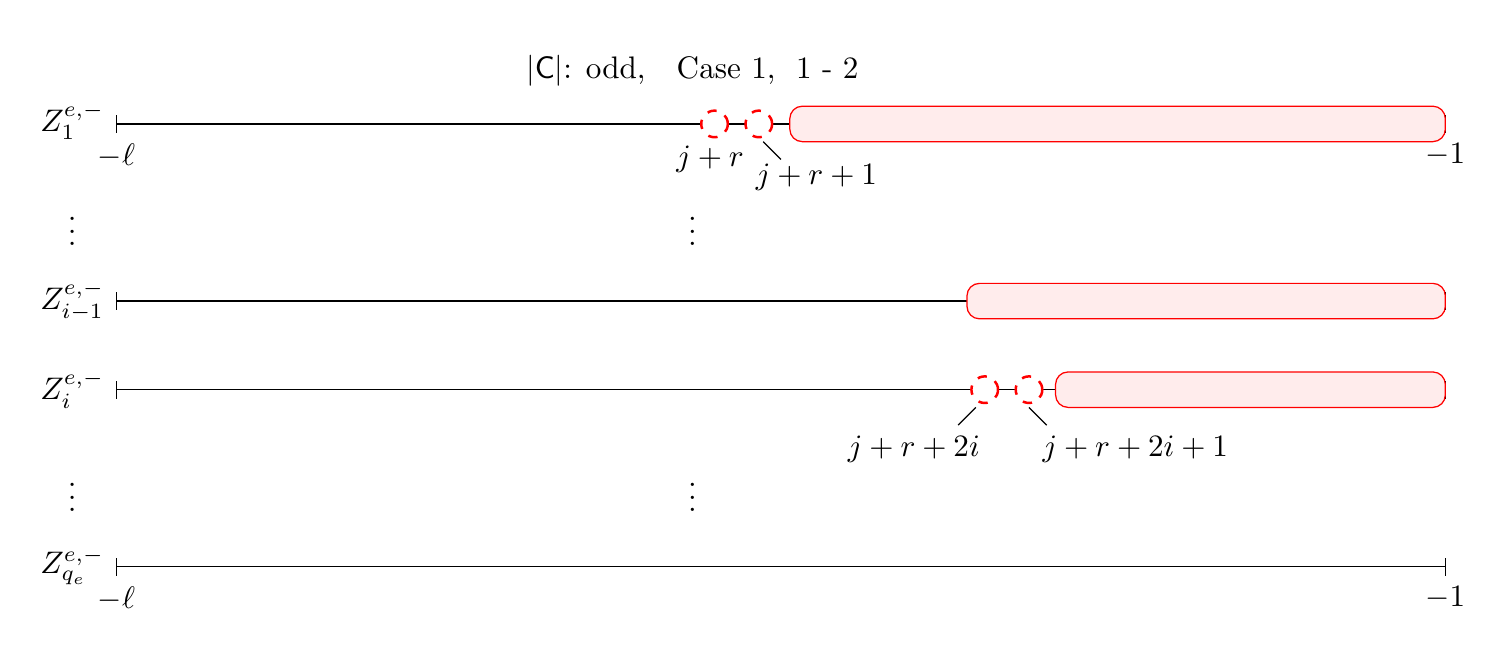}
\caption{
The fermions are successively eliminated in pairs until no fermion remains.
} 
    \label{Proc4}
\end{center}
 \end{figure}
 Then we have
 \be
  \overline{C}_{\tau}(\mathscr{Z}^{e, -})|Z_{q_e}^{e, +}\ra=\{
 \pm \tau^{q_e+1}+O(\tau^{q_e+2})
 \}|\varnothing\ra.
 \ee
 \end{description}
 Let us denote
 $ \overline{C}_{\tau}(\mathscr{Z}^{e, +}\cup \mathscr{Z}^{e, -})=
  \overline{C}_{\tau}(\mathscr{Z}^{e, -}) \overline{C}_{\tau}(\mathscr{Z}^{e, +}) 
 $.
 Summarizing the above observations, we arrive at the following:
 \begin{align}
 \overline{C}_{\tau}(\mathscr{Z}^{e, +}\cup \mathscr{Z}^{e, -})b_{-1}^*  \overline{C}_{\tau}(\mathscr{Y}^{o, -})|{\sf C}\ra
 =\{
\pm  \tau^{p_o+2q_e+2}+O(\tau^{p_o+2q_e+3})
 \}|\varnothing\ra.
 \end{align}
Hence, we obtain the desired statement in the lemma for this case.
 \subsubsection*{\rm \underline{Case 2.} 
Suppose that $j+r$ is odd.}
\begin{description}
 \item[\rm 2 - 1.] 
We define $q_o=(|j+r|-3)/2$. Let us define  $\mathscr{Z}^{o, +}=(Z_1^{o, +}, \dots, Z_{q_o}^{o, +})\in \Upsilon_{q_o}$by the following expression:
 \be
 Z_i^{o, +}=Z_{i-1}^{o, +} \cup \{j+r+2i-1, j+r+2i\}
 \ee
 with the initial condition $Z_1^{o, +}=\{j+r, j+r+1, j+r+2\}$.  It is worth noting that  $Z_{q_e}^{o, +}=\{j+r, j+r+1, \dots, -2, -1\}$.   Figure \ref{Proc5} illustrates the definition of $Z_i^{o, +}$.
\begin{figure}
 \begin{center}
\includegraphics[scale=0.56]{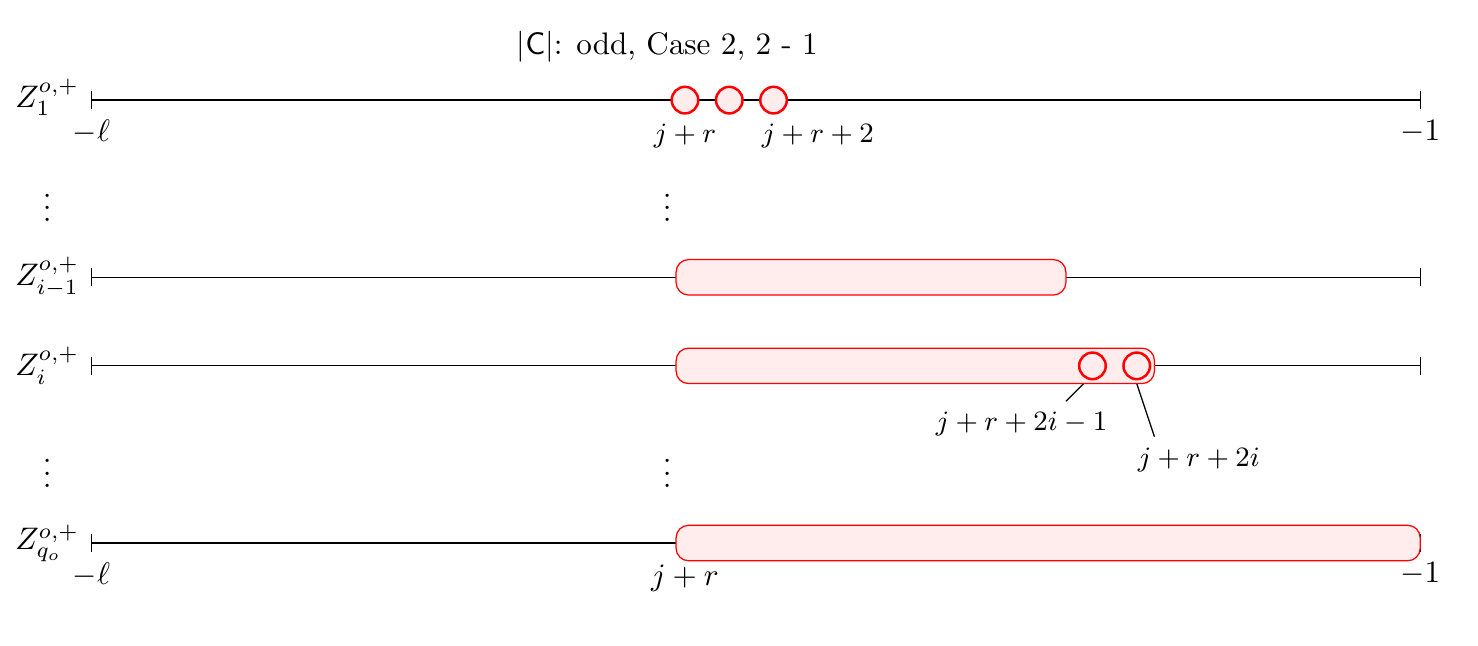}
\caption{
Fermions are successively generated in pairs until all the sites $\{j+r, j+r+1, \dots, -1\}$ are occupied.} 
    \label{Proc5}
\end{center}
 \end{figure}
 Then, by arguments similar to those of the proof of \eqref{Prot}, we have
 \begin{align}
  \overline{C}_{\tau}(\mathscr{Z}^{o, +})\big|\{j+r\}\big\ra=
 \{\pm \tau^{q_o}+O(\tau^{q_o+1})\}\big|Z_{q_o}^{o, +} \big\ra.
 \end{align}
 \item[\rm 2 - 2.]
 Let us define $\mathscr{Z}^{o, -}=(Z_1^{o, -}, \dots, Z_{q_o}^{o, -}) \in \Upsilon_{q_o}$ as  follow:
 \be
 Z_i^{o, -}=Z_{i-1}^{o, -}\setminus \{j+r+2i-2, j+r+2i-1\}
 \ee
 with $Z_1^{o, -}=Z_{q_o}^{o, +}\setminus \{j+r, j+r+1\}$, as depicted in Figure \ref{Proc6}.
 \begin{figure}
 \begin{center}
\includegraphics[scale=0.56]{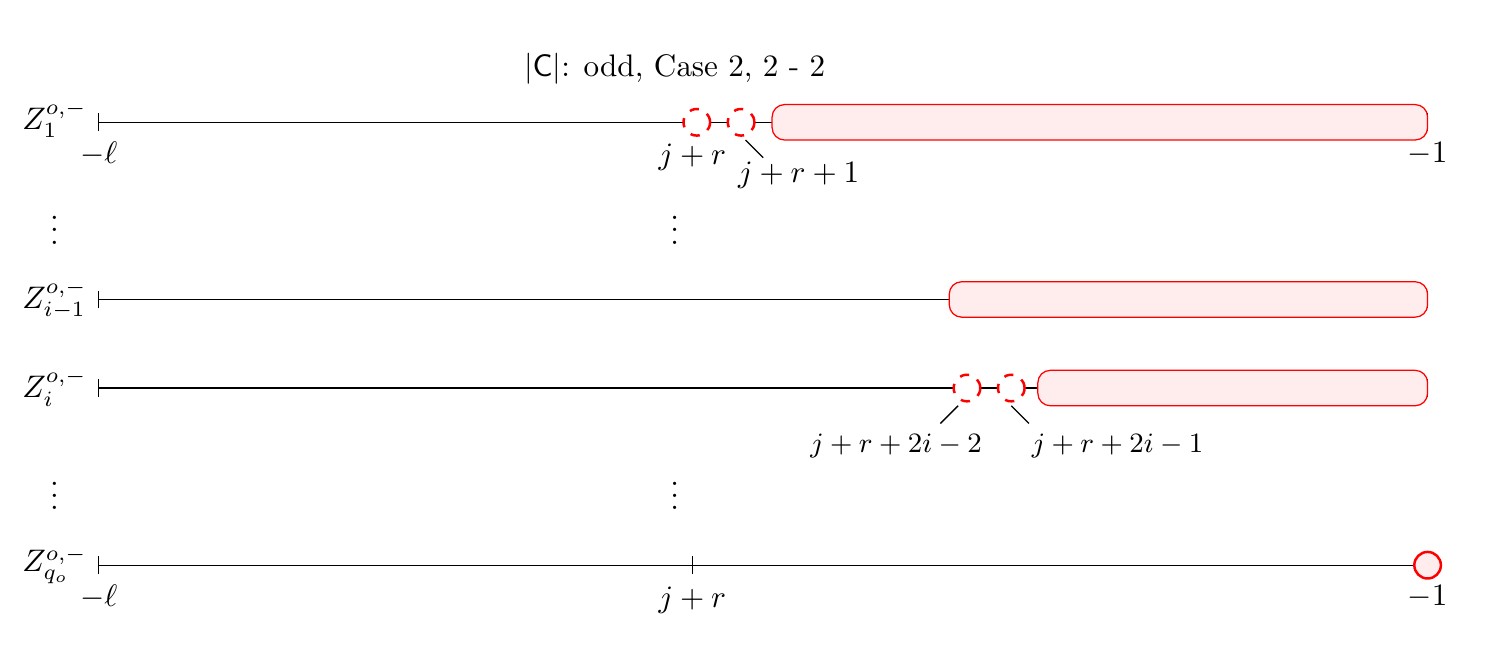}
\caption{
The fermions undergo successive annihilation in pairs until only a single fermion remains at the site $-1$.} 
    \label{Proc6}
\end{center}
 \end{figure}
 Because $Z_{q_o}^{o, -}=\{-1\}$,
    we have
 \begin{align}
  b_{-1}\overline{C}_{\tau}(\mathscr{Z}^{o, -})\big|Z_{q_o}^{o, +}\big\ra&
 =b_{-1}\{\pm \tau^{q_o}+O(\tau^{q_o+1})\} \big|Z_{q_o}^{o, -} \big\ra=\{\pm \tau^{q_o}+O(\tau^{q_o+1})\} |\varnothing\ra.
 \end{align}

 \end{description}
 Let us denote 
 $ \overline{C}_{\tau}(\mathscr{Y}^{o, -}\cup \mathscr{Z}^{o, +}\cup \mathscr{Z}^{o, -})=
  \overline{C}_{\tau}(\mathscr{Z}^{o, -})\overline{C}_{\tau}(\mathscr{Z}^{o, +}) \overline{C}_{\tau}(\mathscr{Y}^{o, -}) 
 $.
 Summarizing the aforementioned observations, we can deduce the following relation:
 \begin{align}
 b_{-1} \overline{C}_{\tau}(\mathscr{Y}^{o, -}\cup\mathscr{Z}^{o, +} \cup \mathscr{Z}^{o, -})|{\sf C}\ra
 =\{
\pm  \tau^{p_o+2q_o+1}+O(\tau^{p_o+2q_o+2})
 \}|\varnothing\ra.
 \end{align}
 By employing \eqref{BarCY}, we can thus obtain the desired result in this particular case.

\subsubsection*{ \it Step 2:  Proof for the general case}
Specifically, we assume that $I_e\cup I_o$ is decomposed as (\ref{DecK}). For each cluster ${\sf C}=\{j, j+1, \dots, j+r\}$, we introduce a linear operator $\mathscr{D}_{\sf C}$ defined as follows:
\be
\mathscr{D}_{\sf C}
=\begin{cases}
 \overline{C}_{\tau}(\mathscr{Y}^{e, -}),  & \mbox{if $|\sf C|$ is even}\\
 \overline{C}_{\tau}(\mathscr{Z}^{e, +}\cup \mathscr{Z}^{e, -})b_{-1}^*  \overline{C}_{\tau}(\mathscr{Y}^{o, -}), 
& \mbox{if $|\sf C|$ is odd and $j+r$ is even}\\
 b_{-1} \overline{C}_{\tau}(
 \mathscr{Y}^{o, -}\cup\mathscr{Z}^{o, +} \cup \mathscr{Z}^{o, -}), & \mbox{if $|\sf C|$ is odd and $j+r$ is odd},
\end{cases}
\ee
where the notations used in {\it Step 1} are employed.

 According to {\it Step 1}, there exists an  $N\in \BbbN$  such that 
\begin{align}
\mathscr{D}_{{\sf C}_1}\cdots \mathscr{D}_{{\sf C}_K} |X\ra=\{\pm \tau^N+O(\tau^{N+1})\} |\varnothing\ra.
\end{align}
Combining this with \eqref{BarCY}, we obtain the assertion in the lemma.
 \end{proof}

\subsection{Proof of Lemma \ref{Puzzle} }
If $X_0=Y_0=\varnothing$, then we have 
$
\la\varnothing| e^{-\beta \mathbb{K}^{\F}} |\varnothing \ra>0
$ for every $\beta \ge 0$. Consequently, we obtain  the desired result in Lemma \ref{Puzzle} with $N=2$.

Subsequently, we will consider the case where $X_0\neq \varnothing$ and $Y_0\neq \varnothing$.
By applying Lemma \ref{Conn2}, we can choose  $k\in \BbbZ_+$, 
$\mathscr{X}_1, \dots, \mathscr{X}_{k+1}$,
 $\xi_1, \dots, \xi_k\in \{-\ell, -1\}$, $\sharp _1, \dots, \sharp _k\in \{-, +\}$ and $N\in \BbbZ_+$
such that 
\be
C_{\tau}(\mathscr{X}_1)b_{\xi_1}^{\sharp _1} C_{\tau}(\mathscr{X}_2)b_{\xi_2}^{\sharp _2} \cdots b_{\xi_k}^{\sharp _k} C_{\tau}(\mathscr{X}_{k+1})
|{X_0}\ra
=\{\pm \tau^{N}+O(\tau^{N+1})\} |\varnothing\ra.
\ee
 Similarly, there exist $m\in \BbbZ_+$, 
$\mathscr{X}_{-1}, \dots, \mathscr{X}_{-m-1}$,
 $\xi_{-1}, \dots, \xi_{-m}\in \{-\ell, -1\}$, $\sharp _{-1},  \dots, \sharp _{-m}\in \{-, +\}$ and $M\in \BbbZ_+$
 such that 
\be
C_{\tau}(\mathscr{X}_{-1})b_{\xi_{-1}}^{\sharp _{-1}} C_{\tau}(\mathscr{X}_{-2})b_{\xi_{-2}}^{\sharp _{-2}} \cdots b_{\xi_{-m}}^{\sharp _{-m}} C_{\tau}(\mathscr{X}_{-m-1})
|{Y_0}\ra
=\{\pm \tau^{M}+O(\tau^{M+1})\} |\varnothing\ra.
\ee
Hence, we obtain
\begin{align}
&\big\la {Y_0}\big|C_{\tau}(\mathscr{X}_{-m-1}) b_{\xi_{-m}}^{\sharp _{-m}} \cdots b_{\xi_{-1}}^{\sharp _{-1}} C_{\tau}(\mathscr{X}_{-1})    e^{-\vepsilon \mathbb{K}^{\F}} C_{\tau}(\mathscr{X}_1)b_{\xi_1}^{\sharp _1} \cdots b_{\xi_k}^{\sharp _k} C_{\tau}(\mathscr{X}_{k+1}) \big|{X_0}\big\ra\no
=& \{\tau^{M+N}+O(\tau^{M+N})\}\la\varnothing| e^{-\vepsilon \mathbb{K}^\F} |\varnothing\ra, \label{InnerPrd}
\end{align}
provided that $\tau$ is sufficiently small.  Because 
 $\la\varnothing| e^{-\vepsilon \mathbb{K}^{\F}} |\varnothing\ra$ is strictly positive, 
we can conclude that the right-hand side of \eqref{InnerPrd} is non-zero, given that $\tau$ is sufficiently small.  Recall the definition of $C(\mathscr{X}; {\bs u})$ given in \eqref{DefCXM}.
By noting that  $C(\mathscr{X}_1; \tau+\vepsilon, \tau, \dots, \tau)=e^{-\vepsilon \mathbb{K}^{\F}} C_{\tau}(\mathscr{X}_1)$, we can establish the desired assertion in Lemma \ref{Puzzle} by choosing 
$\tau=(\beta-\vepsilon)/(N-2)$. Note  that 
this corresponds to the following choice of ${\bs u}\in R_{\beta, n(N-2)}$:
$$
u_1=\beta_*, \dots, u_{k-1}=\beta_*, u_k=\beta_*+\vepsilon, u_{k+1}=\beta_*, \dots, u_{N-2}=\beta_*,
$$
 where   $\beta_*=(\beta-\vepsilon)/(N-2)$ and   ${\bs u}$ is given by  \eqref{Defu}. Therefore,   by   first  fixing  $\vepsilon>0$ to be sufficiently small,  we can  then  choose   $\beta(>\vepsilon)$ such that $\beta_*^{M+N}+O(\beta_*^{M+N})>0$.  With this particular choice of $\vepsilon$ and $\beta$,   the right hand side of \eqref{InnerPrd} 
  becomes strictly positive.
 
 By employing similar arguments as in the previous cases, we can establish Lemma \ref{Puzzle} for the situation where either $X_0=\varnothing$ or $Y_0=\varnothing$.
\qed

\bibliographystyle{abbrvurl}

\end{document}